\def\supp{0}
\newcommand{\llnorm}[1]{\left\lVert#1\right\rVert_2}
\newcommand{\fnorm}[1]{\left\lVert#1\right\rVert_F}
\newcommand{\abs}[1]{\left| #1 \right|}
\renewcommand{\epsilon}{\varepsilon}
\newcommand{\ball}[2]{\mathit{B}_{#2}\left( #1 \right)}
\newcommand{\maxsigma}{\sigma_{\max}}
\newcommand{\minsigma}{\sigma_{\min}}
\newcommand{\mw}{w_{\min}}
\newcommand{\Sym}{\mathsf{Sym}}
\newcommand{\id}{\mathbb{I}}
\newcommand{\Lap}{\mathrm{Lap}}
\newcommand{\opt}{\mathrm{opt}}
\newcommand{\RPGMP}{\ensuremath{\textsc{RPGMP}}}
\newcommand{\PGME}{\ensuremath{\textsc{PGME}}}
\newcommand{\PLoc}{\ensuremath{\textsc{PLoc}}}
\newcommand{\PGLoc}{\ensuremath{\textsc{PGLoc}}}
\newcommand{\PCount}{\ensuremath{\textsc{PCount}}}
\newcommand{\PGE}{\ensuremath{\textsc{PGE}}}
\newcommand{\PEGME}{\ensuremath{\textsc{PEGME}}}
\newcommand{\PSGE}{\ensuremath{\textsc{PSGE}}}
\newcommand{\PTB}{\ensuremath{\textsc{PTerrificBall}}}
\newcommand{\AboveThreshold}{\ensuremath{\textsc{AboveThreshold}}}
\newcommand{\KLSU}{\ensuremath{\textsc{M}_{\textsc{KLSU}}}}
\title{Differentially Private Algorithms for \\ Learning Mixtures of Separated Gaussians}
\author{
Gautam Kamath\thanks{Cheriton School of Computer Science, University of Waterloo. {\tt g@csail.mit.edu}. }
\and 
Or Sheffet\thanks{Department of Computer Science, Bar Ilan University.  {\tt or.sheffet@biu.ac.il}.  }
\and
Vikrant Singhal\thanks{Khoury College of Computer Sciences, Northeastern University. {\tt singhal.vi@husky.neu.edu}. }
\and
Jonathan Ullman\thanks{Khoury College of Computer Sciences, Northeastern University. {\tt jullman@ccs.neu.edu}. }
}
\begin{document}

\maketitle

\begin{abstract}
Learning the parameters of Gaussian mixture models is a fundamental and widely studied problem with numerous applications.  In this work, we give new algorithms for learning the parameters of a high-dimensional, well separated, Gaussian mixture model subject to the strong constraint of differential privacy.  In particular, we give a differentially private analogue of the algorithm of Achlioptas and McSherry.  Our algorithm has two key properties not achieved by prior work: (1) The algorithm's sample complexity matches that of the corresponding non-private algorithm up to lower order terms in a wide range of parameters.  (2) The algorithm does not require strong \emph{a priori} bounds on the parameters of the mixture components.
\end{abstract}
\newpage
\tableofcontents
\newpage

\newcommand{\mypar}[1]{\medskip\textbf{#1}}

\section{Introduction}
The \emph{Gaussian mixture model} is one of the most important and widely studied models in Statistics---with roots going back over a century~\cite{Pearson94}---and has numerous applications in the physical, life, and social sciences.  In a Gaussian mixture model, we suppose that each sample is drawn by randomly selecting from one of $k$ distinct Gaussian distributions $G_1,\dots,G_k$ in $\R^d$ and then drawing a sample from that distribution.  The problem of \emph{learning a Gaussian mixture model} asks us to take samples from this distribution and approximately recover the parameters (mean and covariance) of each of the underlying Gaussians.  The past decades have seen tremendous progress towards understanding both the sample complexity and computational complexity of learning Gaussian mixtures~\cite{Dasgupta99, DasguptaS00, AroraK01, VempalaW02, AchlioptasM05, ChaudhuriR08a, ChaudhuriR08b, KumarK10, AwasthiS12, RegevV17, HopkinsL18, DiakonikolasKS18b, KothariSS18, KalaiMV10, MoitraV10, BelkinS10,HsuK13,AndersonBGRV14,BhaskaraCMV14, HardtP15,GeHK15, XuHM16, DaskalakisTZ17,AshtianiBHLMP18, FeldmanOS06,FeldmanOS08, DaskalakisK14, SureshOAJ14, DiakonikolasKKLMS16, LiS17, ChanDSS14a}.

In many of the applications of Gaussian mixtures models, especially those in the social sciences, the sample consists of sensitive data belonging to individuals.  In these cases, it is crucial that we not only learn the parameters of the mixture model, but do so while preserving these individuals' \emph{privacy}.  In this work, we study algorithms for learning Gaussian mixtures subject to \emph{differential privacy}~\cite{DworkMNS06}, which has become the \emph{de facto} standard for individual privacy in statistical analysis of sensitive data.  Intuitively, differential privacy guarantees that the output of the algorithm does not depend significantly on any one individual's data, which in this case means any one sample. Differential privacy is used as a measure of privacy for data analysis systems at Google~\cite{ErlingssonPK14},  Apple~\cite{AppleDP17}, and the U.S.~Census Bureau~\cite{DajaniLSKRMGDGKKLSSVA17}. Differential privacy and related notions of \emph{algorithmic stability} are also crucial for statistical validity even when individual privacy is not a primary concern, as they provide generalization guarantees in an adaptive setting~\cite{DworkFHPRR15, BassilyNSSSU16}.

The first differentially private algorithm for learning Gaussian mixtures comes from the work of Nissim, Raskhodnikova, and Smith~\cite{NissimRS07} as an application of their influential \emph{subsample-and-aggregate} framework.  However, their algorithm is a reduction from private estimation to non-private estimation that blows up the sample complexity by at least a $\poly(d)$ factor.

\medskip
The contribution of this work is new differentially private algorithms for recovering the parameters of an unknown Gaussian mixture provided that the components are sufficiently well-separated.  In particular we give differentially private analogues of the algorithm of Achlioptas and McSherry~\cite{AchlioptasM05}, which requires that the means are separated by a factor proportional to $\sqrt{k}$, but independent of the dimension $d$.  Our algorithms have two main features not shared by previous methods:
\begin{itemize}
\item The sample complexity of the algorithm matches that of the corresponding non-private algorithm up to lower order additive terms for a wide range of parameters.
\item The algorithms do not require strong \emph{a priori} bounds on the parameters of the mixture components.  That is, like many algorithms, we require that the algorithm is seeded with some information about the range of the parameters, but the algorithm's sample complexity depends only mildly (polylogarithmically) on the size of this range.
\end{itemize}

\subsection{Problem Formulation}

There are a plethora of algorithms for (non-privately) learning Gaussian mixtures, each with different learning guarantees under different assumptions on the parameters of the underlying distribution.\footnote{We remark that there are also many popular \emph{heurstics} for learning Gaussian mixtures, notably the EM algorithm~\cite{DempsterLR77}, but in this work we focus on algorithms with provable guarantees.}  In this section we describe the version of the problem that our work studies and give some justification for these choices.

We assume that the underlying distribution $\cD$ is a mixture of $k$ Gaussians in high dimension $d$.  The mixture is specified by $k$ \emph{components}, where each component $G_i$ is selected with probability $w_i \in [0,1]$ and is distributed as a Gaussian with mean $\mu_i \in \R^d$ and covariance $\Sigma_i \in \R^{d \times d}$.  Thus the mixture is specified by the tuple $\{ (w_i, \mu_i, \Sigma_i) \}_{i\in [k]}$.  

Our goal is to accurately recover this tuple of parameters. Intuitively, we would like to recover a tuple $\{ (\wh{w}_i, \wh\mu_i, \wh\Sigma_i) \}_{i \in [k]}$, specifying a mixture $\tilde\cG$ such that $\|\wh{w} - w\|_1$ is small and $\| \wh\mu_i - \mu_i \|_{\Sigma_i}$ and $\| \wh\Sigma_i - \Sigma_i \|_{\Sigma_i}$ are small for every $i \in [k]$.  Here, $\| \cdot \|_{\Sigma}$ is the appropriate vector/matrix norm that ensures $\cN(\mu_i, \Sigma_i)$ and $\cN(\wh\mu_i, \wh\Sigma_i)$ are close in total variation distance and we also compare $\wh{w}$ and $w$ in $\|\cdot \|_1$ to ensure that the mixtures are close in total variation distance.  Of course, since the labeling of the components is arbitrary, we can actually only hope to recover the parameters up to some unknown permutation $\pi : [k] \to [k]$ on the components.  We say that an algorithm \emph{learns a mixture of Gaussian using $n$ samples} if it takes $n$ i.i.d.\ samples from an unknown mixture $\cD$ and outputs the parameters of a mixture $\wh\cD$ satisfies these conditions.\footnote{To provide context, one might settle for a weaker goal of \emph{proper learning} where the goal is merely to learn \emph{some} Gaussian mixture, possibly with a different number of components, that is close to the true mixture, or \emph{improper learning} where it suffices to learn \emph{any} such distribution.} 

In this work, we consider mixtures that satisfy the \emph{separation} condition
\begin{equation} \label{eq:separated_intro}
    \forall i \neq j~~\|\mu_i - \mu_j\|_2 \geq
        \wt{\Omega}\left(\sqrt{k} + \sqrt{\frac{1}{w_i} + \frac{1}{w_j}}\right)\cdot
        \max\left\{\|\Sigma_i^{1/2}\|,
        \|\Sigma_j^{1/2}\|\right\}
\end{equation}
%\bigjnote{I'm not sure it's clear to say $\geq \Omega(\cdot)$.  This seems consistent with the condition $\geq k^8$, since $k^8 = \Omega(k^{1/2})$.  Also $\tilde{\Omega}$ usually is a \emph{smaller} quantity than $\Omega$?}\gnote{I'm personally fine it with -- we can handle any separation condition which is $\Omega(k^{1/2})$, including say $k^8$. If you have a better way to write it though, feel free.} \bigjnote{But it looks like we can also handle separation $\sqrt{k}/1000000\log^{100} k$ which we can't.  Personally I guess I always thought it looks better to write $n \geq \tilde{O}(\cdot)$ and read it as ``there isa quantity we could plug in that is $\tilde{O}(\cdot)$ such that $n$ is bigger than that.  But not everyone seems to like this.} \bigonote{I am with Gautam on this. It is fairly standard to use this separation, and in later chapters we give the exact formulation,~\eqref{eq:requiredSeparation}}
Note that the separation condition does not depend on the dimension $d$, only on the number of mixture components.  However,~\eqref{eq:separated_intro} is \emph{not} the weakest possible condition under which one can learn a mixture of Gaussians.  We focus on~\eqref{eq:separated_intro} because this is the regime where it is possible to learn the mixture components using \emph{statistical properties of the data}, such as the large principal components of the data and the centers of a good clustering, which makes this regime amenable to privacy.  In contrast, algorithms that learn with separation proportional to $k^{1/4}$~\cite{VempalaW02}, $k^{\eps}$~\cite{HopkinsL18,KothariSS18,DiakonikolasKS18b}, or even $\sqrt{\log k}$~\cite{RegevV17} use algorithmic machinery such as the sum-of-squares algorithm that has not been studied from the perspective of privacy, or are not computationally efficient.  In particular, a barrier to learning with separation $k^{1/4}$ is that the non-private algorithms are based on single-linkage clustering, which is not amenable to privacy due to its crucial dependence on the distance between individual points.  We remark that one can also learn without any separation conditions, but only with $\exp(k)$ many samples from the distributions~\cite{MoitraV10}.

\medskip
In this work, our goal is to design learning algorithms for mixtures of Gaussians that are also differentially private.  An \emph{$(\eps,\delta)$-differentially private}~\cite{DworkMNS06} randomized algorithm $A$ for learning mixtures of Gaussians is an algorithm that takes a dataset $X$ of samples and:
\begin{itemize}
\item For every pair of datasets $X,X'$ differing on a single sample, the distributions $A(X)$ and $A(X')$ are \emph{$(\eps,\delta)$-close} in a precise sense (Definition~\ref{def:dp}).
\item If $n$ is sufficiently large and $X_1,\dots,X_n \sim \cD$ for a mixture $\cD$ satisfying our assumptions, then $A(X)$ outputs an approximation to the parameters of $\cG$.
\end{itemize}
Note that, while the learning guarantees necessarily rely on the assumption that the data is drawn i.i.d.\ from some mixture of Gaussians, the privacy guarantee is worst-case.  It is important for privacy not to rely on distributional assumptions because we have no way of verifying that the data was truly drawn from a mixture of Gaussians, and if our assumption turns out to be wrong we cannot recover privacy once it is lost.

Furthermore, our algorithms assume certain
\emph{boundedness} of the mixture components.
Specifically, we assume that there are known
quantities $R, \maxsigma, \minsigma$ such that
\begin{equation} \label{eq:boundedness_intro}
    \forall i\in [k]~~\| \mu_i \|_2 \leq R~~\textrm{and}~~
    \minsigma^2 \leq \llnorm{\Sigma_i} \leq \maxsigma^2.
\end{equation}
%\begin{equation} \label{eq:boundedness_intro}
%\forall i\in [k]~~\| \mu_i \|_2 \leq R~~\textrm{and}~~\sigma_{\mathrm{min}}^2 \cdot \id \preceq \Sigma_i \preceq \sigma_{\mathrm{max}}^2 \cdot \id
%\end{equation}
These assumptions are to some extent necessary, as even the state-of-the-art algorithms for learning a single multivariate normal~\cite{KamathLSU19} require boundedness.\footnote{These boundedness conditions are also provably necessary to learn even a single univariate Gaussian for pure differential privacy, concentrated differential privacy, or R\'{e}nyi differential privacy, by the argument of~\cite{KarwaV18}.  One could only hope to avoid boundedness using the most general formulation of $(\eps,\delta)$-differential privacy.}  However, since $R$ and $\sigma_{\mathrm{max}}/\sigma_{\mathrm{min}}$ can be quite large---and even if they are not we cannot expect the user of the algorithm to know these parameter \emph{a priori}---the algorithm's sample complexity should depend only mildly on these parameters so that they can be taken to be quite large.

\subsection{Our Contributions}

The main contribution of our work is an algorithm
with improved sample complexity for learning mixtures
of Gaussians that are separated and bounded.
\begin{thm}[Main, Informal] \label{thm:main_intro}
    There is an $(\eps,\delta)$-differentially
    private algorithm that takes
    $$n = \left(
        \frac{d^2}{ \alpha^2 \mw} + \frac{d^2}{\alpha \mw \eps} + \frac{\poly(k)d^{3/2}}{\mw\epsilon}\right) \cdot \mathrm{polylog}\left( \frac{dkR(\maxsigma/\minsigma)}{\alpha \beta \eps \delta} \right)$$ 
    samples from an unknown mixture of $k$
    Gaussians $\cD$ in $\R^d$ satisfying
    \eqref{eq:separated_intro} and \eqref{eq:boundedness_intro},
    where $\mw = \min_{i} w_i$, and, with probability at least $1-\beta$, learns the parameters of $\cD$
    up to error $\alpha$.
\end{thm}

We remark that the sample complexity in Theorem~\ref{thm:main_intro} compares favorably to the sample complexity of methods based on subsample-and-aggregate.  In particular, when $\eps \geq \alpha$ and $k$ is a small polynomial in $d$, the sample complexity is dominated by $d^2/\alpha^2 \mw$, which is optimal even for non-private algorithms. In Section~\ref{sec:sa} we give an optimized version of the subsample-and-aggregate-based reduction from~\cite{NissimRS07} and show that we can learn mixtures of Gaussians with sample complexity roughly $\tilde{O}(\sqrt{d}/\eps)$ times the sample complexity of the corresponding non-private algorithm.  In contrast the sample complexity of our algorithm does not grow by dimension-dependent factors compared to the non-private algorithm on which it is based.

\medskip
At a high level, our algorithm mimics the approach of Achlioptas and McSherry~\cite{AchlioptasM05}, which is to use PCA to project the data into a low-dimensional space, which has the effect of projecting out much of the noise, and then recursively clustering the data points in that low-dimensional space.  However, where their algorithm uses a Kruskal-based clustering algorithm, we have to use alternative clustering methods that are more amenable to privacy. We develop our algorithm in two distinct phases addressing different aspects of the problem:

\medskip\noindent\emph{Phase I.} In Section~\ref{sec:easycase} we consider an
        ``easy case'' of Theorem~\ref{thm:main_intro},
        where we assume that: all components are
        spherical Gaussians, such that variances of
        each component lie in a small, known range
        (such that their ratio is bounded by a constant factor) and that
        the means of the Gaussians lie in a small
        ball around the origin. Under these assumptions,
        it is fairly straightforward to make the
        PCA-projection step~\cite{VempalaW02,AchlioptasM05}
        private. The key piece of the algorithm that
        needs to be private is computing the principal
        components of the data's covariance matrix.
        We can make this step private by adding
        appropriate noise to this covariance, and the
        key piece of the analysis is to analyze the
        effect of this noise on the principal components,
        extending the work of Dwork \etal~\cite{DworkTTZ14}
        on private PCA. Using the assumptions we make
        in this easy case, we can show that the projection
        shifts each component's mean by $O(\sqrt{k}\maxsigma)$,
        which preserves the separation of the data
        because all variances are within constant
        factor of one another.
        Then, we iteratively cluster the data using
        the $1$-cluster technique of~\cite{NissimSV16, NissimS18}. Lastly, we apply a simplified version of~\cite{KamathLSU19} (Appendix~\ref{sec:missingproofs-psge}) to learn each component's parameters.
        
\medskip\noindent\emph{Phase II.} We then consider the general case where the Gaussians can be non-spherical and wildly different from each other. In this case, if we directly add noise to the covariance matrix to achieve privacy, then the noise will scale polynomially with $\sigma_{\mathrm{max}}/\sigma_{\mathrm{min}}$, which is undesirable.  To deal with the general case, we develop a recursive algorithm, which repeatedly identifies an \emph{secluded} cluster in the data, and then recurses on this isolated cluster and the points outside of the cluster.   To that end we develop in Section~\ref{sec:TerrificBall} a variant of the private clustering algorithm of~\cite{NissimSV16,NissimS18} that finds a secluded ball---a set of points that lie inside of some ball $\ball{p}{r}$ such that the annulus $\ball{p}{10r}\setminus\ball{p}{r}$ is (essentially) empty.\footnote{Since~\cite{NissimSV16,NissimS18} call the ball found by their algorithm a \emph{good} ball, we call ours a \emph{terrific} ball.} 

    We obtain a recursive algorithm in the following way.  First we try to find a secluded ball in the unprojected data.  If we find one then we can split and recurse on the inside and outside of the ball.  If we cannot find a ball, then we can argue that the diameter of the dataset is $\mathrm{poly}(d,k,\sigma_{\max})$.  In the latter case, we can ensure that with $\mathrm{poly}(d,k)$ samples, the PCA-projection of the data preserves the mean of each component up to $O(\sqrt{k}\maxsigma)$, which guarantees that the cluster with the largest variance is secluded, so we can find the secluded ball and recurse.

\subsection{Related Work}
There has been a great deal of work on learning mixtures of distribution classes, particularly mixtures of Gaussians.
There are many ways the objective can be defined, including clustering~\cite{Dasgupta99, DasguptaS00, AroraK01, VempalaW02, AchlioptasM05, ChaudhuriR08a, ChaudhuriR08b, KumarK10, AwasthiS12, RegevV17, HopkinsL18, DiakonikolasKS18b, KothariSS18}, parameter estimation~\cite{KalaiMV10, MoitraV10, BelkinS10,HsuK13,AndersonBGRV14,BhaskaraCMV14, HardtP15,GeHK15, XuHM16, DaskalakisTZ17,AshtianiBHLMP18}, proper learning~\cite{FeldmanOS06,FeldmanOS08, DaskalakisK14, SureshOAJ14, DiakonikolasKKLMS16, LiS17}, and improper learning~\cite{ChanDSS14a}.

Some work on privately learning mixtures of Gaussians includes~\cite{NissimRS07} and~\cite{BunKSW19}. 
The former introduced the sample-and-aggregate method to convert non-private algorithms into private algorithms, and applied it to learning mixtures of Gaussians.
Our sample-and-aggregate-based method can be seen as a modernization of their algorithm, using tools developed over the last decade to handle somewhat more general settings and importantly, reduce the dependence on the range of the parameter space.
As discussed above, our main algorithm improves upon this approach by avoiding an increase in the dependence on the dimension, allowing us to match the sample complexity of the non-private algorithm in certain parameter regimes.
The latter paper (which is concurrent with the present work) provides a general method to convert from a \emph{cover} for a class of distributions to a private learning algorithm for the same class.
The work gets a near-optimal sample complexity of $\tilde O\left(kd^2\left(1/\alpha^2 + 1/\alpha \varepsilon\right)\right)$, but the algorithms have exponential running time in both $k$ and $d$ and their learning guarantees are incomparable to ours (the perform proper learning, while we do clustering and parameter estimation).

Other highly relevant works in private distribution learning include~\cite{KarwaV18,KamathLSU19}, which focus on learning a single Gaussian. 
There are also algorithms for learning structured univariate distributions in TV-distance~\cite{DiakonikolasHS15}, and learning arbitrary univariate distributions in Kolomogorov distance~\cite{BunNSV15}.  Upper and lower bounds for learning the mean of a product distribution over the hypercube in $\ell_\infty$-distance include~\cite{BlumDMN05, BunUV14, DworkMNS06, SteinkeU17a}.
\cite{AcharyaKSZ18} focuses on estimating properties of a distribution, rather than the distribution itself.
\cite{Smith11} gives an algorithm which allows one to estimate asymptotically normal statistics with minimal increase in the sample complexity.
%On the topic of private hypothesis testing, there has been a lot of work in both the asymptotic setting~\cite{VuS09,UhlerSF13,WangLK15,GaboardiLRV16,KiferR17,KakizakiSF17,CampbellBRG18,SwanbergGGRGB19,CouchKSBG19} and the minimax setting~\cite{CaiDK17,AcharyaSZ18,AliakbarpourDR18, AwanS18, CanonneKMSU19}.
There has also been a great deal of work on distribution learning in the local model of differential privacy~\cite{DuchiJW13,WangHWNXYLQ16,KairouzBR16,AcharyaSZ19, DuchiR18,JosephKMW18,YeB18,GaboardiRS19}.
%and hypothesis testing~\cite{GaboardiR18, Sheffet18, AcharyaCFT19}.

%Non-privately, there has been a significant amount of work on learning specific classes of distributions.
%The PAC-style formulation of the problem we consider originated in~\cite{KearnsMRRSS94}.
%While learning Gaussians and product distributions can be considered folklore at this point, some of the other classes we learn have enjoyed more recent study.
%For instance, learning sums of independent random variables was recently considered in~\cite{DaskalakisDS12b}, on the problem of learning Poisson Binomial Distributions (PBDs).
%Since then, there has been additional work on learning PBDs and various generalizations~\cite{DaskalakisKT15, DaskalakisDKT16, DiakonikolasKS16a, DiakonikolasKS16b, DiakonikolasKS16c,DeLS18}.
%
%Piecewise polynomials are a highly-expressive class of distributions, and they can be used to approximate a number of other univariate distribution classes, including distributions which are multi-modal, concave, convex, log-concave, monotone hazard rate, Gaussian, Poisson, Binomial, and more.
%Algorithms for learning such classes are considered in a number of papers, including~\cite{DaskalakisDS12a,ChanDSS14a,ChanDSS14b,AcharyaDK15, AcharyaDLS17}.

Within differential privacy, there are many algorithms for tasks that are related to learning mixtures of Gaussians, notably PCA~\cite{BlumDMN05,KapralovT13,ChaudhuriSS13,DworkTTZ14} and clustering~\cite{NissimRS07,GuptaLMRT10,NissimSV16,NissimS18,BalcanDLMZ17,StemmerK18,HuangL18}.  
Applying these algorithms na\"ively to the problem of learning Gaussian mixtures would necessarily introduce a polynomial dependence on the range of the data, which we seek to avoid.  
Nonetheless, private algorithms for PCA and clustering feature prominently in our solution, and build directly on these works.

\section{Preliminaries}

\subsection{General Preliminaries}\label{subsec:general-prelims}
Let $\Sym_d^+$ denote set of all $d \times d$,
symmetric, and positive semidefinite matrices.
Let $\cG(d) = \{ \cN(\mu,\Sigma) :
    \mu \in \R^{d}, \Sigma \in \Sym_d^+\}$
be the family of $d$-dimensional
Gaussians. We can now define the
class $\cG(d,k)$ of mixtures of
Gaussians as follows.

\begin{defn} [Gaussian Mixtures]
    The class of \emph{Gaussian $k$-mixtures in $\R^{d}$} is
    $$\cG(d,k) \coloneqq \left\{\sum\limits_{i=1}^{k}{w_i G_i}:
       G_1,\dots,G_k \in \cG(d), w_1,\dots,w_k > 0,
       \sum_{i=1}^{k} w_i = 1 \right\}.$$
    We can specify a Gaussian mixture
    by a set of $k$ tuples as:
    $\{(\mu_1,\Sigma_1,w_1),\dots,(\mu_k,\Sigma_k,w_k)\},$
    where each tuple represents the mean,
    covariance matrix, and mixing weight
    of one of its components. Additionally,
    for each $i$, we refer to
    $\sigma_i^2 = \llnorm{\Sigma_i}$
    as the maximum directional variance
    of component $i$.
\end{defn}

We are given $n$ points in $d$ dimensions in
the form of a $(n\times d)$-matrix $X$, and
use $A$ to denote the corresponding matrix of
centers.
$(\alpha, \beta, \epsilon, \delta)$ are the
parameters corresponding to accuracy in
estimation (in total variation distance),
failure probability, and privacy parameters,
respectively. $R$ denotes the radius of a ball
(centered at the origin) which contains all
means, and $\kappa$ is the ratio of the upper
and the lower bound on the variances (for
simplicity, we generally assume the lower bound
is $1$). Also, $\maxsigma^2$ and $\minsigma^2$
are the maximum and minimum variance of
any single component, namely
$\sigma^2_{\max} = \max_i\{\sigma_i^2 \}$
and $\sigma^2_{\min}$ is defined symmetrically;
similarly $\mw$ denotes a lower bound on
the minimum mixing weight. 
We will use the notational convention that
$B^{d}_r(\vec{c})$ denotes the ball in $\R^d$
of radius $r$ centered at $\vec{c} \in \R^{d}$.
As $d$ will typically be clear from context,
we will often suppress $d$ and write $B_{r}(\vec{c})$.

In order to (privately) learn a Gaussian mixture,
we will need to impose two types of conditions on
its parameters---boundedness, and separation. 
\begin{defn}[Separated and Bounded Mixtures]
  \label{def:sep}
    For $s > 0$, a Gaussian mixtures $\cD \in \cG(d,k)$
    is \emph{$s$-separated} if
    $$\forall 1 \leq i < j \leq k,~~~
    \llnorm{\mu_i - \mu_j} \geq s\cdot\max\{\sigma_i,\sigma_j\}.$$
    For $R,\maxsigma,\minsigma,\mw > 0$, a Gaussian mixture
    $\cD \in \cG(d,k)$ is \emph{$(R,\minsigma,\maxsigma,\mw)$-bounded} if
    $$\forall 1 \leq i \leq k,~\| \mu_i \|_2 \leq R,~~~
        \min\limits_{i}\{\sigma_i\} \geq \minsigma,~~~
        \max\limits_{i}\{\sigma_i\} \leq \maxsigma,~~~
%        \min\limits_{i,j}\left\{\frac{\sigma^2_i}{\sigma^2_j}\right\}
%        \leq \kappa,~~~
        \text{and}~~~
        \min\limits_{i}\{w_i\} = \mw.$$
    We denote the family of separated
    and bounded Gaussian mixtures by
    $\cG(d,k,\minsigma,\maxsigma,R,\mw,s)$.
\end{defn}

We now have now established the necessary definitions
to define what it means to ``learn'' a Gaussian mixture
in our setting.
\begin{defn}[$(\alpha,\beta)$-Learning]
        \label{def:mixture-learning}
    Let $\cD \in \cG(d,k)$ be parameterized by
    $\{(\mu_1,\Sigma_1,w_1),\dots,(\mu_k,\Sigma_k,w_k)\}$.
    We say that an algorithm \emph{$(\alpha,\beta)$-learns}
    $\cD$, if on being given sample-access to $\cD$,
    it outputs with probablity at least $1-\beta$
    a distribution $\wh{\cD} \in \cG(d,k)$
    parameterized by $\{(\wh{\mu}_1,\wh{\Sigma}_1,\wh{w}_1),\dots,
    (\wh{\mu}_k,\wh{\Sigma}_k,\wh{w}_k)\}$,
    such that there exists a permutation
    $\pi:[k] \rightarrow [k]$, for which
    the following conditions hold.
    \begin{enumerate}
        \item For all $1 \leq i \leq k$,
            $\SD(\cN(\mu_i,\Sigma_i),
            \cN(\wh{\mu}_{\pi(i)},\wh{\Sigma}_{\pi(i)}))
            \leq O(\alpha)$.
        \item For all $1 \leq i \leq k$,
            $\abs{w_i - \wh{w}_{\pi(i)}} \leq
            O\left(\tfrac{\alpha}{k}\right)$.
    \end{enumerate}
    Note that the above two conditions together
    imply that $\SD(\cD,\wh{\cD}) \leq \alpha$.
\end{defn}

%Note that it is without loss of generality to
%assume that $\min_i \{\sigma_i\} = 1$, since
%we can just rescale the data to satisfy this
%condition, and adjust the parameters $s,R,\msigma$
%accordingly.

\subsubsection{Labelled Samples}
In our analysis, it will be useful to think
of sampling from a Gaussian mixture by the
following two-step process: first we select
a mixture component $\rho_i$ where $\rho_i = j$
with probability $w_j$, and then we choose a
sample $X_i \sim G_{\rho_i}$.  We can then
imagine that each point $X_i$ in the sample
has a \emph{label} $\rho_i$ indicating which
mixture component it was sampled from.

\begin{defn}[Labelled Sample]
    For $\cD \in \cG(d,k,\minsigma,\maxsigma,R,\mw,s)$,
    a \emph{labelled sample} is a set of tuples
    $X^L = ((X_1,\rho_1),\dots,(X_m,\rho_m))$
    sampled from $\cD$ according to the process
    above. We will write $X = (X_1,\dots,X_m)$
    to denote the (unlabelled) sample.
\end{defn}

We emphasize again that the algorithm does not
have access to the labelled sample $X^L$. In fact
given a fixed sample $X$, \emph{any} labelling has
non-zero probability of occurring, so from the
algorithm's perspective $X^L$ is not even well
defined.  Nonetheless, the labelled sample is a
well defined and useful construct for the analysis
of our algorithm, since it allows us to make sense
of the statement that the algorithm with high
probability correctly labels each point in the
sample by which mixture component it came from.

\subsubsection{Deterministic Regularity Conditions}

In order to analyze our algorithm, it will be
useful to establish several regularity conditions
that are satisfied (with high probability) by
samples from a Gaussian mixture. In this section
we will state the following regularity conditions.

The first condition asserts that each mixture
component is represented approximately the right
number of times.
\begin{condition} \label{cond:num-points}
    Let $X^{L} = ((X_1,\rho_1),\dots,(X_n,\rho_n))$ be
    a labelled sample from a Gaussian $k$-mixture $\cD$.
    For every label $1 \leq u \leq k$,
    \begin{enumerate}
        \item the number of points from  component $u$
            (i.e.~$| \{ X_i : \rho_i = u\} |)$ is in
            $[\tfrac{n w_u }{2} , \tfrac{3nw_u }{2}]$,
            and
        \item if $w_u \geq 4\alpha/9k$ then the number
            of points from  component $u$ is in
            $[n(w_u-\tfrac{\alpha}{9k}),n(w_u+\tfrac{\alpha}{9k})]$.
    \end{enumerate}
\end{condition}

%The final condition asserts that the empirical
%mean of the points in each mixture component
%is approximately correct.
%\begin{condition}\label{cond:gauss-mean-close}
%    Let $X^L = ((X_1,\rho_1),\dots,(X_n,\rho_n))$
%    be a labelled sample from a Gaussian mixture
%    $\cD$ with parameters $\{(\mu_1,\sigma_1,w_1),
%    \dots,(\mu_k,\sigma_k,w_k)\}$.  For each
%    $1 \leq u \leq k$, the empirical mean
%    $\wt{\mu}_i$ of the points with label $u$ is
%    within $\sqrt{d}/2$ of $\mu_u$.  That is, for
%    every $1 \leq u \leq k$, 
%    $$\llnorm{\mu_u - \wt{\mu}_u}
%        \leq \frac{\sqrt{d}}{2},$$
%    where
%    $\wt{\mu}_u =  (\sum_{X_i : \rho_i = u} X_i) / |\{ X_i : \rho_i = u \}|$.
%\end{condition}

In Appendix~\ref{sec:regularityproofs}, we prove
the following lemma, which states that if the number
of samples is sufficiently large, then with high
probability each of the above conditions is satisfied.
\begin{lemma} \label{lem:regularity-conditions-hold}
    Let $X^L = ((X_1,\rho_1),\dots,(X_n,\rho_n))$
    be a labelled sample from a Gaussian $k$-mixture
    $\cD \in \cG(d,k)$ with parameters
    $\{(\mu_1,\Sigma_1,w_1),\dots,(\mu_k,\Sigma_k,w_k)\}$.
    If
    $$n \geq \max\left\{\frac{12}{\mw}\ln(2k/\beta),
        \frac{405k^2}{2\alpha^2}\ln(2k/\beta)\right\},$$
    then with probability at least $1-\beta$, $X^L$
    (alternatively $X$) satisfies Condition
    \ref{cond:num-points}.
\end{lemma}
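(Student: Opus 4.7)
The plan is to treat the label counts $N_u := |\{i : \rho_i = u\}|$ as a $\mathrm{Binomial}(n, w_u)$ random variable for each component $u$ (since labels are drawn i.i.d.\ and the unlabelled samples are generated conditional on the labels), and then obtain both parts of Condition~\ref{cond:num-points} by standard concentration inequalities applied to $N_u$, followed by a union bound over $u \in [k]$.

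For part~1, I would apply the multiplicative Chernoff bound with deviation parameter $t = 1/2$ to get
\[
\Pr\!\left[\,|N_u - n w_u| \geq n w_u/2\,\right] \leq 2\exp(-n w_u/12) \leq 2\exp(-n \mw /12).
\]
Setting this at most $\beta/(2k)$ for each $u$ and solving for $n$ gives the first term $n \gtrsim \frac{1}{\mw}\ln(k/\beta)$ in the hypothesis of the lemma.

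For part~2, under the assumption $w_u \geq 4\alpha/(9k)$, the target additive deviation $s := \alpha n/(9k)$ satisfies $s/(n w_u) \leq 1/4$, so the multiplicative Chernoff bound is applicable with $t = s/(n w_u) = \alpha/(9 k w_u)$; alternatively one can just invoke Hoeffding's inequality directly on the Bernoulli indicators $\mathbf{1}[\rho_i = u]$ to obtain
\[
\Pr\!\left[\,|N_u - n w_u| \geq \tfrac{\alpha n}{9 k}\,\right] \leq 2\exp\!\left(-\Omega\!\left(\tfrac{n \alpha^2}{k^2}\right)\right).
\]
Setting this at most $\beta/(2k)$ per component and solving for $n$ yields the second term $n \gtrsim \frac{k^2}{\alpha^2}\ln(k/\beta)$.

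A final union bound over the $k$ components and the (at most) two events per component gives overall failure probability at most $\beta$, proving that $X^L$ (and hence $X$) satisfies Condition~\ref{cond:num-points}. There is no real obstacle here; the only mild subtlety is checking that the side condition $w_u \geq 4\alpha/(9k)$ is exactly what makes the additive deviation $\alpha n/(9k)$ lie in the regime where the multiplicative Chernoff bound retains its usual quadratic exponent, so that we do not pick up an extra factor depending on $w_u$ in the sample complexity bound.
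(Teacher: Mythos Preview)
Your proposal is correct and follows essentially the same route as the paper: the paper proves part~1 via the multiplicative Chernoff bound (its Lemma~\ref{lem:chernoff-mult}) and part~2 via Bernstein's inequality (its Lemma~\ref{lem:chernoff-add}) with $p=w_i$ and additive deviation $\epsilon=\alpha/(9k)$, then union-bounds over the $k$ components. The only cosmetic difference is that the paper's use of Bernstein (rather than Hoeffding or multiplicative Chernoff) for part~2 is what produces the specific constant $405/2$ appearing in the hypothesis.
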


\subsection{Privacy Preliminaries}

In this section we review the basic definitions
of differential privacy, and develop the algorithmic
toolkit that we need.

\subsubsection{Differential Privacy}

%\ifnum\comments=1{\color{blue} \noindent Questions, comments, and action items:
%\begin{enumerate}
%\item What are datasets and neighboring datasets? %Gautam: done
%\item Do we ever want to use advanced composition to get better accuracy?  If so then it needs to be defined. %Gautam: Done, though I changed it to be a fixed epsilon
%\item The definition of composition didn't make sense.  I don't know what an ``adaptive composition of $M_1,\dots,M_{\tau}$'' is.  The whole point of adaptive composition is that the mechanisms are not fixed in advance. %Gautam: Maybe this was already done, but I think it's fine.
%\item The PCount thing needs to be completed. %Gautam: done
%\end{enumerate}
%}\fi

\begin{defn}[Differential Privacy (DP) \cite{DworkMNS06}]
    \label{def:dp}
    A randomized algorithm $M:\cX^n \rightarrow \cY$
    satisfies $(\eps,\delta)$- differential privacy
    ($(\eps,\delta)$-DP) if for every pair of
    neighboring datasets $X,X' \in \cX^n$ (i.e., datasets that differ in exactly one entry),
    $$\forall Y \subseteq \cY~~~
        \pr{}{M(X) \in Y} \leq e^{\eps}\cdot
        \pr{}{M(X') \in Y} + \delta.$$
\end{defn}

Two useful properties of differential privacy are
closure under post-processing and 
composition.
\begin{lemma}[Post Processing \cite{DworkMNS06}]
        \label{lem:post-processing}
    If $M:\cX^n \rightarrow \cY$ is
    $(\eps,\delta)$-DP, and $P:\cY \rightarrow \cZ$
    is any randomized function, then the algorithm
    $P \circ M$ is $(\eps,\delta)$-DP.
\end{lemma}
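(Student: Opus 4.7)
The plan is to give the standard two-stage argument: first handle deterministic $P$ by pulling back events, and then handle randomized $P$ by averaging over its coins. First I would fix an arbitrary measurable $Z \subseteq \cZ$ and an arbitrary pair of neighboring datasets $X, X' \in \cX^n$. The goal is to show
\[
\Pr[(P \circ M)(X) \in Z] \le e^{\eps}\,\Pr[(P \circ M)(X') \in Z] + \delta.
\]

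Next I would dispose of the deterministic case. If $P:\cY \to \cZ$ is deterministic, let $Y := P^{-1}(Z) = \{y \in \cY : P(y) \in Z\}$. Then by definition $\Pr[P(M(X)) \in Z] = \Pr[M(X) \in Y]$, and similarly for $X'$. Applying the $(\eps,\delta)$-DP guarantee of $M$ to the event $Y$ gives
\[
\Pr[M(X) \in Y] \le e^{\eps}\,\Pr[M(X') \in Y] + \delta,
\]
which is exactly the desired inequality for $P \circ M$. The only subtlety is making sure $Y$ is a valid (measurable) event on which Definition~\ref{def:dp} applies; in the discrete/countable setting this is immediate, and in the general setting one treats $P$ as a measurable function so that $Y$ is measurable whenever $Z$ is.

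Finally, I would lift to randomized $P$ by writing $P$ as a mixture of deterministic maps. Introduce a random string $r$ drawn from some distribution $\mathcal{R}$, independent of $M$, so that $P(y)$ has the same distribution as $P_r(y)$ for a deterministic family $\{P_r\}$. Then
\[
\Pr[(P \circ M)(X) \in Z] = \exover{r \sim \mathcal{R}}{\Pr[P_r(M(X)) \in Z]}.
\]
Applying the deterministic case pointwise in $r$ and taking expectations (using linearity, and the fact that $r$ is independent of $M$) yields
\[
\exover{r}{\Pr[P_r(M(X)) \in Z]} \le e^{\eps}\,\exover{r}{\Pr[P_r(M(X')) \in Z]} + \delta = e^{\eps}\,\Pr[(P \circ M)(X') \in Z] + \delta,
\]
completing the proof. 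The only step that requires any care is the reduction from randomized to deterministic $P$: one must ensure $P$'s internal randomness is independent of $M$'s randomness (which is standard when we view $P \circ M$ as running $M$ and then $P$ on fresh coins), so that the conditional-on-$r$ argument is valid. Everything else is bookkeeping.
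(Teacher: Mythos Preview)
Your proof is correct and is the standard argument for this fact. Note that the paper does not actually prove Lemma~\ref{lem:post-processing}; it merely states it as a known property of differential privacy and cites~\cite{DworkMNS06}, so there is no ``paper's own proof'' to compare against here.
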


\begin{lemma}[Composition of DP \cite{DworkMNS06,DworkRV10}]
        \label{lem:composition}
    If $M_1,\dots,M_T$ are
    $(\eps_0,\delta_0)$-DP
    algorithms, and
    $$M(X) = (M_1(X),\dots,M_T(X))$$
    is the composition of these mechanisms, then $M$ is
    $(\eps,\delta)$-DP for
    \begin{itemize}
      \item (basic composition) $\eps = \eps_0T$ and $\delta = \delta_0T$;
      \item (advanced composition) $\eps = \eps_0\sqrt{2T \log(1/\delta')} + \eps_0(e^{\eps_0} -1)T$ and $\delta = \delta_0 T  + \delta'$, for any $\delta' > 0$.
    \end{itemize}
    Moreover, this property holds even if
    $M_1,\dots,M_T$ are chosen adaptively.
\end{lemma}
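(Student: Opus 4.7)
The plan is to argue via the \emph{privacy loss random variable} (PLRV). Fix neighboring datasets $X, X'$, let $Y = (Y_1,\ldots,Y_T)$ denote the transcript produced by the adaptive composition $M$ on $X$, and for each $i$ define the conditional PLRV
$$L_i(Y_{\le i}) = \log \frac{\Pr[M_i(X) = Y_i \mid Y_1,\ldots,Y_{i-1}]}{\Pr[M_i(X') = Y_i \mid Y_1,\ldots,Y_{i-1}]},$$
with total loss $L = \sum_{i=1}^T L_i$. A standard equivalence (essentially rewriting Definition~\ref{def:dp}) is that $M$ is $(\eps,\delta)$-DP iff $\Pr_{Y \sim M(X)}[L(Y) > \eps] \le \delta$, up to an absorbable additive $\delta$ term. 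So both conclusions reduce to controlling the tail of $L$.

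For basic composition, I would use a coupling/union-bound argument. The $(\eps_0,\delta_0)$-DP guarantee on $M_i$ lets us couple its output on $X$ and $X'$ to a pair of distributions that are pure $\eps_0$-DP, the coupling failing with probability at most $\delta_0$. By a union bound over the $T$ rounds, with probability at least $1-T\delta_0$ none of the couplings fail; on this event each $L_i$ is pointwise bounded by $\eps_0$, so $L \le T\eps_0$. This yields $(T\eps_0, T\delta_0)$-DP.

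For advanced composition, the key is to exploit the fact that while $|L_i| \le \eps_0$, its conditional mean is much smaller: a direct calculation using pure $\eps_0$-DP shows $\EE[L_i \mid Y_1,\ldots,Y_{i-1}] \le \eps_0(e^{\eps_0}-1)$, again outside a failure event of probability $\delta_0$ per round. Thus $\{L_i - \EE[L_i \mid Y_{<i}]\}$ is a martingale difference sequence bounded by roughly $\eps_0$. Applying Azuma-Hoeffding gives, with probability at least $1-\delta'$,
$$L \;\le\; T\,\eps_0(e^{\eps_0}-1) \;+\; \eps_0\sqrt{2T\log(1/\delta')}.$$
Combining this tail with the $T\delta_0$ probability that some coupling fails produces the advertised $(\eps,\delta)$ with $\eps = \eps_0\sqrt{2T\log(1/\delta')} + \eps_0(e^{\eps_0}-1)T$ and $\delta = T\delta_0 + \delta'$.

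The main obstacle is making the adaptive part rigorous: the mechanism $M_i$ is itself a (measurable) function of $Y_1,\ldots,Y_{i-1}$, so one must argue that for every realization of the history, the resulting fixed mechanism still obeys $(\eps_0,\delta_0)$-DP, which lets the single-round coupling and PLRV bounds be applied conditionally. Once this measurability/conditioning is handled carefully, the martingale structure---and hence both conclusions---follows uniformly over adaptive choices.
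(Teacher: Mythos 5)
Your proposal is correct and is essentially the standard argument from the cited sources: the paper states Lemma~\ref{lem:composition} as a known preliminary from \cite{DworkMNS06,DworkRV10} without proof, and your privacy-loss-martingale sketch is precisely the Dwork--Rothblum--Vadhan proof (per-round reduction of $(\eps_0,\delta_0)$-DP to pure $\eps_0$-DP up to a $\delta_0$-probability coupling failure, the bound $\EE[L_i \mid Y_{<i}] \leq \eps_0(e^{\eps_0}-1)$, and Azuma applied to differences lying in an interval of width $2\eps_0$, which is what yields the clean $\eps_0\sqrt{2T\log(1/\delta')}$ term). No gap to report.
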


\subsubsection{Basic Differentially Private Mechanisms.}
We first state standard results on achieving
privacy via noise addition proportional to
sensitivity~\cite{DworkMNS06}.

\begin{defn}[Sensitivity]
    Let $f : \cX^n \to \R^d$ be a function,
    its \emph{$\ell_1$-sensitivity} and
    \emph{$\ell_2$-sensitivity} are
    $$\Delta_{f,1} = \max_{X \sim X' \in \cX^n} \| f(X) - f(X') \|_1
    ~~~~\textrm{and}~~~~\Delta_{f,2} = \max_{X \sim X' \in \cX^n} \| f(X) - f(X') \|_2$$
    respectively.
    Here, $X \sim X'$ denotes that $X$ and $X'$ are neighboring datasets (i.e., those that differ in exactly one entry).
\end{defn}

For functions with bounded $\ell_1$-sensitivity,
we can achieve $(\eps,0)$-DP by adding noise from
a Laplace distribution proportional to
$\ell_1$-sensitivity. For functions taking values
in $\R^d$ for large $d$ it is more useful to add
noise from a Gaussian distribution proportional
to the $\ell_2$-sensitivity, achieving $(\eps,\delta)$-DP.
\begin{lem}[Laplace Mechanism] \label{lem:laplacedp}
    Let $f : \cX^n \to \R^d$ be a function
    with $\ell_1$-sensitivity $\Delta_{f,1}$.
    Then the Laplace mechanism
    $$M(X) = f(X) + \Lap\left(\frac{\Delta_{f,1}}
        {\eps}\right)^{\otimes d}$$
    satisfies $(\eps,0)$-DP.
\end{lem}

One application of the Laplace mechanism is
private counting. The function $\PCount_{\eps}(X,T)$
is $\eps$-DP, and returns the
number of points of $X$ that lie
in $T$, i.e., $|X \cap T| + \Lap(1/\eps)$.
The following is immediate since the statistic $|X \cap T|$ is $1$-sensitive, and by tail bounds on Laplace random variables.
\begin{lemma}
  $\PCount$ is $\eps$-DP, and with probability at least $1 - \beta$, outputs an estimate of $|X \cap T|$ which is accurate up to an additive $O\left(\frac{\log (1/\beta)}{\varepsilon}\right)$.
\end{lemma}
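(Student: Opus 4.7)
The plan is to verify both parts of the lemma directly from the Laplace mechanism (Lemma~\ref{lem:laplacedp}) together with a standard tail bound on the Laplace distribution. The privacy half is essentially a one-line sensitivity argument, and the accuracy half is a one-line concentration argument, so the main task is to check the sensitivity of the count statistic and then invoke the tail inequality.

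For privacy, I would first argue that the function $f(X) = |X \cap T|$ has $\ell_1$-sensitivity $\Delta_{f,1} = 1$: if $X$ and $X'$ differ in exactly one entry, then replacing that entry can change membership in $T$ for at most one point, so $||X \cap T| - |X' \cap T|| \le 1$. Applying Lemma~\ref{lem:laplacedp} with this sensitivity, the mechanism $\PCount_{\eps}(X,T) = |X \cap T| + \Lap(1/\eps)$ satisfies $(\eps,0)$-DP, which matches the stated guarantee.

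For accuracy, I would use the fact that if $Z \sim \Lap(b)$, then $\Pr[|Z| \ge t] = \exp(-t/b)$. Setting $b = 1/\eps$ and $t = \ln(1/\beta)/\eps$, this probability is exactly $\beta$, so with probability at least $1-\beta$ we have $|Z| \le \ln(1/\beta)/\eps = O(\log(1/\beta)/\eps)$. Since the output differs from the true count $|X \cap T|$ by exactly this Laplace random variable, the additive error is bounded by $O(\log(1/\beta)/\eps)$ with probability at least $1-\beta$, completing the proof.

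There is no real obstacle here: both the sensitivity computation and the Laplace tail bound are textbook, and the lemma is essentially a named instantiation of the Laplace mechanism for the specific statistic $|X \cap T|$. The only thing to be careful about is making sure the neighboring-dataset convention used in defining sensitivity matches the one in Definition~\ref{def:dp} (differing in exactly one entry), which it does.
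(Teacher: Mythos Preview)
Your proposal is correct and matches the paper's own justification essentially verbatim: the paper simply notes that $|X \cap T|$ is $1$-sensitive (so Lemma~\ref{lem:laplacedp} gives $(\eps,0)$-DP) and then invokes the Laplace tail bound (Lemma~\ref{lem:lap-conc}) for accuracy. There is nothing to add.
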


\begin{lem}[Gaussian Mechanism] \label{lem:gaussiandp}
    Let $f : \cX^n \to \R^d$ be a function
    with $\ell_2$-sensitivity $\Delta_{f,2}$.
    Then the Gaussian mechanism
    $$M(X) = f(X) + \cN\left(0,\left(\frac{\Delta_{f,2}
        \sqrt{2\ln(2/\delta)}}{\eps}\right)^2 \cdot \id_{d \times d}\right)$$
    satisfies $(\eps,\delta)$-DP.
\end{lem}

%\gnote{Also, this needs to be integrated into the actual body later, which says we do SVT, if we instead use AboveThreshold.}
Now consider a setting where there are
multiple queries to be performed on a dataset,
but we only want to know if there is a query
whose answer on the dataset lies above a certain
threshold. We introduce a differentially private
algorithm from \cite{DworkR14,DworkNRRV09} which does that.
\begin{theorem}[Above Threshold]\label{thm:above-threshold}
    Suppose we are given a dataset $X$, a sequence of queries
    $f_1,\dots,f_t$, with $\ell_1$-sensitivity
    $\Delta$, and a threshold $T$. There
    exists an $\eps$-differentially private
    algorithm $\AboveThreshold_{\eps}$ that outputs
    a stream of answers $a_1,\dots,a_{t'}\in\{\bot,\top\}$,
    where $t' \leq t$, such that if $a_{t'} = \top$,
    then $a_1,\dots,a_{t'-1} = \bot$. Then
    the following holds with probability at
    least $1-\beta$.
    If $a_{t'} = \top$, then
    $$f_{t'}(X) \geq T - \Gamma,$$
    and for all $1 \leq i \leq t'$, if
    $a_{i} = \bot$, then
    $$f_i(X) \leq T + \Gamma,$$
    where
    $$\Gamma = \frac{8\Delta(\log{t} + \log(2/\beta))}{\eps}.$$
\end{theorem}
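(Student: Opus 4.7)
The plan is to instantiate the standard Sparse Vector Technique. The algorithm draws a noisy threshold $\widehat T = T + \rho$ once, where $\rho \sim \Lap(2\Delta/\eps)$, and then processes the queries one at a time: upon receiving $f_i$, it samples an independent noise $\nu_i \sim \Lap(4\Delta/\eps)$, compares the noisy answer $f_i(X) + \nu_i$ to $\widehat T$, outputs $\bot$ if the noisy answer is below $\widehat T$, and otherwise outputs $\top$ and halts. By construction, the output stream has the required shape (at most one $\top$, which if it appears is at the end), so only privacy and accuracy remain.

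For privacy, I would fix a neighboring pair $X \sim X'$ and an arbitrary output transcript $a_1, \dots, a_{t'-1} = \bot$, $a_{t'} = \top$, and compare the densities of the randomness that produces this transcript under $X$ and under $X'$. The standard trick is to couple the two executions by shifting the threshold noise: replace $\rho$ in the $X'$-execution by $\rho' = \rho + \Delta$. Since $\rho$ has scale $2\Delta/\eps$, this shift costs a factor of at most $e^{\eps/2}$ in the density. After the shift, for every $i < t'$ the event ``$f_i(X') + \nu_i < T + \rho'$'' is implied by ``$f_i(X) + \nu_i < T + \rho$'' because $|f_i(X) - f_i(X')| \leq \Delta$, so the $\bot$-events contribute no privacy loss at all. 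For the single index $t'$ where the answer is $\top$, the event ``$f_{t'}(X') + \nu_{t'} \geq T + \rho'$'' differs from its $X$-counterpart by a shift of at most $2\Delta$ in $\nu_{t'}$, which costs a further $e^{\eps/2}$ because $\nu_{t'}$ has scale $4\Delta/\eps$. The total ratio is therefore at most $e^{\eps}$, establishing $\eps$-DP.

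For accuracy, I would apply the Laplace tail bound $\Pr[|Z| > b \ln(1/\beta')] \leq \beta'$ for $Z \sim \Lap(b)$. Setting $\beta' = \beta/2$ for the threshold noise gives $|\rho| \leq (2\Delta/\eps)\ln(2/\beta)$ with probability $1-\beta/2$. A union bound over the (at most) $t$ query noises with parameter $\beta/(2t)$ each gives $\max_i |\nu_i| \leq (4\Delta/\eps)\ln(2t/\beta)$ with probability $1-\beta/2$. On the intersection of these events, which holds with probability at least $1-\beta$, the combined error $|\rho| + \max_i |\nu_i|$ is bounded by $(8\Delta/\eps)(\log t + \log(2/\beta)) = \Gamma$. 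Now on this good event: if $a_{t'} = \top$ then $f_{t'}(X) + \nu_{t'} \geq T + \rho$, hence $f_{t'}(X) \geq T - (|\rho| + |\nu_{t'}|) \geq T - \Gamma$, and if $a_i = \bot$ then $f_i(X) + \nu_i < T + \rho$, hence $f_i(X) \leq T + (|\rho| + |\nu_i|) \leq T + \Gamma$.

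The only subtle step is the privacy analysis: one has to be careful that the argument only pays for the one $\top$ event and not for the potentially many $\bot$ events, which is why the threshold noise and the query noise are given different scales and why the monotone implication after shifting the threshold is essential. Everything else is a direct application of Laplace mechanism tail bounds and a union bound.
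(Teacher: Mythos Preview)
Your proposal is correct: this is precisely the standard Sparse Vector/AboveThreshold construction and analysis from \cite{DworkR14,DworkNRRV09}. The paper itself does not give a proof of this theorem at all; it simply quotes the result from those references, so there is nothing to compare against beyond noting that your argument is the canonical one. Your accuracy bound $(2\Delta/\eps)\ln(2/\beta)+(4\Delta/\eps)\ln(2t/\beta)=(6\Delta/\eps)\ln(2/\beta)+(4\Delta/\eps)\ln t$ is indeed at most $\Gamma$, and the privacy sketch (threshold shift by $\Delta$ absorbing all the $\bot$ rounds, then a single $2\Delta$ shift in the one $\top$ round) is the right idea, though of course a fully rigorous version integrates over the noise densities.
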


\subsection{Technical Preliminaries}

\begin{lem}[Hanson-Wright inequality~\cite{HansonW71}]
  \label{lem:HW}
  Let $X \sim \cN(0,\mathbb{I}_{d \times d})$ and let $A$ be a $d \times d$ matrix.
  Then for all $t > 0$, the following two bounds hold:
  \[
    \pr{}{X^TAX - \tr(A) \geq 2 \|A\|_F \sqrt{t} + 2\|A\|_2t} \leq \exp(-t);
  \]
  \[
    \pr{}{X^TAX - \tr(A) \leq -2 \|A\|_F \sqrt{t} } \leq \exp(-t).
  \]
\end{lem}
\iffalse
\begin{lem}[Follows from Lemma 4.4 of~\cite{DiakonikolasKKLMS16}]
	\label{lem:meanofgaussiansubset}
    Let $X_1, \dots, X_n$ be i.i.d.\ samples from
    $\cN(0, \mathbb{I}_{d \times d})$, where
    $n = \Omega\left(\frac{d + \log(1/\beta)}{\tau^2}\right)$,
    for $\tau > 0$ less than some absolute constant.
    Then with probability at least $1 - \beta$,
    simultaneously for all $\mathcal{S} \subseteq [n]$
    such that $|\mathcal{S}| \geq (1 - \tau)n$, we have
    that
    $$\left\|\frac{1}{|S|}\sum_{i \in \mathcal{S}} X_i
        \right\|_2 \leq O\left(\tau \sqrt{\log(1/\tau)}\right).$$
\end{lem}
\fi
The following are standard concentration results
for the empirical mean and covariance of a set of
Gaussian vectors (see, e.g.,~\cite{DiakonikolasKKLMS16}).
\begin{lem}\label{lem:spectralnormofgaussians}
    Let $X_1, \dots, X_n$ be i.i.d.\ samples
    from $\cN(0, \mathbb{I}_{d \times d})$.
    Then we have that 
    \[
        \pr{}{\left\|\frac{1}{n} \sum_{i \in [n]} X_i \right\|_2 \geq t}
            \leq 4\exp(c_1 d - c_2 nt^2);
    \]
    \[
        \pr{}{\left\|\frac{1}{n} \sum_{i \in [n]} X_iX_i^T  - I \right\|_2 \geq t}
            \leq 4\exp(c_3 d - c_4 n\min(t,t^2)),
    \]
    where $c_1, c_2, c_3, c_4 >0$ are some absolute constants.
\end{lem}

We finally have a lemma that translates closeness
of Gaussian distributions from one metric to another.
It is a combination of Corollaries 2.13 and 2.14
of \cite{DiakonikolasKKLMS16}.
\begin{lemma}\label{lem:gaussian-norm-translation}
    Let $\alpha > 0$ be smaller than some absolute
    constant. Suppose that
    $$\llnorm{\Sigma^{-1/2}(\mu - \wh{\mu})} \leq O(\alpha)~~~
        \text{and}~~~
        \fnorm{\id - \Sigma^{-1/2}\wh{\Sigma}\Sigma^{-1/2}} \leq O(\alpha),$$
    where $\cN(\mu,\Sigma)$ is a Gaussian distribution
    in $\R^d$, $\mu \in \R^d$, and $\wh{\Sigma} \in \R^{d \times d}$
    is a PSD matrix. Then
    $$\SD\left(\cN(\mu,\Sigma),\cN(\wh{\mu},\wh{\Sigma})\right)
        \leq \alpha.$$
\end{lemma}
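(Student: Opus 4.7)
The plan is to reduce to a canonical form, then split the problem into a mean-only perturbation and a covariance-only perturbation via the triangle inequality for total variation distance, and finally control each piece using Pinsker's inequality together with the closed-form KL divergence between Gaussians.

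First, I would exploit the affine invariance of total variation distance. Applying the linear map $x \mapsto \Sigma^{-1/2}(x-\mu)$ to both distributions turns $\cN(\mu,\Sigma)$ into $\cN(0,\id)$ and $\cN(\wh\mu,\wh\Sigma)$ into $\cN(\nu, M)$, where $\nu := \Sigma^{-1/2}(\wh\mu-\mu)$ and $M := \Sigma^{-1/2}\wh\Sigma\,\Sigma^{-1/2}$. By hypothesis, $\|\nu\|_2 \le O(\alpha)$ and $\|\id - M\|_F \le O(\alpha)$. Then by the triangle inequality,
\[
\SD(\cN(0,\id),\cN(\nu,M)) \le \SD(\cN(0,\id),\cN(0,M)) + \SD(\cN(0,M),\cN(\nu,M)).
\]

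For the second term (same covariance, shifted mean), the KL divergence is $\tfrac12 \nu^T M^{-1}\nu$. Since $\|\id-M\|_F \le O(\alpha)$ controls the spectral norm as well, all eigenvalues of $M$ lie in $[1-O(\alpha),1+O(\alpha)]$, so $M^{-1}\preceq (1+O(\alpha))\id$ for small enough $\alpha$. Hence $\nu^T M^{-1}\nu \le (1+O(\alpha))\|\nu\|_2^2 \le O(\alpha^2)$, and Pinsker's inequality gives $\SD \le O(\alpha)$.

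For the first term, I would diagonalize $M = \id + E$ with $\|E\|_F \le O(\alpha)$; writing the eigenvalues of $E$ as $\lambda_1,\dots,\lambda_d$, we have $\sum_i \lambda_i^2 \le O(\alpha^2)$ and therefore each $|\lambda_i| \le O(\alpha)$. The KL formula gives
\[
\mathrm{KL}(\cN(0,\id)\,\|\,\cN(0,M)) = \tfrac12\bigl[\tr(M^{-1}) - d + \ln\det M\bigr] = \tfrac12 \sum_i \Bigl[\tfrac{1}{1+\lambda_i} - 1 + \ln(1+\lambda_i)\Bigr].
\]
A second-order Taylor expansion of each summand around $\lambda_i = 0$ yields $\tfrac12\lambda_i^2 + O(\lambda_i^3)$, so the total is at most $\tfrac14 \sum_i \lambda_i^2 + O(\alpha)\sum_i \lambda_i^2 = O(\|E\|_F^2) = O(\alpha^2)$. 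Pinsker's inequality again gives $\SD \le O(\alpha)$, and adding the two bounds completes the proof (after absorbing constants into the hidden $O(\cdot)$).

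The only real obstacle is justifying the Taylor-expansion step with uniform remainder estimates; this needs $\alpha$ to be below some absolute constant so that $|\lambda_i|$ is uniformly bounded away from $-1$, which is exactly the assumption in the lemma statement. Everything else is routine once the reduction to $\cN(0,\id)$ versus $\cN(\nu,M)$ is in place, and in particular the proof indeed decomposes naturally into the two pieces corresponding to Corollaries 2.13 and 2.14 of \cite{DiakonikolasKKLMS16} as the authors indicate.
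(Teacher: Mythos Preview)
Your argument is correct. The paper does not actually prove this lemma; it simply states that it is a combination of Corollaries~2.13 and~2.14 of \cite{DiakonikolasKKLMS16} and uses it as a black box. Your proof sketch---affine reduction to $\cN(0,\id)$ versus $\cN(\nu,M)$, triangle inequality into a mean-only piece and a covariance-only piece, and Pinsker plus the closed-form Gaussian KL on each piece---is the standard route and is exactly the decomposition the cited corollaries encode. So you have supplied a self-contained proof where the paper only cites one, and the two align.
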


\section{Robustness of PCA-Projection to Noise}
\label{sec:PCA}

One of the main tools used in learning mixtures of Gaussians under separation is Principal Component Analysis (PCA).
In particular, it is common to project onto the top $k$ principal components (a subspace which will contain the means of the components).
In some sense, this eliminates directions which do not contain meaningful information while preserving the distance between the means, thus allowing us to cluster with separation based on the ``true'' dimension of the data, $k$, rather than the ambient dimension $d$.
In this section, we show that a similar statement holds, even after perturbations required for privacy.

%Specifically, we show that the perturbed (noisy) $k$-PCA projection of the data is useful for the purpose of approximating the means of the distributions. 
Before showing the result for perturbed PCA, we reiterate the (very simple) proof of Achlioptas and McSherry~\cite{AchlioptasM05}. Fixing a cluster $i$, denoting its empirical mean as $\bar \mu_i$, the mean of the resulting projection as $\hat \mu_i$, $\Pi$ as the $k$-PCA projection matrix, $u_i\in \{0,1\}^n$ as the vector indicating which datapoint was sampled from cluster $i$, and $n_i$ as the number of datapoints which were sampled from cluster $i$, we have
\[  \|\bar\mu_i -\hat\mu_i \|_2 = \|\tfrac 1 {n_i} \left(X^T- (X\Pi)^T \right)u_i \|_2 \leq \|X- X\Pi\|_2 \tfrac{\|u_i\|_2}{n_i} \leq \tfrac 1 {\sqrt{n_i}} \|X-A\|_2,  \] where
the last inequality follows from the $X\Pi$ being the best $k$-rank approximation of $X$ whereas $A$ is any rank-$k$ matrix. 
In particular, we could choose $A$ to be the matrix where each row of $X$ is replaced by the (unknown) center of the component which generated it (as we do in Lemma~\ref{lem:PCA_AM_style} below).
We now extend this result to a perturbed $k$-PCA projection as given by the following lemma.

\begin{lemma}\label{lem:PCA_AM_style}
	Let $X\in \R^{n\times d}$ be a collection of
    $n$ datapoints from $k$ clusters each centered
    at $\mu_1, \mu_2,...,\mu_k$. Let $A\in\R^{n\times d}$
    be the corresponding matrix of (unknown) centers
    (for each $j$ we place the center $\mu_{c(j)}$
    with $c(j)$ denoting the clustering point $X_j$
    belongs to). Let $\Pi_{V_k}\in \R^{d\times d}$
    denote the $k$-PCA projection of $X$'s rows. Let
    $\Pi_U\in \R^{d\times d}$ be a projection such
    that for some bound $B\geq 0$ it holds that
    $\|X^T X - (X\Pi_U)^T(X\Pi_U)\|_2 \leq
    \|X^T X - (X\Pi_{V_k})^T(X\Pi_{V_k})\|_2 + B$.
    Denote $\bar{\mu_i}$ as the empirical mean of all
    points in cluster $i$ and denote $\hat{\mu_i}$ as
    the projection of the empirical mean
    $\hat{\mu_i} = \Pi_U \bar{\mu_i}$. Then
	\[ \|\bar{\mu_i} - \hat{\mu_i}\|_2 \leq
        \tfrac 1 {\sqrt {n_i}}\|X-A\|_2+ \sqrt{\tfrac B {n_i}}  \] 
\end{lemma}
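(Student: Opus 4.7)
The plan is to imitate the two-line Achlioptas--McSherry argument recalled just before the lemma, inserting one additional ingredient that quantifies the price of using a suboptimal projection $\Pi_U$ in place of the exact $k$-PCA projection $\Pi_{V_k}$. First I would repeat the AM manipulation verbatim with $\Pi_U$ in place of $\Pi_{V_k}$: letting $u_i\in\{0,1\}^n$ denote the indicator vector of cluster $i$, we have $\bar\mu_i = \tfrac{1}{n_i}X^T u_i$ and $\hat\mu_i = \tfrac{1}{n_i}(X\Pi_U)^T u_i$, so submultiplicativity of the spectral norm together with $\|u_i\|_2 = \sqrt{n_i}$ yields
\[
\|\bar\mu_i - \hat\mu_i\|_2 \;\leq\; \frac{\|u_i\|_2}{n_i}\,\|X - X\Pi_U\|_2 \;=\; \frac{1}{\sqrt{n_i}}\,\|X - X\Pi_U\|_2.
\]
Thus everything reduces to controlling the spectral norm $\|X - X\Pi_U\|_2$.

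The crux of the argument is the inequality $\|X - X\Pi_U\|_2^2 \leq \|X^T X - \Pi_U X^T X \Pi_U\|_2$, which does not follow from any standard identity because $\Pi_U$ need not commute with $X^T X$. To establish it I would write $\|X - X\Pi_U\|_2^2 = \lambda_{\max}\bigl((I-\Pi_U)X^T X(I-\Pi_U)\bigr)$, and for an arbitrary unit vector $v$ substitute $w = (I-\Pi_U)v$, which satisfies $\|w\|_2\leq 1$ and $\Pi_U w = 0$. Because $\Pi_U w = 0$, the quadratic form $v^T(I-\Pi_U)X^T X(I-\Pi_U)v$ equals $w^T X^T X w$, which in turn equals $w^T(X^T X - \Pi_U X^T X \Pi_U)w$ (the added $\Pi_U X^TX\Pi_U$ term contributes zero). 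Bounding this by $\|X^T X - \Pi_U X^T X \Pi_U\|_2\,\|w\|_2^2$ and taking the supremum over $v$ yields the desired inequality. This is the one new ingredient beyond AM: in the unperturbed case $\Pi_{V_k}$ commutes with $X^T X$ and the two operator norms are simply equal.

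Finally, I would chain three facts. The hypothesis gives $\|X^T X - \Pi_U X^T X \Pi_U\|_2 \leq \|X^T X - \Pi_{V_k} X^T X \Pi_{V_k}\|_2 + B$; because $\Pi_{V_k}$ is the spectral projection onto the top $k$ eigenvectors of $X^T X$, the first term on the right equals $\sigma_{k+1}(X)^2$; and since $A$ has rank at most $k$, the Eckart--Young optimality reproduced by AM gives $\sigma_{k+1}(X) \leq \|X - A\|_2$. Combining with the reduction above,
\[
\|\bar\mu_i - \hat\mu_i\|_2 \;\leq\; \frac{1}{\sqrt{n_i}}\,\sqrt{\|X-A\|_2^2 + B} \;\leq\; \frac{1}{\sqrt{n_i}}\|X-A\|_2 + \sqrt{\frac{B}{n_i}},
\]
using the elementary inequality $\sqrt{a^2+b}\leq a+\sqrt{b}$ for $a,b\geq 0$. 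The main obstacle is the middle step: the noisy-PCA guarantee is naturally phrased via the operator norm of $X^T X - \Pi_U X^T X \Pi_U$, whereas the AM telescope requires the operator norm of $X - X\Pi_U$, and bridging the two without commutativity requires precisely the ``project first, then substitute'' trick above, which makes the cross terms between $\Pi_U$ and $I-\Pi_U$ vanish.
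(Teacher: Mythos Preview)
Your proposal is correct and follows essentially the same route as the paper's proof: both reduce to bounding $\|X-X\Pi_U\|_2$ via the indicator-vector trick, pass to $\|X^TX-\Pi_U X^TX\Pi_U\|_2$, invoke the hypothesis, identify the $\Pi_{V_k}$ term with $\sigma_{k+1}(X)^2$, apply Eckart--Young with the rank-$k$ matrix $A$, and finish with $\sqrt{a+b}\le\sqrt a+\sqrt b$. Your ``project first, then substitute'' justification of the inequality $\|(I-\Pi_U)X^TX(I-\Pi_U)\|_2\le\|X^TX-\Pi_U X^TX\Pi_U\|_2$ is in fact more explicit than the paper, which covers this step only with a remark about projection properties cancelling cross terms.
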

\begin{proof}
    Fix $i$. Denote $u_i\in \R^{n}$ as the indicating
    vector of all datapoints in $X$ that belong to
    cluster $i$. Assuming there are $n_i$ points from
    cluster $i$ it follows that $\|u_i\|^2 = n_i$, thus
    $\tfrac 1 {\sqrt n_i} u_i$ is a unit-length vector.
    Following the inequality $\tr(AB)\leq \|A\|_2 \|B\|_F$
    for PSD matrices, we thus have that
    %\bigvnote{slightly modified version}
    %\gnote{Something seems funny with parts of this proof}
    %\bigonote{We've talked about this Gautam: $\Pi$ is a projection matrix, thus it is (a) symmetric and (b) equal to its square $\Pi = \Pi^2$. Therefore the cross-terms vanish. Moreover, an alternative view is to see that $\id-\Pi_U = \Pi_{U^\perp}$ which is also a projection matrix.}
    \begin{align*}
        \llnorm{\bar{\mu_i}-\hat{\mu_i}}^2 &=
                \llnorm{\frac{1}{n_i} X^T u_i-\frac{1}{n_i} (X\Pi_U)^T u_i}^2 =
                \frac{1}{n_i^2} \llnorm{\left( X(\id-\Pi_U)\right)^T u_i}^2\\
            &\leq \frac{1}{n_i^2}\llnorm{\left(X(\id-\Pi_U)\right)^T}^2\llnorm{u_i}^2
                \stackrel{(\ast)}=
                \frac{1}{n_i^2}\llnorm{\left(X(\id-\Pi_U)\right)^T\left(X(\id-\Pi_U)\right)}
                \cdot n_i\\
            &\leq \frac{1}{n_i} \left( \llnorm{\left( X(\id-\Pi_{V_k}) \right)^T
                \left( X(\id-\Pi_{V_k}) \right)} + B \right) =
                \frac{1}{n_i} \left( \| X(\id-\Pi_{V_k}) \left(X(\id-\Pi_{V_k}) \right)^T
                \|_2 + B \right)\\
            &= \frac{1}{n_i} \left(\llnorm{XX^T - X\Pi_{V_k}\Pi_{V_k}^TX^T} + B\right)
                \stackrel{(\ast\ast)}= \frac{1}{n_i} \left(\llnorm{X - X\Pi_{V_k}}^2 + B\right)
                \stackrel{(\ast\ast\ast)}\leq \frac{1}{n_i} \left(\llnorm{X - A}^2 + B\right)
    \end{align*}
%\begin{align*}
%    \|\bar{\mu_i}-\hat{\mu_i}\|^2_2 &=
%    \|\tfrac 1 {n_i} X^T u_i-\tfrac 1 {n_i}
%    (X\Pi_U)^T u_i\|^2_2 = \tfrac 1 {n_i^2} \|
%    \left( X(I-\Pi_U)  \right)^T u_i\|^2_2
%    \cr & = \tfrac 1 {n_i^2} u_i^T \left( X(I-\Pi_U) \right)
%    \left( X(I-\Pi_U)  \right)^T u_i
%    = \tfrac 1 {n_i^2}\tr\left( \left( X(I-\Pi_U) \right)
%    \left( X(I-\Pi_U)  \right)^T u_i u_i^T \right) 
%    \cr &\leq \left\| \left( X(I-\Pi_U)  \right)
%    \left( X(I-\Pi_U)  \right)^T \right\|_2\cdot
%    \tfrac 1 {n_i^2} \|u_i u_i^T \|_F
%    \stackrel{(\ast)}=  \left\| \left( X(I-\Pi_U)\right)^T
%    \left( X(I-\Pi_U)  \right) \right\|_2\cdot \tfrac {n_i} {n_i^2}
%    \cr &  \leq \frac 1 {n_i} \left( \|\left( X(I-\Pi_{V_k})\right)^T
%    \left( X(I-\Pi_{V_k})  \right)\|_2 + B \right) =
%    \frac 1 {n_i} \left( \|  X(I-\Pi_{V_k}) \left(X(I-\Pi_{V_k}) \right)^T
%    \|_2 + B \right)
%    \cr &= \tfrac 1 {n_i} \left(\|XX^T - X\Pi_{V_k}\Pi_{V_k}^TX^T\|_2 + B\right)
%    \stackrel{(\ast\ast)}= \tfrac 1 {n_i} \left(\|X - X\Pi_{V_k}\|_2^2 + B\right)
%    \stackrel{(\ast\ast\ast)}\leq \tfrac 1 {n_i} \left(\|X - A\|_2^2 + B\right)
%\end{align*}
where the equality marked with $(\ast)$ follows from the fact that for any matrix $M$ we have $\|MM^T\|_2=\|M^T M\|_2=\sigma_1(M)^2$ with $\sigma_1(M)$ denoting $M$'s largest singular value; the equality marked with $(\ast\ast)$ follows from the fact that for any matrix $M$ we have $\|MM^T- M\Pi_{V_k}M\|_2=\sigma_{k+1}(M)^2$ with $\sigma_k(M)$ denoting $M$'s $(k+1)$th-largest singular value; and the inequality $(\ast\ast\ast)$ follows from the fact that $A$ is a rank-$k$ matrix. 
  In this proof, we also used the fact that $\Pi$ is a projection matrix, implying that it is symmetric and equal to its square -- these facts allow us to cancel various ``cross terms.'' 
  The inequality $\sqrt{a+b}\leq \sqrt a + \sqrt b$ concludes the proof.
\end{proof}

The above is a general statement for any type of clustering problem, which we instantiate in the following lemma for mixtures of Gaussians.

\begin{lemma}\label{lem:data-spectral}
    Let $X \in \R^{n \times d}$ be a sample
    from $\cD \in \cG(d,k)$,
    and let $A \in \R^{n \times d}$ be the matrix
    where each row $i$ is the (unknown) mean
    of the Gaussian from which $X_i$ was sampled.
    For each $i$, let $\sigma^2_i$ denote the
    maximum directional variance of component
    $i$, and $w_i$ denote its mixing weight.
    Define $\sigma^2 = \max\limits_{i}\{\sigma^2_i\}$
    and $\mw = \min\limits_{i}\{w_i\}$. If
    $$n \geq \frac{1}{\mw} \left(\xi_1 d +
        \xi_2\log\left(\frac{2k}{\beta}\right)\right),$$
    where $\xi_1,\xi_2$ are universal constants,
    then with probability at least $1-\beta$,
    $$\frac{\sqrt{n\mw}\sigma}{4} \leq \llnorm{X-A}
        \leq 4\sqrt{n\sum\limits_{i=1}^{k}{w_i\sigma^2_i}}.$$
\end{lemma}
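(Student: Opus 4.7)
The plan is to condition on the high-probability event from Condition~\ref{cond:num-points} (which holds by Lemma~\ref{lem:regularity-conditions-hold}, since the sample complexity assumed here easily exceeds what that lemma requires), then partition the rows of $X - A$ by which mixture component generated them. For each label $i \in [k]$, let $X^{(i)} \in \R^{n_i \times d}$ denote the submatrix of centered rows coming from component $i$; by construction the rows of $X^{(i)}$ are i.i.d.\ samples from $\cN(0, \Sigma_i)$. The matrix $(X - A)^T (X - A)$ is, after reordering rows, block-diagonal decomposable as $\sum_{i=1}^{k} (X^{(i)})^T X^{(i)}$, a sum of PSD matrices. This is the key structural observation that drives both the upper and lower bound.

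For the \emph{upper bound}, write $X^{(i)} = Z^{(i)} \Sigma_i^{1/2}$ where $Z^{(i)}$ has i.i.d.\ $\cN(0, \id)$ rows. The second part of Lemma~\ref{lem:spectralnormofgaussians}, applied with a small constant $t$, gives $\|\tfrac{1}{n_i} (Z^{(i)})^T Z^{(i)} - \id\|_2 \leq t$ with probability at least $1 - 4\exp(c_3 d - c_4 n_i \min(t, t^2))$. Since Condition~\ref{cond:num-points} guarantees $n_i \geq n w_i / 2 \geq n \mw / 2$, the assumed sample size $n = \Omega((d + \log(k/\beta))/\mw)$ (with appropriate constants $\xi_1, \xi_2$) makes this failure probability at most $\beta/(2k)$. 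Taking a union bound over $i \in [k]$ and using $\|(X^{(i)})^T X^{(i)}\|_2 \leq (1+t) n_i \|\Sigma_i\|_2 = (1+t) n_i \sigma_i^2$ together with subadditivity of the spectral norm on PSD matrices, we obtain
\[
\|X - A\|_2^2 = \Bigl\|\sum_i (X^{(i)})^T X^{(i)}\Bigr\|_2 \leq \sum_i (1+t) n_i \sigma_i^2 \leq (1+t) \cdot \tfrac{3n}{2} \sum_i w_i \sigma_i^2,
\]
which yields the desired $4 \sqrt{n \sum_i w_i \sigma_i^2}$ bound for any fixed small constant $t$.

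For the \emph{lower bound}, let $i^\star = \arg\max_i \sigma_i$, so $\sigma_{i^\star} = \sigma$. Because $(X-A)^T (X-A)$ is a sum of PSD matrices, its spectral norm is at least that of any single term, giving $\|X - A\|_2^2 \geq \|(X^{(i^\star)})^T X^{(i^\star)}\|_2$. The same Gaussian concentration inequality (now used to lower bound the spectral norm) combined with $n_{i^\star} \geq n \mw / 2$ yields $\|(X^{(i^\star)})^T X^{(i^\star)}\|_2 \geq (1-t) n_{i^\star} \sigma^2 \geq (1-t) \tfrac{n \mw}{2} \sigma^2$, which for small enough $t$ is at least $\tfrac{n \mw \sigma^2}{16}$. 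Taking square roots gives the claimed lower bound of $\tfrac{\sqrt{n \mw}\, \sigma}{4}$.

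The only mildly subtle step is reconciling the Gaussian concentration bound (phrased for $\cN(0, \id)$) with the general covariance $\Sigma_i$; this just uses the linear change of variables $X^{(i)} = Z^{(i)} \Sigma_i^{1/2}$ and the sandwich $(1-t)\Sigma_i \preceq \tfrac{1}{n_i}(X^{(i)})^T X^{(i)} \preceq (1+t)\Sigma_i$. Everything else is bookkeeping: choosing $\xi_1, \xi_2$ large enough so the per-component failure probability is $\beta/(2k)$, union bounding over $k$ components plus the regularity event, and tracking the constants so that $(1+t) \cdot 3/2 \leq 16$ and $(1-t)/2 \geq 1/16$.
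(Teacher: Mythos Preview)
Your proposal is correct and follows essentially the same approach as the paper: partition the rows of $X-A$ by component, write $(X-A)^T(X-A)=\sum_i (X^{(i)})^T X^{(i)}$ as a sum of PSD matrices, use subadditivity of the spectral norm for the upper bound and monotonicity for the lower bound, and reduce each block to a standard-Gaussian concentration statement via the whitening $X^{(i)}=Z^{(i)}\Sigma_i^{1/2}$. The only cosmetic differences are that the paper phrases the per-block bound via Vershynin's singular-value inequality (additive deviation $\sqrt{n_i}\pm C_1\sqrt d\pm C_2\sqrt{\log(2k/\beta)}$) rather than the multiplicative $(1\pm t)$ form of Lemma~\ref{lem:spectralnormofgaussians}, and that you are more explicit about invoking Condition~\ref{cond:num-points} for the component sizes whereas the paper simply asserts $n_i\in[\tfrac{n w_i}{2},\tfrac{3nw_i}{2}]$.
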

\begin{proof}
    Let $C_i \in \R^{n_i \times d}$ be the matrix formed by concatenating the rows drawn from Gaussian component $i$, and let $A_i$ be the corresponding operation applied to $A$.
    Let $\Sigma_i$ denote the covariance matrix of Gaussian component $i$.

    We first prove the lower bound on the norm.
    Let $C^*$ be the matrix in  $\{C_1,\dots,C_k\}$ with the largest direction variance (i.e., $C^* = C_i$, where $i = \arg\max_i {\sigma_i^2}$),
    let $A^*$ be the submatrix of $A$ corresponding to the same rows of $C^*$, and let $\Sigma$ be the covariance matrix of the Gaussian component corresponding to these rows.
    Then each
    row of $(C^*-A^*)\Sigma^{-\tfrac{1}{2}}$ is
    an independent sample from
    $\cN(\vec{0},\id_{d \times d})$.
    
    We know that the number of rows in $C^*$ is
    at least $\tfrac{n\mw}{2}$. Using
    Cauchy-Schwarz inequality, Theorem 5.39 of
    \cite{Vershynin12}, and our bound on $n$, we get
    that with probability at least $1-\tfrac{\beta}{k}$.
    \begin{align*}
        \llnorm{C^*-A^*}\llnorm{\Sigma^{-\frac{1}{2}}} &\geq
                \llnorm{(C^*-A^*)\Sigma^{-\frac{1}{2}}}\\
            &\geq \sqrt{\frac{n\mw}{2}} - C_1\sqrt{d} -
                C_2\sqrt{\log\left(\frac{2k}{\beta}\right)}\\
            &\geq \frac{\sqrt{n\mw}}{4},
    \end{align*}
    where $C_1$ and $C_2$ are absolute constants.
    Since $\llnorm{\Sigma^{\tfrac{1}{2}}} = \sigma$,
    we get,
    $$\llnorm{C^*-A^*} \geq \frac{\sqrt{n\mw}\sigma}{4}.$$
    Since, the spectral norm of $X-A$ has to
    be at least the spectral norm of $C^*-A^*$,
    the lower bound holds.

    Now, we prove the upper bound. For each
    $i$, the number of points in the submatrix
    $C_i$ is at most $\tfrac{3nw_i}{2}$.
    Using Cauchy-Schwarz inequality, Theorem
    5.39 of \cite{Vershynin12}, and our bound
    on $n$ again, we get the following with
    probability at least $1-\tfrac{\beta}{k}$.
    \begin{align*}
        \llnorm{C_i - A_i} &=
                \llnorm{(C_i - A_i)\Sigma_i^{-\frac{1}{2}}\Sigma_i^{\frac{1}{2}}}\\
            &\leq \llnorm{(C_i - A_i)\Sigma_i^{-\frac{1}{2}}}
                \llnorm{\Sigma_i^{\frac{1}{2}}}\\
            &\leq \left(\sqrt{\frac{3nw_i}{2}} + C_1\sqrt{d} +
                C_2\sqrt{\log\left(\frac{2k}{\beta}\right)}\right)
                \llnorm{\Sigma_i^{\frac{1}{2}}}\\
            &\leq 4\sqrt{nw_i}\sigma_i
    \end{align*}
    Now,
    \begin{align*}
        \llnorm{X-A}^2 &= \llnorm{(X-A)^T(X-A)}\\
            &= \llnorm{\sum\limits_{i=1}^{k}{(C_i-A_i)^T(C_i-A_i)}}\\
            &\leq \sum\limits_{i=1}^{k}{\llnorm{(C_i-A_i)^T(C_i-A_i)}}\\
            &\leq 16n\sum\limits_{i=1}^{k}{w_i\sigma^2_i}.
    \end{align*}
    The second equality can be seen by noting that each entry of $(X-A)^T(X-A)$ is the inner product of two columns of $X-A$ -- by grouping terms in this inner product, it can be be seen as the sum of inner products of the corresponding columns of $C_i - A_i$, since the indices form a partition of the rows.
    The first inequality is the triangle inequality.

    We finally apply the union bound over
    all $i$ to complete the proof.
\end{proof}

\section{A Warm Up: Strongly Bounded Spherical Gaussian Mixtures}\label{sec:easycase}

We first give an algorithm to learn mixtures
of spherical Gaussians, whose means lie in a
small ball, whose variances are within
constant factor of one another, and whose
mixing weights are identical. Before we get
to that, we formally define such a family of
mixtures of spherical Gaussians. Let
$$\cS(d) = \{ \cN(\mu,\sigma^2
    \mathbb{I}_{d \times d}) :
    \mu \in \R^{d}, \sigma^2 > 0\}$$
be the family of $d$-dimensional
spherical Gaussians. As before, we can define
the class $\cS(d,k)$ of mixtures of spherical
Gaussians as follows.
\begin{defn} [Spherical Gaussian Mixtures]
    The class of \emph{Spherical Gaussian $k$-mixtures in $\R^{d}$} is
    $$\cS(d,k) \coloneqq \left\{\sum\limits_{i=1}^{k}{w_i G_i}:
       G_1,\dots,G_k \in \cS(d),  \sum_{i=1}^{k} w_i = 1 \right\}$$
\end{defn}

Again, we will need to impose two types of conditions
on its parameters---boundedness and separation---that
are defined in slightly different ways from our
initial definitions in Section~\ref{subsec:general-prelims}.
We also introduce another condition that says
that all mixing weights are equal.
\begin{defn}[Separated, Bounded, and Uniform (Spherical) Mixtures]
    For $s > 0$, a spherical Gaussian mixture
    $\cD \in \cS(d,k)$ is \emph{$s$-separated} if
    $$\forall 1 \leq i < j \leq k,~~~
    \| \mu_i - \mu_j \|_2 \geq s\cdot\max\{\sigma_i,\sigma_j\}.$$
    For $R,\maxsigma,\mw > 0$, a Gaussian mixture $\cD \in \cS(d,k)$
    is \emph{$(R,\minsigma,\kappa)$-bounded} if
    $$\forall 1 \leq i \leq k,~\| \mu_i \|_2 \leq R,~~~
        \min\limits_{i}\{\sigma_i\} = \minsigma,~~~
        \text{and}~~~
        \max\limits_{i,j}\left\{\frac{\sigma^2_i}{\sigma^2_j}\right\} \leq \kappa.$$
    A Gaussian mixture $\cD \in \cS(d,k)$ is
    \emph{uniform} if
    $$\forall 1 \leq i \leq k,~ w_i = \frac{1}{k}.$$
    We denote the family of separated,
    bounded, and uniform spherical Gaussian mixtures
    by $\cS(d,k,\minsigma,\kappa,s,R)$.
    We can specify a Gaussian mixture
    in this family by a set of $k$ tuples as:
    $\{(\mu_1,\sigma_1),\dots,(\mu_k,\sigma_k)\},$
    where each tuple represents the mean
    and standard deviation of one of its
    components.
\end{defn}

\begin{defn}
    We define the following family of separated,
    bounded, and uniform mixtures of spherical Gaussians
    that have similar variances and lie in a small
    ball around the origin.
    $$\cS(d,k,\kappa,s) \equiv
        \bigcup\limits_{\minsigma > 0}
        {\cS(d,k,\minsigma,\kappa,s,k\sqrt{d\kappa}\minsigma)}$$
\end{defn}
We define the quantity $s$ in the
statement of the main theorem of this
section. The above definition could be generalized
to have the means lie in a small ball that
is not centered at the origin, but because
it is easy to privately find a small ball
that would contain all the points, we can
omit that for simplicity.

\begin{algorithm}[h!] \label{alg:pegm}
\caption{Private Easy Gaussian Mixture Estimator
    $\PEGME_{\eps, \delta, \alpha, \beta,
        \kappa, \minsigma}(X)$}
\KwIn{Samples $X_1,\dots,X_{2n} \in \R^{d}$.
    Ratio of maximum and minimum variances: $\kappa$.
    Minimum variance of a mixture component: $\minsigma^2$.
    Parameters $\eps, \delta, \alpha, \beta > 0$.}
\KwOut{A mixture of Gaussians $\wh{G}$, such that
    $\SD(G, \wh{G}) \leq \alpha$.} \vspace{10pt}

Set parameters:
$
    \Lambda \gets 2k\sqrt{d\kappa}\minsigma~~~
    \Delta_{\eps,\delta} \gets
        \frac{2\Lambda^2\sqrt{2\ln(1.25/\delta)}}{\eps}~~~
        \ell \gets \max\left\{k,
        O\left(\log\left(\frac{n}{\beta}\right)\right)\right\}
$\\\vspace{10pt}

Throw away all $X_i \in X$ such that
    $\llnorm{X_i} > \Lambda$, and call this
    new dataset $X$ as well\\
Let $Y \gets (X_1,\dots,X_n)$ and
    $Z \gets (X_{n+1},\dots,X_{2n})$
\vspace{10pt}

\tcp{Privately run PCA}
Let $E \in \R^{d \times d}$ be a symmetric matrix,
    where each entry $E_{i,j}$ for $j \geq i$ is an independent draw from
    from $\cN(0, \Delta^2_{\eps,\delta})$\\
Let $\wh{V}_{\ell}$ be the $\ell$-dimensional principal singular
    subspace of $Y^T Y + E$\\
Project points of $Z$ on to $\wh{V}_{\ell}$ to get
    the set $Z'_{\ell}$\\
Rotate the space to align with the axes of $\wh{V}_{\ell}$
    to get the set $Z_{\ell}$ from $Z'_{\ell}$
\vspace{10pt}

\tcp{Privately locate individual components}
Let $S_{\ell} \gets Z_{\ell}$ and $i = 1$\\
\While{$S_{\ell} \neq \emptyset$ and $i \leq k$}{
    $(c_i, r'_i) \gets \PGLoc(S_{\ell},\frac{n}{2k};
        \frac{\eps}{O(\sqrt{k\ln(1/\delta)})},
        \frac{\delta}{2k},R+8\sqrt{\ell\kappa}\minsigma,\minsigma,\sqrt{\kappa}\minsigma)$\\
    Let $r_i \gets 4\sqrt{3}r'_i$
        and $S_{i} \gets S_{\ell} \cap \ball{c_i}{r_i}$\\
%    $\wh{n}_i \gets \PCount_{\frac{\eps}{6k}}(S_v, \ball{\wh{c}_i}{r_i})$\\
    $S_{\ell} \gets S_{\ell} \setminus S_{i}$\\
    $i \gets i + 1$
}
\If{$\abs{C} < k$}{
    \Return $\bot$
}
\For{$i \gets 1,\dots,k$}{
    Rotate $c_i$ back to get $\wh{c}_i$\\
    Set $\wh{r}_i \gets r_i + 10\sqrt{\ell\kappa}\minsigma
        + 2r_i\sqrt{\frac{3d}{\ell}}$\\
    Let $\wh{S}_i$ be points in
        $Z \cap \ball{\wh{c}_i}{\wh{r}_i}$, whose
        corresponding points lie in
        $Z_{\ell} \cap \ball{c_i}{r_i}$
}
\vspace{10pt}
%Let $\wh{n} \gets \sum\limits_{i=1}^{k}{\wh{n}_i}$\\

\tcp{Privately estimate each Gaussian}
\For{$i \gets 1,\dots,k$}{
    $(\wh{\mu}_i, \wh{\sigma}^2_i) \gets
        \PSGE(\wh{S}_i;
        \wh{c}_i,\wh{r}_i,\eps,\delta)$\\
%    $\wh{w}_i \gets \tfrac{\wh{n}_i}{\wh{n}}$\\
%    $\wh{G} \gets \wh{G} \cup \{(\wh{\mu}_i, \wh{\sigma}^2_i\id_{d \times d},
%        \wh{w}_i)\}$\\
    $\wh{G} \gets \wh{G} \cup
        \{(\wh{\mu}_i, \wh{\sigma}^2_i, \frac{1}{k})\}$\\
}

\Return $\wh{G}$
\vspace{10pt}
\end{algorithm}

The following is our main theorem for this
sub-class of mixtures, which quantifies the
guarantees of Algorithm~\ref{alg:pegm}.
\begin{theorem}\label{thm:easycase}
    There exists an $(\eps,\delta)$-differentially
    private algorithm, which if given $n$ independent
    samples from $\cD \in \cS(d,k,\kappa,C\sqrt{\ell})$,
    such that $C = \xi + 16\sqrt{\kappa}$,
    where $\xi,\kappa \in \Theta(1)$
    and $\xi$ is a universal constant,
    $\ell=\max\{512\ln(nk/\beta),k\}$, and
    $$n \geq O\left(
        \frac{dk}{\alpha^2}
        + \frac{d^{\frac{3}{2}}k^3\sqrt{\ln(1/\delta)}}{\eps}
        + \frac{dk\sqrt{\ln(1/\delta)}}{\alpha\eps}
        + \frac{\sqrt{d}k\ln(k/\beta)}{\alpha\eps}
        \right) + n',$$
    where
    $$n' \geq O\left(
        \frac{k^2}{\alpha^2}\ln(k/\beta)
        + \frac{\ell^{\frac{5}{9}}k^{\frac{5}{3}}}
            {\eps^{\frac{10}{9}}}
            \cdot \polylog\left(\ell,k,
            \frac{1}{\eps},\frac{1}{\delta},
            \frac{1}{\beta}\right)
            \right),$$
    %$$n \geq n' + O\left(\frac{d}{\alpha^2}
    %    + \frac{d\sqrt{\ln(1/\delta)}}{\alpha\eps}
    %    + \frac{\sqrt{d}\ln(k/\beta)}{\alpha\eps}
    %    + \frac{\ln(k/\beta)}{\alpha^2}\right),$$
    %where
    %$$n' \geq O\left(\frac{d^{\frac{3}{2}}k^3\sqrt{\ln(1/\beta)}}{\eps}
    %    + \frac{k^2}{\alpha^2}\ln(k/\beta)
    %    + \frac{\ell^{\frac{5}{9}}k^{\frac{30}{9}}}
    %        {\eps^{\frac{10}{9}}}
    %        \cdot \polylog\left(\ell,k,
    %        \frac{1}{\eps},\frac{1}{\delta},
    %        \frac{1}{\beta}\right)\right),$$
    then it $(\alpha,\beta)$-learns $\cD$.
\end{theorem}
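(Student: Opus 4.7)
The plan is to split the analysis into privacy and accuracy, handling each of the three sub-procedures of \PEGME{} in turn. For privacy, I would first observe that the truncation step clips every $X_i$ to $\|X_i\|_2 \leq \Lambda$ in a data-independent way, so replacing a single sample changes $Y^T Y$ by at most $2\Lambda^2$ in Frobenius norm. With the prescribed noise scale $\Delta_{\eps,\delta}$, Lemma~\ref{lem:gaussiandp} makes the release of $Y^T Y + E$ (and hence the subspace $\wh V_\ell$, by post-processing) $(\eps,\delta)$-DP. Next, the $k$ calls to $\PGLoc$, each with privacy budget $\eps/O(\sqrt{k\ln(1/\delta)})$, compose by advanced composition (Lemma~\ref{lem:composition}) to $(\eps,\delta)$-DP. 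The $k$ calls to $\PSGE$ act on the disjoint subsets $\wh S_1,\ldots,\wh S_k$ and so enjoy parallel composition. Basic composition across these three stages yields overall $(O(\eps), O(\delta))$-DP, which is rescaled to meet the desired targets.

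For accuracy I would then establish the following chain of events, all holding simultaneously with probability $\geq 1-\beta$. First, by Gaussian tail bounds and a union bound, every sample satisfies $\|X_i\|_2 \leq R + O(\sqrt{d\kappa}\minsigma) \leq \Lambda$, so truncation loses nothing. Second, by the standard operator-norm bound for symmetric Gaussian matrices, $\|E\|_2 \leq O(\Delta_{\eps,\delta}\sqrt{d})$, which by the near-optimality of $\Pi_{V_\ell}$ for $Y$ implies that the hypothesis of Lemma~\ref{lem:PCA_AM_style} holds with $B = O(\Delta_{\eps,\delta}\sqrt{d})$. Third, combining Lemma~\ref{lem:PCA_AM_style} with the upper bound of Lemma~\ref{lem:data-spectral} (noting $w_i = 1/k$ and $\sum_i w_i\sigma_i^2 \leq \kappa\minsigma^2$) yields that each projected empirical center $\hat\mu_i$ satisfies $\|\bar\mu_i - \hat\mu_i\|_2 \leq O(\sqrt{k\kappa}\minsigma) + \sqrt{2kB/n}$. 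Together with the standard empirical-mean concentration $\|\mu_i - \bar\mu_i\|_2 \leq O(\sqrt{\kappa}\minsigma)$, this bounds $\|\mu_i - \hat\mu_i\|_2$ by $O(\sqrt{\ell\kappa}\minsigma)$ as soon as $n$ is large enough to drive the PCA-noise term below $\sqrt{\ell\kappa}\minsigma$.

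The separation hypothesis $s \geq (\xi + 16\sqrt{\kappa})\sqrt{\ell}$ is then exactly what is needed: adjacent projected clusters remain at pairwise distance $\geq \xi\sqrt{\ell}\minsigma$, while each cluster fits inside a ball of radius $O(\sqrt{\ell\kappa}\minsigma)$ in the $\ell$-dimensional subspace. This is the ``well-separated, well-localized'' regime for which $\PGLoc$ is designed; invoked with $R$-parameter $R + O(\sqrt{\ell\kappa}\minsigma)$, minimum radius $\sqrt{\kappa}\minsigma$, and target mass $n/(2k)$, it will with high probability return, for each $i$, a ball $\ball{c_i}{r'_i}$ that contains all projected samples from component $i$ and no projected samples from other components. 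Inflating the radius by $4\sqrt{3}$ gives slack, and lifting back to $\R^d$ with the padding $2r_i\sqrt{3d/\ell}$ re-absorbs the $(d-\ell)$-dimensional Gaussian noise orthogonal to $\wh V_\ell$ that was discarded by the projection. Each $\wh S_i$ therefore contains precisely the sample points drawn from component $i$, so feeding it to $\PSGE$ produces $(\wh\mu_i,\wh\sigma_i^2)$ with total-variation distance $O(\alpha)$ to $\cN(\mu_i,\sigma_i^2\id)$ (provided $|\wh S_i| \gtrsim n'$), and reporting uniform weights $1/k$ meets both clauses of Definition~\ref{def:mixture-learning}.

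The main obstacle will be the bookkeeping of the sample-complexity bound: each sub-analysis (PCA noise, empirical concentration, $\PGLoc$, $\PSGE$) contributes one of the terms in the stated lower bound on $n$. In particular, the dominant $d^{3/2}k^3\sqrt{\ln(1/\delta)}/\eps$ term traces back to the requirement that $\sqrt{B/n_i}$ be small compared to $\sqrt{\ell\kappa}\minsigma$, which—with $n_i \sim n/k$, $B \sim \Lambda^2\sqrt{d\ln(1/\delta)}/\eps$, and $\Lambda \sim k\sqrt{d\kappa}\minsigma$—forces $n \gtrsim d^{3/2}k^3\sqrt{\ln(1/\delta)}/\eps$. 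Verifying that the particular constants $C = \xi + 16\sqrt{\kappa}$ and $\ell = \max\{k, 512\ln(nk/\beta)\}$ make every slack inequality hold simultaneously, and that the failure probabilities of the four sub-analyses aggregate to $\beta$, is delicate but routine.
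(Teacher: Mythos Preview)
Your proposal is correct and follows essentially the same approach as the paper: privacy via the Gaussian mechanism for PCA plus advanced composition for the $k$ calls to $\PGLoc$ plus parallel composition for $\PSGE$ on disjoint clusters, and accuracy via Lemma~\ref{lem:PCA_AM_style} combined with Lemma~\ref{lem:data-spectral} to control the mean shift under projection, then iterative clustering by $\PGLoc$, then per-cluster estimation by $\PSGE$. The only small points worth tightening are that the paper exploits the disjointness of $Y$ and $Z$ (PCA uses $Y$, clustering and estimation use $Z$) so those stages do not compose against each other, and that the $\PSGE$ sample requirement is what produces the $dk/\alpha^2$, $dk\sqrt{\ln(1/\delta)}/(\alpha\eps)$, and $\sqrt{d}k\ln(k/\beta)/(\alpha\eps)$ terms in the main bound (not the $n'$ term, which is driven by $\PGLoc$ and the regularity conditions).
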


The algorithm itself is fairly simple to describe. 
First, we run a private version of PCA, and project to the top $k$ PCA directions.
By Lemma~\ref{lem:PCA_AM_style}, this will reduce the dimension from $d$ to $k$ while (approximately) preserving the separation condition.
Next, we repeatedly run an algorithm which (privately) finds a small ball containing many points (essentially the 1-cluster algorithm of~\cite{NissimS18}) in order to cluster the points such that all points generated from a single Gaussian lie in the same cluster.
Finally, for each cluster, we privately estimate the mean and variance of the corresponding Gaussian component.

\subsection{Privacy}

We will first analyze individual components
of the algorithm, and then use composition
(Lemma~\ref{lem:composition}) to reason
about privacy.

The PCA section of the algorithm is $(\eps,\delta)$-DP
with respect to $Y$. This holds because the
$\ell_2$ sensitivity of the function $Y^TY$ is
$2\Lambda^2$, because all points are guaranteed
to lie in a ball of radius $\Lambda$ around the
origin, and by Lemma \ref{lem:gaussiandp},
we know that adding Gaussian noise proportional
to $\Delta_{\eps,\delta}$ is enough to have
$(\eps,\delta)$-DP.

In the second step, we run $\PGLoc$ (which
is $(\eps,\delta)$-differentially private) on
$Z$ repeatedly with parameters $({\eps}/{O(\sqrt{k\ln(1/\delta)})},
{\delta}/{2})$, after which we only perform computation on the output of this private algorithm
So, the whole
process, by advanced composition and post-processing
(Lemmata~\ref{lem:composition} and \ref{lem:post-processing}),
is $(\eps,\delta)$-DP with respect to $Z$.

In the final step, we apply $\PSGE$ (which
is $(\eps,\delta)$-differentially private) with
parameters $\eps,\delta$ on disjoint datasets
$\wh{S}_i$. Therefore, this step is
$(\eps,\delta)$ private with respect to $Z$.

Finally, applying the composition lemma again,
we have $(2\eps,2\delta)$-DP for $Z$.
Combined with $(\eps, \delta)$-DP for $Y$, and the fact that $X$ is the union of these two disjoint sets, we have $(2\eps, 2\delta)$-DP for $X$.
Rescaling the values of $\eps$ and $\delta$ by $2$ gives the desired result.

\subsection{Accuracy}
As indicated in our outline above, 
the algorithm is composed of three blocks:
private PCA, isolating individual Gaussians,
and learning the isolated Gaussians. We divide
the proof of accuracy in a similar way.

\subsubsection{Deterministic Regularity Conditions
    for Spherical Gaussians}

We first give two regularity conditions for
mixtures of spherical Gaussians in the family
mentioned above.

The first condition asserts that each mixture
component is represented by points that lie in
a ball of approximately the right radius.
\begin{condition}\label{cond:easy-intra-gaussian}
    Let $X^L = ((X_1,\rho_1),\dots,(X_n,\rho_n))$
    be a labelled sample from a Gaussian mixture
    $\cD \in \cS(\ell,k,\kappa,s)$,
    where $\ell \geq 512\ln(nk/\beta)$ and $s > 0$.
    For every $1 \leq u \leq k$, the radius
    of the smallest ball containing the set of
    points with label $u$ (i.e.~$\{X_i : \rho_i = u\}$)
    is in $[\sqrt{\ell}\sigma_u/2, \sqrt{3\ell}\sigma_u]$.
\end{condition}

The second condition says that if the means
of the Gaussians are ``far enough'', then
the inter-component distance (between points
from different components) would also be large.
\begin{condition}\label{cond:easy-inter-gaussian}
    Let $X^{L} = ((X_1,\rho_1),\dots,(X_n,\rho_n))$
    be a labelled sample from a Gaussian mixture
    $\cD \in \cS(\ell,k,\kappa,C\sqrt{\ell})$,
    where $\ell \geq \max\{512\ln(nk/\beta),k\}$
    and $C > 1$ is a constant. For every
    $\rho_i \neq \rho_j$,
    $$\llnorm{X_i - X_j} \geq \frac{C}{2}
        \sqrt{\ell} \max\{\sigma_{\rho_i}, \sigma_{\rho_j}\}.$$
\end{condition}

    The following lemma is immediate from from Lemmata
    \ref{lem:regularity-conditions-hold} and
    \ref{lem:easy-intra-gaussian}.

\begin{lemma}\label{lem:easy-dr-conditions}
    Let $Y$ and $Z$ be datasets sampled from
    $\cD \in \cS(d,k,\kappa,(\xi+16\sqrt{\kappa})\sqrt{\ell})$
    (with $Y^L$ and $Z^L$ being their respective
    labelled datasets)
    as defined within the algorithm, such that
    $d \geq \ell \geq \max\{512\ln(nk/\beta),k\}$
    where $\xi,\kappa \in \Theta(1)$
    and $\xi>1$ is a universal contant. If
    $$n \geq O\left(\frac{k^2}{\alpha^2}\ln(k/\beta)\right),$$
    then with probability at least $1-4\beta$,
    $Y^L$ and $Z^L$ (alternatively $Y$ and $Z$)
    satisfy Conditions \ref{cond:num-points}
    and \ref{cond:easy-intra-gaussian}.
\end{lemma}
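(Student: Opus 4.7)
The plan is to observe that the conclusion decomposes cleanly along the two listed conditions: Condition~\ref{cond:num-points} is already established by Lemma~\ref{lem:regularity-conditions-hold}, and Condition~\ref{cond:easy-intra-gaussian} is exactly the conclusion of Lemma~\ref{lem:easy-intra-gaussian}. So the proof is essentially just invoking each of these lemmas twice---once for $Y^L$ and once for $Z^L$, the two disjoint halves into which the algorithm partitions the input sample---and taking a union bound over the four resulting failure events, which yields the stated $1-4\beta$ probability.

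For the first pair of invocations, I would instantiate Lemma~\ref{lem:regularity-conditions-hold} with $w_u = 1/k$, since $\cD$ is uniform. Its hypothesis then becomes $n \geq \max\{12k\ln(2k/\beta),\; (405k^2/2\alpha^2)\ln(2k/\beta)\}$; the stated sample complexity $n \geq O(k^2\ln(k/\beta)/\alpha^2)$ subsumes both terms in the relevant regime $\alpha = O(k)$ (and one can always absorb the additive $12k\ln(2k/\beta)$ term into the $O(\cdot)$ otherwise). For the second pair of invocations, I would instantiate Lemma~\ref{lem:easy-intra-gaussian} with the same $\ell$ appearing in the hypothesis of the present lemma; the assumption $\ell \geq 512\ln(nk/\beta)$ is precisely what is needed so that Hanson--Wright concentration (Lemma~\ref{lem:HW}) applied coordinatewise and then union-bounded over the (at most $n$) points of each component yields a ball of radius in $[\sqrt{\ell}\sigma_u/2,\,\sqrt{3\ell}\sigma_u]$. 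The additional hypothesis $\ell \geq k$ plays no role here; it is needed downstream for the PCA-projection step but is carried through this lemma merely because it is a standing assumption on $\ell$.

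There is no genuine obstacle in the proof because the heavy lifting has already been done in the two cited lemmas; the only thing to be careful about is the union-bound bookkeeping, which explains the factor of $4$ (two conditions, each applied to two disjoint datasets, each failing with probability at most $\beta$). I would note in passing that Condition~\ref{cond:easy-inter-gaussian} is deliberately \emph{not} asserted by this lemma, even though it is stated in the surrounding text; that separation-type condition will be handled later once the PCA projection has been analyzed and its effect on the inter-component distances quantified via Lemma~\ref{lem:PCA_AM_style} and Lemma~\ref{lem:data-spectral}.
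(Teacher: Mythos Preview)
Your proposal is correct and matches the paper's approach exactly: the paper states that the lemma ``is immediate from Lemmata~\ref{lem:regularity-conditions-hold} and~\ref{lem:easy-intra-gaussian},'' and you have filled in precisely that argument with the appropriate union bound over four events. One small remark: since $Y$ and $Z$ live in $\R^d$, Lemma~\ref{lem:easy-intra-gaussian} is instantiated with ambient dimension $d$ (not the parameter $\ell$), and the required hypothesis $d \geq 512\ln(nk/\beta)$ follows from $d \geq \ell \geq 512\ln(nk/\beta)$.
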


\subsubsection{PCA}

The following result stated in \cite{DworkTTZ14},
though used in a setting where $\Delta_{\eps,\delta}$
was fixed, holds for any value of
$\Delta_{\eps,\delta}$.

\begin{lemma}[Theorem 9 of~\cite{DworkTTZ14}]\label{lem:ag-spectral}
    Let $\wh{\Pi}_{\ell}$ be the top $\ell$
    principal subspace obtained using $Y^TY + E$
    in Algorithm \ref{alg:pegm}. Suppose
    $\Pi_{\ell}$ is the top $\ell$ subspace
    obtained from $Y^TY$. Then with probability
    at least $1-\beta$,
    $$\llnorm{Y^TY - (Y\wh{\Pi}_{\ell})^T(Y\wh{\Pi}_{\ell})}
        \leq \llnorm{Y^TY - (Y\Pi_{\ell})^T(Y\Pi_{\ell})}
        + O\left(\Delta_{\eps,\delta}\sqrt{d}\right).$$
\end{lemma}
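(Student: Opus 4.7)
My plan is to reduce the bound to two classical ingredients: (a) the Eckart--Young characterization, which says that for a symmetric PSD matrix $N$ the top-$\ell$ projection yields the best spectral-norm rank-$\ell$ approximation, with error exactly $\lambda_{\ell+1}(N)$; and (b) a spectral-norm tail bound on the symmetric Gaussian matrix $E$. Write $M = Y^TY$ and $\wh{M}=M+E$, so that $\wh{\Pi}_\ell$ projects onto the top-$\ell$ eigenspace of $\wh{M}$ while $\Pi_\ell$ projects onto the top-$\ell$ eigenspace of $M$. Since any such projection is symmetric and $(Y\Pi)^T(Y\Pi)=\Pi M\Pi$, the target inequality is equivalent to $\llnorm{M-\wh{\Pi}_\ell M\wh{\Pi}_\ell}\le \llnorm{M-\Pi_\ell M\Pi_\ell}+O(\Delta_{\eps,\delta}\sqrt d)$.

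Next, I would substitute $M=\wh{M}-E$ inside the left-hand side, distribute the projection, and apply the triangle inequality together with $\llnorm{\wh{\Pi}_\ell E\wh{\Pi}_\ell}\le \llnorm{E}$ to obtain
$$\llnorm{M-\wh{\Pi}_\ell M\wh{\Pi}_\ell}\;\le\; \llnorm{\wh{M}-\wh{\Pi}_\ell \wh{M}\wh{\Pi}_\ell}+2\llnorm{E}\;=\;\lambda_{\ell+1}(\wh{M})+2\llnorm{E},$$
where the last equality is Eckart--Young applied to the symmetric matrix $\wh{M}$ (with $\wh{\Pi}_\ell$ its top-$\ell$ projection). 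Weyl's perturbation inequality then yields $\lambda_{\ell+1}(\wh{M})\le \lambda_{\ell+1}(M)+\llnorm{E}=\llnorm{M-\Pi_\ell M\Pi_\ell}+\llnorm{E}$, and combining gives $\llnorm{M-\wh{\Pi}_\ell M\wh{\Pi}_\ell}\le \llnorm{M-\Pi_\ell M\Pi_\ell}+3\llnorm{E}$.

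The remaining ingredient is a high-probability bound on $\llnorm{E}$. Since $E$ is a $d\times d$ symmetric matrix whose independent upper-triangular entries are $\cN(0,\Delta_{\eps,\delta}^2)$, I would invoke a standard Wigner-type concentration result---for example, via an $\epsilon$-net over the unit sphere combined with Gaussian tail inequalities---to conclude that $\llnorm{E}=O\bigl(\Delta_{\eps,\delta}(\sqrt d+\sqrt{\log(1/\beta)})\bigr)$ with probability at least $1-\beta$. In the regime $\log(1/\beta)\lesssim d$ this collapses to $O(\Delta_{\eps,\delta}\sqrt d)$, closing the loop on the stated bound.

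Each individual step is short; the only genuinely nontrivial input is the Wigner-type tail bound on $\llnorm E$, and that is precisely what drives the $\sqrt d$ factor in the conclusion. Conceptually the key insight is that $\wh{\Pi}_\ell$ acts as a near-optimal rank-$\ell$ projection for the clean matrix $M$, with suboptimality controlled linearly in $\llnorm{E}$ via the Eckart--Young plus Weyl combination above.
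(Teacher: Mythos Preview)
The paper does not actually prove this lemma; it is quoted verbatim as Theorem~9 of~\cite{DworkTTZ14} with only the remark that the statement ``holds for any value of $\Delta_{\eps,\delta}$.'' Your proof sketch is therefore not competing against any argument in the paper---it is supplying one.

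Your argument is the standard one and is essentially correct. One small technical point: after adding the symmetric Gaussian perturbation $E$, the matrix $\wh M = M+E$ need not be PSD, so the identity $\llnorm{\wh M-\wh\Pi_\ell \wh M\wh\Pi_\ell}=\lambda_{\ell+1}(\wh M)$ is not quite right if some tail eigenvalue is negative with large magnitude. This is easily patched: either interpret ``top-$\ell$ principal singular subspace'' (the wording in Algorithm~\ref{alg:pegm}) as top singular values, in which case the residual norm is $\sigma_{\ell+1}(\wh M)$ and Weyl for singular values gives $\sigma_{\ell+1}(\wh M)\le \sigma_{\ell+1}(M)+\llnorm{E}=\lambda_{\ell+1}(M)+\llnorm{E}$ directly; or observe that since $M\succeq 0$, Weyl gives $\lambda_d(\wh M)\ge -\llnorm{E}$, so any negative tail eigenvalue contributes at most an extra $\llnorm{E}$, which is already absorbed into your $3\llnorm{E}$ slack. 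Either way the conclusion stands, and the Wigner-type bound $\llnorm{E}=O(\Delta_{\eps,\delta}\sqrt d)$ with high probability is exactly the right final ingredient.
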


The main result here is that the PCA step
shrinks the Gaussians down in a way that
their means, after being projected upon
the privately computed subspace, are close
to their original locations.

\begin{lemma}\label{lem:easy-pca}
    Let $Y$ be the dataset, and $\wh{V}_{\ell}$
    be the subspace as defined
    in Algorithm \ref{alg:pegm}. Suppose $\mu_1,\dots,\mu_k$
    are the means of the Gaussians, and $\mu'_1,\dots,\mu'_k$
    are their respective projections on to $\wh{V}_{\ell}$.
    If
    $$n \geq O\left(\frac{d^{\frac{3}{2}}k^3\sqrt{\ln(1/\delta)}}{\eps}
        + k\ln(k/\beta)\right),$$
    and $Y$ satisfies Condition \ref{cond:num-points},
    then with probability at least $1-4\beta$,
    \begin{enumerate}
        \item for every $i$, we have
            $\llnorm{\mu_i - \mu'_i} \leq 8\sqrt{\ell}\maxsigma$,
            where $\maxsigma = \max\limits_{i}\{\sigma_i\}$,
            and
        \item for every $i \neq j$,
            $\llnorm{\mu'_i - \mu'_j} \geq
                \llnorm{\mu_i - \mu_j} - 16\sqrt{\ell}\maxsigma$.
    \end{enumerate}
\end{lemma}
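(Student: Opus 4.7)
The plan is to combine the general perturbed-PCA bound of Lemma~\ref{lem:PCA_AM_style} with the spectral-norm bound on $\llnorm{Y-A}$ from Lemma~\ref{lem:data-spectral} and the private-PCA quality guarantee of Lemma~\ref{lem:ag-spectral}, and then pay an extra empirical-mean concentration cost in order to move from empirical means $\bar\mu_i$ back to the population means $\mu_i$. For each component $i$, writing $\Pi_U$ for the projector onto $\wh V_\ell$ so that $\mu'_i = \Pi_U\mu_i$, I would use the triangle inequality
\[
\llnorm{\mu_i - \mu'_i} = \llnorm{(\id-\Pi_U)\mu_i} \leq \llnorm{\mu_i - \bar\mu_i} + \llnorm{\bar\mu_i - \Pi_U\bar\mu_i},
\]
where the second term is precisely what Lemma~\ref{lem:PCA_AM_style} controls and the first term is bounded by standard empirical-mean concentration.

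For the empirical-mean term, Lemma~\ref{lem:spectralnormofgaussians} applied to component $i$ and Condition~\ref{cond:num-points} (which gives $n_i \geq n/(2k)$ under uniform weights) imply that with probability $1-\beta/k$ the deviation $\llnorm{\mu_i - \bar\mu_i}$ is $O(\sigma_i\sqrt{(d+\log(k/\beta))/n_i})$, which is at most $O(\sigma_{\max})$ once $n \gtrsim dk$ (comfortably implied by the hypothesis). For the PCA term, Lemma~\ref{lem:data-spectral} gives $\llnorm{Y-A} \leq 4\sqrt{n\sum_i w_i\sigma_i^2} \leq 4\sqrt n\,\sigma_{\max}$ (again under Condition~\ref{cond:num-points}) and Lemma~\ref{lem:ag-spectral} gives $B = O(\Delta_{\eps,\delta}\sqrt d)$, so Lemma~\ref{lem:PCA_AM_style} yields
\[
\llnorm{\bar\mu_i - \Pi_U\bar\mu_i} \leq \frac{\llnorm{Y-A}}{\sqrt{n_i}} + \sqrt{B/n_i} \leq 4\sqrt{n/n_i}\,\sigma_{\max} + \sqrt{B/n_i}.
\]
Since $n_i \geq n/(2k)$ and $\ell \geq k$, the deterministic term is $O(\sqrt k\,\sigma_{\max}) \leq O(\sqrt\ell\,\sigma_{\max})$.

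The main obstacle will be handling the private-noise term $\sqrt{B/n_i}$. Unpacking $\Delta_{\eps,\delta} = O(\Lambda^2\sqrt{\ln(1/\delta)}/\eps)$ with $\Lambda = 2k\sqrt{d\kappa}\,\minsigma$ gives $B = O(k^2 d^{3/2}\kappa\,\minsigma^2\sqrt{\ln(1/\delta)}/\eps)$, so $\sqrt{B/n_i} = O(\sqrt{kB/n})$ is at most $O(\sqrt\kappa\,\minsigma) = O(\sigma_{\max})$ exactly when $n \gtrsim d^{3/2}k^3\sqrt{\ln(1/\delta)}/\eps$, which is the sample-complexity hypothesis of the lemma. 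A union bound over $i\in[k]$ and over the events of Lemmata \ref{lem:PCA_AM_style}, \ref{lem:data-spectral}, \ref{lem:ag-spectral} and the mean-concentration step, with constants chosen so that the sum fits under $8\sqrt\ell\,\maxsigma$, establishes part (1).

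Part (2) is then immediate from part (1) by two applications of the triangle inequality,
\[
\llnorm{\mu_i - \mu_j} \leq \llnorm{\mu_i - \mu'_i} + \llnorm{\mu'_i - \mu'_j} + \llnorm{\mu'_j - \mu_j} \leq \llnorm{\mu'_i - \mu'_j} + 16\sqrt\ell\,\maxsigma,
\]
after rearranging. The delicate accounting is entirely in the $\sqrt{B/n_i}$ step of part (1); everything else plugs in as a black box from the preceding section.
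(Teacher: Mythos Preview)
Your proposal is correct and follows essentially the same route as the paper: combine Lemma~\ref{lem:PCA_AM_style} with the spectral bound of Lemma~\ref{lem:data-spectral} and the noise bound $B=O(\Delta_{\eps,\delta}\sqrt d)$ of Lemma~\ref{lem:ag-spectral}, then pass from empirical to population means via Lemma~\ref{lem:spectralnormofgaussians} and Condition~\ref{cond:num-points}, with the hypothesis $n\gtrsim d^{3/2}k^3\sqrt{\ln(1/\delta)}/\eps$ absorbing the $\sqrt{B/n_i}$ term. The only cosmetic difference is that the paper uses the three-step chain $\mu_i\to\wt\mu_i\to\Pi_U\wt\mu_i\to\mu'_i$, whereas you collapse the two empirical-mean legs into one via $\llnorm{(\id-\Pi_U)(\mu_i-\bar\mu_i)}\le\llnorm{\mu_i-\bar\mu_i}$; this is a harmless (slightly cleaner) variant.
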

\begin{proof}
    Let $\wt{\mu}_i$ be the empirical
    mean of component $i$ using the respective
    points in $Y$, and let
    $\dot{\mu}_i = \wh{\Pi}_{\ell}\wt{\mu}_i$.
    Using Lemmata \ref{lem:PCA_AM_style},
    \ref{lem:data-spectral}, and \ref{lem:ag-spectral},
    we know that with probability at least
    $1-\beta$, for all $1 \leq i \leq k$,
    \begin{align*}
        \llnorm{\dot{\mu}_i - \wt{\mu}_i} &\leq
                4\sqrt{2\sum\limits_{j=1}^{k}{\sigma^2_i}} +
                O\left(\sqrt{\frac{\Delta_{\eps,\delta}\sqrt{d}k}{n}}\right)\\
            &\leq 4\sqrt{2k}\maxsigma +
                O\left(\sqrt{\frac{d^{\frac{3}{2}}k^3
                \minsigma^2\sqrt{\ln(1/\delta)}}{\eps n}}\right).
                \tag{$\kappa \in \Theta(1)$}
    \end{align*}
    Because $n \geq \tfrac{d^{\frac{3}{2}}k^3\sqrt{\ln(1/\delta)}}{\eps}$
    and $\tfrac{\maxsigma^2}{\minsigma^2} \leq
    \kappa \in \Theta(1)$, we have
    \begin{align*}
        O\left(\sqrt{\frac{d^{\frac{3}{2}}k^3
                \minsigma^2\sqrt{\ln(1/\delta)}}{\eps n}}\right)
                &\leq O\left(\maxsigma\right)\\
        \implies \llnorm{\dot{\mu}_i - \wt{\mu}_i}
            &\leq 6 \sqrt{k} \maxsigma\\
            &\leq 6 \sqrt{\ell} \maxsigma.
    \end{align*}
    Since $n \geq O\left(\tfrac{dk}{\ell} +
    \tfrac{k\ln(k/\beta)}{d}
    + \tfrac{k\ln(k/\beta)}{\ell}\right)$, using
    Lemma \ref{lem:spectralnormofgaussians},
    we have that with probability at least
    $1-2\beta$, for all $1 \leq i \leq k$,
    $$\llnorm{\wt{\mu}_i - \mu_i} \leq \sqrt{\ell}\sigma_i
        \leq \sqrt{\ell}\maxsigma~~~ \text{and}~~~
        \llnorm{\dot{\mu}_i - \mu'_i} \leq \sqrt{\ell}\sigma_i
        \leq \sqrt{\ell}\maxsigma.$$
    Finally, we get the required results
    by using triangle inequality.
\end{proof}

\subsubsection{Clustering}

After the PCA step, individual Gaussian
components shrink (i.e., the radius decreases from
$O(\sqrt{d}\sigma)$ to $O(\sqrt{\ell}\sigma)$), but the means do not shift a
lot. Given the large initial separation,
we can find individual components
using the private location algorithm, and
learn them separately using a private
learner. In this section, we show that
our algorithm is able to achieve the first
goal.

First, we prove that our data is likely to satisfy some
of the conditions have have already defined.

\begin{lemma}
  \label{lem:all-dr-conditions}
    Let $Y,Z,Z_{\ell}$ be datasets as defined
    within the algorithm (with $Z^L_{\ell}$
    being the corresponding labeled dataset
    of $Z_{\ell}$), where $Y,Z$ are sampled
    from $\cD \in \cS(d,k,\kappa,(\xi+16\sqrt{\kappa})\sqrt{\ell})$,
    such that $d \geq \ell \geq \max\{512\ln(nk/\beta),k\}$,
    and $\xi,\kappa \in \Theta(1)$ and $\xi>1$
    is a universal constant. If
    $$n \geq O\left(\frac{d^{\frac{3}{2}}k^3\sqrt{\ln(1/\delta)}}{\eps}
        + \frac{k^2}{\alpha^2}\ln(k/\beta)\right),$$
    and $Y$ and $Z$ satisfy Condition
    \ref{cond:num-points}, then with probability
    at least $1-7\beta$, $Z^L_{\ell}$ (alternatively
    $Z_{\ell}$) satisfies Conditions \ref{cond:num-points},
    \ref{cond:easy-intra-gaussian}, and
    \ref{cond:easy-inter-gaussian}.
\end{lemma}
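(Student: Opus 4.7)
}
The plan is to reduce the statement for $Z_\ell$ (in the $\ell$-dimensional projected space) to an application of Lemma~\ref{lem:easy-dr-conditions} on a new, $\ell$-dimensional spherical mixture, namely the push-forward of $\cD$ under the (random) PCA projection $\wh\Pi_\ell$. The key observation that makes this reduction clean is that $Y$ and $Z$ are disjoint halves of the sample, and $E$ is independent noise, so $\wh\Pi_\ell$ is independent of $Z$. Conditioning on $\wh\Pi_\ell$, the projected sample $Z_\ell$ is i.i.d.\ from the mixture with components $\cN(\wh\Pi_\ell\mu_i,\sigma_i^2 \id_{\ell\times\ell})$ (using that projecting a spherical Gaussian yields a spherical Gaussian of the same variance in the image subspace), which lives in $\cS(\ell,k,\kappa,s)$ for some $s$ to be determined.

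First I would handle Condition~\ref{cond:num-points} for $Z_\ell$: projecting does not alter the labels, so the per-component sample counts are identical to those of $Z$, and the condition transfers for free from the hypothesis that $Z$ satisfies it. Next, for Condition~\ref{cond:easy-inter-gaussian}, I would apply Lemma~\ref{lem:easy-pca} to $Y$ (which, by hypothesis, satisfies Condition~\ref{cond:num-points}, so the lemma's assumption is met once $n$ is large enough) to get, with probability at least $1-4\beta$, the mean-displacement bound $\|\mu_i'-\mu_j'\|\geq \|\mu_i-\mu_j\|-16\sqrt{\ell}\maxsigma$ for every $i\neq j$. Combining this with the hypothesis $\cD\in\cS(d,k,\kappa,(\xi+16\sqrt{\kappa})\sqrt{\ell})$ and the bound $\maxsigma\leq\sqrt{\kappa}\max\{\sigma_i,\sigma_j\}$ (from the $\kappa$-bounded variance ratio) gives
\[ \|\mu_i'-\mu_j'\|\geq (\xi+16\sqrt{\kappa})\sqrt{\ell}\max\{\sigma_i,\sigma_j\}-16\sqrt{\kappa\ell}\max\{\sigma_i,\sigma_j\}=\xi\sqrt{\ell}\max\{\sigma_i,\sigma_j\}, \]
so the projected mixture lies in $\cS(\ell,k,\kappa,\xi\sqrt{\ell})$ with $\xi>1$, which is exactly the setting required by Condition~\ref{cond:easy-inter-gaussian}. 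I would then invoke the inter-component concentration lemma used to prove Lemma~\ref{lem:easy-dr-conditions} on this $\ell$-dimensional mixture to conclude that Condition~\ref{cond:easy-inter-gaussian} holds for $Z_\ell^L$ with probability at least $1-\beta$ (absorbing $\beta$'s into the final $1-7\beta$ bound).

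For Condition~\ref{cond:easy-intra-gaussian}, I would again invoke the intra-component lemma behind Lemma~\ref{lem:easy-dr-conditions}, now applied to the projected $\ell$-dimensional mixture: since the projected samples from component $u$ are i.i.d.\ from $\cN(\mu_u',\sigma_u^2\id_{\ell\times\ell})$ and the hypothesis $\ell\geq 512\ln(nk/\beta)$ matches the required dimension/confidence tradeoff, the smallest enclosing ball of each component's projected points has radius in $[\sqrt{\ell}\sigma_u/2,\sqrt{3\ell}\sigma_u]$ with the requisite probability. A union bound across the three conditions, together with the $4\beta$ failure probability from invoking Lemma~\ref{lem:easy-pca}, yields the claimed $1-7\beta$ success probability. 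The main obstacle I expect is justifying that, once we condition on $\wh\Pi_\ell$, the projected sample $Z_\ell$ really is an i.i.d.\ sample from the stated $\ell$-dimensional spherical mixture, and that the ``labelled'' viewpoint is preserved after the subsequent rotation to align axes (which, as an isometry, changes nothing in either condition). Once that reduction is stated carefully, the rest is just a direct appeal to the same two lemmata that underlie Lemma~\ref{lem:easy-dr-conditions}, applied in dimension $\ell$ rather than $d$.
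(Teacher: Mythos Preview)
Your proposal is correct and follows essentially the same approach as the paper: apply Lemma~\ref{lem:easy-pca} to get that the projected means satisfy $\|\mu_i'-\mu_j'\|\geq \xi\sqrt{\ell}\max\{\sigma_i,\sigma_j\}$ (using $\maxsigma\leq\sqrt{\kappa}\max\{\sigma_i,\sigma_j\}$), conclude that $Z_\ell$ is a sample from some $\cD'\in\cS(\ell,k,\kappa,\xi\sqrt{\ell})$, transfer Condition~\ref{cond:num-points} for free, and then invoke the appendix lemmata for Conditions~\ref{cond:easy-intra-gaussian} and~\ref{cond:easy-inter-gaussian}. If anything, you are more explicit than the paper about the key measurability point (that $\wh\Pi_\ell$ depends only on $Y$ and $E$, hence is independent of $Z$, so conditioning on it makes $Z_\ell$ an i.i.d.\ sample from the projected spherical mixture) and about the rotation being an isometry; the paper leaves both of these implicit.
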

\begin{proof}
    Let $\mu_1,\dots,\mu_k$ be means of the
    Gaussians in $\cD$, and $\mu'_1,\dots,\mu'_k$
    be their respective projections onto
    $\wh{V}_{\ell}$, and let $\sigma_1^2,\dots,\sigma_k^2$
    be their respective variances. Because $Y$
    satisfies Condition \ref{cond:num-points},
    we know from Lemma \ref{lem:easy-pca} that
    with probability at least $1-4\beta$,
    for each $1 \leq i \neq j \leq k$
    \begin{align*}
        \llnorm{\mu'_i-\mu'_j} &\geq
                \llnorm{\mu_i-\mu_j} - 16\sqrt{\ell}\maxsigma\\
            &\geq (\xi + 16\sqrt{\kappa})\sqrt{\ell}\max\{\sigma_i,\sigma_j\}
                - 16\sqrt{\ell}\maxsigma\\
            &\geq \xi\sqrt{\ell}\max\{\sigma_i,\sigma_j\}.
    \end{align*}
    Therefore, points in $Z_{\ell}$ are
    essentially points from some
    $\cD' \in \cS(\ell,k,\kappa,\xi\sqrt{\ell})$.
    $Z_{\ell}$ clearly satisfies Condition
    \ref{cond:num-points}. Using Lemmata
    \ref{lem:easy-intra-gaussian} and
    \ref{lem:easy-inter-gaussian}, we have
    that $Z_{\ell}$ satisfies the other
    two conditions as well. Using the union
    bound over these three events, we get
    the required result.
\end{proof}

The following theorem guarantees the existence
of an algorithm that finds approximately smallest
balls containing almost the specified number of
points. 
This is based off of the 1-cluster algorithm of
Nissim and Stemmer~\cite{NissimS18}.
We provide its proof, and state such an
algorithm in Section \ref{sec:missingproofs-pgloc}.
%While the leading factor of the sample complexity
%may seem arcane, we note that it can be replaced
%by $\left(\frac{\sqrt{\ell} k}{\gamma \eps}\right)^{1 + \tau}$,
%for any constant $\tau > 0$.

\begin{thm}[Private Location for GMMs, Extension of~\cite{NissimS18}]\label{thm:pgloc}
    There is an $(\eps,\delta)$-differentially
    private algorithm
    $\PGLoc(X,t; \eps,\delta,R,\minsigma,\maxsigma)$
    with the following guarantee.
    Let $X = (X_1,\dots,X_n) \in \R^{n \times d}$
    be a set of $n$ points drawn from a mixture
    of Gaussians $\cD \in \cG(\ell,k,R,\minsigma,\maxsigma,\mw,s)$.
    Let $S \subseteq X$ such that $\abs{S} \geq t$, and
    let $0 < a < 1$ be any small absolute constant (say, one can take $a = 0.1$).
    If $t = \gamma n$, where $0 < \gamma \leq 1$, and
    \begin{align*}
      n &\geq \left(\frac{\sqrt{\ell}}
            {\gamma\eps}\right)^{\frac{1}{1-a}}
            \cdot 9^{\log^*\left(\sqrt{\ell}\left(\frac{R\maxsigma}
            {\minsigma}\right)^{\ell}\right)}
            \cdot \polylog\left(\ell,
            \frac{1}{\eps},\frac{1}{\delta}
            \frac{1}{\beta},\frac{1}{\gamma}\right)
            + O\left(\frac{\ell + \log(k/\beta)}{\mw}\right),
    \end{align*}
    then for some absolute constant $c > 4$ that
    depends on $a$, with
    probability at least $1-\beta$, the algorithm
    outputs $(r,\vec{c})$ such that the following hold:
    \begin{enumerate}
        \item $B_{r}(\vec{c})$ contains at least
            $\tfrac{t}{2}$ points in $S$, that is,
            $\abs{B_r(\vec{c}) \cap S} \geq
            \tfrac{t}{2}$.
        \item If $r_{\opt}$ is the radius of
            the smallest ball containing at least
            $t$ points in $S$, then $r \leq c
            \left(r_{\opt}+\frac{1}{4}\sqrt{\ell}\minsigma\right).$
    \end{enumerate}
\end{thm}
%\begin{thm}[Private Location for GMMs, Extension of~\cite{NissimS18}]\label{thm:pgloc}
%    There is an $(\eps,\delta)$-differentially
%    private algorithm
%    $\PGLoc(X,t; \eps,\delta,R,\minsigma,\maxsigma)$
%    with the following guarantee.
%    Let $X = (X_1,\dots,X_n) \in \R^{n \times d}$
%    be a set of $n$ points drawn from a mixture
%    of Gaussians $\cD \in \cG(\ell,k,R,\minsigma,\maxsigma,\mw,s)$.
%    If $t = \gamma n$, where $0 < \gamma \leq 1$, and
%    \begin{align*}
%        n &\geq \frac{\ell^{\frac{5}{9}}k^{\frac{10}{9}}}
%            {\gamma^{\frac{10}{9}}\eps^{\frac{10}{9}}}
%            \cdot 9^{\log^*\left(\sqrt{\ell}\left(\frac{R\maxsigma}
%            {\minsigma}\right)^{\ell}\right)}
%            \cdot \polylog\left(\ell,
%            \frac{1}{\eps},\frac{1}{\delta},
%            \frac{1}{\beta},\frac{1}{\gamma}\right)
%    \end{align*}
%    then for some absolute constant $c > 4$, with
%    probability at least $1-\beta$, the algorithm
%    outputs $(r,\vec{c})$ such that the following hold:
%    \begin{enumerate}
%        \item $B_{r}(\vec{c})$ contains at least
%            $\tfrac{t}{2}$ points in $X$, that is,
%            $\abs{B_r(\vec{c}) \cap X} \geq
%            \tfrac{t}{2}$.
%        \item If $r_{\opt}$ is the radius of
%            the smallest ball containing at least
%            $t$ points in $X$, then $r \leq c
%            \left(r_{\opt}+\frac{1}{4}\sqrt{\ell}\minsigma\right).$
%    \end{enumerate}
%\end{thm}

Since the constant $a$ can be arbitrarily small, for simplicity,
we fix it to $0.1$ for the remainder of this section.

The first lemma we prove says that individual
components are located correctly in
the lower dimensional subspace, which
is to say that we find $k$ disjoint balls,
such that each ball completely contains
exactly one component. 

We first define the
following events.
\begin{enumerate}
    \item $E_{Y,Z}$: $Y$ and $Z$ satisfy Condition
        \ref{cond:num-points}
    \item $E_Z$: $Z$ satisfies Conditions \ref{cond:num-points}
        and \ref{cond:easy-intra-gaussian}
    \item $E_{Z_{\ell}}$: $Z_{\ell}$
        satisfies Conditions \ref{cond:num-points},
        \ref{cond:easy-intra-gaussian}, and
        \ref{cond:easy-inter-gaussian}.
\end{enumerate}

\begin{lemma}\label{lem:easy-ld-clustering}
    Let $Z_{\ell}$ be the dataset as defined
    in the algorithm, and let $Z^L_{\ell}$ be
    its corresponding labelled dataset. If
    $$n \geq O\left(\frac{d^{\frac{3}{2}}k^3\sqrt{\ln(1/\delta)}}{\eps}
        + \frac{k^2}{\alpha^2}\ln(k/\beta)
        + \frac{\ell^{\frac{5}{9}}k^{\frac{5}{3}}}
            {\eps^{\frac{10}{9}}}
            \cdot \polylog\left(\ell,k,
            \frac{1}{\eps},\frac{1}{\delta},
            \frac{1}{\beta}\right)\right),$$
    and events $E_{Y,Z}$ and $E_{Z_{\ell}}$
    happen, then with probability
    at least $1-5\beta$, at the end of the
    first loop,
    \begin{enumerate}
        \item $i = k+1$, that is, the algorithm
            has run for exactly $k$ iterations;
        \item for all $1 \leq i \leq k$, if
            $u,v \in \ball{c_i}{r_i}$, and
            $(u,\rho_u),(v,\rho_v) \in Z^L_{\ell}$,
            then $\rho_u = \rho_v$;
        \item for all $1 \leq i \neq j \leq k$, if
            $u \in \ball{c_i}{r_i}$, $v \in \ball{c_j}{r_j}$,
            and $(u,\rho_u),(v,\rho_v) \in Z^L_{\ell}$,
            then $\rho_u \neq \rho_v$;
        \item for all $1 \leq i \neq j \leq k$,
            $\ball{c_i}{r_i} \cap \ball{c_j}{r_j} = \emptyset$;
        \item $S_1 \cup \dots \cup S_k = Z_{\ell}$;
        \item for all $1 \leq i \leq k$, if
            $u \in \ball{c_i}{r_i}$, and
            $(u,\rho_u) \in Z^L_{\ell}$, then
            $r_i \in \Theta(\sqrt{k}\sigma_{\rho_u})$.
    \end{enumerate}
\end{lemma}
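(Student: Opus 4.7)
The plan is to condition on the regularity events $E_{Y,Z}$ and $E_{Z_\ell}$ (whose failure probability is already accounted for in the $7\beta$ of Lemma~\ref{lem:all-dr-conditions}) and then argue deterministically about the while-loop. Under these conditions, in $Z_\ell$ we have: (a) each component has between $n/(2k)$ and $3n/(2k)$ points; (b) each component's points lie in some ball of radius at most $\sqrt{3\ell}\,\sigma_u$; and (c) points from distinct components are separated by at least $(\xi/2)\sqrt{\ell}\,\max\{\sigma_u,\sigma_v\}$, where $\xi$ is the large universal constant coming from the separation hypothesis.

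The induction is on the iteration index $i = 1,\ldots,k$. At iteration $i$, the set $S_\ell$ consists of the points from $k - i + 1$ still-untouched components. The central technical step is to invoke Theorem~\ref{thm:pgloc} on $S_\ell$ with threshold $t = n/(2k)$, effective privacy parameter $\eps' \asymp \eps/\sqrt{k\log(1/\delta)}$ (absorbing the advanced-composition loss across $k$ iterations), and the boundedness parameters passed in by the outer algorithm. Verifying the sample-complexity hypothesis of Theorem~\ref{thm:pgloc} with $\gamma = 1/(2k)$ and $1/(1-a) = 10/9$ is exactly what forces the $\ell^{5/9} k^{5/3}/\eps^{10/9}$ term in the lemma's hypothesis: the $k^{5/3}$ exponent arises as $(1/\gamma)^{10/9}\cdot(\sqrt{k})^{10/9} = k^{10/9}\cdot k^{5/9}$, where the first factor is from $\gamma$ and the second from composition. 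A union bound over the $k$ iterations controls the total $\PGLoc$ failure probability by $\beta$.

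The remaining per-iteration work is to show that $\ball{c_i}{r'_i}$ isolates exactly one component $u_i$ and that $\ball{c_i}{r_i}$ with $r_i = 4\sqrt{3}\,r'_i$ swallows every remaining point of $u_i$. Theorem~\ref{thm:pgloc} gives $r'_i \leq c(r_{\opt} + \sqrt{\ell}\,\minsigma/4)$; combining (a) with (b), $r_{\opt} \leq \sqrt{3\ell}\,\sigma^{(i)}_{\min}$ where $\sigma^{(i)}_{\min}$ is the smallest remaining standard deviation, and since $\kappa = \Theta(1)$ this gives $r'_i = O(\sqrt{\ell}\,\maxsigma)$. Taking $\xi$ sufficiently large as a universal constant, the diameter $2r'_i$ is strictly smaller than the inter-component distance in (c), so $\ball{c_i}{r'_i}$ can only contain points from a single component, which we call $u_i$. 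To handle the expansion, I would also establish a matching lower bound $r'_i = \Omega(\sqrt{\ell}\,\sigma_{u_i})$: since $\ball{c_i}{r'_i}$ contains at least $n/(4k)$ points---a constant fraction of the at most $3n/(2k)$ points of $u_i$---and since by Gaussian concentration in $\ell$ dimensions any ball covering such a fraction of a component has radius $\Omega(\sqrt{\ell}\,\sigma_{u_i})$. Combined with the pairwise-distance bound $2\sqrt{3\ell}\,\sigma_{u_i}$ from (b), the factor $4\sqrt{3}$ is calibrated so that $r_i \geq r'_i + 2\sqrt{3\ell}\,\sigma_{u_i}$, which yields both the full containment claimed in items (2)--(5) and the $r_i = \Theta(\sqrt{\ell}\,\sigma_{u_i})$ claim of item (6).

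The main obstacle is the simultaneous balancing of the upper bound on $r'_i$ (needed for single-component isolation via (c)) against the lower bound on $r'_i$ (needed for the $4\sqrt{3}$-expansion to cover all of $u_i$ via (b)). This requires $\xi$ to be chosen sufficiently large relative to the approximation constant $c$ of Theorem~\ref{thm:pgloc}, and the lower bound on $r'_i$ will likely require a short additional concentration argument beyond the regularity conditions currently stated in the excerpt. Checking that these constants fit together consistently---and in particular that $4\sqrt{3}$ is the correct expansion factor---is the main place where care is required; once this calibration is carried out, items (1)--(6) follow immediately from the induction.
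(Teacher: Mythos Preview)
Your proposal is correct and follows essentially the same approach as the paper: induction over iterations, invoking $\PGLoc$ with the composed privacy budget, and balancing the upper and lower bounds on the returned radius against the separation and diameter conditions. The paper's proof is organized around an explicit inductive claim on events $E_i$ and adds a small case analysis (the returned ball may hit a component other than the one realizing $r_{\opt}$), but since $\kappa=\Theta(1)$ your cruder bound via $\maxsigma$ handles this just as well. One remark: the lower bound $r'_i=\Omega(\sqrt{\ell}\,\sigma_{u_i})$ that you flag as needing ``a short additional concentration argument'' is already available---the proof of Condition~\ref{cond:easy-intra-gaussian} shows every pairwise intra-component distance is at least $\sqrt{\ell}\,\sigma_u$, so any ball containing two points of $u_i$ has radius at least $\sqrt{\ell}\,\sigma_{u_i}/2$, and no new concentration is required.
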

\begin{proof}
    Throughout the proof, we will omit the
    conditioning on events $E_{Y,Z}$ and
    $E_{Z_{\ell}}$ for
    brevity. To prove this lemma, we first
    prove a claim that at the end of iteration
    $i$, the balls we have found so far are
    disjoint, and that each ball completely
    contains points from exactly one Gaussian,
    no two balls contain the same component,
    and the radius of each ball is not too
    large compared to the radius of the optimal
    ball that contains the component within
    it.
    \begin{claim}
        Let $E_i$ be the event that at the end
        of iteration $i$,
        \begin{enumerate}
            \item the number of components found so far is $i$;
            \item for all $1 \leq a \leq i$, if
                $u,v \in \ball{c_a}{r_a}$, and
                $(u,\rho_u),(v,\rho_v) \in Z^L_{\ell}$,
                then $\rho_u = \rho_v$;
            \item for all $1 \leq a \neq b \leq i$, if
                $u \in \ball{c_a}{r_a}$, $v \in \ball{c_b}{r_b}$,
                and $(u,\rho_u),(v,\rho_v) \in Z^L_{\ell}$,
                then $\rho_u \neq \rho_v$;
            \item for all $1 \leq a \neq b \leq i$,
                $\ball{c_a}{r_a} \cap \ball{c_b}{r_b} = \emptyset$;
              \item if $B_i = \ball{c_1}{r_1} \cup \dots \cup \ball{c_i}{r_i}$,
                then for all $u \in B_i \cap Z_{\ell}$ and
                $v \in Z_{\ell} \setminus B_i$, such that
                $(u,\rho_u),(v,\rho_v) \in Z^L_{\ell}$,
                it holds that $\rho_u \neq \rho_v$.
            \item for all $1 \leq a \leq i$, if
                $u \in \ball{c_a}{r_a}$, and
                $(u,\rho_u) \in Z^L_{\ell}$, then
                $\tfrac{\sqrt{\ell}\sigma_{\rho_u}}{2} \leq
                r_a \leq 48c\sqrt{3\ell}\sigma_{\rho_u}$.
        \end{enumerate}
        If
        $$n \geq O\left(\frac{d^{\frac{3}{2}}k^3\sqrt{\ln(1/\delta)}}{\eps}
            + \frac{k^2}{\alpha^2}\ln(k/\beta)
            + \frac{\ell^{\frac{5}{9}}k^{\frac{5}{3}}}
                {\eps^{\frac{10}{9}}}
                \cdot \polylog\left(\ell,k,
                \frac{1}{\eps},\frac{1}{\delta},
                \frac{1}{\beta}\right)\right),$$
        then
        $$\pr{}{E_i} \geq 1 - \frac{i\beta}{k}$$
    \end{claim}
    \begin{proof}
        We prove this by induction on $i$. Suppose
        the claim holds for $i-1$. Then it is
        sufficient to show the following.
        $$\pr{}{E_i|E_{i-1}} \geq 1 - \frac{\beta}{k}$$
        Note that the points are now from
        $\ell$ dimensional Gaussians owing to
        being projected upon an $\ell$ dimensional
        subspace. Conditioning on $E_{i-1}$
        entails that at the beginning of
        iteration $i$, there is no label for
        points in $S_{\ell}$ that occurs for points
        in $B_{i-1} \cap Z_{\ell}$. Therefore,
        the number of points having the remaining
        labels is the same as in the beginning.

        For any two labels $\rho_u,\rho_v$,
        suppose $\mu''_{\rho_u}, \mu''_{\rho_v}$
        are the respective means (in the space
        after projection and rotation),
        $\sigma^2_{\rho_u},\sigma^2_{\rho_v}$
        are the respective variances,
        $\mu_{\rho_u},\mu_{\rho_v}$ are the original
        means, and $\mu'_{\rho_u},\mu'_{\rho_v}$
        are the projected, unrotated means.
        Then by conditioning on $E_{Y,Z}$ and
        $E_{Z_{\ell}}$, and using Lemma
        \ref{lem:easy-pca}, we have
        \begin{align*}
            \llnorm{\mu_{\rho_u} - \mu_{\rho_v}} &\geq
                (400c+16\kappa)\sqrt{\ell}\max\{\sigma_{\rho_u},\sigma_{\rho_v}\}\\
            \implies \llnorm{\mu'_{\rho_u} - \mu'_{\rho_v}} &\geq
                400c\sqrt{\ell}\max\{\sigma_{\rho_u},\sigma_{\rho_v}\}\\
            \implies \llnorm{\mu''_{\rho_u} - \mu''_{\rho_v}} &\geq
                400c\sqrt{\ell}\max\{\sigma_{\rho_u},\sigma_{\rho_v}\},
        \end{align*}
        where the final inequality holds because the 
        $\ell_2$ norm is rotationally invariant.
        Therefore by conditioning on event $E_{Z_{\ell}}$
        again, for any two points with labels $\rho_1$ and $\rho_2$,
        the distance between the two is strictly
        greater than
        $$200c\sqrt{\ell}\max\{\sigma_{\rho_1},\sigma_{\rho_2}\}.$$

        Now, conditioning on $E_{Z_{\ell}}$,
        we know that the radii of individual clusters
        are bounded. Because the components in
        this subspace are well-separated, the
        smallest ball containing at least $\tfrac{n}{2k}$
        points cannot have points from two different
        components, that is, the radius of the smallest
        ball containing at least $\tfrac{n}{2k}$
        points has to be the radius of the smallest
        ball that contains at least $\tfrac{n}{2k}$
        points from a single component.
        Let that component have label $\rho_1$
        (without loss of generality), and let its
        radius be $r_{opt}$. By Theorem
        \ref{thm:pgloc} (by setting $S=S_{\ell}$),
        with probability at least
        $1-\tfrac{\beta}{k}$, we obtain a ball of
        radius at most
        $$c\left(r_{opt} + \frac{1}{4}\sqrt{\ell}\minsigma\right)
            \leq 2cr_{opt},$$
        that contains at least $\tfrac{n}{4k}$
        points from the component. Since $Z_{\ell}$
        satisifes Condition~\ref{cond:easy-intra-gaussian},
        multiplying this radius by $4\sqrt{3}$ (to get
        $r_i$) is enough to cover all points of that
        component, hence, we have that
        \begin{align}
            \frac{\sqrt{\ell}\sigma_{\rho_1}}{2} \leq r_{opt} \leq
                r_i \leq 8\sqrt{3}cr_{opt} \leq 24c\sqrt{\ell}\sigma_{\rho_1}.
                \label{eq:clustering-radius}
        \end{align}

        We want to show that $\ball{c_i}{r_i}$
        contains points from exactly one component
        among all points in $Z_{\ell}$. There are two
        cases to consider. First, where the ball
        contains points that have label $\rho_1$,
        that is, the ball returned contains points
        only from the cluster with the smallest radius.
        Second, where the ball returned contains
        points from some other component that has a
        different label $\rho_2$ (without loss of generality),
        that is, the ball returned does not contain
        any points from the smallest cluster, but has
        points from a different cluster. As we will
        show later, this can only happen if the radius
        of this other cluster is ``similar'' to that
        of the smallest cluster.

        In the first case, $\ball{c_i}{r_i}$ completely
        contains all points with label $\rho_1$. But
        because this radius is less than its distance
        from every other point in $Z_{\ell}$ with a
        different label, it does not contain any points
        from any other labels in $Z_{\ell}$. This proves
        (1), (2), (3), (5), and (6) for this case.
        Now, consider any index $a \leq i-1$. Then
        $\ball{c_a}{r_a}$ contains exactly one component,
        which has label (without loss of generality)
        $\rho_3$. Let $u \in \ball{c_a}{r_a}$ and
        $v \in \ball{c_i}{r_i}$. Then we have the
        following.
        \begin{align*}
            \llnorm{c_i - c_a} &= \llnorm{(u-v) - (v-c_i) - (c_a-u)}\\
                &\geq \llnorm{u-v} - \llnorm{v-c_i} - \llnorm{c_a-u}\\
                &> 200c\sqrt{\ell}\max\{\sigma_{\rho_i},\sigma_{\rho_a}\} -
                    48c\sqrt{3\ell}\sigma_{\rho_a} - 24c\sqrt{\ell}\sigma_{\rho_i}\\
                &> r_i + r_a
        \end{align*}
        This shows that $\ball{c_a}{r_a}$ and
        $\ball{c_i}{r_i}$ do not intersect. This
        proves (4) for this case.

        In the second case, let $r_{\rho_2}$ be the
        radius of the smallest ball containing points
        only from the component with label $\rho_2$. Since
        $r_{opt}$ is the radius of the component with label
        $\rho_1$, and $Z_{\ell}$ satisfies
        Condition~\ref{cond:easy-intra-gaussian},
        it must be the case that
        $$\frac{\sqrt{\ell}\sigma_{\rho_1}}{2} \leq r_{opt}
            \leq r_{\rho_2} \leq \sqrt{3\ell}\sigma_{\rho_2}
            \implies \sqrt{3\ell}\sigma_{\rho_2} \geq
            \frac{\sqrt{\ell}\sigma_{\rho_1}}{2}
            \implies 2\sqrt{3}\sigma_{\rho_2} \geq \sigma_{\rho_1},$$
        otherwise the smallest component would
        have been the one with label $\rho_2$.
        Combined with Inequality~\ref{eq:clustering-radius},
        this implies that $r_i \leq 48c\sqrt{3\ell}\sigma_{\rho_2}$,
        which proves (6) for this case. The arguments
        for other parts for this case are identical
        to those for the first case.

        Since the only randomness comes
        from running $\PGLoc$, using Theorem
        \ref{thm:pgloc}, we get,
        $$\pr{}{E_i|E_{i-1}} \geq 1-\frac{\beta}{k}.$$
        Therefore, by the inductive hypothesis,
        $$\pr{}{E_i} \geq 1-\frac{i\beta}{k}.$$
        The argument for the base case is the
        same, so we omit that for brevity.
    \end{proof}
    We complete the proof by using the above
    claim by setting $i=k$.
\end{proof}

The next lemma states that given that the
algorithm has isolated individual components
correctly in the lower dimensional subspace,
it can correctly classify the corresponding
points in the original space. To elaborate,
this means that for each component, the algorithm
finds a small ball that contains the component,
and is able to correctly label all points in
the dataset.

\begin{lemma}\label{lem:easy-hd-clustering}
    Let $Z$, $Z_{\ell}$, and $\wh{S}_i,\dots,\wh{S}_k$
    be datasets as defined in the the algorithm,
    and let $Z^L$ be the corresponding labelled
    dataset of $Z$. If
    $$n \geq O\left(\frac{d^{\frac{3}{2}}k^3\sqrt{\ln(1/\delta)}}{\eps}
        + \frac{k^2}{\alpha^2}\ln(k/\beta)
        + \frac{\ell^{\frac{5}{9}}k^{\frac{5}{3}}}
            {\eps^{\frac{10}{9}}}
            \cdot \polylog\left(\ell,k,
            \frac{1}{\eps},\frac{1}{\delta},
            \frac{1}{\beta}\right)\right),$$
    and events $E_{Y,Z}$, $E_Z$, and $E_{Z_{\ell}}$
    happen then with probability at least $1-11\beta$,
    \begin{enumerate}
        \item for all $1 \leq i \leq k$, if
            $u,v \in \wh{S}_i$, and
            $(u,\rho_u),(v,\rho_v) \in Z^L$,
            then $\rho_u = \rho_v$;
        \item for all $1 \leq i \neq j \leq k$, if
            $u \in \wh{S}_i$, $v \in \wh{S}_j$,
            and $(u,\rho_u),(v,\rho_v) \in Z^L$,
            then $\rho_u \neq \rho_v$;
        \item $\wh{S}_1 \cup \dots \cup \wh{S}_k = Z$;
        \item for all $1 \leq i \leq k$, if
            $u \in \wh{S}_i$, and
            $(u,\rho_u) \in Z^L$, then
            $r_i \in \Theta(\sqrt{k}\sigma_{\rho_u})$.
    \end{enumerate}
\end{lemma}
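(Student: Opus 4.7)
The plan is to lift Lemma \ref{lem:easy-ld-clustering}---which already establishes correctness of the clustering in the low-dimensional space $Z_\ell$---to the ambient space $Z$. Items (1), (2), and (4) fall out almost immediately from the defining condition of $\wh{S}_i$: because membership in $\wh{S}_i$ forces the (rotated) projection of $u$ to lie in $\ball{c_i}{r_i} \cap Z_\ell$, any two $u,v \in \wh{S}_i$ have projections in the same low-dim ball and hence share a label; conversely, points in distinct $\wh{S}_i$ and $\wh{S}_j$ have projections in disjoint low-dim balls that contain different labels; and (4) simply inherits the radius bound $r_i \in \Theta(\sqrt{k}\sigma_{\rho_u})$ from the previous lemma.

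The substantive work is (3), that $\wh{S}_1 \cup \dots \cup \wh{S}_k = Z$. Fix $u \in Z$ with label $\rho_u$; Lemma \ref{lem:easy-ld-clustering} places its projection in the unique $\ball{c_i}{r_i} \cap Z_\ell$ corresponding to $\rho_u$, so it suffices to show $u \in \ball{\wh{c}_i}{\wh{r}_i}$. I would decompose $u = u_{\parallel} + u_{\perp}$ with respect to $\wh{V}_\ell$; since $\wh{c}_i \in \wh{V}_\ell$, orthogonality gives
\[
    \llnorm{u - \wh{c}_i}^2 = \llnorm{u_\parallel - \wh{c}_i}^2 + \llnorm{u_\perp}^2.
\]
The first term is at most $r_i^2$ by the defining property of $\wh{S}_i$. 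For the second, note that $u \sim \cN(\mu_{\rho_u}, \sigma_{\rho_u}^2 \id)$, so $u_\perp$ is Gaussian with mean equal to the perpendicular component of $\mu_{\rho_u}$, i.e.\ $\mu_{\rho_u} - \mu'_{\rho_u}$, whose norm is bounded by $8\sqrt{\ell}\maxsigma$ via Lemma \ref{lem:easy-pca}, plus a Gaussian fluctuation in the $(d-\ell)$-dimensional orthogonal complement of magnitude $O(\sqrt{d}\sigma_{\rho_u})$ via a standard tail bound.

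Finally, plugging in $\wh{r}_i = r_i + 10\sqrt{\ell\kappa}\minsigma + 2r_i\sqrt{3d/\ell}$ and using the identity $\sqrt{\kappa}\minsigma = \maxsigma$, the $10\sqrt{\ell}\maxsigma$ summand absorbs the PCA projection shift, while the third summand, which is $\Theta(\sqrt{d}\sigma_{\rho_u})$ since $r_i \in \Theta(\sqrt{\ell}\sigma_{\rho_u})$, absorbs the perpendicular Gaussian spread, even in the worst case $\sigma_{\rho_u} = \maxsigma$. The main technical hurdle is carefully choreographing the triangle inequality so that the slack in $\wh{r}_i$ dominates every error source simultaneously; none of the individual bounds is new, but assembling them requires tracking the PCA projection error (Lemma \ref{lem:easy-pca}), the finite-sample empirical-mean deviation, and the intrinsic $\sqrt{d}$-scale concentration of the Gaussian in the ambient space (Condition \ref{cond:easy-intra-gaussian} via event $E_Z$). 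The failure probability accounting combines the $4\beta$ from Lemma \ref{lem:easy-pca}, the $5\beta$ from Lemma \ref{lem:easy-ld-clustering}, and an additional $2\beta$ for a union bound of the perpendicular tail bounds over all $n$ points in $Z$, yielding the claimed $1-11\beta$.
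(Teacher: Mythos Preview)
Your argument is correct, but the geometry you use for item~(3) differs from the paper's. The paper never invokes the Pythagorean split $\|u-\wh{c}_i\|^2=\|u_\parallel-\wh{c}_i\|^2+\|u_\perp\|^2$; instead it chains triangle inequalities through a sequence of intermediate points, namely
\[
\wh{c}_i \;\to\; \dot{\mu}\;\to\;\mu'\;\to\;\mu\;\to\;\wt{\mu}\;\to\;u,
\]
where $\dot{\mu}$ is the empirical mean of the component's points in $Z_\ell$ (lying in $\ball{c_i}{r_i}$), $\mu'$ is the true projected mean, $\mu$ is the true ambient mean, and $\wt{\mu}$ is the empirical mean in $Z$. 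Two fresh applications of Lemma~\ref{lem:spectralnormofgaussians} (each costing $\beta/k$ per component) give $\|\dot\mu-\mu'\|,\|\mu-\wt\mu\|\leq\sqrt{\ell}\sigma$; Lemma~\ref{lem:easy-pca} handles $\|\mu'-\mu\|$; and the last hop uses $E_Z$ directly ($\|u-\wt\mu\|\leq\sqrt{3d}\sigma$). Summing yields exactly the three summands of $\wh{r}_i$.

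Your route is a bit cleaner geometrically: you avoid both empirical means entirely, exploit that $\wh{V}_\ell$ is computed from $Y$ and hence independent of $Z$ so that $u_\perp$ is genuinely Gaussian, and pay for the $O(\sqrt{d}\sigma)$ fluctuation with a single chi-squared tail bound unioned over all of $Z$. The paper's route has the minor advantage that the $\sqrt{3d}\sigma$ bound is already packaged inside the conditioned event $E_Z$, so nothing ``new'' is asserted about the ambient points; the two remaining fresh probabilistic steps (the empirical-mean concentrations) then account for the $2\beta$ in the induction, matching your $2\beta$ for the perpendicular union bound. Either bookkeeping gives $4\beta+5\beta+2\beta=11\beta$. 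One small remark: your closing sentence mentions both ``Condition~\ref{cond:easy-intra-gaussian} via event $E_Z$'' and a fresh $2\beta$ union bound for the perpendicular tails; in your decomposition you really do need the fresh bound (since $E_Z$ controls pairwise distances, not $\|u_\perp\|$ directly), so the reference to $E_Z$ is superfluous there.
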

\begin{proof}
    Again for brevity, we will omit conditioning
    on events $E_{Y,Z}$, $E_Z$, and $E_{Z_{\ell}}$.
    We will prove the following
    claim that says that by the end of iteration $i$ of
    the second loop, each set formed so far completely contains
    exactly one component, and the ball that contains
    it is small. But before that, note that: (1) from
    Lemma~\ref{lem:easy-pca}, with
    probability at least $1-4\beta$, for all components,
    the distance between the original mean and its
    respective projection onto $\wh{V}_{\ell}$
    is small; and (2) from Lemma~\ref{lem:easy-ld-clustering},
    with probability at least
    $1 - 5\beta$, the balls found in the lower dimensional
    subspace in the previous step isolate individual
    components in balls, whose radii are small. We
    implicitly condition the following claim on them.
    \begin{claim}
        Let $E_i$ be the event that at the end of
        iteration $i$,
        \begin{enumerate}
            \item for all $1 \leq a \leq i$, if
                $u,v \in \wh{S}_i$, and
                $(u,\rho_u),(v,\rho_v) \in Z^L$,
                then $\rho_u = \rho_v$;
            \item for all $1 \leq a \neq b \leq i$, if
                $u \in \wh{S}_i$, $v \in \wh{S}_j$,
                and $(u,\rho_u),(v,\rho_v) \in Z^L$,
                then $\rho_u \neq \rho_v$;
            \item if $T_i = \wh{S}_1 \cup \dots \cup \wh{S}_i$,
                then for all $u \in T_i \cap Z$ and
                $v \in Z \setminus T_i$, such that
                $(u,\rho_u),(v,\rho_v) \in Z^L$,
                it holds that $\rho_u \neq \rho_v$.
            \item for all $1 \leq a \leq i$, if
                $u \in \wh{S}_a$, and
                $(u,\rho_u) \in Z^L_{\ell}$, then
                $r_a \in \Theta(\sqrt{d}\sigma_{\rho_u})$.
%                $\tfrac{\sqrt{\ell}\sigma_{\rho_u}}{2} \leq
%                r_a \leq 48c\sqrt{3\ell}\sigma_{\rho_u}$.
        \end{enumerate}
        If
        $$n \geq O\left(\frac{d^{\frac{3}{2}}k^3\sqrt{\ln(1/\delta)}}{\eps}
            + \frac{k^2}{\alpha^2}\ln(k/\beta)
            + \frac{\ell^{\frac{5}{9}}k^{\frac{5}{3}}}
                {\eps^{\frac{10}{9}}}
                \cdot \polylog\left(\ell,k,
                \frac{1}{\eps},\frac{1}{\delta},
                \frac{1}{\beta}\right)\right),$$
        then
        $$\pr{}{E_i} \geq 1-\frac{2i\beta}{k}.$$
    \end{claim}
    \begin{proof}
        We prove this by induction on $i$. Suppose
        the claim holds for $i-1$. Then it is
        sufficient to show the following.
        $$\pr{}{E_i|E_{i-1}} \geq 1-\frac{2\beta}{k}$$
        From Lemma \ref{lem:easy-ld-clustering}, we
        know that $\ball{c_i}{r_i}$ completely
        contains exactly one component from $Z_{\ell}$.
        This implies that the empirical mean of
        those points ($\dot{\mu}$) also lies within
        $Z_{\ell}$. Suppose the mean of their
        distribution in $\wh{V}_{\ell}$ is $\mu'$
        and its variance is $\sigma^2$. We know from
        Lemma \ref{lem:spectralnormofgaussians}, and
        because
        $n \geq O\left(\tfrac{\ln(k/\beta)}{\ell}\right)$,
        that with probability at least $1-\beta/k$,
        $$\llnorm{\dot{\mu}-\mu'} \leq \sqrt{\ell}\sigma.$$
        Let $\mu$ be the mean of the Gaussian in
        the original subspace. We know from Lemma~\ref{lem:easy-pca}
        that,
        $$\llnorm{\mu-\mu'} \leq 8\sqrt{\ell}\maxsigma.$$
        Let the empirical mean of the corresponding
        points in $Z$ be $\wt{\mu}$. Again, from
        Lemma \ref{lem:spectralnormofgaussians}, and
        because
        $n \geq O\left(\tfrac{\ln(k/\beta)}{\ell} + \tfrac{d}{\ell}\right)$,
        we know that
        with probability at least $1-\beta/k$,
        $$\llnorm{\mu-\wt{\mu}} \leq \sqrt{\ell}\sigma.$$
        Therefore, by triangle inequality,
        $\ball{\wh{c}_i}{r_i + 10\sqrt{\ell\kappa}\minsigma}$
        contains $\wt{\mu}$. 
        Now, from the proof of
        Lemma \ref{lem:easy-intra-gaussian}, we know
        that all points of the Gaussian in $Z$ will be
        at most $\sqrt{3d}\sigma$ away from $\wt{\mu}$.
        But since $\ball{c_i}{r_i}$ contains all
        points from the Gaussian in $Z_{\ell}$,
        we know from conditioning on $E_{Z_{\ell}}$
        that
        $r_i \geq \tfrac{\sqrt{\ell}\sigma}{2}$,
        which means that the each of the corresponding
        points in $Z$ is at most $2r_i\sqrt{\tfrac{3d}{\ell}}$
        away from $\wt{\mu}$.
        Therefore, by triangle inequality, the distance
        between $\wh{c}_i$ and any of these
        points in $Z$ can be at most
        $$r_i + 10\sqrt{\ell\kappa}\minsigma + 2r_i\sqrt{\tfrac{3d}{\ell}}.$$
        Because this is exactly how $\wh{r}_i$ is
        defined, $\ball{\wh{c}_i}{\wh{r}_i}$
        completely contains all points from the
        component in $Z$. Since $S_i$ contains all
        the corresponding points from $Z_{\ell}$,
        it must be the case that $\wh{S}_i$ contains
        all the points from the component.
        
        Finally, we prove that the radius of the ball
        enclosing the component in $Z$ is small enough.
        \begin{align*}
            &\wh{r}_i = r_i + 10\sqrt{\ell\kappa}\minsigma
                    + 2r_i\sqrt{\frac{3d}{\ell}}\\
            \implies r_i + 10\sqrt{\ell\kappa}\minsigma
                    + \Omega(\sqrt{d}\sigma) \leq
                    &\wh{r}_i
                    \leq r_i + 10\sqrt{\ell\kappa}\minsigma
                    + O(\sqrt{d}\sigma)
                    \tag{$r_i \in \Theta(\sqrt{\ell}\sigma)$}\\
            \implies &\wh{r}_i \in \Theta(\sqrt{d}\sigma)
                \tag{$\kappa \in \Theta(1)$}.
        \end{align*}

        Therefore,
        $$\pr{}{E_i|E_{i-1}} \geq 1-\frac{2\beta}{k}.$$
        Hence, by the inductive hypothesis,
        $$\pr{}{E_i} \geq 1-\frac{2i\beta}{k}.$$
        The argument for the base case is identical,
        so we skip it for brevity.
    \end{proof}
    We complete the proof by setting $i=k$
    in the above claim, and using the union
    bound.
\end{proof}

\subsubsection{Estimation}

Once we have identified individual components
in the dataset, we can invoke a differentially
private learning algorithm on each one of them
separately. The next theorem establishes the
existence of one such private learner that is
tailored specifically for spherical Gaussians,
and is accurate even when the number of points
in the dataset constitutes sensitive information.
We provide its proof, and state such an algorithm
in Section \ref{sec:missingproofs-psge}.

\begin{theorem}\label{thm:easy-learn-gaussian}
    There exists an $(\eps,\delta)$-differentially
    private algorithm $\PSGE(X; \vec{c},r,\alpha_{\mu},\alpha_{\sigma},
    \beta,\eps,\delta)$ with the following guarantee.
    If $B_{r}(\vec{c}) \subseteq \R^{\ell}$ is a ball,
    $X_1,\dots,X_n \sim \cN(\mu, \sigma^2 \id_{\ell \times \ell})$,
    and
    $n \geq \frac{6\ln(5/\beta)}{\eps} +
        n_{\mu} + n_{\sigma},$
    where
    \begin{align*}
        n_{\mu} &= O\left(
            \frac{\ell}{\alpha^2_{\mu}} +
            \frac{\ln(1/\beta)}{\alpha^2_{\mu}} +
            \frac{r\ln(1/\beta)}
            {\alpha_{\mu}\eps\sigma} +
            \frac{r\sqrt{\ell
            \ln(1/\delta)}}{\alpha_{\mu}\eps\sigma}
            \right),\\
        n_{\sigma} &= O\left(
            \frac{\ln(1/\beta)}{\alpha^2_{\sigma}\ell} +
            \frac{\ln(1/\beta)}
            {\alpha_{\sigma}\eps} +
            \frac{r^2\ln(1/\beta)}
            {\alpha_{\sigma}\eps\sigma^2 \ell}\right),
    \end{align*}
    then with probability at least $1-\beta$, the
    algorithm returns $\wh{\mu},\wh{\sigma}$
    such that if $X$ is contained in $B_{r}(\vec{c})$
    (that is, $X_i \in B_{r}(\vec{c})$ for every $i$)
    and $\ell \geq 8\ln(10/\beta)$,
    then
    $$\| \mu - \wh{\mu} \|_2 \leq \alpha_{\mu} \sigma \qquad \textrm{and} \qquad
        (1-\alpha_\sigma) \leq \frac{\wh\sigma^2}{\sigma^2}
        \leq(1+\alpha_\sigma).$$
\end{theorem}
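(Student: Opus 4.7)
The plan is to construct $\PSGE$ as the sequential composition of three private primitives, each consuming privacy budget $\eps/3$ (and, for the Gaussian step, $\delta/2$): a private count, a Gaussian-mechanism mean estimator, and a Laplace-mechanism variance estimator. First I would recenter by setting $Y_i = X_i - \vec c$, so that the input assumption $X_i \in B_r(\vec c)$ becomes the norm bound $\llnorm{Y_i} \leq r$ needed to apply standard noise-addition mechanisms. Then I apply $\PCount$ with budget $\eps/3$ to obtain an estimate $\tilde n$ of $|X|$ accurate up to $O(\log(1/\beta)/\eps)$ with probability at least $1-\beta/5$; this count will be used for normalization in both subsequent steps and accounts for the baseline $6\ln(5/\beta)/\eps$ term in the sample complexity.

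For the mean, I release $\hat\mu = \vec c + (S+Z)/\tilde n$ where $S = \sum_i Y_i$ and $Z \sim \cN(0, (2r\sqrt{2\ln(4/\delta)}/\eps)^2 \id_{\ell\times\ell})$. Since the $\ell_2$-sensitivity of $S$ is $2r$, this release is $(\eps/3,\delta/2)$-DP by Lemma~\ref{lem:gaussiandp}. The total error $\llnorm{\hat\mu - \mu}$ decomposes into three pieces: sampling error $\llnorm{\bar Y - (\mu-\vec c)} \leq O(\sigma\sqrt{(\ell+\log(1/\beta))/n})$ via Lemma~\ref{lem:spectralnormofgaussians}; Gaussian-noise error $\llnorm{Z}/\tilde n \leq O(r\sqrt{\ell\log(1/\delta)}/(n\eps))$ via standard $\chi^2$ tails on $\llnorm{Z}$; and normalization error $\llnorm{\bar Y}\cdot|n/\tilde n - 1| \leq O(r\log(1/\beta)/(n\eps))$ since $\llnorm{\bar Y}\leq r$. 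Bounding each by $\alpha_\mu\sigma/3$ recovers exactly the four terms of $n_\mu$.

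For the variance, to sidestep the data-dependence of $\hat\mu$ inside squared deviations I use $\vec c$ as the pivot: compute $T = \sum_i \llnorm{X_i - \vec c}^2$, a sum of terms each in $[0, r^2]$, so its $\ell_1$-sensitivity is $r^2$. Release $\hat T = T + W$ with $W \sim \Lap(3r^2/\eps)$, which is $(\eps/3,0)$-DP by Lemma~\ref{lem:laplacedp}, and by post-processing use the identity $\tfrac 1 n \sum_i \llnorm{X_i - \vec c}^2 = \tfrac 1 n \sum_i \llnorm{X_i - \bar\mu}^2 + \llnorm{\bar\mu - \vec c}^2$ to set $\hat\sigma^2 = \hat T/(\tilde n\ell) - \llnorm{\hat\mu - \vec c}^2/\ell$. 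Hanson-Wright (Lemma~\ref{lem:HW}) applied to $\sum_i \llnorm{X_i - \mu}^2/\sigma^2 \sim \chi^2_{n\ell}$ controls the sampling relative error at $O(\sqrt{\log(1/\beta)/(n\ell)})$, where the hypothesis $\ell \geq 8\ln(10/\beta)$ ensures the $\sqrt{n\ell\,t}$ term dominates the linear $t$ term of Hanson-Wright; the Laplace noise contributes relative error $O(r^2\log(1/\beta)/(n\ell\eps\sigma^2))$; and the combined error from substituting $\tilde n$ for $n$ and $\hat\mu$ for $\bar\mu$ contributes $O(\log(1/\beta)/(n\eps))$ relative error. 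Setting each below $\alpha_\sigma$ yields the three terms of $n_\sigma$.

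Privacy then follows as $(\eps,\delta)$-DP by basic composition (Lemma~\ref{lem:composition}), and the overall accuracy guarantee follows by a union bound over five failure events of probability at most $\beta/5$ each. The main subtlety will be the variance step: using $\hat\mu$ directly as the pivot inside the noisy sum would give unbounded sensitivity because $\hat\mu$ varies across neighboring datasets, so the pivot-shift to $\vec c$ (together with the post-processing correction $\llnorm{\hat\mu - \vec c}^2/\ell$) is essential, and one must verify that this correction does not itself dominate the $\alpha_\sigma$-relative-error budget — which it does not, provided $\alpha_\mu$ is chosen appropriately so that $\llnorm{\hat\mu - \vec c}^2/\ell$ is comparable to $\sigma^2$ up to the desired accuracy.
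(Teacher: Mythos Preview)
Your mean-estimation step matches the paper's (and your explicit recentering to $\vec c$ is in fact what is needed for the $2r$ sensitivity bound on the running sum to hold). The gap is in the variance step. You assert that substituting $\hat\mu$ for $\bar\mu$ in the correction $\llnorm{\cdot-\vec c}^2/\ell$ contributes only $O(\log(1/\beta)/(n\eps))$ relative error, but this is not right: $\hat\mu-\bar\mu$ contains the Gaussian-mechanism noise $Z/\tilde n$, and the cross term
\[
\tfrac{2}{\ell}\bigl\langle \bar\mu-\vec c,\;\hat\mu-\bar\mu\bigr\rangle
\]
therefore carries relative error of order $r^2\sqrt{\ln(1/\delta)\ln(1/\beta)}/(n\eps\ell\sigma^2)$ (using a one-dimensional Gaussian tail on the inner product with $Z$). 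This forces a $\delta$-dependent contribution into $n_\sigma$ that the theorem statement does not have --- the stated $n_\sigma$ is entirely $\delta$-free. Your closing caveat about ``choosing $\alpha_\mu$ appropriately'' cannot repair this, since $\alpha_\mu$ is a free input to the theorem and tightening it only shifts the extra cost into $n_\mu$.

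The paper eliminates $\mu$ from the variance computation by a different device that decouples the two estimators completely: pair consecutive samples to form $Y_i = (X_{2i}-X_{2i-1})/\sqrt 2$, which are i.i.d.\ $\cN(0,\sigma^2\id_{\ell\times\ell})$ regardless of $\mu$, and release $\hat\sigma^2 = \tfrac{1}{m'_Y\ell}\bigl(\sum_i\llnorm{Y_i}^2 + z_\sigma\bigr)$ with $z_\sigma\sim\Lap(6r^2/\eps)$ (the sensitivity of $\sum_i\llnorm{Y_i}^2$ is $2r^2$ since $\llnorm{Y_i}^2\le 2r^2$). The variance step is then purely $(\eps/3,0)$-DP with no reference to $\hat\mu$ or to $\delta$, and the three terms of $n_\sigma$ arise directly from $\chi^2$ concentration of $\sum_i\llnorm{Y_i}^2$, the private-count error on $m'_Y$, and the Laplace tail on $z_\sigma$, respectively.
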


With the above theorem in our tookit, we can
finally show that the each estimated Gaussian
is close to its respective actual Gaussian to
within $\alpha$ in TV-distance.

\begin{lemma}\label{lem:easy-individual-gaussian}
    Suppose $\mu_1,\dots,\mu_k$ are the means,
    and $\sigma^2_1,\dots,\sigma^2_k$ are the
    variances of the Gaussians of the target
    distribution $\cD \in \cS(d,k,\kappa,(\xi + 16\sqrt{\kappa})\sqrt{\ell})$,
    where $\xi,\kappa \in \Theta(1)$ and
    $\xi$ is a universal constant, and
    $\ell = \max\{512\ln(nk/\beta),k\}$.
    Let $\wh{\mu}_1,\dots,\wh{\mu}_k$, and
    $\wh{\sigma}^2_1,\dots,\wh{\sigma}^2_k$ be
    their respective estimations produced by
    the algorithm. If
    $$n \geq n_{\textsc{CLUSTERING}} + O\left(\frac{dk}{\alpha^2}
        + \frac{dk\sqrt{\ln(1/\delta)}}{\alpha\eps}
        + \frac{\sqrt{d}k\ln(k/\beta)}{\alpha\eps}
        + \frac{k\ln(k/\beta)}{\alpha^2}\right),$$
    where
    $$n_{\textsc{CLUSTERING}}
        \geq O\left(\frac{d^{\frac{3}{2}}k^3\sqrt{\ln(1/\delta)}}{\eps}
        + \frac{k^2}{\alpha^2}\ln(k/\beta)
        + \frac{\ell^{\frac{5}{9}}k^{\frac{5}{3}}}
            {\eps^{\frac{10}{9}}}
            \cdot \polylog\left(\ell,k,
            \frac{1}{\eps},\frac{1}{\delta},
            \frac{1}{\beta}\right)\right),$$
    and events $E_{Y,Z}$ and $E_{Z}$ happen,
    then with probability at least $1-12\beta$,
    $$\forall~ 1 \leq i \leq k,~~~
        \SD\left(\cN(\mu_i,\sigma^2_i\id_{d \times d}),
        \cN(\wh{\mu}_i,\wh{\sigma}^2_i\id_{d \times d})\right)
        \leq \alpha.$$
\end{lemma}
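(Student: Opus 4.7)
The plan is to condition on the successful clustering guaranteed by Lemma~\ref{lem:easy-hd-clustering}, apply the private spherical Gaussian estimator $\PSGE$ (Theorem~\ref{thm:easy-learn-gaussian}) to each cluster $\wh{S}_i$ independently, and then use Lemma~\ref{lem:gaussian-norm-translation} to convert the per-parameter accuracy guarantees into a total variation bound.

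First, under $E_{Y,Z}$ and $E_Z$, the event $E_{Z_\ell}$ holds with probability at least $1-7\beta$ by Lemma~\ref{lem:all-dr-conditions}, and conditional on these, Lemma~\ref{lem:easy-hd-clustering} yields with probability $1-11\beta$ (cumulatively) that each $\wh{S}_i$ consists of exactly the points of $Z$ drawn from one component $\rho_i$, that $\wh{S}_i \subseteq \ball{\wh{c}_i}{\wh{r}_i}$ with $\wh{r}_i \in \Theta(\sqrt{d}\sigma_{\rho_i})$, and (by Condition~\ref{cond:num-points}) that $|\wh{S}_i| \geq n/(2k)$. I then invoke $\PSGE$ on each $\wh{S}_i$ with accuracy parameters $\alpha_\mu = \Theta(\alpha)$ and $\alpha_\sigma = \Theta(\alpha/\sqrt{d})$, failure probability $\beta/k$, and privacy parameters $(\eps,\delta)$, using $\vec{c} = \wh{c}_i$ and $r = \wh{r}_i$. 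Since $r/\sigma_{\rho_i} = \Theta(\sqrt{d})$, substituting into the sample complexity of Theorem~\ref{thm:easy-learn-gaussian} with $\ell = d$ yields a per-cluster requirement of
\[
\Omega\!\left(\frac{d}{\alpha^2} + \frac{d\sqrt{\ln(1/\delta)}}{\alpha \eps} + \frac{\sqrt{d}\ln(k/\beta)}{\alpha \eps} + \frac{\ln(k/\beta)}{\alpha^2}\right),
\]
which is met because $n$ is taken to be $k$ times this expression plus the clustering overhead $n_{\textsc{CLUSTERING}}$, and each cluster has $\geq n/(2k)$ samples.

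Second, I translate the guarantees of $\PSGE$ into total variation closeness via Lemma~\ref{lem:gaussian-norm-translation}. For spherical $\Sigma_i = \sigma_i^2 \id$, the first hypothesis becomes $\|\mu_i-\wh{\mu}_i\|_2 / \sigma_i \leq O(\alpha)$, which follows immediately from $\alpha_\mu = \Theta(\alpha)$. The second hypothesis becomes
\[
\fnorm{\id - (\wh{\sigma}_i^2/\sigma_i^2)\,\id} = \sqrt{d}\,\bigl|1 - \wh{\sigma}_i^2/\sigma_i^2\bigr| \leq O(\alpha),
\]
which is exactly why the relative variance accuracy must be chosen as $\alpha_\sigma = \Theta(\alpha/\sqrt{d})$; this is the step that forces the $d/\alpha^2$ term in the overall sample complexity and is the main technical choice to get right. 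Lemma~\ref{lem:gaussian-norm-translation} then yields $\SD(\cN(\mu_i,\sigma_i^2\id),\cN(\wh{\mu}_i,\wh{\sigma}_i^2\id)) \leq \alpha$ for each $i$.

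Finally, I take a union bound over the $k$ independent invocations of $\PSGE$, each failing with probability at most $\beta/k$, which contributes an additional $\beta$ to the failure budget. Combined with the $11\beta$ from conditioning on correct clustering, this gives the claimed $1-12\beta$ success probability. The main obstacle is not the proof structure itself but the quantitative bookkeeping: choosing $\alpha_\sigma = \Theta(\alpha/\sqrt{d})$ at exactly the right scale so that applying $\PSGE$ with $r = \Theta(\sqrt{d}\sigma)$ produces the tight $d/\alpha^2$ leading term while the additive privacy terms remain lower-order relative to the stated bound on $n$.
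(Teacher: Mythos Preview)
Your proposal is correct and follows essentially the same route as the paper: invoke Lemma~\ref{lem:easy-hd-clustering} to obtain the correct clustering with probability $1-11\beta$, apply Theorem~\ref{thm:easy-learn-gaussian} to each $\wh{S}_i$ with $\alpha_\mu=\Theta(\alpha)$ and $\alpha_\sigma=\Theta(\alpha/\sqrt{d})$ and failure probability $\beta/k$, convert via Lemma~\ref{lem:gaussian-norm-translation}, and union-bound. Your exposition is in fact more explicit than the paper's about why $\alpha_\sigma$ must scale as $\alpha/\sqrt{d}$ (the Frobenius-norm computation) and how this drives the $d/\alpha^2$ term in the per-cluster sample complexity.
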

\begin{proof}
    We know that from Lemma \ref{lem:easy-hd-clustering}
    that all points in $Z$ get correctly classified
    as per their respective labels by the algorithm
    with probability at least $1-11\beta$, and that
    we have centers and radii of private balls
    that completely contain one unique component
    each. In other words, for all $1 \leq i \leq k$,
    we have a set $\wh{S}_i$ that contains all points
    from component $i$, such that
    $\wh{S}_i \subset \ball{\wh{c}_i}{\wh{r}_i}$,
    where $\wh{r}_i \in \Theta(\sqrt{d}\sigma_i)$.

    Now, from Theorem \ref{thm:easy-learn-gaussian}
    and our bound on $n$, since $Z$ satisfies Condition
    \ref{cond:num-points}, we know that for each
    $1 \leq i \leq k$, with probability at least
    $1-\tfrac{\beta}{k}$, we output $\wh{\mu}_i$
    and $\sigma_i$, such that
    $$\llnorm{\mu_i-\wh{\mu}_i} \leq O(\alpha)~~~
        \text{and}~~~
        1-O\left(\frac{\alpha}{\sqrt{d}}\right) \leq
        \frac{\wh{\sigma}^2_i}{\sigma^2_i} \leq
        1+O\left(\frac{\alpha}{\sqrt{d}}\right).$$
    This implies from Lemma
    \ref{lem:gaussian-norm-translation} that
    $$\SD\left(\cN(\wh{\mu}_i,\wh{\sigma}^2_i \id_{d \times d}),
        \cN(\mu_i,\sigma^2\id_{d \times d})\right) \leq \alpha.$$
    By applying the union bound, finally, we
    get the required result.
\end{proof}

\subsubsection{Putting It All Together}

Given all the results above, we can finally
complete the proof of Theorem~\ref{thm:easycase}.

\begin{proof}[Proof of Theorem~\ref{thm:easycase}]
    Let $G_1,\dots,G_k$ be the actual Gaussians
    in $\cD$, and let $\wh{G}_1,\dots,\wh{G}_k$
    be their respective estimations. Now that
    the individual estimated Gaussians are close
    to within $\alpha$ in TV-distance to the
    actual respective Gaussians, we can say the
    following about closeness of the two mixtures.
    \begin{align*}
        \SD(\wh{G},\cD) &= \max\limits_{S \in \R^d}
                \abs{\wh{G}(S) - \cD(S)}\\
            &= \frac{1}{k}\max\limits_{S \in \R^d}
                \abs{\sum\limits_{i=1}^{k}{\wh{G}_i(S)-G_i(S)}}\\
    %        &\leq \frac{1}{k}\max\limits_{S \in \R^d}
    %            \left\{\sum\limits_{i=1}^{k}\abs{{\wh{G}_i(S)-G_i(S})}\right\}\\
            &\leq \frac{1}{k}\sum\limits_{i=1}^{k}{\max\limits_{S \in \R^d}
                \abs{{\wh{G}_i(S)-G_i(S})}}\\
            &= \frac{1}{k}\sum\limits_{i=1}^{k}{\SD(\wh{G}_i,G_i)}\\
            &\leq \alpha
    \end{align*}
    Here, the last inequality follows with
    probability at least $1-12\beta$ from
    Lemma \ref{lem:easy-individual-gaussian}.
    Note that our algorithms and theorem
    statements along the way required various
    regularity conditions. By
    Lemma~\ref{lem:easy-dr-conditions} and~\ref{lem:all-dr-conditions},
    these events all occur with probability at
    least $1-11\beta$, which gives us the success
    probability of at least $1-23\beta$. This
    completes our proof of Theorem~\ref{thm:easycase}.
\end{proof}

\section{An Algorithm for Privately Learning Mixtures of Gaussians}
\label{sec:hardcase}

In this section, we present our main algorithm for
privately learning mixtures of Gaussians and prove the
following accuracy result.
\begin{theorem}\label{thm:pgme-accurate}
    For all $\eps,\delta,\alpha,\beta > 0$,
    there exists an $(\eps,\delta)$-DP
    algorithm that takes $n$ samples from
    $\cD \in \cG(d,k,\minsigma,\maxsigma,R,\mw,s)$,
    where $\cD$ satisfies (\ref{eq:conditionVariances}),
    and $s$ is defined by
    the following separation condition
    \[ \forall i,j,~~~~\|\mu_i-\mu_j\|_2 \geq
        100 (\sigma_i+\sigma_j) \left( \sqrt{k\log(n)} +
        \frac 1 {\sqrt{w_i}} + \frac 1 {\sqrt{w_j}}\right) \]
    and returns
    a mixture of $k$ Gaussians $\wh{\cD}$,
    such that if 
    \begin{align*}
        n \geq \max\Big\{  &\tilde\Omega\left( \left(\frac
            {\sqrt{dk}\log(dk/\beta)\log^{\frac{3}{2}}(1/\delta)\log\log
            ((R+\sqrt{d}\maxsigma)/\minsigma)}{w_{\min}
            \epsilon}\right)^{\frac 1 {1-a}}\right)
            \textrm{ for an arbitrary constant }a>0 ,~~
            \cr & \Omega\left( \frac{k^{9.06}d^{3/2}
            \log(1/\delta)\log(k/\beta)}{w_{\min}\epsilon}\right), \cr
            & \Omega\left(\frac{k^{\frac{3}{2}}
            \log(k\log((R+\sqrt{d}\maxsigma)/\minsigma)/\beta)\log(1/\alpha)
            \log(1/\delta)}
            {\alpha\epsilon w_{\min}}\right),~~~
            \frac{1}{\mw}(n_1 + n_2)\Big\} 
    \end{align*}
    where,
    \begin{align*}
        n_1 &\geq \Omega\left(\frac{(d^2 + \log(\frac{k}{\beta}))
            \log^2(1/\alpha)}{\alpha^2}
            + \frac{(d^2\polylog(\frac{dk}{\alpha\beta\eps\delta}))}{\alpha\eps}
            + \frac{d^\frac{3}{2}\log^{\frac{1}{2}}(\frac{\maxsigma}{\minsigma})
            \polylog(\frac{dk\log(\maxsigma/\minsigma)}{\beta\eps\delta})}{\eps}\right)\\
        n_2 &\geq \Omega\left(\frac{d\log(\frac{dk}{\beta})\log^2(1/\alpha)}{\alpha^2}
            + \frac{d\log(\frac{dk\log{R}\log(1/\delta)}{\beta\eps})
            \log^{\frac{1}{2}}(\frac{1}{\delta})\log^2(1/\alpha)}{\alpha\eps}
            + \frac{\sqrt{d}\log(\frac{Rdk}{\beta})\log^{\frac{1}{2}}(\frac{1}{\delta})}
            {\eps}\right)
    \end{align*}
    then it $(\alpha,\beta)$-learns $\cD$.
\end{theorem}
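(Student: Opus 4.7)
The plan is to present $\PGME$ as a recursive algorithm that removes one component (or at worst a tight union of components) per level using the private terrific-ball primitive $\PTB$ of Section~\ref{sec:TerrificBall} as the core splitting device. At each recursive call on the current active dataset $X$, I first run $\PTB$ directly on $X$; if it returns a ball $\ball{p}{r}$ whose surrounding annulus $\ball{p}{10r}\setminus\ball{p}{r}$ is (noisily) empty, I split $X$ into its intersection with $\ball{p}{r}$ and its complement using $\PCount$-guarded thresholds, and recurse on each piece. If $\PTB$ returns $\bot$, I would prove that the effective diameter of $X$ is at most $\mathrm{poly}(d,k,\maxsigma)$; I then apply a private PCA projection analogous to Lemma~\ref{lem:easy-pca} with noise calibrated to that diameter, and call $\PTB$ on the projected data. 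By Lemma~\ref{lem:PCA_AM_style} composed with Lemma~\ref{lem:data-spectral}, each component's mean is preserved up to $O(\sqrt{k}\maxsigma)$, so the separation hypothesis in the theorem forces the component of largest variance to sit inside a terrific ball in the low-dimensional space and the recursion can proceed.

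For privacy, I will argue the recursion tree has at most $O(k)$ nodes since each terrific-ball split strictly reduces the number of surviving components on at least one side; hence there are at most $O(k)$ invocations of $\PTB$, $\PCount$, and the private-PCA/projection step. Setting each call's budget to $(\eps/\sqrt{k\log(1/\delta)},\delta/k)$ and applying advanced composition (Lemma~\ref{lem:composition}) together with post-processing (Lemma~\ref{lem:post-processing}) yields the overall $(\eps,\delta)$-DP guarantee. Once the recursion bottoms out at a single-cluster dataset contained in a small ball with approximately-spherical covariance---detectable via private trace and top-singular-value estimates---I invoke (a mild generalization of) $\PEGME$ from Theorem~\ref{thm:easycase} to estimate $(\mu_i,\Sigma_i)$, and read off $w_i$ via $\PCount$ on the surviving labels.

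For accuracy, I intend to maintain two invariants along the recursion: (i) the subset of components represented in the active dataset continues to satisfy the separation condition relative to the largest $\sigma_i$ in that subset, and (ii) every surviving component has $\Omega(\mw n)$ points remaining, inherited from Lemma~\ref{lem:regularity-conditions-hold} combined with the $O(\log(1/\beta)/\eps)$ additive error of $\PCount$. Under these invariants, in the projection case the $O(\sqrt{k}\maxsigma)$ shift bound preserves separation, and in the terrific-ball case the separation forces the split to cleanly partition the surviving components, so the analysis of Lemma~\ref{lem:easy-ld-clustering} transfers. The final sample complexity is then a sum: $\PTB$ contributes the $k^{9.06}d^{3/2}/(\mw\eps)$-style term (via a Theorem~\ref{thm:pgloc}-like analysis for the terrific-ball variant), private PCA contributes the $\sqrt{dk}/(\mw\eps)$ term, $\PCount$ at thresholds contributes the $k^{3/2}\log(1/\alpha)/(\alpha\eps\mw)$ term, and $\PEGME$ at the leaves contributes the $(n_1+n_2)/\mw$ term, each inflated by the $\sqrt{k\log(1/\delta)}$ advanced-composition factor.

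The main obstacle is showing the recursion terminates in $O(k)$ steps without artificially bisecting a component. The danger is that $\PTB$ could return a terrific ball whose boundary cuts through a cluster, which would wreck both the privacy accounting (too many leaves) and the accuracy (one component learned as two). To rule this out, I will combine the separation hypothesis with the concentration of each component inside a ball of radius $O(\sqrt{d}\sigma_i)$, so that the ``empty annulus'' certificate returned by $\PTB$ is attainable only at radii that do not slice through any component's support; this is precisely where the $\sqrt{k\log n} + 1/\sqrt{w_i} + 1/\sqrt{w_j}$ factor in the theorem's separation lower bound is used. A secondary technical difficulty is achieving only a $\log\log((R+\sqrt{d}\maxsigma)/\minsigma)$ dependence on the range parameters rather than a polynomial one: this requires threading the $\log^*$-type doubling search of $\PTB$ through the recursion without losing a factor per level.
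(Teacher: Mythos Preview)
Your recursive splitting scheme---try a terrific ball in the ambient space, and if none exists bound the instance diameter, privately project via PCA, and look for a terrific ball there---is exactly the structure of the paper's $\RPGMP$ (Algorithm~\ref{alg:LearnMixture}), and your privacy accounting via advanced composition over the $O(k)$ nodes of the recursion tree mirrors Theorem~\ref{thm:HardcaseAlgPrivacy}. The diameter-bounding argument you allude to (``if $\PTB$ returns $\bot$ then the effective diameter is $\poly(d,k,\maxsigma)$'') is precisely Claim~\ref{clm:NoTerrificBallThenBoundedRadius}, whose doubling argument produces the $k^{\log_2 23}\approx k^{4.53}$ factor, which squared gives the $k^{9.06}$ in the second sample-complexity line. (You have the attributions reversed: the $k^{9.06}d^{3/2}$ term is the PCA cost after plugging in that diameter bound, while the $(\sqrt{dk}/(\mw\eps))^{1/(1-a)}$ term is the requirement for $\PTB$ to succeed, cf.\ Lemma~\ref{lem:terrificball}.)

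The genuine gap is at the leaves. You propose to detect ``approximately-spherical covariance'' and then apply ``a mild generalization of $\PEGME$''. Neither part works. Nothing in the recursion makes the leaf components spherical: the theorem applies to arbitrary $\Sigma_i$ subject only to the mild flatness condition~(\ref{eq:conditionVariances}), and the $n_1,n_2$ terms in the statement are exactly the $d^2$-type bounds of the general-covariance learner $\KLSU$ (Theorem~\ref{thm:klsu}), not the $O(d)$ bounds of the spherical estimator $\PSGE$/$\PEGME$. The paper therefore invokes a different black box at each leaf---$\PGE$ (Theorem~\ref{thm:pge}), which is $\KLSU$ made robust to the $O(\alpha\mw/\log(1/\alpha))$-fraction of points that the terrific-ball annuli may have discarded along the recursion. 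Your plan needs this robustness too: each split is only guaranteed to be laminar up to the $O(\Gamma)$ points trapped in the annulus, so the leaf datasets are subsets, not exact, component samples. Replacing your leaf step with $\PGE$ (and dropping the spherical-detection step entirely; termination is handled by Claim~\ref{clm:OneComponentNoTerrificBall}, which shows a single component never admits a terrific ball) closes the gap and aligns your argument with the paper's.
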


\subsection{Finding a Secluded Ball}
\label{sec:TerrificBall}

In this section we detail a key building
block in our algorithm for learning Gaussian
mixtures. This particular subroutine is an
adaptation of the work of Nissim and
Stemmer~\cite{NissimS18} (who in turn built
on~\cite{NissimSV16}) that finds a ball that
contains many datapoints. In this section we
show how to tweak their algorithm so that it
now produces a ball with a few more additional
properties. More specifically, our goal in this
section is to privately locate a ball $\ball{p}{r}$
that (i) contains many datapoints, (ii) leaves out
many datapoints (i.e., its complement also contains
many points) and (iii) is secluded in the sense
that $\ball{p}{cr}\setminus \ball p r$ holds very
few (and ideally zero) datapoints for some constant
$c > 1$. More specifically, we are using the following
definition.

\begin{defn}[Terrific Ball]\label{def:TerrificBall}
    Given a dataset $X$ and an integer $t >0$, we
    say a ball $\ball{p}{r}$ is $(c,\Gamma)$-terrific
    for a constant $c>1$ and parameter $\Gamma\geq 0$
    if all of the following three properties hold: (i)
    The number of datapoints in $\ball{p}{r}$ is at
    least $t-\Gamma$; (ii) The number of datapoints
    outside the ball $\ball{p}{cr}$ is least $t-\Gamma$;
    and (iii) The number of datapoints in the annulus
    $\ball{p}{cr}\setminus\ball{p}{r}$ is at most $\Gamma$.
\end{defn}
We say a ball is $c$-terrific if it is $(c,0)$-terrific,
and when $c$ is clear from context we call a ball
terrific. In this section, we provide a differentially
private algorithm that locates a terrific ball. 

\paragraph{The 1-Cluster Algorithm of Nissim-Stemmer.}
First, we give an overview of the Nissim-Stemmer algorithm
for locating a ball $\ball p r$ that satisfy only property
(i) above, namely a ball that holds at least $t$ points
from our dataset containing $n$ points.
Their algorithm is actually composed of two subroutines that
are run sequentially. The first, {\tt GoodRadius} privately
computes some radius $\tilde r$ such that
$\tilde r \leq 4r_{\rm opt}$ with $r_{\rm opt}$ denoting the
radius of the smallest ball that contains $t$ datapoints.
Their next subroutine, {\tt GoodCenter}, takes $\tilde r$ as
an input and produces a ball of radius $\gamma\tilde r$ that
holds (roughly) $t$ datapoints with $\gamma$ denoting some
constant $>2$. The {\tt GoodCenter} procedure works by first
cleverly combining locality-sensitive hashing (LSH) and
randomly-chosen axes-aligned boxes to retrieve a $\poly(n)$-length
list of candidate centers, then applying $\AboveThreshold$ to
find a center point $p$ such that the ball $\ball {p}{\tilde r}$
satisfies the required condition (holding enough datapoints).

In this section, we detail how to revise the two above-mentioned subprocedures in order to retrieve a terrific ball, satisfying all three properties (rather than merely holding many points). The revision isn't difficult, and it is particularly straight-forward for the {\tt GoodCenter} procedure. In fact, we keep {\tt GoodCenter} as is, except for the minor modification of testing for all $3$ properties together. Namely, we replace the na\"ive counting query in $\AboveThreshold$ (i.e., ``is $|\{x\in X:~ x\in\ball{p}{r}\}| \geq t$?'') with the query
\begin{equation}
\label{eq:TerrificBallQuery}
\min\big\{  ~|\{x\in X: x\in \ball{p}{r}\}|, ~~|\{x\in X:~ x\notin \ball{p}{cr}\}|,~~ t - |\{x\in X: x\in \ball{p}{cr}\setminus \ball{p}{r}\}|  \big  \} \stackrel{\rm ?}\geq t
\end{equation}
Note that this query is the minimum of $3$ separate counting queries, and therefore its global sensitivity is $1$. Thus, our modification focuses on altering the first part where we find a good radius, replacing the {\tt GoodRadius} procedure with the {\tt TerrificRadius} procedure detailed below. Once a terrific radius is found, we apply the revised {\tt GoodCenter} procedure and retrieve a center. 

\begin{remark}
    In the work of Nissim-Stemmer~\cite{NissimS18}
    {\tt GoodCenter}, the resulting ball has radius
    $\leq \gamma\tilde r$ since the last step of the
    algorithm is to average over all points in a
    certain set of radius $\leq \gamma\tilde r$. 
    In our setting however it holds that $\tilde r$
    is a radius of a terrific ball, and in particular,
    in a setting where $\gamma < c$ (which is the
    application we use, where $\gamma\approx2.5$ whereas
    $c\geq 4$), this averaging is such that effectively
    all points come from $\ball{\tilde r}{p}$, and so
    the returned ball is of radius $\approx \tilde r$. 
\end{remark}

\paragraph{The TerrificRadius procedure.} Much like the
work of~\cite{NissimSV16}, we too define a set of possible
$r$'s, traverse each possible value of $r$ and associate
a score function that measures its ability to be the
sought-after radius. However, we alter the algorithm is
two significant ways. The first is modifying the score
function to account for all three properties of a terrific
ball, and not just the one about containing many points;
the second is that we do not apply the recursive
algorithm to traverse the set of possible $r$'s, but rather
try each one of these values ourselves using $\AboveThreshold$.
The reason for the latter modification stems from the fact
that our terrific radius is no longer upward closed
(i.e. it is not true that if $r$ is a radius of a ball
satisfying the above three properties then any $r'>r$ is
also a radius of a ball satisfying these properties) and
as a result our scoring function is \emph{not} quasi-concave.

The modification to the scoring function is very similar
to the one detailed in Equation~\eqref{eq:TerrificBallQuery},
with the exception that rather than counting exact sizes, we
cap the count at $t$. Formally, given a (finite) set $S$ we
denote $\#^t S \stackrel{\rm def}= \min\{ |S|, t\}$, and so
we define the counting query 
\begin{align}
    Q_X(p,r) \stackrel{\rm def}{=}
            \min\big\{ &\#^t \{x\in X:~ x\in \ball{p}{r}\},~~~
            \#^t\{x\in X:~ x\notin \ball{p}{cr} \},\nonumber\\
        &t - \#^t\{x\in X:~ x\in \ball{p}{cr}\setminus \ball{p}{r} \} \big\}
\end{align}
It is evident that for any dataset $X$, $p$ and $r>0$ it holds that $Q_X(p,r)$ is an integer between $0$ and $t$. Moreover, for any $p$ and $r$ and any two neighboring datasets $X$ and $X'$ it holds that $|Q_X(p,r)-Q_{X'}(p,r)|\leq 1$. We now have all ingredients for introducing the Terrific Radius subprocedure below.

\begin{algorithm}[h!] \label{alg:TerrificRadius}
	\caption{Find Terrific Radius
		$(X, t, c, largest; \eps, U, L)$}
	\KwIn{Dataset $X \in \R^{n\times d}$.
		Privacy parameters $\eps, \delta>0$;
        failure probability $\beta > 0$; candidate size $t>0$.
        Upper- and lower-bounds on the radius $L<U$. Boolean flag $largest$.}
	\KwOut{A radius of a terrific ball.} 
	\vspace{10pt}
	
	Denote $L_X(r) = \frac 1 t
        \max\limits_{\textrm{distinct }x_1,x_2,...,x_t\in X}\sum\limits_{j=1}^t Q_X(x_{j},r)$.\
	
    Denote $T = \lceil\log_2(U/L)\rceil+1$ and $\Gamma \stackrel{\rm def}=
        \frac {16}{\epsilon} \left(\log(T) + \log(2/\beta)\right)$.
	
	Set $r_0 = L, r_1 = 2L, r_2 = 4L, ..., r_i = 2^iL,..., r_T = U$.

	\If {$largest =${\tt TRUE}} {reverse the order of $r_i$'s.}
    \vspace{10pt}
    
    Run $\AboveThreshold$ on queries $L_X(r_i)$ for all
    $0 \leq i \leq T$, with
    threshold $t-\Gamma$ and sensitivity $2$,
    and output the $r_i$ for which $\AboveThreshold$
    returns $\top$.\\	
	\vspace{10pt}
\end{algorithm}

\begin{claim}\label{clm:TerrificRadiusPrivate}
    Algorithm~\ref{alg:TerrificRadius} satisfies
    $(\eps,0)$-Differential Privacy.
\end{claim}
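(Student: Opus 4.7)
The plan is to reduce privacy to a single invocation of $\AboveThreshold$, and hence the only substantive work is to verify that the query $L_X(r)$ has $\ell_1$-sensitivity at most $2$ with respect to changing a single datapoint. Once this is done, the claim follows because (i) the grid of candidate radii $r_0,\dots,r_T$ and the optional reversal depend only on public parameters, (ii) $\AboveThreshold_{\eps}$ with sensitivity parameter $2$ is $\eps$-DP by Theorem~\ref{thm:above-threshold}, and (iii) outputting the particular $r_i$ that triggered $\top$ is a data-independent post-processing of the $\AboveThreshold$ transcript, so post-processing (Lemma~\ref{lem:post-processing}) preserves $(\eps,0)$-DP.

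The main step is therefore the sensitivity bound. Fix a radius $r$ and let $X \sim X'$ with $X = S \cup \{x\}$ and $X' = S \cup \{x'\}$. First observe that for every point $p$, $Q_X(p,r)$ is a minimum of three counts each capped at $t$, each of $\ell_1$-sensitivity $1$ in the dataset, so $|Q_X(p,r) - Q_{X'}(p,r)| \leq 1$. Now let $\{x_1^*,\dots,x_t^*\} \subseteq X$ attain the maximum defining $L_X(r)$. If $x \notin \{x_1^*,\dots,x_t^*\}$, the same set lies in $X'$ and gives
\[
L_{X'}(r) \;\geq\; \tfrac{1}{t}\sum_{j=1}^t Q_{X'}(x_j^*,r) \;\geq\; L_X(r) - 1.
\]
Otherwise, without loss of generality $x = x_1^*$, and the set $\{x', x_2^*,\dots,x_t^*\} \subseteq X'$ consists of distinct points (since $x' \notin X \supseteq \{x_2^*,\dots,x_t^*\}$). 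Using $Q_X(x_1^*,r) \leq t$ and $Q_{X'}(x',r) \geq 0$, this candidate set witnesses
\[
L_{X'}(r) \;\geq\; \tfrac{1}{t}\!\left(Q_{X'}(x',r) + \sum_{j=2}^{t} Q_{X'}(x_j^*,r)\right) \;\geq\; L_X(r) - \tfrac{2t-1}{t} \;\geq\; L_X(r) - 2.
\]
By the symmetric argument swapping the roles of $X$ and $X'$, we get $|L_X(r) - L_{X'}(r)| \leq 2$, so each query in the stream has $\ell_1$-sensitivity at most $2$.

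With sensitivity $2$ established, the algorithm is exactly $\AboveThreshold_{\eps}$ run on the (public) sequence of queries $L_X(r_0),\dots,L_X(r_T)$ (or its reverse, if $largest=\mathtt{TRUE}$) with threshold $t - \Gamma$, followed by returning the $r_i$ indexing the $\top$-answer. The invocation of $\AboveThreshold$ is $(\eps,0)$-DP by Theorem~\ref{thm:above-threshold}, and selecting the corresponding $r_i$ is a deterministic function of the $\AboveThreshold$ output and of the public grid, hence post-processing. Composing these yields $(\eps,0)$-DP for Algorithm~\ref{alg:TerrificRadius} as claimed. The only subtlety to flag is the ``swap'' term in the sensitivity argument (contributing the extra $1$ on top of the per-point sensitivity of $Q_X$); all other steps are syntactic applications of results already stated in the preliminaries.
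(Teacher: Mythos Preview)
Your proof is correct and follows essentially the same approach as the paper: bound the sensitivity of $L_X(r)$ by $2$ via a swap argument on the maximizing $t$-tuple, then invoke $\AboveThreshold$ and post-processing. Your explicit case split (on whether the changed point appears in the optimal tuple) and your verification of distinctness for the witness tuple in $X'$ are slightly more careful than the paper's presentation, but the underlying argument is the same.
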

\begin{proof}
    Much like~\cite{NissimSV16}, we argue that
    $L_X(r)$ has $\ell_1$-sensitivity at most $2$.
    With this sensitivity bound, the algorithm is
    just an instantiation of $\AboveThreshold$
    (Theorem~\ref{thm:above-threshold}), so it is
    $(\eps,0)$-Differentially Private.

    Fix two neighboring datasets $X$ and $X'$. Fix $r$.
    Let $x_1^1,...,x_t^1\in X$ be the $t$ points on which
    $L_X(r)$ is obtained and let $x_1^2,...,x_t^2\in X'$
    be the $t$ points on which $L_{X'}(r)$ is obtained.
    Since $X$ and $X'$ differ on at most $1$ datapoint,
    then at most one point from each $t$-tuple can be a
    point that doesn't belong to $X\cap X'$. Without loss
    of generality, if such a point exists then it is $x_1^1$
    and $x_1^2$. For all other points it follows that
    $|Q_X(x,r)-Q_{X'}(x,r)|\leq 1$ as mentioned above.
    Thus we have
    \begin{align*}
        L_X(r) &= \frac 1 t \sum_{i=1}^t Q_X(x_i^1,r) =
            \frac 1 t Q_X(x_1^1,r) + \frac 1 t \sum_{i=2}^t Q_X(x_i^1,r)
            \cr & \leq \frac t t +  \frac 1 t
            \sum_{i=2}^t \left(Q_{X'}(x_i^1,r) + 1\right)
            \leq 1 + \left(\frac 1 t Q_{X'}(x_1^2,r) +
            \frac 1 t \sum_{i=2}^t Q_{X'}(x_i^1,r)\right) + 1 \leq 2 + L_{X'}(r)
    \end{align*}
    based on the fact that $0\leq Q(x,r)\leq t$ for any
    $x$ and any $r$ and the definition of $L_X$ as a
    $\max$-operation. The inequality $L_{X'}(r)\leq L_X(r)+2$
    is symmetric.
\end{proof}

\begin{claim}
	\label{clm:TerrificRadiusCorrectness_sufficient}
	Fix $\beta>0$ and denote $\Gamma$ as in Algorithm~\ref{alg:TerrificRadius}. With probability $\geq 1 - \beta$, if Algorithm~\ref{alg:TerrificRadius} returns a radius $r$ (and not $\bot$), then there exists a ball $\ball p r$ which is $(c,2\Gamma)$-terrific.
\end{claim}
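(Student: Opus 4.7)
The plan is to chain three steps: the accuracy guarantee of $\AboveThreshold$, a pigeonhole/averaging argument on the $\max$-over-$t$-tuples defining $L_X$, and unpacking the definition of $Q_X(p,r)$ into the three properties of a terrific ball.

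First, observe that Algorithm~\ref{alg:TerrificRadius} is an instance of $\AboveThreshold$ applied to the sequence of queries $\{L_X(r_i)\}_{i=0}^{T}$ with threshold $t-\Gamma$ and sensitivity $\Delta=2$ (the sensitivity bound is exactly what we established in the proof of Claim~\ref{clm:TerrificRadiusPrivate}). Plugging $\Delta=2$ and the number of queries $T$ into the error expression of Theorem~\ref{thm:above-threshold} yields an error of exactly $\frac{8\cdot 2 (\log T + \log(2/\beta))}{\eps} = \Gamma$, matching the $\Gamma$ set inside the algorithm. Hence with probability at least $1-\beta$, if the algorithm returns some $r = r_{t'}$ (and not $\bot$) then $L_X(r) \geq (t-\Gamma) - \Gamma = t - 2\Gamma$. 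From here on we condition on this event and show the deterministic implication.

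Second, I unfold the definition $L_X(r) = \tfrac 1 t \max_{x_1,\ldots,x_t \in X}\sum_{j=1}^t Q_X(x_j,r)$. The bound $L_X(r) \geq t-2\Gamma$ says that the maximizing tuple satisfies $\sum_{j=1}^t Q_X(x_j,r) \geq t(t-2\Gamma)$. Since every summand lies in $[0,t]$ (it is a min of three capped counts each in $[0,t]$), a one-line averaging argument forces at least one index $j^*$ with $Q_X(x_{j^*},r) \geq t-2\Gamma$; indeed, if every term were strictly smaller than $t-2\Gamma$, the sum would be strictly less than $t(t-2\Gamma)$. Set $p := x_{j^*}$.

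Finally, I translate the lower bound on $Q_X(p,r)$ into the three defining properties of a $(c,2\Gamma)$-terrific ball. Because $Q_X(p,r)$ is the minimum of the three quantities
\[
\#^t\{x\in X:\,x\in\ball{p}{r}\},\quad \#^t\{x\in X:\,x\notin\ball{p}{cr}\},\quad t-\#^t\{x\in X:\,x\in\ball{p}{cr}\setminus\ball{p}{r}\},
\]
each of them must be at least $t-2\Gamma$. For the first two, removing the $\#^t$ cap only increases the value, so $|\ball{p}{r}\cap X|\geq t-2\Gamma$ and $|X\setminus\ball{p}{cr}|\geq t-2\Gamma$. For the third, the inequality $\#^t\{x\in\ball{p}{cr}\setminus\ball{p}{r}\}\leq 2\Gamma$ together with the standing (and, if not already in force, harmless) assumption $2\Gamma<t$ implies the cap is not active, so the true count in the annulus is at most $2\Gamma$. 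These are exactly the three defining conditions of a $(c,2\Gamma)$-terrific ball in Definition~\ref{def:TerrificBall}.

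The only subtlety I anticipate is the bookkeeping in the first step: one must verify that the $\Gamma$ delivered by $\AboveThreshold$ at sensitivity $2$ matches the $\Gamma$ hard-coded into the algorithm, and that the regime $2\Gamma<t$ is in force whenever the claim is non-trivial. Everything else is just carefully peeling the definitions of $L_X$ and $Q_X$.
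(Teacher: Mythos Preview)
Your proposal is correct and follows essentially the same approach as the paper's own proof: invoke the $\AboveThreshold$ guarantee to get $L_X(r)\ge t-2\Gamma$, apply averaging over the $t$ points in the maximizing tuple to find one $x_{j^*}$ with $Q_X(x_{j^*},r)\ge t-2\Gamma$, and then read off the three terrific-ball conditions from the definition of $Q_X$. You are more explicit than the paper in unpacking $Q_X$ and in flagging the $2\Gamma<t$ assumption needed to uncap the annulus count, but the structure is identical.
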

\begin{proof}
    Let $T+1 = (\lceil\log_2(U/L)\rceil+1)$ be the number
    of queries posed to $\AboveThreshold$. From
    Theorem~\ref{thm:above-threshold}, we know that with
    probability at least $1-\beta$, if radius $r$ is returned,
    then $L_X(r) \geq t-\Gamma-\Gamma=t-2\Gamma$, where
    $\Gamma = \tfrac{16}{\eps}\left(\log(T)+\log(2/\beta)\right)$.
    It follows that one of the $t$ distinct datapoints on
    which $L_X(r)$ is obtained must satisfy $Q(x_i,r) \geq t-2\Gamma$.
    Thus, the ball $\ball {x_i} r$ is the sought-after ball.
\end{proof}

Next we argue that if the data is such that
it has a terrific ball, our algorithm indeed
returns a ball of comparable radius. Note
however that the data could have multiple
terrific balls. Furthermore, given a $c$-terrific
ball, there could be multiple different balls
of different radii that yield the same partition.
Therefore, given dataset $X$ which has a
$c$-terrific ball $\ball{p}{r}$, denote
$A = X\cap \ball{p}{r}$ and define $r_A$ as
the \emph{minimal terrific radius} of a
$c$-terrific ball forming $A$, i.e., for any
other $c$-terrific $\ball{p'}{r'}$ such that
$A = X\cap \ball{p'}{r'}$, we have that
$r'\geq r_A$. Let ${\cal R}_c$ be the set of
all minimal terrific radii of $c$-terrific balls
forming subsets of $X$.

\begin{claim}
	\label{clm:TerrificRadiusCorrectness_necessary}
    Fix $c>1$ and assume we apply Algorithm~\ref{alg:TerrificRadius}
    with this $c$ as a parameter.
    Given $X$ such that ${\cal R}_{4c+1}$
    is not empty, then with probability $\geq 1-\beta$
    it holds that Algorithm~\ref{alg:TerrificRadius}
    returns $r$ such that if $largest={\tt TRUE}$ then
    $r\geq \max\{r'\in {\cal R}_{4c+1}\}$ and if
    $largest={\tt FALSE}$ then
    $r \leq 4\cdot \min\{r'\in {\cal R}_{4c+1}\}$.
\end{claim}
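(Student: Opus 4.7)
The plan is to apply the correctness guarantee of $\AboveThreshold$ (Theorem~\ref{thm:above-threshold}) to the $L_X(r_i)$-queries. In the instantiation at hand---threshold $T=t-\Gamma$, $\ell_1$-sensitivity $2$, at most $\lceil\log_2(U/L)\rceil+1$ queries---the theorem ensures that with probability $\geq 1-\beta$ the returned index $i$ satisfies $L_X(r_i) \geq t-2\Gamma$, while every earlier query $r_j$ satisfies $L_X(r_j) \leq t$ (vacuously, since $L_X \leq t$ always). All that remains is to show that for any $r^* \in \mathcal{R}_{4c+1}$ the algorithm necessarily encounters some dyadic $r_i$ within a bounded factor of $r^*$ at which $L_X(r_i) > t-\Gamma$; this forces $\AboveThreshold$ to halt at or before $r_i$ in whichever sweep direction is in use, delivering the claimed bound.

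The core geometric step is: if $\ball{p^*}{r^*}$ is a $(4c+1)$-terrific ball witnessing some $r^* \in \mathcal{R}_{4c+1}$, then for every dyadic $r_i \in [2r^*, 4r^*)$ and every $x_j \in X \cap \ball{p^*}{r^*}$, we have $Q_X(x_j, r_i) = t$. The three constituents of $Q_X$ each meet the cap $t$ by the triangle inequality: (i) $\ball{x_j}{r_i} \supseteq \ball{p^*}{r^*}$, since any $q$ with $|q-p^*| \leq r^*$ satisfies $|q-x_j| \leq 2r^* \leq r_i$, so $\ball{x_j}{r_i}$ contains the $\geq t$ inner points of the terrific ball; (ii) any hypothetical $q \in X \cap (\ball{x_j}{cr_i} \setminus \ball{x_j}{r_i})$ would satisfy $|q-p^*| > r_i - r^* \geq r^*$ and $|q-p^*| \leq cr_i + r^* < (4c+1)r^*$, placing $q$ inside the empty outer annulus $\ball{p^*}{(4c+1)r^*} \setminus \ball{p^*}{r^*}$, a contradiction; and (iii) $\ball{x_j}{cr_i} \subseteq \ball{p^*}{(4c+1)r^*}$ gives $X \setminus \ball{x_j}{cr_i} \supseteq X \setminus \ball{p^*}{(4c+1)r^*}$, which again has $\geq t$ points by the terrific property. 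Since $\ball{p^*}{r^*}$ contains at least $t$ datapoints, we may select any $t$ of them as the tuple $x_1,\ldots,x_t$ in the definition of $L_X$, yielding $L_X(r_i) \geq t > t - \Gamma$.

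Given this lemma, both cases of the claim fall out from the iteration order. The interval $[2r^*, 4r^*)$ always contains some power $2^k L$ (we implicitly assume $U$ is chosen large enough to include it, per the input conventions). For $largest = \texttt{FALSE}$, instantiate the lemma at $r^* = \min\mathcal{R}_{4c+1}$; the ascending sweep hits the witnessing $r_i$, so $\AboveThreshold$ halts no later than $r_i$ and returns some $r \leq r_i < 4r^*$. For $largest = \texttt{TRUE}$, instantiate at $r^* = \max\mathcal{R}_{4c+1}$; the descending sweep hits the witnessing $r_i$, so the returned radius satisfies $r \geq r_i \geq 2r^* > r^*$. The delicate point---and the reason the hypothesis is stated for $(4c+1)$-terrific rather than just $c$-terrific balls---is the dual role of the dyadic radius: it must simultaneously be large enough to contain the inner terrific ball ($r_i \geq 2r^*$) and small enough to fit strictly inside the secluded outer boundary ($r_i < 4r^*$). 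The factor-of-$4$ slack in the outer terrific radius is exactly what absorbs the factor-of-$2$ ambiguity of rounding to a dyadic scale, with a little room to spare.
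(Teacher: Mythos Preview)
Your proof is correct and follows essentially the same approach as the paper: invoke the $\AboveThreshold$ guarantee, then show geometrically that any $(4c+1)$-terrific radius $r^*$ forces $L_X=t$ at some nearby dyadic query. The only difference is cosmetic: the paper picks its dyadic witness in $[D,2D]\subset[r^*,4r^*]$ where $D$ is the pairwise diameter of the inner point set (and uses minimality of $r^*\in\mathcal R_{4c+1}$ to get $r^*\le D$), whereas you pick it directly in $[2r^*,4r^*)$ and use $r_i\ge 2r^*$ to get $\ball{x_j}{r_i}\supseteq\ball{p^*}{r^*}$---your route is slightly cleaner since it avoids the appeal to minimality.
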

\begin{proof}
    Given $T+1 = (\lceil\log_c(U/L)\rceil+1)$
    queries and threshold $t-\Gamma$
    (using the same definition of $\Gamma$ as in
    Algorithm~\ref{alg:TerrificRadius}), from
    Theorem~\ref{thm:above-threshold},
	it holds with probability $\geq 1-\beta$ that
    $\AboveThreshold$ must halt by the time it considers
    the very first $r$ for which $L_X(r)\geq t$. We show
    that for any $r\in \cR_{4c+1}$ there exists a query
    among the queries posed to $\AboveThreshold$ over
    some $r'$ such that (a) $r \leq r' \leq 4 r$  and
    (b) $L_X(r')=t$. Since the order in which the
    $\AboveThreshold$ mechanism iterates through the
    queries is determined by the boolean flag $largest$
    the required follows.

    Fix $r\in \cR_{4c+1}$. Let $\ball p r$  be a
    terrific ball of radius $r$ that holds at
    least $t$ points, and let $x_1,...,x_{t'}$
    denote the set of $t'\geq t$ distinct datapoints
    in $\ball p r$. Denote $D = \max_{i\neq j}\|x_i-x_j\|$.
    Clearly, $r\leq D\leq 2r$, the lower-bound follows
    from the minimality of $r$ as a radius that separates
    these datapoints from the rest of $X$ and the upper-bound
    is a straight-forward application of the triangle
    inequality. Next, let $r^*$ denote the radius in the
    range $[D, 2D] \subset [r,4r]$ which is posed as a
    query to $\AboveThreshold$. We have that $Q(x_j,r^*)=t$
    for each $x_j$. Indeed, the ball $\ball{x_j}{r^*}$ holds
    $t'\geq t$ datapoints. Moreover,
    $\ball{x_j}{cr^*} \subset \ball{p}{r+cr^*} \subset \ball{p}{(4c+1)r}$
    and therefore
    $\left( \ball{x_j}{2r^*}\setminus \ball{x_j}{r^*}\right) \subset
    \left( \ball{p}{(4c+1)r}\setminus\{x_1,...,x_{t'}\}\right)$
    and thus it is empty too. Therefore, any datapoint outside
    of $\ball{p}{r}$ which must also be outside $\ball{p}{(4c+1)r}$
    is contained in the compliment of $\ball{x_j}{r^*}$,
    and so the compliment also contains $t$ points.
    As this holds for any $x_j$, it follows that $L_X(r^*)=t$
    and thus the required is proven.	
\end{proof}

\paragraph{The entire {\tt PTerrificBall} procedure.} The overall algorithm that tries to find a $c$-terrific ball is the result of running {\tt TerrificRadius} followed by the {\tt GoodCenter} modified as discussed above: we conclude by running $\AboveThreshold$ with the query given by~\eqref{eq:TerrificBallQuery} for a ball of radius $(1+\frac c {10})\tilde r$. Its guarantee is as follows.

\begin{lemma}\label{lem:terrificball}
    The {\tt PTerrificBall} procedure is a
    $(2\epsilon,\delta)$-DP algorithm which, if run
    using size-parameter
    $t \geq \frac{1000c^2}\epsilon n^a\sqrt d
    \log(nd/\beta)\log(1/\delta)\log\log(U/L)$ for
    some arbitrary small constant $a>0$ (say $a=0.1$),
    and is set to find a $c$-terrific ball with
    $c>\gamma$ ($\gamma$ being the parameter fed into
    the LSH in the {\tt GoodCenter} procedure), then
    the following holds. With probability at least
    $1-2\beta$, if it returns a ball $\ball{r}{p}$, then
    this ball is a $(c,2\Gamma)$-terrific ball of radius
    $r \leq (1+\frac c {10})\tilde{r}$,
    where $\tilde{r}$ denotes the radius obtained from
    its call to {\tt TerrificRadius}, and $\Gamma$ is $\frac {16}{\epsilon} \left(\log( \lceil\log_2(U/L)\rceil+1) + \log(2/\beta)\right)$.
\end{lemma}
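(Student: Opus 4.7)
I would decompose the proof into two pieces, privacy and accuracy, mirroring the two-stage structure of {\tt PTerrificBall}: first a call to {\tt TerrificRadius} producing $\tilde r$, then a modified {\tt GoodCenter} invocation that runs $\AboveThreshold$ on a list of LSH/box-generated candidate centers using the query $Q_X(p,(1+c/10)\tilde r) \stackrel{?}\geq t$ from \eqref{eq:TerrificBallQuery}.

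\emph{Privacy.} By Claim~\ref{clm:TerrificRadiusPrivate}, {\tt TerrificRadius} is $(\epsilon,0)$-DP. For the second stage, the only change from Nissim--Stemmer's {\tt GoodCenter} is that the threshold query is replaced with the query in \eqref{eq:TerrificBallQuery}, which is the minimum of three $1$-sensitive counting queries and therefore itself has global sensitivity $1$. Hence the original privacy analysis of {\tt GoodCenter} goes through verbatim and that stage is $(\epsilon,\delta)$-DP. Basic composition (Lemma~\ref{lem:composition}) then yields the $(2\epsilon,\delta)$ guarantee.

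\emph{Accuracy.} I would build the bound as a union of two $1-\beta$ events. First, Claim~\ref{clm:TerrificRadiusCorrectness_sufficient} gives that with probability $\geq 1-\beta$, conditional on {\tt TerrificRadius} returning some $\tilde r$, there exists a point $p^\ast$ with $\ball{p^\ast}{\tilde r}$ being a $(c,2\Gamma)$-terrific ball. Second, conditioned on this event, I would appeal to the LSH-plus-box argument of~\cite{NissimSV16,NissimS18}: the lower bound $t \geq \tfrac{1000 c^2}{\epsilon} n^{a}\sqrt d \log(nd/\beta)\log(1/\delta)\log\log(U/L)$ is precisely what their analysis requires to ensure that, with probability $\geq 1-\beta$, the generated candidate list of size $\poly(n)$ contains some $\tilde p$ whose distance to $p^\ast$ is at most $\tfrac c {10}\tilde r$, and simultaneously that the additive error $\Gamma$ of $\AboveThreshold$ on the query~\eqref{eq:TerrificBallQuery} is small enough that any candidate with true $Q_X$-value $\geq t-\Gamma$ will trigger $\top$ while no candidate with value $< t - 3\Gamma$ will. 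The existence of such a $\tilde p$ guarantees that $\AboveThreshold$ halts and returns some center $p$ whose associated ball $\ball{p}{(1+c/10)\tilde r}$ passes the threshold test, hence is $(c,2\Gamma)$-terrific with radius $(1 + c/10)\tilde r$ as claimed. A union bound finishes the proof.

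\emph{What I expect to be the main obstacle.} The bookkeeping around the radius bound. Concretely, I must verify that using $(1+\tfrac c{10})\tilde r$ as the query radius (rather than $\gamma \tilde r$ as in the original {\tt GoodCenter}) is still compatible with the LSH analysis: in the original proof the LSH guarantees collision of $p^\ast$ with some candidate only up to distance $\gamma \tilde r$, so I must use the hypothesis $c > \gamma$ together with the remark preceding the lemma to argue that the averaging step in {\tt GoodCenter} produces a candidate $\tilde p$ that effectively lies inside $\ball{p^\ast}{\tilde r}$ (not just inside $\ball{p^\ast}{\gamma \tilde r}$), so that $\ball{\tilde p}{(1+c/10)\tilde r}$ still covers all of $\ball{p^\ast}{\tilde r}$ and is still contained in $\ball{p^\ast}{c\tilde r}$. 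This inclusion is what transfers the $(c,2\Gamma)$-terrific property of $\ball{p^\ast}{\tilde r}$ to the returned ball $\ball{p}{(1+c/10)\tilde r}$, with the factor $2\Gamma$ absorbing both the $\Gamma$ slack coming from Claim~\ref{clm:TerrificRadiusCorrectness_sufficient} and the $\Gamma$ slack from the final $\AboveThreshold$ call.
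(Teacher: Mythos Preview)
Your privacy argument and the high-level accuracy decomposition (Claim~\ref{clm:TerrificRadiusCorrectness_sufficient} for $\tilde r$, then {\tt GoodCenter} with the modified query, with the terrific property of the returned ball coming from it passing the final $\AboveThreshold$ test on~\eqref{eq:TerrificBallQuery}) are correct and match the paper. The gap is in how you obtain the radius bound $r\leq (1+\tfrac c{10})\tilde r$ and what role the lower bound on $t$ actually plays.

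You say the Nissim--Stemmer LSH analysis ensures the candidate list contains some $\tilde p$ within $(c/10)\tilde r$ of $p^\ast$. That is not what their analysis gives: it only guarantees that {\tt GoodCenter} isolates a hash bucket containing at least an $(n^{-a}/2)$-fraction of the points in $\ball{p^\ast}{\tilde r}$, and then \emph{averages} that bucket. The paper's contribution here is to exploit the terrific structure: the bound on $t$ is chosen so that $t\cdot n^{-a}/2 > 10c^2\cdot 2\Gamma$. Fix any datapoint $x$ inside the terrific ball. All points colliding with $x$ under the LSH are within $\gamma\tilde r < c\tilde r$ of $x$, and by property~(iii) of the $(c,2\Gamma)$-terrific ball at most $2\Gamma$ of them lie outside $\ball{p^\ast}{\tilde r}$; the rest are within $\tilde r$ of $x$. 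The average therefore sits within $\tilde r + \tfrac{2\Gamma\cdot c\tilde r}{10c^2\cdot 2\Gamma} = \tilde r(1+\tfrac 1{10c})\leq (1+\tfrac c{10})\tilde r$ of $x$. This weighted-average calculation is the source of the radius bound, not an LSH collision-distance guarantee, and it is the precise place where the constant $1000c^2$ in the hypothesis on $t$ is spent.

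Separately, the inclusion argument in your obstacle section does not work and is not needed. You never establish $\|\tilde p-p^\ast\|\leq (c/10)\tilde r$, and even granting $\ball{p^\ast}{\tilde r}\subseteq \ball{\tilde p}{(1+c/10)\tilde r}\subseteq \ball{p^\ast}{c\tilde r}$, inclusions alone cannot transfer property~(iii) of Definition~\ref{def:TerrificBall}: the outer ball $\ball{\tilde p}{c(1+c/10)\tilde r}$ is strictly larger than $\ball{p^\ast}{c\tilde r}$, so its annulus is not contained in the original one. As you already said correctly in your main paragraph, the terrific property of the returned ball comes from passing the $\AboveThreshold$ test; the averaging argument only shows such a candidate exists so that $\AboveThreshold$ halts.
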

\begin{proof}
    By Claim~\ref{clm:TerrificRadiusCorrectness_sufficient},
    it follows that $\tilde r$ is a radius for some
    $(c,2\Gamma)$-terrific ball. The analysis in~\cite{NissimS18}
    asserts that with probability at least $1-\beta$,
    {\tt GoodCenter} locates a $(n^{-a}/2)$-fraction of
    the points inside the ball, and uses their average. 
    Note that $t$ is set such that $t\cdot n^{-a}/2 > 10c^2\cdot 2\Gamma$. 
    Fix $x$ to be any of the $\geq t-2\Gamma$ points
    inside the ball. Due to the quality function we use
    in {\tt TerrificRadius}, at most $2\Gamma$ of the
    points, which got the same hash value as $x$, are within
    distance $\gamma \tilde r < c\tilde r$, and the remaining
    $t-2\Gamma$ are within distance $\tilde r$ from $x$.
    It follows that their average is within distance at most
    $\tilde r + \frac{c\tilde r}{10c^2} \leq \tilde r(1+\frac c {10})$.
    The rest of the~\cite{NissimS18} proof follows as
    in the {\tt GoodCenter} case.
\end{proof}

\subsection{The Algorithm}
\label{subsec:algorithm_description}

We now introduce our overall algorithm for GMM clustering
which mimics the approach of Achlioptas and McSherry~\cite{AchlioptasM05}.
Recall, Achlioptas and McSherry's algorithm correctly learns
the model's parameters, provided that 
\begin{equation}\label{eq:requiredSeparation}
    \forall i,j,~~~~\|\mu_i-\mu_j\|_2 \geq
        C (\sigma_i+\sigma_j) \left( \sqrt{k\log(nk/\beta)} +
        \frac 1 {\sqrt{w_i}} + \frac 1 {\sqrt{w_j}}\right) 
\end{equation}
for some constant $C>0$, and that $n\geq \poly(n,d,k)$.
We argue that under the same separation condition (albeit
we replace the constant of~\cite{AchlioptasM05} with a much
larger constant, say $C=100$) a $(\eps,\delta)$-differentially
private algorithm can also separate the $k$-Gaussians and
learn the model parameters. Alas, we are also forced to
make some additional assumptions. First, we require a bound
on the distribution parameters, that is, the norm of all
means, and the eigenvalues of covariances; this is due to
standard results in DP literature (\cite{BunNSV15}) showing
the necessity  of such bounds for non-trivial DP algorithms.
The good news is that
we only require $\log\log$-dependence on these parameters.
Namely, our sample complexity is now
$\poly(n,k,d,\frac 1\epsilon, \log(1/\delta),\log(1/\beta),
\log\log((R+\sqrt{d}\maxsigma)/\minsigma))$. Secondly, because
we replace the non-private Kruskal based algorithm with an
algorithm that locates a terrific ball, we are forced to use
one additional assumption --- that the Gaussians are not
``too flat''. The reason for this requirement is that we are
incapable of dealing with the case that the pairwise distances
of points drawn from the \emph{same} Gaussian have too much
variation in them (see Property~\ref{property:DistTwoPoints}
below). Formally, we require the following interplay between
$n$ (the number of datapoints), $\beta$ (probability of error),
and the norms of each Gaussian variance:
\begin{equation}\label{eq:conditionVariances} 
    \forall i, ~~~ \|\Sigma_i\|_F\sqrt{\log(nk/\beta)} \leq
    \tfrac 1 8 \tr(\Sigma_i) ,~~\textrm{ and } ~~
    \|\Sigma_i\|_2{\log(nk/\beta)} \leq
    \tfrac 1 8 \tr(\Sigma_i).
\end{equation}
Note that for a spherical Gaussian (where
$\Sigma_i = \sigma_i^2 I_{d\times d}$) we have that
$\tr(\Sigma_i) = d\sigma_i^2$, while
$\|\Sigma_i\|_F = \sigma_i^2\sqrt d$ and $\|\Sigma_i\|_2=\sigma_i^2$,
thus, the above condition translates to requiring a
sufficiently large dimension. This assumption was explicitly
stated in the non-private work regarding learning spherical
Gaussians~\cite{VempalaW02}
(also, we ourselves made such an assumption in the
simplified version detailed in Section~\ref{sec:easycase}).

We now detail the main component of our GMM learner.
Algorithm~\ref{alg:LearnMixture} takes $X$, the given
collection of datapoints, and returns a $k$-partition
of $X$ in the form of list of subsets. We thus focus
on finding the correct partition. Note that intrinsically,
the returned partition of indices cannot preserve privacy
(it discloses the cluster of datapoint $i$), so once this
$k$-partition is done, one must apply the existing
$(\eps,\delta)$-DP algorithms for finding the mean and
variance of each cluster, as well as apply $(\eps,0)$-DP
histogram in order to assess the cluster weights. This
is the overall algorithm ($\PGME$) given in
Algorithm~\ref{alg:OverallLearnerOfMixture}.

In our analysis, we prove that our algorithm is 
$(O(\eps\sqrt{k\log(1/\delta)}),O(k\delta))$-DP, and
accurate with probability at least $1-O(k\beta)$.
The desired $(\eps,\delta)$-DP and error probability $1-\beta$ guarantees are achieved by appropriately rescaling the values of $\eps, \delta$, and $\beta$.

\begin{algorithm}[h!] \label{alg:LearnMixture}
	\caption{Recursive Private Gaussian Mixture Partitioning 
		}
    \KwIn{Dataset $X \in \R^{n\times d}$ coming from a
        mixture of at most $k$ Gaussians.
        Upper bound on the number of components $k$.
        Bounds on the parameters of the GMM $\mw, \minsigma, \maxsigma$.
        Privacy parameters $\eps, \delta>0$.}
	\KwOut{Partition of $X$ into clusters.} 
	\vspace{10pt}
	$\RPGMP(X; k, R, \mw, \minsigma, \maxsigma, \epsilon,\delta)$:
	\begin{enumerate}
		\parskip=0pt
		\item \If {($k=1$)}{
                \qquad Skip to last step (\#8).
            }
        \item Find a small ball that contains $X$, and bound
            the range of points to within that ball:\\
            Set $n' \gets \abs{X} + \Lap(2/\eps) - \tfrac{n\mw}{20}$.\\
            $\ball{p}{r''} \gets \PGLoc(X,n';
            \tfrac{\eps}{2}, \delta, R,\minsigma,\maxsigma)$.\\
            Set $r \gets 12r''$.\\
            Set $X \gets X \cap \ball{p}{r}$.
        \item Find $5$-{\tt PTerrificBall} in $X$ with
            $t = \tfrac{n\mw}{2}$:\\
            $\ball{p'}{r'} \gets \PTB(X, \tfrac{n\mw}{2}, c=5, largest={\tt FALSE};
            \eps, \delta, r, \tfrac{\sqrt{d}\minsigma}{2})$.
	    \item If the data is separable already, we recurse on each part:\\
            \If {($\ball{p'}{r'}\neq \bot$)}{
                \qquad Partition $X$ into $A = X\cap  \ball{p'}{r'}$ and
                    $B = X\setminus \ball{p'}{5r'}$.\\
                \qquad Set $C_A \gets
                    \RPGMP(A; k-1, R, \mw, \minsigma, \maxsigma, \eps, \delta)$.\\
                \qquad Set $C_B \gets
                    \RPGMP(B; k-1, R, \mw, \minsigma, \maxsigma, \eps, \delta)$.\\
	            \qquad \Return{$C_A \cup C_B$}.
            }
    	\item Find a private $k$-PCA of $X$:\\	
	        Sample $N$, a symmetric matrix whose entries are
            taken from $\cN(0, \frac{4r^4\ln(2/\delta)}{\eps^2})$.\\
            $\Pi \in \R^{d\times d} \gets k\textrm{-PCA projection of } X^TX + N$,
            where $\Pi$ is a rank $k$ matrix. 
        \item Find $5$-{\tt PTerrificBall} in $X\Pi$ with
            $t = \tfrac{n\mw}{2}$:\\
            $\ball{p'}{r'} \gets \PTB(X\Pi, \tfrac{n\mw}{2}, c=5, largest={\tt TRUE};
            \eps, \delta, r, \tfrac{\sqrt{k}\minsigma}{2})$.\\
	    \item If the projected data is separable, we recurse on each part:\\
            \If {($\ball{p'}{r'}\neq \bot$)}{
                \qquad Partition $X$ into $A = \{x_i \in X:~ \Pi x_i \in \ball{p'}{r'}\}$
                    and $B = \{x_i \in X:~ \Pi x_i \not\in \ball{p'}{5r'}\}$.\\
                \qquad Set $C_A \gets
            	    \RPGMP(A; k-1, R, \mw, \minsigma, \maxsigma, \eps, \delta)$.\\
                \qquad Set $C_B \gets
                    \RPGMP(B; k-1, R, \mw, \minsigma, \maxsigma, \eps, \delta)$.\\
                \qquad \Return{$C_A \cup C_B$}.
	        }
	    \item Since the data isn't separable, we treat it as a single Gaussian:\\
            Set a single cluster: $C \gets \{i: x_i\in X\}$.\\
            \Return{$\{C\}$.}
	\end{enumerate}
\end{algorithm}

\begin{algorithm}[h!] \label{alg:OverallLearnerOfMixture}
	\caption{Private Gaussian Mixture Estimation
		}
    \KwIn{Dataset $X \in \R^{n\times d}$ coming from
        a $k$-Gaussian mixture model.
        Upper bound on the number of components $k$.
        Bounds on the parameters of the GMM $\mw, \minsigma, \maxsigma$.
        %, such that each $x_i\in X$ has length $\leq O(R + \maxsigma \sqrt{d})$.
		Privacy parameters $\eps, \delta>0$;
        failure probability $\beta > 0$.}
	\KwOut{Model Parameters Estimation} 
	\vspace{10pt}
	$\PGME(X; k, R, w_{\min}, \minsigma, \maxsigma,
        \epsilon,\delta,\beta)$:
	\begin{enumerate}
		\parskip=0pt
        \item Truncate the dataset so that for all points,
            $\|X_i\|_2 \leq O(R + \maxsigma\sqrt{d})$ 
		\item $\{C_1,..,C_k\} \gets \RPGMP(X;k,R,w_{\min},
            \minsigma, \maxsigma, \eps,\delta)$
		\item \For {$j$ from $1$ to $k$}{
			    \qquad $(\mu_j, \Sigma_j) \gets
                    \PGE( \{x_i: i\in C_j\}; R, \minsigma, \maxsigma, \eps,\delta)$.\\			
    			\qquad $\tilde n_j \gets |C_j|+\Lap(1/\epsilon)$.
			}
		\item Set weights such that for all $j$,
            $w_j \gets \tilde n_j / (\sum_{j}\tilde n_j)$.
		\item \Return{ $\langle \mu_j, \Sigma_j, w_j\rangle_{j=1}^k$ }
	\end{enumerate}
\end{algorithm}

\begin{theorem}\label{thm:HardcaseAlgPrivacy}
	Algorithm~\ref{alg:OverallLearnerOfMixture} satisfies
    $\big(2\epsilon+8\epsilon\sqrt{2k\log(1/\delta)}),
    (8k+1)\delta\big)$-differential privacy.
\end{theorem}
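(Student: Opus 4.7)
I would first bound the privacy consumed by a single invocation of $\RPGMP$, ignoring the recursive calls it spawns. The steps that touch the data privately are: (i) the Laplace noise $\Lap(2/\eps)$ added to $|X|$, which is $(\eps/2,0)$-DP by Lemma~\ref{lem:laplacedp}; (ii) the $\PGLoc$ subroutine invoked with privacy parameter $(\eps/2,\delta)$, contributing $(\eps/2,\delta)$-DP by its construction; (iii) the $\PTB$ call in step~3, contributing $(2\eps,\delta)$-DP by Lemma~\ref{lem:terrificball}; (iv) the Gaussian mechanism on $X^{\top}X$ in step~5, which is $(\eps,\delta)$-DP by Lemma~\ref{lem:gaussiandp}, since the preceding truncation of $X$ to $\ball{p}{r}$ bounds the Frobenius sensitivity of $X^{\top}X$ by $O(r^2)$; and (v) a second $\PTB$ call in step~6, again $(2\eps,\delta)$-DP. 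By basic composition (Lemma~\ref{lem:composition}), a single call of $\RPGMP$ is $(\eps_0,\delta_0)$-DP for some $\eps_0\le 8\eps$ and $\delta_0\le 8\delta$, using the loose bound that yields the clean constants in the theorem statement.

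\textbf{Recursion analysis.} Next I would argue that the privacy of the entire recursive procedure can be accounted for along a \emph{single} root-to-leaf path in the recursion tree, whose depth is at most $k$ (since the counting parameter decrements by one per call). Fix two neighboring datasets $X,X'$ differing only in the entry $x_{i_0}$. At the root, the ball $\ball{p'}{r'}$ is produced by a private mechanism and exposed as part of the transcript; once $(p',r')$ is exposed, the partition of the data into $A$ and $B$ is a post-processing function of the input. Because $x_{i_0}$ lies in at most one side, the other side is a function of $X\setminus\{x_{i_0}\}$ alone, and we may couple the randomness of that recursive sub-call so that it proceeds identically on $X$ and $X'$ and contributes no additional privacy loss. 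Iterating this coupling down the tree, the point $x_{i_0}$ triggers exactly one ``neighboring-input'' recursive call per level, so only the $k$ nodes along $x_{i_0}$'s path contribute.

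\textbf{Putting it together.} Applying advanced composition (Lemma~\ref{lem:composition}) to $k$ instances of $(8\eps,8\delta)$-DP along a single path, with slack parameter $\delta$, gives that $\RPGMP$ is $\bigl(8\eps\sqrt{2k\log(1/\delta)}+\mathrm{low\ order},\,8k\delta+\delta\bigr)$-DP. It remains to analyze steps~3--4 of $\PGME$: conditional on $\RPGMP$'s output (exposed in the transcript), the clusters $\{C_j\}$ form a fixed partition of $[n]$, so $i_0$ lies in exactly one $C_{j^\ast}$; hence only the call $\PGE(\{x_i : i\in C_{j^\ast}\})$ and the Laplace-perturbed count $|C_{j^\ast}|+\Lap(1/\eps)$ see differing inputs. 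These contribute $(\eps,\delta)$-DP and $(\eps,0)$-DP respectively, totalling $(2\eps,\delta)$-DP by basic composition; step~1 (data-independent truncation) and step~4 (weight normalization) are post-processing. Sequentially composing the $\RPGMP$ bound with $(2\eps,\delta)$ yields the claimed privacy guarantee $\bigl(2\eps+8\eps\sqrt{2k\log(1/\delta)},(8k+1)\delta\bigr)$.

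\textbf{Main obstacle.} The delicate point is rigorously formalizing the ``single path per datapoint'' accounting. The ball returned by $\PTB$ genuinely is random and can differ on $X$ and $X'$: that is precisely why we paid a $(2\eps,\delta)$ cost for it in step (iii). The correct way to do the coupling is to treat $(p',r')$ as the output of a privately-released transcript step, charge its privacy cost up front via sequential composition, and only then condition on it so that the partition into $A,B$ is a deterministic post-processing of the input; it is only after this conditioning that the ``untouched branch'' becomes identical as a computation on $X$ and $X'$. The same device justifies invoking parallel composition for the final $\PGE$ step, where the partition $\{C_j\}$ is treated as public information whose privacy cost has already been paid inside $\RPGMP$.
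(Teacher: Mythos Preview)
Your proof is correct and follows essentially the same approach as the paper: bound the privacy cost of a single (non-recursive) invocation of $\RPGMP$, argue that the differing datapoint lies in at most one branch at each level so only a root-to-leaf path of depth $\le k$ sees neighboring inputs, apply advanced composition along that path, and finally add the $(2\eps,\delta)$ cost of the per-cluster $\PGE$ and Laplace counts. Your per-call accounting is actually a bit more careful than the paper's (you correctly charge each $\PTB$ call $(2\eps,\delta)$ via Lemma~\ref{lem:terrificball}, whereas the paper tallies the subroutines as ``three $(\eps,\delta)$, one $(\eps/2,\delta)$, one $(\eps/2,0)$'' to get $(4\eps,4\delta)$ per level); the looser $(8\eps,8\delta)$ bound you then invoke is harmless. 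The one arithmetic slip is in your last line: your own bookkeeping gives $\delta_{\RPGMP}=8k\delta+\delta$ and then composing with the final $(2\eps,\delta)$ yields $(8k+2)\delta$, not $(8k+1)\delta$; but since the constants in the theorem are loose (the paper itself records $8k\delta$ for $\RPGMP$ despite the $+\delta'$ slack in advanced composition), this is immaterial.
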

\begin{proof}
	Consider any two neighboring datasets $X$ and $X'$,
    which differ on at most one point. In each of
    the levels of the recursion in $\RPGMP$ we apply three
    $(\eps,\delta)$-DP, one $(\tfrac{\eps}{2},\delta)$-DP,
    and one $(\tfrac{\eps}{2},0)$-DP procedures to the data, as well as
    fork into one of the two possible partitions and invoke
    recursive calls on the part of the data the contains
    the point that lies in at most one of $X$ and $X'$.
    Since the recursion can be applied
    at most $k$ times, it follows that the point plays a
    role in at most $k$ rounds, each of which is $(4\eps,4\delta)$-DP.
    By advanced composition (Lemma~\ref{lem:composition}), overall $\PGME$ is
    $(4\eps \cdot \sqrt{8k\log(1/\delta)},8k\delta)$-DP.
    The additional calls for learning the mixture model
    parameters on each partition are $(2\eps,\delta)$-DP.
    Summing the two parts together yields the overall
    privacy-loss parameters.
\end{proof}

Since the latter steps of Algorithm~\ref{alg:OverallLearnerOfMixture}
rely on existing algorithms with proven utility, our utility analysis
boils down to proving the correctness of Algorithm~\ref{alg:LearnMixture}.
We argue that given $X$ taken from a Gaussian Mixture Model with
sufficiently many points and sufficient center separation, then with
high probabiliy, Algorithm~\ref{alg:LearnMixture} returns a $k$-partition
of the data which is \emph{laminar} with the data. In fact, the algorithm
might omit a few points in each level of the recursion, however, our
goal is to argue that the resulting $k$-subsets are pure~--- holding
only datapoints from a single cluster.

\begin{defn}
	\label{def:LaminarPartition}
	Given that $X$ is drawn from a $k'$-GMM, we call two disjoint
    sets $A$, $B \subseteq X$ a \emph{laminar partition} of $X$ if (i) there exists
    a partition of the set $\{1,2,..,k'\}$ into $S$ and $T$ such that
    all datapoints in $A$ are drawn from the Gaussians indexed by some
    $i\in S$ and all datapoints in $B$ are drawn from the Gaussians
    indexed by some $i\in T$, and (ii)
    $|X\setminus (A\cup B)|\leq \tfrac{n\mw\alpha}{10k'\log(1/\alpha)}$.
\end{defn}

Towards the conclusion that the partition returned by
Algorithm~\ref{alg:LearnMixture} is laminar, we require
multiple claims regarding the correctness of the algorithm
through each level of the recursion. These claims, in turn,
rely on the following properties of Gaussian data.

\begin{enumerate}
    \item The Hanson-Wright inequality (Lemma~\ref{lem:HW}):
        $\forall i, \forall x\sim \cN(\mu_i, \Sigma_i)$ we have that
        \[\tr(\Sigma_i) - 2\|\Sigma_i\|_F\sqrt{\log(n/\beta)}
            \leq \|x-\mu_i\|_2^2 \leq \tr(\Sigma_i) +
            2\|\Sigma_i\|_F\sqrt{\log(n/\beta)} + 2\|\Sigma_i\|_2\log(n/\beta).\]
        Using the assumption that
        $2\|\Sigma_i\|_F\sqrt{\log(n/\beta)}\leq \tfrac 1 4 \tr(\Sigma_i)$
        and that $2\|\Sigma_i\|_2\log(n/\beta) \leq \tfrac 1 4 \tr(\Sigma_i)$
        (Equation~\ref{eq:conditionVariances}) we get that
        $\tfrac{3}{4} \tr(\Sigma_i) \leq \|x-\mu_i\|^2 \leq\tfrac 3 2 \tr(\Sigma_i)$.
        \label{property:DistFromCenter}
	\item Similarly, for every $i$ and for any
        $x,y \sim \cN(\mu_i,\Sigma_i)$ we have that
        $x-y\sim \cN(0,2\Sigma_i)$ implying under the same
        concentration bound that
        $\tfrac{3}{2}\tr(\Sigma_i)\leq \|x-y\|_2^2 \leq 3\tr(\Sigma_i)$.
        \label{property:DistTwoPoints}
	\item For every $i$, its empirical spectral norm is close
        to the expected spectral norm, assuming for all $i$ we
        have $w_i n = \Omega(d + \log(1/\beta))$ (see
        Lemma~\ref{lem:spectralnormofgaussians}). Namely,
        \[\tfrac 1 {n_i} \llnorm{\sum\limits_{\{x\in X:~ x\sim \cN(\mu_i,\Sigma_i)}
            (x-\mu_i)(x-\mu_i)^T} \in \left(\tfrac 1 2 \|\Sigma_i\|_2,
            \tfrac 3 2\|\Sigma_i\|_2 \right).\]
	\item Under center separation, we have that
        $\forall i, \forall x\sim\cN(\mu_i,\Sigma_i)$ and for
        all $j\neq i$ we have that $\|x-\mu_i\|_2 \leq \|x-\mu_j\|_2$.
        This is a result of the fact that when projecting $x$ onto
        the line that connects $\mu_i$ and $\mu_j$ the separation
        between $\mu_i$ and $\mu_j$ is overwhelmingly larger than
        $(\sigma_i+\sigma_j)\sqrt{\ln(n/\beta)}$ (a consequence of the separation condition combined with (\ref{eq:conditionVariances})). \label{property:NearestCenter}
	\item Lastly, we also require that for any $i$ the number
        of points drawn from the $i$th Gaussian component is
        roughly $w_i n$. In other words, we assume that for all
        $i$ we have $\{x\in X:~ x\textrm{ drawn from }i\}$ is
        in the range $(\frac {3w_i n} 4, \frac{5w_i n}4)$.
        Standard use of the multiplicative Chernoff bound suggests
        that when $n = \Omega(\log(1/\beta)/w_i)$ then for each
        cluster $i$ the required holds.
\end{enumerate}
Thus, we have (informally) argued that, with probability
$\geq 1- 5\beta k$, the given dataset satisfies all of
these properties. Next we argue a few structural propositions
that follow.

\begin{prop}\label{prop:ABallWithCenterAndPoint}
	Let $\ball p r$ be a ball such that for
    some $i$, both $\mu_i$ and some
    $x\sim\cN(\mu_i,\Sigma_i)$ lie in $\ball p r$.
    Then $\ball p {4r}$ holds all datapoints drawn
    from $\cN(\mu_i,\Sigma_i)$.
\end{prop}
\begin{proof}
	Based on Property~\ref{property:DistFromCenter}
    it follows that $\ball{\mu_i}{\sqrt 2\|\mu_i-x\|_2}$
    holds all datapoints drawn from the $i$th Gaussian.
    As $\|\mu_i-x\|_2\leq 2r$ we have that
    $\ball{\mu_i}{\sqrt 2\|\mu_i-x\|_2}\subset\ball{p}{4r}$.
\end{proof}

\begin{prop}\label{prop:ABallWithTwoPoints}
	Let $\ball p r$ be a ball such that for some
    $i$, two points $x,y~\sim\cN(\mu_i,\Sigma_i)$
    lie in $\ball p r$. Then $\ball p {4r}$ holds
    all datapoints drawn from $\cN(\mu_i,\Sigma_i)$.
\end{prop}
\begin{proof}
	Based on Property~\ref{property:DistTwoPoints},
    we have that the ball $\ball{x}{\sqrt 2\|x-y\|_2}$
    holds all datapoints drawn from the $i$th Gaussian.
    As $\|x-y\|_2\leq 2r$ it follows that
    $\ball{x}{\sqrt 2 \|x-y\|_2} \subset\ball p {4r}$.
\end{proof}

The remainder of the utility proof also relies on the
assumption that at all levels of the recursion,
all subprocedures (which are probabilistic in nature)
do not terminate per-chance or with a non-likely output.
Note that this assumption relies in turn on a sample
size assumption: we require that $\frac{w_{\min}n}2$
is large enough for us to find a terrific ball, namely,
we require
$$n = \tilde\Omega\left( \left(\frac
    {\sqrt d\log(d/\beta)\log(1/\delta)\log\log((R+\sqrt{d}\maxsigma)/\minsigma)}
    {w_{\min}\epsilon}\right)^{\frac 1 {1-a}}\right)$$
for some small constant $a>0$ (say $a=0.1$).
Recall that at any level of the recursion we deal with
at most two such subprocedures.
As we next argue, (a) at each level of the recursion we
partition the data into two parts, which are each laminar with a mixture
of $k'<k$ Gaussians, and (b) when the given input contains
solely points from a single Gaussian we reach the bottom
level of the recursion. It follows that the recursion tree
(a binary tree with $k$ leaves) has $2k$ nodes.
Since each instantiation of the recusion has failure probability at most $6\beta$ (a consequence of Claim~\ref{clm:pgloc}, Claim~\ref{clm:TerrificBallIsLaminar}, Claim~\ref{clm:NoTerrificBallThenBoundedRadius}, Corollary~\ref{cor:CloseProjectedCenter}, Claim~\ref{clm:ProjectedDataTerrificBallLaminar}, and Claim~\ref{clm:OneComponentNoTerrificBall}), this implies that that with probability $\geq 1- 12k\beta$, all subroutines
calls ever invoked by $\RPGMP$ are successful. Note that
the overall success probability of Algorithm~\ref{alg:OverallLearnerOfMixture}
is thus $1-19k\beta$ (with probability $\leq 5k\beta$,
there exists some cluster, whose datapoints don't satisfy
properties 1-5; with probability $\leq 12k\beta$, a call
made through Step $2$ of Algorithm~\ref{alg:OverallLearnerOfMixture}
fails; and with probability $\leq 2k\beta$, a failure occurs
at Steps $3$ and $4$ of Algorithm~\ref{alg:OverallLearnerOfMixture}).
We continue assuming no such failure happens.

Subject to successfully performing each subprocedure,
the following claims yield the overall algorithm's
correctness.

\begin{claim}\label{clm:pgloc}
    Let $(X_1,\dots,X_n)$ be samples from
    a mixture of $k$ Gaussians in $\PGME$.
    Suppose $X$ (a subset of the samples) is the
    input dataset in $\RPGMP$, such that (1) for each
    $1 \leq i \leq k$, the number of points
    in $X$ belonging to component $i$ is either
    equal to $0$ or greater than $\Omega(n\mw)$; and
    (2) $X$ contains points from at least one component.
    If
    $$n \geq \left(\frac{\sqrt{d}k}
        {\eps}\right)^{\frac{1}{1-a}}
        \cdot 9^{\log^*\left(\sqrt{d}\left(\frac{R\maxsigma}
        {\minsigma}\right)^{d}\right)}
        \cdot \polylog\left(d,
        \frac{1}{\eps},\frac{1}{\delta}
        \frac{1}{\beta},\frac{1}{\gamma}\right)
        + O\left(\frac{d + \log(k/\beta)}{\mw}
        + \frac{\log(1/\beta)}{\eps\mw}\right),$$
    where $a>0$ is an arbitrarily small constant,
    then with probability at least $1-\beta$,
    the ball $\ball{r}{p}$ found in Step 2 of
    $\RPGMP$ contains all points in $X$, and
    $r \leq cr_{opt}$, where $c>0$ is a constant (which depends on $a$),
    and $r_{opt}$ is the radius of the smallest
    ball that contains all points in
    $X$.
\end{claim}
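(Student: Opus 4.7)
The plan is to invoke Theorem~\ref{thm:pgloc} on the call to $\PGLoc$ in Step 2 of $\RPGMP$ and combine its guarantee with the Gaussian-concentration properties from Section~\ref{subsec:algorithm_description} to obtain both the containment and the radius bound.

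First I would verify the preconditions of Theorem~\ref{thm:pgloc}. A standard Laplace tail bound ensures $|n' - (|X|-n\mw/20)| \leq O(\log(1/\beta)/\eps)$ with probability $\geq 1-\beta/2$; combined with $|X|=\Omega(n\mw)$ (since at least one component is represented in $X$), this places $n'$ in $[|X|-n\mw/10,\,|X|-n\mw/40]$, giving $\gamma := n'/n = \Omega(\mw)$. The claim's sample-size hypothesis is calibrated so that Theorem~\ref{thm:pgloc}'s sample requirement with $\ell=d$, $t=n'$, and this $\gamma$ is subsumed. Hence, with probability $\geq 1-\beta/2$, the subroutine returns $\ball{p}{r''}$ with $|\ball{p}{r''}\cap X|\geq n'/2$ and $r'' \leq c_1\bigl(r_{\opt,n'}+\tfrac{1}{4}\sqrt{d}\minsigma\bigr)$, where $r_{\opt,n'}$ is the radius of the smallest ball of $\R^d$ containing at least $n'$ points of $X$.

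Second, I would show that $r_{\opt,n'}$ is comparable to $r_{\opt}$, the radius of the smallest ball enclosing all of $X$. The bound $r_{\opt,n'}\leq r_{\opt}$ is immediate. For the reverse, any optimizer of $r_{\opt,n'}$ misses at most $|X|-n'\leq n\mw/10 < n\mw$ points of $X$, so it must contain at least two points from every represented component. Property~\ref{property:DistTwoPoints} then forces $r_{\opt,n'}\geq \tfrac{1}{2}\sqrt{(3/2)\tr(\Sigma_i)}$ for every represented $i$, while Property~\ref{property:DistFromCenter} places its center within $r_{\opt,n'}+\sqrt{(3/2)\tr(\Sigma_i)}$ of each $\mu_i$; the triangle inequality then yields $r_{\opt}\leq r_{\opt,n'}+2\max_i\sqrt{(3/2)\tr(\Sigma_i)}\leq 5\,r_{\opt,n'}$. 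Since $r_{\opt,n'}=\Omega(\sqrt{d}\minsigma)$ absorbs the additive slack in $\PGLoc$'s bound, $r''=O(r_{\opt,n'})=O(r_{\opt})$, establishing the radius bound $r=12r''\leq c\,r_{\opt}$.

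The containment $X\subseteq \ball{p}{12r''}$ is the main obstacle. The argument requires not only the upper bound on $r''$ from Theorem~\ref{thm:pgloc} but also a matching lower bound $r''=\Omega(r_{\opt,n'})$; intuitively, any ball of radius $\ll r_{\opt,n'}$ cannot hold $\geq n'/2\approx |X|/2$ points of a Gaussian-mixture sample, since covering half the sample requires spanning enough components that the radius must be comparable to $r_{\opt,n'}$. I would derive this lower bound by a short case analysis on how many points of each component fall into $\ball{p}{r''}$, invoking Property~\ref{property:DistTwoPoints} whenever a component contributes at least two points to the ball. Granted the two-sided comparability $r'' = \Theta(r_{\opt,n'}) = \Theta(r_{\opt})$, pick any $x_0\in\ball{p}{r''}\cap X$; writing $c^*$ for the center of the smallest ball enclosing $X$, $\|p-c^*\|\leq \|p-x_0\|+\|x_0-c^*\|\leq r''+r_{\opt}$, and every $y\in X$ satisfies $\|p-y\|\leq r''+2r_{\opt} = O(r'')$, which the factor $12$ is chosen to exceed, yielding $X\subseteq\ball{p}{12r''}$.
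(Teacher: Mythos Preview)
You correctly identify the crux---that containment needs a matching lower bound $r''=\Omega(r_{\opt,n'})$---and your first two steps (the Laplace bound placing $n'$ in the right range, and the comparison $r_{\opt,n'}=\Theta(r_{\opt})$) are fine. But the proposed case analysis via Property~\ref{property:DistTwoPoints} cannot deliver that lower bound. Suppose $X$ consists of two well-separated components, the first holding more than $n'/2$ of the points. The guarantee of Theorem~\ref{thm:pgloc} (the returned ball contains $\geq n'/2$ points and has radius $O(r_{\opt,n'})$) is perfectly consistent with $\ball{p}{r''}$ covering only the first component; then $r''\approx\sqrt{\tr(\Sigma_1)}$ while $r_{\opt,n'}$ is on the order of the inter-component separation, which may be arbitrarily larger. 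Property~\ref{property:DistTwoPoints} only lower-bounds $r''$ by the diameter of components the ball meets in two points, not by distances to components it misses entirely, so your case analysis breaks precisely in this situation.

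The paper's route is different: it does not try to lower-bound $r''$ from the data. It splits on whether $\ball{p}{r''}$ already contains $\geq 2$ points from \emph{every} represented component. If so, Proposition~\ref{prop:ABallWithTwoPoints} immediately gives $X\subseteq\ball{p}{4r''}$, and no lower bound on $r''$ is needed at all. If not, the paper bridges through an optimal $n'$-ball $B^*$ of radius $r'=r_{\opt,n'}$: since $|B^*\cap X|+|\ball{p}{r''}\cap X|>|X|$ the two balls share a datapoint, so $B^*\subset\ball{p}{r''+2r'}$; and since $B^*$ misses at most $n\mw/10$ points it holds $\geq 2$ points of every component, whence Proposition~\ref{prop:ABallWithTwoPoints} gives $X\subseteq\ball{p}{4(r''+2r')}$. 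To close, the paper asserts $r''+2r'\leq 3r''$, i.e.\ $r'\leq r''$, so that $4(r''+2r')\leq 12r''$. Note that this inequality is \emph{not} a consequence of the statement of Theorem~\ref{thm:pgloc}; it comes from the construction of $\PLoc$ (the {\tt GoodRadius} step always outputs a radius $\geq r_{\opt}$, so the returned radius is at least the optimal one). So the lower bound you need is a feature of the algorithm, not of Gaussian concentration---your attempt to extract it from Property~\ref{property:DistTwoPoints} is where the argument fails.
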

\begin{proof}
    First, note that from Lemma~\ref{lem:lap-conc},
    we know that with probability at least $1-\tfrac{\beta}{2}$,
    the magnitude of the Laplace noise added in
    Step 2 of $\RPGMP$ is at most $\tfrac{2}{\eps}\log(2/\beta)$.
    Since for every component that has
    points in $X$, there are at least $\Omega(n\mw)$ points, we know from our bound on
    $n$ that this magnitude of Laplace noise is
    at most $\tfrac{n\mw}{20}$. Therefore,
    \begin{align*}
        n' &= n + \Lap(2/\eps) - \frac{n\mw}{20}\\
            &\geq n\left(1 - \frac{\mw}{10}\right) \geq n/2,
    \end{align*}
    which means that by asking $\PGLoc$ to search
    for $n'$ points, we are covering at least half
    of every component that has points in $X$.

    Next, by Theorem~\ref{thm:pgloc}, we know that
    with probability at least $1-\tfrac{\beta}{2}$,
    if the radius of the smallest ball covering $n'$
    points is $r'$, we will get a ball of
    radius $r'' = O(r')$ that covers at least $\tfrac{n'}{2}$
    points. Let $\ball{p}{r''}$ be the returned ball.
    There are two cases to consider: (1) the ball
    covers at least two points from all components
    in $X$; and (2) it covers at most one point
    from at least one component in $X$ (which we call
    ``completely uncovered'' for brevity).
    In the first case, because all points in $X$ satisfy
    Property~\ref{property:DistTwoPoints},
    multiplying $r''$ by $4$ would cover
    all the points in $X$. In the second case, since
    $n'$ is very large, there must be at least two points
    from every completely uncovered component in $X \setminus \ball{p}{r''}$,
    that lie together in some optimal ball containing $n'$
    points, and one point in $X \cap \ball{p}{r''}$ that
    lies in the same optimal ball as those two points.
    Consider any completely uncovered component,
    and let $y,z$ be two such points from it, with $x$ being
    a corresponding point in $X \cap \ball{p}{r''}$. Then
    by triangle inequality,
    \begin{align*}
        \llnorm{p-y} &\leq \llnorm{p-x} + \llnorm{x-y}\\
            &\leq r'' + 2r'\\
            &\leq 3r'',
    \end{align*}
    which also holds for $\llnorm{p-z}$.
    Therefore, multiplying $r''$ by $3$ would cover
    both $y$ and $z$. Since the choice of completely uncovered
    components was arbitrary, this holds for all completely
    uncovered components. Again, since all points in $X$
    satisfy Property~\ref{property:DistTwoPoints},
    multiplying $3r''$ by $4$ (to get $r$) fully covers
    all points in completely uncovered components, and all
    other points in $X$ as well.
    
    Finally, $r' \leq r_{opt}$, which means that
    $r \in O(r_{opt})$. This completes our proof of
    the claim.
\end{proof}

We now describe our strategy for proving the main theorem of this subsection (which will be Theorem~\ref{thm:privateMainThmHardCase} below).
We will begin by establishing a series of claims (first in prose, then with formal statements), and show how they imply our main theorem statement.
We then conclude the subsection by substantiating our claims with proofs.

First, we argue that if the mixture contains at least two components and Step 3 of the algorithm finds a terrific ball to split the data (in the original, non-projected space), then the partition that is induced by this ball will split the data in a laminar fashion.
That is, it will call the algorithm recursively on two disjoint subsets of the data, such that for each component, all samples which were drawn from this component end up in only one of the two subsets.
This will ensure that the two recursive calls operate on valid (sub)instances of the GMM learning problem.

\begin{claim}\label{clm:TerrificBallIsLaminar}
	If the dataset $X$ contains samples from a mixture
    of $k\geq 2$ Gaussians, and Step 3 of $\RPGMP$
    finds a terrific ball over the data, then with
    probability at least $1-\beta$, the
    partition formed by the terrific ball is laminar.	
\end{claim}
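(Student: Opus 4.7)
The plan is to verify both parts of Definition~\ref{def:LaminarPartition} for the partition $A = X \cap \ball{p'}{r'}$, $B = X \setminus \ball{p'}{5r'}$. Conditional on the call to $\PTB$ succeeding (Lemma~\ref{lem:terrificball}), the returned ball is $(5,2\Gamma)$-terrific, so the annulus $X \cap (\ball{p'}{5r'} \setminus \ball{p'}{r'}) = X \setminus (A \cup B)$ contains at most $2\Gamma$ datapoints. Since $\Gamma = O(\epsilon^{-1}\log(T/\beta))$ and the sample complexity of Theorem~\ref{thm:pgme-accurate} forces $2\Gamma \leq n w_{\min}\alpha/(10k\log(1/\alpha))$, condition (ii) of the definition is met and all that remains is condition (i).

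For condition (i), I must exhibit a partition of the component indices so that each component's samples lie on exactly one side. Let $a_i,b_i,m_i$ denote the number of samples of component $i$ landing in $A$, $B$, and the annulus, respectively; I will show that for every $i$, either $a_i=0$ or $b_i=0$. The easy case is $a_i \geq 2$: Proposition~\ref{prop:ABallWithTwoPoints} applied to $\ball{p'}{r'}$ immediately yields that every sample from component $i$ lies in $\ball{p'}{4r'} \subset \ball{p'}{5r'}$, hence $b_i=0$.

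The delicate case, and what I expect to be the main obstacle, is ruling out the straddle $a_i=1$ with $b_i \geq 1$. Since component $i$ produces $\Omega(nw_{\min})$ samples by the fifth regularity property, while $a_i + m_i \leq 1 + 2\Gamma$, we must have $b_i = \Omega(nw_{\min}) \geq 2$. Let $z$ be the unique $A$-sample of $i$ and $x$ any $B$-sample of $i$; then $\|z-x\|_2 \geq 4r'$ while Property~\ref{property:DistTwoPoints} gives $\|z-x\|_2 \leq \sqrt{3\tr(\Sigma_i)}$, so $r' \leq \tfrac{1}{4}\sqrt{3\tr(\Sigma_i)}$. Because $A$ contains $\geq t-2\Gamma = \Omega(nw_{\min})$ datapoints while at most $k$ components can have $a_\ell = 1$, pigeonhole supplies a component $j$ with $a_j \geq 2$; applying Proposition~\ref{prop:ABallWithTwoPoints} to $j$ places every sample from $j$ inside $\ball{p'}{4r'}$, and combining with the lower pairwise-distance bound in Property~\ref{property:DistTwoPoints} yields $\tr(\Sigma_j) = O(\tr(\Sigma_i))$. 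Using Property~\ref{property:DistFromCenter} to bound $\|z-\mu_i\|_2$ and $\|w-\mu_j\|_2$ for any $w \in j$ in the ball, the triangle inequality then produces $\|\mu_i - \mu_j\|_2 = O(\sqrt{\tr(\Sigma_i)})$.

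The final step is to combine this trace-based upper bound with the separation hypothesis $\|\mu_i-\mu_j\|_2 \geq 100(\sigma_i+\sigma_j)\sqrt{k\log n}$ to reach a contradiction; this is where the constant chasing is most delicate. Condition~(\ref{eq:conditionVariances}) gives $\sigma_i \leq \sqrt{\tr(\Sigma_i)/(8\log(nk/\beta))}$, which is precisely the ingredient that lets me translate between the $\sqrt{\tr(\Sigma_i)}$ and $\sigma_i \sqrt{\log n}$ scales, and the generous constant $100$ in the separation hypothesis is there to absorb the constants produced along the chain, furnishing a contradiction for every $k \geq 1$. Stringing together the success probability of $\PTB$ (Lemma~\ref{lem:terrificball}) with the union bound over the pigeonholing step yields the $1-\beta$ guarantee after rescaling $\beta$ appropriately.
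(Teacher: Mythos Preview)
Your handling of the case $a_i\ge 2$ and of the ``few points lost'' count is fine and matches the paper. The gap is in the straddle case $a_i=1$, $b_i\ge 1$, specifically in the last step where you invoke the separation hypothesis.

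You correctly derive $\|\mu_i-\mu_j\|_2 \le C\sqrt{\tr(\Sigma_i)}$ for a constant $C\approx 3$. You then want to contradict $\|\mu_i-\mu_j\|_2 \ge 100(\sigma_i+\sigma_j)\sqrt{k\log n}$. But Condition~(\ref{eq:conditionVariances}) only tells you $\sigma_i^2\log(nk/\beta)\le \tfrac18\tr(\Sigma_i)$, which is a \emph{lower} bound on $\tr(\Sigma_i)$ in terms of $\sigma_i$, not an upper bound. To get a contradiction you need the reverse inequality, and the only one available is the trivial $\tr(\Sigma_i)\le d\sigma_i^2$. Plugging that in, the contradiction would require $100\sqrt{k\log n}>3\sqrt d$, which is not assumed anywhere: the whole point of the separation condition is that it is dimension-free, so when $d\gg k\log n$ your argument fails. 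The translation you cite goes the wrong direction.

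The paper closes the straddle case by a different mechanism that avoids the separation condition entirely: it uses Property~\ref{property:NearestCenter} (each sample is closer to its own center than to any other). Once a fully-captured component $j$ is identified (your pigeonhole step), one argues $\mu_j$ lies in (or near) $\ball{p'}{r'}$, so the lone stray point $z$ from component $i$ satisfies $\|z-\mu_j\|_2=O(r')$, whence $\|z-\mu_i\|_2\le\|z-\mu_j\|_2=O(r')$ by Property~\ref{property:NearestCenter}. Combined with the lower bound $\|z-\mu_i\|_2\ge\sqrt{\tfrac34\tr(\Sigma_i)}$ from Property~\ref{property:DistFromCenter}, this gives a \emph{lower} bound $r'\gtrsim\sqrt{\tr(\Sigma_i)}$, the opposite of your upper bound. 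Then $\ball{z}{4r'}\subset\ball{p'}{5r'}$ already has radius $\ge\sqrt{3\tr(\Sigma_i)}$ and hence contains all of component~$i$'s samples by Property~\ref{property:DistTwoPoints}, forcing $b_i=0$.
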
 

On the other hand, if we still have at least two components and the algorithm is unable to find a terrific ball, we argue that all points lie in a ball of limited radius.
Since the sample complexity of our private PCA algorithm (Lemma~\ref{lem:PCA_AM_style} and Lemma~\ref{lem:data-spectral}) depends on the radius of this ball, this will allow us to bound the sample complexity of this step.

\begin{claim}\label{clm:NoTerrificBallThenBoundedRadius}
	If the dataset $X$ contains samples from a mixture
    of $k\geq 2$ Gaussians with $\maxsigma^2$ denoting
    the largest directional variance of any
    component in the mixture, and Step 3 of $\RPGMP$
    does not find a terrific ball over
    the data, then with probability at least
    $1-\beta$, the radius $r$ of the entire
    instance (found in Step 2) is at most
    $k^{4.53}\sqrt{d}\maxsigma$, that is,
    all points of $X$ lie in a ball of radius
    $k^{4.53}\sqrt{d}\maxsigma$.
\end{claim}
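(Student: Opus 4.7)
By Claim~\ref{clm:pgloc} the radius $r$ from Step 2 satisfies $r\leq c_0 r_{opt}$ for some absolute constant $c_0$, where $r_{opt}$ is the radius of the smallest ball covering $X$. By Property~\ref{property:DistFromCenter} each component's samples lie within $\sqrt{(3/2)\tr(\Sigma_i)}\leq\sqrt{(3/2)d}\maxsigma$ of its mean, so $r_{opt} \leq D + O(\sqrt{d}\maxsigma)$, where $D = \max_{i \neq j}\|\mu_i-\mu_j\|$ is the largest pairwise mean distance among components present in $X$. Thus it suffices to bound $D$ by $k^{4.53}\sqrt{d}\maxsigma$.

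I argue by contrapositive: if $D$ is sufficiently large, a $(21,0)$-terrific ball with threshold $t = n\mw/2$ exists, whereupon Claim~\ref{clm:TerrificRadiusCorrectness_necessary} (applied with $c=5$, so the completeness guarantee fires for $(4c+1)=21$-terrific balls) ensures Step~3 of $\RPGMP$ would have returned a ball, contradicting the hypothesis. To construct such a ball, fix any component $i_0$, enumerate the others in increasing order of $d_m = \|\mu_{i_0}-\mu_{i_m}\|$, and set $r_m=\sqrt{(3/2)\tr(\Sigma_{i_m})}$ together with $u_m = d_m+r_m$ and $l_m = d_m - r_m$. A ball $B(\mu_{i_0}, R)$ yields a valid $(21,0)$-terrific partition exactly when some index $m$ satisfies $u_m\leq R\leq l_{m+1}/21$: the inside holds components $i_0,\dots,i_m$ (contributing $\geq n\mw \geq 2t$ points since $k\geq 2$), the outside holds the remaining components (likewise $\geq t$ points), and the annulus is empty of any component. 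If no such gap exists along the chain, then $l_{m+1}<21 u_m$ for every $m$, giving the recursion $u_{m+1}<21\,u_m+2\,r_{m+1}$ with $u_0 = r_0$, whose naive unrolling yields $u_{k-1} = O(21^k \sqrt{d}\maxsigma)$~--- exponential in $k$.

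To obtain the claimed polynomial bound $k^{4.53}\sqrt{d}\maxsigma$, the chain argument must be refined by (i) allowing the ball's center to vary beyond $\mu_{i_0}$ to arbitrary points in $X$, (ii) exploiting the $(c,2\Gamma)$-terrific slack and the $(1+c/10)$-factor center-accuracy in Lemma~\ref{lem:terrificball}, and (iii) organizing the merges hierarchically so that each level roughly doubles the number of grouped components at the cost of a radius expansion by a factor $\approx (4c+1)(1+c/10)\approx 23.1$ (with $c=5$); with $O(\log_2 k)$ such levels one gets an overall expansion $23.1^{\log_2 k} = k^{\log_2 23.1}\leq k^{4.53}$. \textbf{The main obstacle} is carrying out this amortization rigorously: one must choose good ball centers at each merge level (not merely component means), verify the weight constraints $\geq n\mw/2$ on both sides of the induced partition, propagate the $(c,2\Gamma)$-terrific slack through $O(\log k)$ levels, and track the constants tightly enough to land on the exponent $4.53$ rather than the naive chain's $21^k$.
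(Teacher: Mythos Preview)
Your setup is correct: reducing to a diameter bound via Claim~\ref{clm:pgloc} and Property~\ref{property:DistFromCenter}, and inferring from failure in Step~3 (via Claim~\ref{clm:TerrificRadiusCorrectness_necessary} with $c=5$) that no $21$-terrific ball exists. You also correctly anticipate that the exponent $4.53\approx\log_2 23$ should arise from a halving argument across $\log_2 k$ levels. But you stop exactly where the real content begins, and your suggested refinements (i)--(ii) point in the wrong direction.

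The paper's argument avoids chains anchored at a fixed $\mu_{i_0}$ altogether. Set $s=\sqrt{1.5d}\,\maxsigma$ and define $f(i)$ as the minimum number of balls of radius $is$, \emph{centered at component means}, needed to cover $X$; Property~\ref{property:DistFromCenter} gives $f(1)\leq k$. The key lemma is: if no $21$-terrific ball exists and $f(i)>1$, then $f(23i)\leq f(i)/2$. To prove it, take a minimum cover $\{\ball{\mu_{j_t}}{is}\}_t$. Each $\ball{\mu}{is}$ in the cover holds $\geq n\mw$ points and (since $f(i)>1$) leaves $\geq n\mw$ points outside, so its failure to be $21$-terrific forces the annulus $\ball{\mu}{21is}\setminus\ball{\mu}{is}$ to contain some datapoint $x$. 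That $x$ lies in some other cover ball $\ball{\mu'}{is}$, whence $\|\mu-\mu'\|\leq 22is$ and $\ball{\mu'}{is}\subset\ball{\mu}{23is}$ (and symmetrically). Form the undirected graph on the $f(i)$ cover centers with an edge whenever this mutual containment holds; every vertex has degree $\geq 1$, so there is a dominating set of size $\leq f(i)/2$, and the $23is$-balls around that set still cover $X$. Iterating gives $f(23^{\log_2 k})\leq 1$, i.e., all of $X$ lies in a single ball of radius $k^{\log_2 23}s$, and Claim~\ref{clm:pgloc} converts this to the stated bound on $r$.

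Two places your plan drifts: the factor $23$ is $21+1+1$ (annulus outer radius, plus two triangle-inequality hops through a datapoint to the neighboring mean and back out to its ball boundary), not $(4c+1)(1+c/10)$; and the $(c,2\Gamma)$-slack and center-accuracy of Lemma~\ref{lem:terrificball} play no role here. Once you know no $(21,0)$-terrific ball exists, the argument is purely combinatorial over the \emph{component means}, with no need to vary centers over data points or track privacy noise.
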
 

In the same setting, after the PCA projection, we have that the projected means are close to the original means, and that the resulting data will have a terrific ball (due to the separation between components).

\begin{coro}\label{cor:CloseProjectedCenter}
	Under the same conditions as in
    Claim~\ref{clm:NoTerrificBallThenBoundedRadius},
    we have that if
    $$n\geq O\left(\frac{k^{9.06}d^{3/2}\sqrt{\log(2/\delta)}
        \log(1/\beta)}{w_{\min}\epsilon}\right),$$
    then with probability at least $1-\beta$, we have that
    for each center $\mu_i$, the corresponding projected
    center $\hat\mu_i$ is within distance
    $\leq \tfrac{3\maxsigma}{\sqrt{w_i}}$. As a result,
    under our center-separation condition, the projected
    data $X\Pi$ has a $21$-terrific ball.
\end{coro}

Similar to Claim~\ref{clm:TerrificBallIsLaminar}, we conclude this case by arguing that the partition formed by the resulting terrific ball to split the data (in the projected space) is laminar, resulting in the two recursive calls being made to valid (sub)instances of the problem.

\begin{claim}\label{clm:ProjectedDataTerrificBallLaminar}
	Under the same conditions as in
    Claim~\ref{clm:NoTerrificBallThenBoundedRadius},
    with probability at least $1-\beta$,
    the partition formed by the terrific ball found
    on the projected data $X\Pi$ is laminar.
\end{claim}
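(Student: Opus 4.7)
The plan is to exploit two facts in the projected space: each Gaussian component's projected points concentrate tightly around its projected center, and the projected centers remain far apart. Together these imply that any ball whose annulus is nearly empty must align its boundary with the gaps between components, hence cleanly partitioning them.

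First, I would bound the spread of projected points. For each component $i$, if $x \sim \cN(\mu_i, \Sigma_i)$ then $\Pi x \sim \cN(\Pi\mu_i, \Pi\Sigma_i\Pi)$, and since $\Pi$ is a rank-$k$ orthogonal projection, $\tr(\Pi\Sigma_i\Pi) \leq k\sigma_i^2$ while $\llnorm{\Pi\Sigma_i\Pi} \leq \sigma_i^2$. By the Hanson-Wright inequality (Lemma~\ref{lem:HW}), with probability $\geq 1-\beta/k$ every projected point from component $i$ lies within distance $\rho_i = O(\sigma_i\sqrt{k+\log(nk/\beta)})$ of $\hat\mu_i := \Pi\mu_i$. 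Combining this with Corollary~\ref{cor:CloseProjectedCenter} and the separation hypothesis of Theorem~\ref{thm:pgme-accurate}, I would deduce that $\llnorm{\hat\mu_i - \hat\mu_j}$ exceeds $\rho_i + \rho_j$ by a large constant factor for every $i \neq j$.

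Next, the existence of a $21$-terrific ball (Corollary~\ref{cor:CloseProjectedCenter}), together with Claim~\ref{clm:TerrificRadiusCorrectness_necessary} and Lemma~\ref{lem:terrificball}, guarantees that the call to $\PTB$ on $X\Pi$ returns a $(5,2\Gamma)$-terrific ball $\ball{p'}{r'}$. I then show that no component straddles this ball. Suppose for contradiction that component $i$ has a projected point inside $\ball{p'}{r'}$ and another outside $\ball{p'}{5r'}$; then its projected diameter exceeds $4r'$, which by Step~1 forces $\rho_i \geq 2r'$. In this regime, the one-dimensional projection of component $i$'s points along the direction from $p'$ to $\hat\mu_i$ is a Gaussian of standard deviation at most $\sigma_i$ whose realizations must span a radial interval of length at least $4r'$. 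Gaussian anti-concentration on the corresponding slab then forces a constant fraction of the $\geq w_i n/2$ projected points into the radial region corresponding to the annulus $\ball{p'}{5r'}\setminus \ball{p'}{r'}$, contradicting the annulus bound of $2\Gamma$ whenever $n$ is large enough that $w_i n \gg 2\Gamma$.

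With no component straddling, each component's projected points (except for at most $2\Gamma$ stragglers in the annulus) lie either entirely in $\ball{p'}{r'}$ or entirely outside $\ball{p'}{5r'}$, so the partition $(A,B)$ from Step~7 of $\RPGMP$ satisfies Definition~\ref{def:LaminarPartition}, using that $2\Gamma = O\big((\log T + \log(1/\beta))/\eps\big)$ is $\leq n\mw\alpha/(10k\log(1/\alpha))$ under the sample complexity hypothesis of Theorem~\ref{thm:pgme-accurate}. The hard part will be the anti-concentration step: making it rigorous requires a careful one-dimensional Gaussian CDF estimate showing that an interval of length comparable to a few standard deviations contains a constant fraction of the mass, and then converting that fraction into enough annulus points to beat $2\Gamma$; a slightly cleaner alternative is a direct covering argument showing the cluster ellipsoid $\ball{\hat\mu_i}{\rho_i}$ either is contained in $\ball{p'}{r'}$ or is disjoint from $\ball{p'}{5r'}$, which follows from $\rho_i \ll r'$ once one verifies that the largest-terrific-ball mode of $\PTB$ returns an $r'$ comparable to the largest cluster radius.
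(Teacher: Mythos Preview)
Your ``cleaner alternative'' at the very end is precisely the paper's argument, and it is the right one. The paper's proof uses exactly the observation you flag: because $\PTB$ is called with $largest = {\tt TRUE}$, Claim~\ref{clm:TerrificRadiusCorrectness_necessary} together with Corollary~\ref{cor:CloseProjectedCenter} guarantees that the returned radius $r'$ is at least the radius of the $21$-terrific ball around the maximum-variance cluster, hence $r' \geq 4\sqrt{k\sigma_{\max}^2\log(n)}$. Since every cluster's projected points lie within distance $\sqrt{k\sigma_i^2\log(n)} \leq r'/4$ of its projected center, any cluster that places a point in $\ball{p'}{r'}$ must lie entirely in $\ball{p'}{4r'}$; the near-emptiness of the annulus $\ball{p'}{5r'}\setminus\ball{p'}{r'}$ then forces the whole cluster into $\ball{p'}{r'}$, and none of its points can sit outside $\ball{p'}{5r'}$. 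This is a direct covering argument with no anti-concentration needed.

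Your primary route via anti-concentration is substantially harder and has real gaps as written. The annulus $\ball{p'}{5r'}\setminus\ball{p'}{r'}$ is not a slab, so projecting onto a single direction does not convert it into a one-dimensional interval; you would need a more careful geometric argument (as in the proof of Claim~\ref{clm:OneComponentNoTerrificBall}, which union-bounds over a net of directions). Moreover, from $\rho_i \geq 2r'$ you only get $r' \lesssim \sigma_i\sqrt{k\log n}$, so the putative interval could be many standard deviations wide relative to the one-dimensional projected variance $\leq \sigma_i^2$, and a constant-fraction anti-concentration bound does not follow directly---the mass in such an interval could be anywhere from nearly zero to nearly one depending on its location. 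The unimodality-style argument could be salvaged, but it is considerably more work than simply invoking the lower bound on $r'$ that the $largest$ flag hands you. I would promote your final sentence to the main argument and drop the anti-concentration detour.
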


Finally, if our dataset is generated from a single component, then we will not locate a terrific ball, and the recursion ceases.
\begin{claim}\label{clm:OneComponentNoTerrificBall}
	For any $i$, if the dataset $X$ is composed of
    at least
    \[\frac{400 w_i d \log(1/\beta w_{\min})}{w_{\min}} + \Omega\left(\frac{1}{\varepsilon}\left(\log \log \left((R + \sqrt{d}\maxsigma)/\minsigma\right) + \log(1/\beta)\right)\right)\]
    points drawn only from $\cN(\mu_i,\Sigma_i)$,
    then with probability at least $1-\beta$,
    neither Step 3 nor Step 6 will locate a terrific ball. %$X$ nor $X\Pi$ have a terrific ball.
\end{claim}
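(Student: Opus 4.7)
The plan is to proceed by contradiction. Assume the subroutine {\tt PTerrificBall} returns a $(5,2\Gamma)$-terrific ball $\ball{p}{r}$ at either Step 3 or Step 6. Under the stated sample complexity, $t - 2\Gamma = n\mw/2 - 2\Gamma \geq 2$, so $\ball{p}{r}$ contains at least two data points and at least one data point lies outside $\ball{p}{5r}$. Moreover, Properties 1--5 of the single-Gaussian data hold with probability $\geq 1-O(\beta)$ under the sample-size hypothesis. The contradiction in each case comes from the tight two-sided concentration of pairwise distances between samples drawn from a single Gaussian, which forbids the factor-$5$ gap that a terrific ball demands.

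\textbf{Step 3 (original space).} Property 2 places every pairwise distance in $[\sqrt{3\tr(\Sigma_i)/2},\sqrt{3\tr(\Sigma_i)}]$. Pairing an interior point $x\in \ball{p}{r}$ with an exterior point $y\notin\ball{p}{5r}$ gives $4r\leq\|x-y\|_2\leq\sqrt{3\tr(\Sigma_i)}$, hence $r\leq\sqrt{3\tr(\Sigma_i)}/4$. But then any two interior points $x,x'\in\ball{p}{r}$ satisfy $\|x-x'\|_2\leq 2r\leq\sqrt{3\tr(\Sigma_i)/4}<\sqrt{3\tr(\Sigma_i)/2}$, contradicting the Property~2 lower bound. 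This closes Step 3.

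\textbf{Step 6 (projected space) and main obstacle.} We repeat the argument for the projected points $\Pi X_j$ provided we can show their pairwise distances concentrate within a ratio strictly below $\sqrt{2}$. Since $\Pi(X_a-X_b)\sim\cN(0,2\Pi\Sigma_i\Pi^T)$, Hanson-Wright (Lemma~\ref{lem:HW}) gives two-sided deviation of $\|\Pi(X_a-X_b)\|_2^2$ from $2\tr(\Pi\Sigma_i\Pi^T)$ bounded by $O(\|\Pi\Sigma_i\Pi^T\|_F\sqrt{\log(n/\beta)}+\|\Pi\Sigma_i\Pi^T\|_2\log(n/\beta))$; once this is a strict constant fraction of $\tr(\Pi\Sigma_i\Pi^T)$, the same geometric contradiction as in Step 3 rules out a $5$-terrific ball. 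The hard part is certifying that $\Pi\Sigma_i\Pi^T$ inherits the bulk condition~(\ref{eq:conditionVariances}), i.e.\ that its stable rank remains $\Omega(\log(n/\beta))$. This is where the $O(w_i d/w_{\min})$ term in the sample complexity is used: it ensures via Property~3 that $X^TX/n$ is close to $\mu_i\mu_i^T+\Sigma_i$ in spectral norm, while the privacy-noise matrix $N$ has spectral norm negligible compared to $\|X^TX\|_2$ (using the radius $r$ found in Step 2 to control its variance). Consequently the top-$k$ PCA $\Pi$ of $X^TX+N$ behaves as the top-$k$ PCA of $\mu_i\mu_i^T+\Sigma_i$, which preserves a constant fraction of $\tr(\Sigma_i)$ while keeping $\|\Pi\Sigma_i\Pi^T\|_2\leq\|\Sigma_i\|_2$. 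Plugging these bounds into the Hanson-Wright estimate yields the required sub-$\sqrt{2}$ concentration of projected pairwise distances, after which the argument of Step 3 closes Step 6 identically.
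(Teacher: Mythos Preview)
Your Step~3 argument is correct and in fact cleaner than the paper's: the two-sided pairwise-distance concentration from Property~\ref{property:DistTwoPoints} directly forbids the factor-$2$ gap between $\|x-x'\|_2\leq 2r$ and $\|x-y\|_2\geq 4r$ that a $5$-terrific ball forces.

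Your Step~6 argument, however, has a genuine gap. The crucial assertion that $\Pi$ ``preserves a constant fraction of $\tr(\Sigma_i)$'' cannot hold: $\Pi$ is a rank-$k$ projection, so
\[
\tr(\Pi\Sigma_i\Pi^T)=\tr(\Sigma_i\Pi)\leq k\|\Sigma_i\|_2,
\]
while condition~(\ref{eq:conditionVariances}) gives $\tr(\Sigma_i)\geq 8\|\Sigma_i\|_2\log(nk/\beta)$. Hence the fraction of the trace retained is at most $k/(8\log(nk/\beta))$, which is $o(1)$ for fixed $k$. Equivalently, the stable rank of $\Pi\Sigma_i\Pi^T$ is at most $k$, so when $k\ll\log(n/\beta)$ the Hanson--Wright bound on $\|\Pi(X_a-X_b)\|_2^2$ yields fluctuations of the same order as $\tr(\Pi\Sigma_i\Pi^T)$ (think of a spherical Gaussian projected to $k=2$ dimensions: pairwise squared distances are $\chi^2_2$-distributed and can be arbitrarily close to zero over $\binom{n}{2}$ pairs). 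The sub-$\sqrt 2$ concentration you need simply does not hold in the projected space, and no sample-size increase fixes this because the obstruction is the rank of $\Pi$, not the noise. A secondary issue is that $\Pi$ depends on the very points $X_a,X_b$ you apply it to, so $\Pi(X_a-X_b)$ is not literally Gaussian; one can repair this via the DP-as-stability argument the paper invokes elsewhere, but you would still hit the rank obstruction.

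The paper sidesteps this entirely with a different, dimension-agnostic argument that handles Steps~3 and~6 uniformly. A terrific ball $\ball{p}{r}$ (in either the original or projected space) induces, along the direction $v$ through $p$, a one-dimensional picture: an interval of length $2r$ with $\geq n\mw/3$ points, flanked by nearly-empty intervals of length $\Theta(r)$, and then $\geq n\mw/3$ more points beyond. But the projection of a single Gaussian onto any line is a univariate Gaussian, hence unimodal: for any three consecutive equal-length intervals the middle one has mass at least the minimum of the outer two, so this ``many--few--many'' pattern is impossible for the true measure. The proof finishes by a Chernoff bound on each interval plus a union bound over a net of directions on $S^{d-1}$ and over $O(1/\lambda)$ intervals per direction, which is where the $O(w_i d\log(1/\beta\mw)/\mw)$ sample requirement actually enters.
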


Putting together all of these claims and the entire
discussion, we have that following theorem.
\begin{theorem}\label{thm:privateMainThmHardCase}
    Algorithm~\ref{alg:LearnMixture} satisfies
    $\big(8\epsilon\sqrt{2k\log(1/\delta)}), 8k\delta\big)$-DP,
    and under the center-separation of~\eqref{eq:requiredSeparation},
    with probability $\geq 1- 12k\beta$, it returns $k$-subsets
    which are laminar with the $k$ clusters, while omitting
    no more than $\tfrac{\alpha w_{\min} n}{20\log(1/\alpha)}$
    % \geq 2k\Gamma = O(\frac k \epsilon \log(\log(R/\sigma_{\min})/\beta))$ 
	of the datapoints, provided that
	\begin{align*} 
		n &= \tilde\Omega\left( \left(\frac
            {\sqrt d\log(d/\beta)\log(1/\delta)\log\log((R+\sqrt{d}\maxsigma)/\minsigma)}
            {w_{\min}\epsilon}\right)^{\frac 1 {1-a}}\right)
            \textrm{ for an arbitrary constant }a>0 
        \cr n &= \Omega\left( \frac{k^{9.06}d^{3/2}\sqrt{\log(2/\delta)}
            \log(1/\beta)}{w_{\min}\epsilon}\right)  
		\cr n &= \Omega\left(\frac{d \log(1/\beta w_{\min})}{w_{\min}}\right)
        \cr n &= \Omega\left(
            \frac{k\log(1/\alpha)\log(\log((R+\sqrt{d}\maxsigma)/\minsigma)/\beta)}
            {\alpha\epsilon w_{\min}}\right).
	\end{align*}
\end{theorem}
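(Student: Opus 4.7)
The argument splits into a privacy claim and an accuracy claim. The privacy bound is inherited directly from Theorem~\ref{thm:HardcaseAlgPrivacy}: each level of recursion invokes one $(\tfrac{\varepsilon}{2},\delta)$-DP call to $\PGLoc$, two $(\varepsilon,\delta)$-DP calls to $\PTB$, and one Gaussian mechanism release for the PCA perturbation, and any single datapoint participates in at most $k$ of the at most $2k$ nodes of the binary recursion tree before being localized to a leaf; advanced composition (Lemma~\ref{lem:composition}) then yields the stated $\bigl(8\varepsilon\sqrt{2k\log(1/\delta)}, 8k\delta\bigr)$-DP guarantee.

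The accuracy part I would prove by induction on the parameter $k$ passed to $\RPGMP$, with the inductive hypothesis asserting that the call returns a laminar partition of the true components present in its input, omitting at most $\tfrac{\alpha w_{\min} n}{20 k \log(1/\alpha)}$ points at each recursion level. For the base case $k=1$, the input contains samples from a single component; Claim~\ref{clm:OneComponentNoTerrificBall} ensures neither Step 3 nor Step 6 produces a terrific ball, so the algorithm falls through to Step 8 and returns the entire (ball-truncated) input as one cluster, which is trivially laminar. For the inductive step with $k \geq 2$, Claim~\ref{clm:pgloc} first certifies that the ball $\ball{p}{r}$ identified in Step 2 contains all of $X$ with $r = O(r_{\mathrm{opt}})$. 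Then the argument branches: if Step 3 finds a $5$-terrific ball in $X$, Claim~\ref{clm:TerrificBallIsLaminar} gives that the induced partition of $X$ into $A$ and $B$ is laminar; otherwise Claim~\ref{clm:NoTerrificBallThenBoundedRadius} bounds the diameter of $X$ by $k^{4.53}\sqrt{d}\maxsigma$, which is exactly the bound needed to apply Corollary~\ref{cor:CloseProjectedCenter}, yielding that the private PCA projection $X\Pi$ admits a $21$-terrific ball (hence a $5$-terrific ball), so Step 6 succeeds and Claim~\ref{clm:ProjectedDataTerrificBallLaminar} supplies laminarity of the resulting partition. In either branch, the recursive calls receive disjoint subsets of $X$ corresponding to strictly fewer than $k$ active components, so the inductive hypothesis applies.

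To assemble the overall success probability, I would union-bound across the $O(k)$ sub-instances in the recursion tree. The data-regularity properties 1--5 listed before Proposition~\ref{prop:ABallWithCenterAndPoint} hold simultaneously for all $k$ clusters with probability at least $1 - 5k\beta$; each of the $O(k)$ invocations of $\PGLoc$ and $\PTB$ fails with probability $O(\beta)$ by Claims~\ref{clm:pgloc}, \ref{clm:TerrificBallIsLaminar}, \ref{clm:NoTerrificBallThenBoundedRadius}, \ref{clm:ProjectedDataTerrificBallLaminar}, and \ref{clm:OneComponentNoTerrificBall} together with Corollary~\ref{cor:CloseProjectedCenter}; and the Laplace noise added to the count in Step 2 is bounded by $\tfrac{2}{\varepsilon}\log(2/\beta)$ at every level with the same probability. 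Summing, the total failure probability is at most $12k\beta$, matching the theorem statement.

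The main obstacle, beyond organizing the case analysis cleanly, is the bookkeeping of omitted points: Step 2's restriction to $\ball{p}{r}$ may discard the points counted off by the Laplace slack $\tfrac{n w_{\min}}{20}$, and each $\PTB$ call may discard up to $2\Gamma$ points in the annulus between the inner and outer terrific radii. With at most $O(k)$ such deletions across the recursion tree, I must verify that the four sample-size lower bounds on $n$ in the statement are exactly what is needed to (i) satisfy the $t = \tfrac{n w_{\min}}{2}$ threshold required by Lemma~\ref{lem:terrificball} at every level, (ii) ensure the $k^{9.06}d^{3/2}$ term from Corollary~\ref{cor:CloseProjectedCenter} controls projection-induced mean drift, and (iii) guarantee the cumulative omitted mass stays below the allowed $\tfrac{\alpha w_{\min} n}{20 \log(1/\alpha)}$. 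This careful matching of per-level losses against the final omission budget is the one place where the proof cannot be reduced to straightforwardly citing the preparatory claims.
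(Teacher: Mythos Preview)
Your proposal is correct and mirrors the paper's argument: privacy via Theorem~\ref{thm:HardcaseAlgPrivacy}, accuracy by a recursive case analysis that invokes Claims~\ref{clm:pgloc}--\ref{clm:OneComponentNoTerrificBall} and Corollary~\ref{cor:CloseProjectedCenter} at each node, a union bound over the $\leq 2k$ nodes of the recursion tree (each failing with probability $\leq 6\beta$, giving $12k\beta$), and a final tally of omitted points as $O(k\Gamma)$. One small correction: Step~2 does \emph{not} discard any datapoints---Claim~\ref{clm:pgloc} guarantees the ball $\ball{p}{r}$ contains all of $X$, so the only omissions are the $\leq 2\Gamma$ annulus points per $\PTB$ call, summed over $\leq k$ non-leaf nodes; this is exactly the accounting the paper uses to match the fourth sample-size bound.
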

The proof of Theorem~\ref{thm:privateMainThmHardCase}
follows from all the above mentioned properties of the
data and the claims listed above. It argues that in
each level of the recursion we are forced to make a
laminar partition with probability at least $1-6\beta$
(conditioned on the success in the previous levels)
until we reach a subset (of sufficient
size) which is contained in a single cluster, then we
halt. Since this implies that we ultimately return $k$
clusters, it means that there are at most $2k$ nodes
in the recursion tree, so the failure probability adds
up to $12k\beta$. The sample complexity bounds are the bounds required
for all claims and properties 1-5, where the last bound
guarantees that the total number of points omitted in the
overall execution of the algorithm doesn't exceed
$\frac {n\mw\alpha}{20\log (1/\alpha)}$ (at most
$O(k\Gamma)=O(\frac k \epsilon \log(\log((R+\sqrt{d}\maxsigma)/\minsigma)/\beta))$,
since the recursion tree has at most $k$ non-leaf nodes).
%By Lemmas~\ref{lem:meanofgaussiansubset}
%and~\ref{lem:spectralnormofgaussians}, this implies
%that even if all of the points were removed from the
%component with the smallest mixing weight, the overall
%shift in the estimated mean and variance of each
%component is bounded by $\alpha$.

\begin{proof}[Proof of Claim~\ref{clm:TerrificBallIsLaminar}]
    Let $\ball{p}{r}$ be the ball returned by Step 3 of
    the $\RPGMP$ algorithm. Let $x,y$ be two datapoints
    that lie inside the ball and belong to the same
    component $i$. It follows from Proposition~\ref{prop:ABallWithTwoPoints}
    that all datapoints from cluster $i$ lie inside
    $\ball{p}{4r}$, but since the annulus
    $\ball{p}{5r}\setminus \ball{p}{r}$ is effectively empty
    (contains significantly less than
    $\tfrac{n\mw\alpha}{20k\log(1/\alpha)}$ points), it should
    be the case that
    (almost) all of these datapoints lie in $\ball{p}{r}$
    itself, and in particular, no point from component $i$ lies
    outside of $\ball{p}{5r}$.

    It follows that any component $i$ with at least two datapoints
    that fall inside $\ball{p}{r}$ belongs to one side of the
    partition, and moreover, since the ball contains $>\tfrac{n\mw}{4}$
    datapoints, there exists at least one component $i$, such that all of
    its datapoints lie inside $\ball{p}{4r}$.
    
    Next, suppose that
    for some $j\neq i$, some datapoint $z$ drawn from the $j^{\text{th}}$
    component lies inside $\ball{p}{r}$. It follows
    %from Proposition~\ref{prop:ABallWithCenterAndPoint}
    that both $z$ and $\mu_i$ lie inside $\ball{p}{r}$, and so
    \[2r \geq \|z-\mu_i\|_2\geq \|z-\mu_j\|_2\geq \sqrt{\frac 3 4 \tr(\Sigma_j)},\]
    thus $r\geq \sqrt{\frac{3}{16}\tr(\Sigma_j)}$. Thus the ball
    of radius $4r$ centered at $z$, which is fully contained
    inside $\ball{p}{5r}$, has radius $\geq \sqrt{3\tr(\Sigma_j)}$.
    This is large enough to include all datapoints from cluster
    $j$ as well. Again, the fact that the annulus
    $\ball{p}{5r}\setminus\ball{p}{r}$ is effectively empty implies
    that (effectively) all points from cluster $j$ also belong to
    $\ball{p}{r}$.

    Lastly, note that $\geq nw_{\min}/4$ datapoints are left
    outside $\ball{p}{5r}$. This implies that at least some
    component is left outside of $\ball{p}{5r}$. Therefore, the
    partition $(A,B)$ formed by the terrific ball is a laminar
    partition of the dataset $X$.

    The failure probability of this claim is $\beta$ using
    Lemma~\ref{lem:terrificball} because the success of the
    event in this claim rests on $\PTB$ functioning correctly.
\end{proof}

\begin{proof}[Proof of Claim~\ref{clm:NoTerrificBallThenBoundedRadius}]
    Based on Property~\ref{property:DistFromCenter}, for
    any datapoint in $X$ and its respective center we have
    that their distance is at most
    $s\stackrel{\rm def}=\sqrt{1.5d}\sigma_{\max}$. Now,
    since we know that our procedure that looks for a
    $5$-terrific ball failed, then by
    Claim~\ref{clm:TerrificRadiusCorrectness_necessary} we
    know that the data holds no $21$-terrific ball. 

    Consider the following function $f:\N \rightarrow [k]$,
    \[f(i) = \min\limits_{\substack{T \subseteq [k],\\
        X \subseteq \bigcup\limits_{j \in T}{\ball{\mu_j}{is}}}}
        \abs{T},\]
    that is, the minimum number of balls of
    radius $i\cdot s$ centered at some $\mu_j$ that are required
    to cover all datapoints in $X$. The above paragraph implies
    that $f(1)\leq k$, so it is enough to show that
    $f(k^{\log_2(23)})\leq 1$, as it would imply that a ball of
    radius $O(k^{\log_2(23)}\sqrt{d}\maxsigma)$ covers the entire
    instance. We argue that if there exists
    no $21$-terrific ball in $X$, then for all $i$ such
    that $f(i)>1$, then $f(23i) \leq f(i)/2$, which by
    induction leads to
    $f(23^{\log_2(k)}) = f(k^{\log_2(23)}) \leq 1$.

    Fix $i$. Assume $f(i)>1$, otherwise we are done. By
    definition, there exists a subset $\{\mu_{j_1}, \mu_{j_2},..., \mu_{j_{f(i)}}\}$
    of the $k$ centers, such that $X$ is contained in
    $\bigcup_t \ball{\mu_{j_t}}{is}$. Pick any center
    $\mu$ in this subset. We know that $\ball{\mu}{is}$
    is not a $21$-terrific ball. Since it holds enough points (at least
    $n\mw$) and leaves out enough points (since $f(i)>1$),
    it must be the case that $\ball \mu {21is}\setminus \ball{\mu}{is}$
    is non-empty, that is, there exists a point $x \in X$
    that resides in $\ball{\mu}{21is}\setminus\ball{\mu}{is}$. This means
    that $\ball{\mu}{22is}$ holds the center $\mu'$ of the
    ball $\ball{\mu'}{is}$ that covers $x$, and therefore
    $\ball{\mu'}{is} \subset \ball{\mu}{23is}$. Note that
    by symmetry it also holds that $\mu \in \ball{\mu'}{22is}$
    and so  $\ball{\mu}{is} \subset \ball{\mu'}{23is}$.
    
    Now, draw a graph containing $f(i)$ nodes (one node per center
    $\mu_{j_t}$), and connect two nodes $\mu$ and $\mu'$ if
    $\ball{\mu}{is} \subset \ball{\mu'}{23is}$. This is a
    graph in which each node has degree $\geq 1$ because
    there is no $21$-terrific ball centered at the corresponding
    mean, and therefore,
    has a dominating set of size $\leq f(i)/2$. Hence, $X$
    is covered by balls of radius $23is$ centered at each
    node in this dominating set, implying that $f(23i)\leq f(i)/2$. 

    Finally, by Claim~\ref{clm:pgloc}, we have that
    the radius $r$ found is Step 2 is in
    $O(k^{\log_2(23)}\sqrt{d}\maxsigma) \in O(k^{4.53}\sqrt{d}\maxsigma)$.

    As before, the failure probability of this claim is $\beta$
    using Lemma~\ref{lem:terrificball}.
\end{proof}

\begin{proof}[Proof of Corollary~\ref{cor:CloseProjectedCenter}.]
    Under the same conditions as in
    Claim~\ref{clm:NoTerrificBallThenBoundedRadius},
    it holds that the added noise is such that
    $\|N\|_2 \leq \tfrac {2r^2\sqrt{d \ln(2/\delta)}\log(1/\beta)} {\epsilon}$ with probability $\geq 1 - \beta$.
    Using the fact that $r = k^{4.53}\sqrt{d}\maxsigma$,
    Lemmata~\ref{lem:PCA_AM_style} and~\ref{lem:data-spectral},
    along with our bound on $n$ (hence, on $n_i$), imply
    that with probability at least $1-\beta$, for any $i$,
    \begin{align*}
        \|\bar{\mu}_i- \Pi\bar{\mu}_i\| &\leq
                \sqrt{\frac 1 {n_i} \|A-C\|^2} +
                \sqrt{\frac {2r^2\sqrt {d \ln(2/\delta)}\log(1/\beta)}
                {\epsilon n_i}}\\
            &\leq \frac{1}{\sqrt{w_{i}}} \left( 4\sqrt{2}\maxsigma +
                \maxsigma \right)\\
            &\leq 7 \frac{\sigma_{\max}}{\sqrt{w_{i}}}\\
            &\leq 7 \frac{\sigma_{\max}}{\sqrt{w_{\min}}}.
    \end{align*}

    Without loss of generality, assume cluster $1$
    is the Gaussian of largest variance. It follows
    that for all $i\neq 1$, we have
    \[ \|\Pi\mu_1-\Pi\mu_i \|_2 \geq
        100(\sigma_1+\sigma_i)\left(\sqrt {k\log(n)} +
        \frac{1}{\sqrt{w_1}} + \frac{1}{\sqrt{w_i}}\right) -
        7\left(\frac{\sigma_{1}}{\sqrt{w_1}} +
        \frac{\sigma_{1}}{\sqrt{w_i}} \right)
        \geq 100 \sigma_1\sqrt{k\log(n)}.\]
    Yet, similar to the analysis in~\cite{VempalaW02, AchlioptasM05},
    we have that $\|\Pi (x-\mu_i) \|_2 \leq \sqrt {k \sigma_i^2 \log(n)}$
    for each datapoint and its respective cluster
    $i$.\footnote{In fact, due to the fact that
    $\Pi$ was derived via in a differentially private
    manner, is is easier to argue this than in
    the~\cite{AchlioptasM05} paper, see the
    following~\href{http://windowsontheory.org/
    2014/02/04/differential-privacy-for-measure-concentration}{blog post}.} 
    Roughly, the argument is that, if we draw a sample from a Gaussian in a $k$ dimensional space, its $\ell_2$ distance from its mean is $O(\sigma \sqrt{k \log n})$.
    The same argument holds if the sample is drawn in a $d$ dimensional space and then projected into $k$ dimensions, as long as the projection is independent of the sample.
    While the projection is dependent on the data, the fact that it was computed in a differentially private manner allows us to act as though it is independent.

    This implies that all points that belong to
    cluster $1$ fall inside the ball
    $\ball{\Pi\mu_1}{4\sqrt{k\sigma_{\max}^2\log(n)}}$,
    whereas any point from a different cluster
    must fall outside the ball
    $\ball{\Pi\mu_1}{90\sqrt{k\sigma_{\max}^2\log(n)}}$,
    with each side containing at least
    $\tfrac {n\mw}{2}$ datapoints.
\end{proof}

\begin{proof}[Proof of Claim~\ref{clm:ProjectedDataTerrificBallLaminar}.]
    Again, as in the proof of Corollary~\ref{cor:CloseProjectedCenter}, we leverage the fact that using
    the projection, we have that
    $\|\Pi (x-\mu_i)  \|_2\leq \sqrt {k \sigma_i^2 \log(n)}$
    for each datapoint $x$ belonging to
    cluster $i$. Note that we run the {\tt PTerrificBall}
    procedure using the flag $largest = {\tt TRUE}$.
    As a result, based on
    Claim~\ref{clm:TerrificRadiusCorrectness_necessary}
    and Corollary~\ref{cor:CloseProjectedCenter},
    we are guaranteed that
    the radius of the ball returned is at least as large
    as the radius of the terrific ball that holds all
    points from the cluster having the largest directional
    variance $\maxsigma^2$ (without loss of generality,
    let that be cluster $1$). In other words, we have
    that the radius of the terrific ball is at least
    $4\sqrt{k\sigma_{\max}^2\log(n)}$.
    
    Therefore, for any cluster $i$,
    the radius is large enough so that the ball of
    radius $4r$ holds all datapoints from cluster $i$.
    Again, the annulus $\ball{p}{5r}\setminus\ball{p}{r}$
    holds very few points, and so almost all
    datapoints of cluster $i$ lie inside $\ball{p}{r}$,
    and none of its points lie outside of $\ball{p}{5r}$.

    Again, the failure probability of this claim is $\beta$
    using Lemma~\ref{lem:terrificball}.
\end{proof}

\begin{proof}[Proof of Claim~\ref{clm:OneComponentNoTerrificBall}.]
    At a high level, the proof goes as follows.
    Suppose our dataset was generated according to a single Gaussian, and that Step 3 or 6 produces a terrific ball.
    Perform the following thought experiment: take the data, apply some univariate projection, and partition the line into 5 intervals (corresponding to the ball of radius $r$/diameter $2r$, the two surrounding intervals of diameter $4r$, and the two infinite intervals which extend past those points).
    For the ball to be terrific, there must be a projection such that the first interval has significantly many points, the second and third intervals have almost no points, and the last two intervals (collectively) have many points. 
    Given the structure of a Gaussian, we know that no such projection and decomposition could exist when considering the \emph{true} probability measure assigned by the Gaussian distribution, since the distribution is unimodal with quickly decaying tails.
    However, with some small probability, this could happen with respect to the empirical set of samples.
    For a given projection, we bound this probability by partitioning the line into intervals of appropriate width, and then applying a Chernoff and union bound style argument.
    We bound the overall probability by taking a union bound over a net of possible directions.
    A more formal argument follows.

    First, due to the proof of Claim~\ref{clm:TerrificBallIsLaminar}
    we know that if a terrific ball $\ball p r$ is
    found in Step 3 
    %\bigvnote{Step 3?} 
    then its radius is large enough
    to hold %\bigvnote{almost?}\bigonote{Here I mean all.} 
    all datapoints from the same cluster.
    Therefore, in the case where all datapoints are
    taken from the same Gaussian we have that no points
    lie outside of $\ball p {5r}$.
    
    Next, we argue something slightly stronger. Note
    that a $4$-terrific ball, either on the data on
    over the projected data, implies that there's some
    direction (unit-length vector) $v$~--- namely,
    a line going from the origin %\bigonote{Clearer?} 
    through the center of the ball~---
    such that on projecting the data onto $v$ we have an
    interval of length $2r$ 
    %\bigvnote{2r?} 
    holding at least $\tfrac{n\mw}{3}$
    datapoints, surrounded by intervals of length
    $3r$ of both sides with very few points (quantified by the guarantees of Lemma~\ref{lem:terrificball}),
    and the remainder of the line also holds $\tfrac{n\mw}{3}$
    datapoints.
    %\bigvnote{How do we conclude this? Is
    %there a lemma that we are referencing?}\bigonote{Just simple geometry, project all points to ${\rm span}(v)$. If it is simpler for you, draw a 2D ball with many point inside, empty annulus and many points outside the annulus...}. 
    (The same holds for a ball for the
    projected points since this ball lies in some
    $k$-dimensional subspace.) However, since the
    datapoints are all drawn from the same Gaussian,
    its projection over $v$ also yields a (one-dimensional)
    Gaussian, with the property that for any three
    ordered intervals from left to right of the same
    length $I_1, I_2, I_3$, the probability mass held
    in $I_2$ is greater than the minimum between the probability held in $I_1$ and the probability mass held in $I_3$.
    %mass held in either $I_1$ or at $I_3$ \bigvnote{Not
    %true when $I_2$ does not include the mean?}\bigonote{Yes true. Read the statement carefully again: $\Pr[I_2] \geq \min\{ \Pr[I_1], \Pr[I_3]$. If the mean is smaller then the leftmost point of $I_2$ then $I_3$ has the smallest probability mass, symmetrically for $I_2$ if the mean is greater then the rightmost point of $I_2$.}. 
    We thus
    have that a necessary condition for the existence
    of such terrific ball is that there exists a
    direction $v$ and an interval $\ell$ which should
    have a probability mass of at least $w_{\min}/10$
    %\bigvnote{umm, how do we say something about the
    %probability mass?}\bigonote{For the actual Gaussian, no such $I_1, I_2, I_3$ exist. To have a terrific ball, such $I_1, I_2, I_3$ must exist on the \emph{empirical draws from the distribution}, thus $I_2$ is ill-represented.},
    yet contains less than $w_{\min}/20$ fraction of
    the empirical mass. 
    %\bigvnote{This could use some
    %more explanation to translate from $\ball{p}{r}$,$\ball{p}{5r}$
    %notations to the interval notation.}\bigonote{I don't necessarily disagree, but I leave it to you...}
	
	We now apply classical reasoning: a Chernoff and
    union argument. Fix a cluster $i$. If we
    are dealing with a set of datapoints drawn only
    from the $i^{\text{th}}$ Gaussian, then this set has no
    more than $2w_i n$ points, and so the ratio
    $\lambda = \tfrac{\mw}{20w_i}$ represents
    the fraction of points that ought to fall inside
    the above-mentioned interval. We thus partition
    the distribution projected onto direction $v$ into
    $2/\lambda$ intervals each holding $\lambda/2$
    probability mass. The Chernoff bound guarantees
    that if the number of points from cluster $i$ is
    at least $\tfrac {200d\log(1/\beta\lambda)}\lambda$
    then a given interval has empirical probability
    sufficiently close (up to a multiplicative factor
    of $2$) to its actual probability mass.
    Applying the union-bound over all $\tfrac 2\lambda$
    intervals times all $2^{O(d)}$ unit-length vectors
    in a cover of the unit-sphere, we get that such an
    interval exists with probability $\leq \beta$.
    %\bigvnote{Can we make these calculations more explicit?
    %This intuitively sort of makes sense, but wouldn't
    %hurt to have a more clear picture since this is the
    %full-version.}\bigonote{Again, go ahead and do so.}
    Note that the number of points of cluster $i$ is
    at least $\tfrac{nw_i}{2}$, so by re-plugging the value of
    $\lambda$ into the bound we get that it is enough
    to have
    $\tfrac{nw_i}{2} \geq \tfrac {200w_id\log(1/\beta w_{\min})}{w_{\min}}$,
    implying that
    $n\geq \tfrac{400 d \log(1/\beta w_{\min})}{w_{\min}}$
    is a sufficient condition to have that no such
    ill-represented interval exists, and as a result,
    no terrific ball exists.
    %\bigvnote{We need to take into account for $X\Pi$
    %that the number oif points from each Gaussian in
    %the projected data could be less than $nw_i/2$
    %because points get lost while finding the terrific
    %ball in $X$.}\bigonote{I don't think so. The properties 1-5 -- namely, 5 -- and the fraction of points omitted should suffice to have more than $w_{\min} n /2$...}
\end{proof}

\subsection{Estimation}

We want to show that once we have clusters
$C_1,\dots,C_k$ from $\RPGMP$,
such that each cluster contains points from
exactly one Gaussian, no two clusters contain
points from the same Gaussian, and that the
fraction of points lost from the dataset is
at most
$$\tau = O\left(\frac{\mw\alpha}{\log(1/\alpha)}\right),$$
we can learn individual Gaussians and mixing
weights accurately. We use the learner for
$d$-dimensional Gaussians from \cite{KamathLSU19}
in the process.

\begin{theorem}[Learner from \cite{KamathLSU19}]
        \label{thm:klsu}
    For every $\alpha,\beta,\eps,\delta,\minsigma,\maxsigma,R > 0$,
    there exists an $(\eps,\delta)$-differentially
    private algorithm $\KLSU$, which if given $m$
    samples from a $d$-dimensional Gaussian $\cN(\mu,\Sigma)$,
    such that,
    $m \geq m_1 + m_2,$
    where,
    \begin{align*}
        m_1 &\geq O\left(\frac{d^2 + \log(\frac{1}{\beta})}{\alpha^2}
            + \frac{d^2\polylog(\frac{d}{\alpha\beta\eps\delta})}{\alpha\eps}
            + \frac{d^\frac{3}{2}\log^{\frac{1}{2}}(\frac{\maxsigma}{\minsigma})
            \polylog(\frac{d\log(\maxsigma/\minsigma)}{\beta\eps\delta})}{\eps}\right)\\
        m_2 &\geq O\left(\frac{d\log(\frac{d}{\beta})}{\alpha^2}
            + \frac{d\log(\frac{d\log{R}\log(1/\delta)}{\alpha\beta\eps})
            \log^{\frac{1}{2}}(\frac{1}{\delta})}{\alpha\eps}
            + \frac{\sqrt{d}\log(\frac{Rd}{\beta})\log^{\frac{1}{2}}(\frac{1}{\delta})}
            {\eps}\right)
    \end{align*}
    and $\minsigma^2 \preceq \Sigma \preceq \maxsigma^2$
    and $\llnorm{\mu} \leq R$, outputs $\wh{\mu},\wh{\Sigma}$,
    such that with probability at least $1-\beta$,
    $$\SD(\cN(\mu,\Sigma),\cN(\wh{\mu},\wh{\Sigma})) \leq \alpha.$$
\end{theorem}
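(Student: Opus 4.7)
The plan is to split the $m$ samples into two disjoint subsamples of sizes $m_1$ and $m_2$, use the first to privately estimate the covariance $\Sigma$, and use the second (together with the covariance estimate) to privately estimate the mean $\mu$. The output $(\wh\mu,\wh\Sigma)$ is then combined, and its total variation distance from $\cN(\mu,\Sigma)$ is bounded via Lemma~\ref{lem:gaussian-norm-translation}; it thus suffices to guarantee $\llnorm{\Sigma^{-1/2}(\mu-\wh\mu)} \leq O(\alpha)$ and $\fnorm{\id-\Sigma^{-1/2}\wh\Sigma\Sigma^{-1/2}} \leq O(\alpha)$.

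For the covariance step (accounting for $m_1$), I would iteratively refine a spectral estimate via private preconditioning. A direct application of the Gaussian mechanism to the empirical second moment has sensitivity scaling with $\maxsigma^2$, so to avoid a polynomial dependence on $\kappa = \maxsigma/\minsigma$, I would start with $\wh\Sigma^{(0)} = \minsigma^2\id$, privately produce a constant-factor spectral estimate of $(\wh\Sigma^{(t)})^{-1/2}\Sigma(\wh\Sigma^{(t)})^{-1/2}$, and compose it into $\wh\Sigma^{(t+1)}$. Each round shrinks the effective condition number geometrically, so only $O(\log\kappa)$ rounds are required; advanced composition over these rounds then yields the $d^{3/2}\log^{1/2}(\maxsigma/\minsigma)/\eps$ contribution. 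Once a constant-factor spectral estimate is in hand, one more whitening reduces the problem to estimating a near-identity covariance: after truncating the whitened samples so the second-moment matrix has $\ell_2$-sensitivity $O(1/m_1)$, the Gaussian mechanism applied in the Frobenius norm produces an estimate with the needed relative accuracy, contributing the $d^2/\alpha^2 + d^2\polylog/(\alpha\eps)$ terms.

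For the mean step (accounting for $m_2$), I would whiten the second subsample using $\wh\Sigma$, reducing to samples with near-identity covariance whose true mean $\wh\Sigma^{-1/2}\mu$ lies in a ball of radius $O(R/\minsigma)$. A coarse private location routine (e.g., an adaptation of $\PGLoc$ from Theorem~\ref{thm:pgloc}, or per-coordinate localization via $\AboveThreshold$) first identifies a ball of radius $O(\sqrt d)$ containing the true mean, which is responsible for the $\sqrt d \log R/\eps$ term. After truncating each whitened sample to this ball so that the empirical mean has $\ell_2$-sensitivity $O(\sqrt d/m_2)$, the Gaussian mechanism followed by an inverse-whitening recovers $\wh\mu$ satisfying $\llnorm{\Sigma^{-1/2}(\mu-\wh\mu)} \leq O(\alpha)$ once sampling error $O(\sqrt{d/m_2})$ and noise error $\wt O(\sqrt{d\log(1/\delta)}/(m_2\eps))$ are balanced.

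Privacy follows by parallel composition across the two disjoint subsamples and $(\eps,\delta)$-DP of each phase. The main obstacle I anticipate is the preconditioning loop in the covariance step: the privacy budget must be split across $O(\log\kappa)$ adaptive rounds, and one must certify with high probability that every intermediate spectral estimate is accurate enough to bound the sensitivity used in the next round. This typically requires an auxiliary private tail-bound argument so that truncation of whitened samples does not introduce bias, and it is the step where the $\log^{1/2}(\maxsigma/\minsigma)$ factor (rather than a $\poly(\kappa)$ factor) crucially appears. Once both phases succeed, Lemma~\ref{lem:gaussian-norm-translation} delivers the final $\SD(\cN(\mu,\Sigma),\cN(\wh\mu,\wh\Sigma)) \leq \alpha$ bound.
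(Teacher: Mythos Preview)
This theorem is not proved in the present paper at all: it is quoted as a black box from \cite{KamathLSU19}, and the paper uses it only as an ingredient (via Theorem~\ref{thm:pge}) in the estimation phase of the mixture learner. So there is no ``paper's own proof'' to compare against.

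That said, your sketch is a faithful high-level description of the actual construction in \cite{KamathLSU19}: the two-phase split into private covariance estimation followed by mean estimation after whitening, the iterative preconditioning loop of length $O(\log(\maxsigma/\minsigma))$ that drives the condition number down geometrically (and is responsible for the $d^{3/2}\log^{1/2}(\maxsigma/\minsigma)/\eps$ term via advanced composition), the final Frobenius-norm Gaussian mechanism on the whitened second moment, and the coarse-localize-then-Gaussian-mechanism approach to the mean. Your identification of the delicate point---that each preconditioning round must privately certify an eigenvalue bound so that the next round's truncation and sensitivity analysis are valid---is exactly the technical crux in that paper. For the purposes of the present work, however, none of this needs to be reproduced; you may simply invoke the theorem as stated.
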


Now, we are in a situation where at most $n\tau$
samples get lost from each component in the
clustering process. So, we need a more robust
version of $\KLSU$ that works even when we lose
a small fraction of the points. The following
theorem guarantees the existence of one such
robust learner.

\begin{theorem}\label{thm:pge}
    For every $\alpha,\beta,\eps,\delta,\minsigma,\maxsigma,R > 0$,
    there exists an $(\eps,\delta)$-differentially
    private algorithm $\PGE$ with the following guarantee.  
    Let $(X_1,\dots,X_n)$ be independent samples from a
    $d$-dimensional Gaussian $\cN(\mu,\Sigma)$,
    where $\minsigma^2 \preceq \Sigma \preceq \maxsigma^2$
    and $\llnorm{\mu} \leq R$ and 
    $n \geq \frac{1}{\mw}(n_1 + n_2),$
    for
    \begin{align*}
        n_1 &\geq O\left(\frac{(d^2 + \log(\frac{1}{\beta}))
            \log^2(1/\alpha)}{\alpha^2}
            + \frac{(d^2\polylog(\frac{d}{\alpha\beta\eps\delta}))}{\alpha\eps}
            + \frac{d^\frac{3}{2}\log^{\frac{1}{2}}(\frac{\maxsigma}{\minsigma})
            \polylog(\frac{d\log(\maxsigma/\minsigma)}{\beta\eps\delta})}{\eps}\right)\\
        n_2 &\geq O\left(\frac{d\log(\frac{d}{\beta})\log^2(1/\alpha)}{\alpha^2}
            + \frac{d\log(\frac{d\log{R}\log(1/\delta)}{\beta\eps})
            \log^{\frac{1}{2}}(\frac{1}{\delta})\log^2(1/\alpha)}{\alpha\eps}
            + \frac{\sqrt{d}\log(\frac{Rd}{\beta})\log^{\frac{1}{2}}(\frac{1}{\delta})}
            {\eps}\right).
    \end{align*}
    For every $S \subseteq [n]$ with $|S| \geq n(1- O(\alpha/\log(1/\alpha)))$,
    when $\PGE$ is given $\{X_i\}_{i \in S}$ as input, with probability at least $1-\beta$,
    $$\SD(\cN(\mu,\Sigma),\cN(\wh{\mu},\wh{\Sigma})) \leq \alpha.$$
\end{theorem}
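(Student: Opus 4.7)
The plan is to construct $\PGE$ by invoking a modified version of the learner $\KLSU$ from \cite{KamathLSU19} on the input $\{X_i\}_{i\in S}$, and to argue that it remains accurate despite the deletion of up to a $\tau = O(\alpha/\log(1/\alpha))$ fraction of samples. Privacy is immediate: $\KLSU$ is $(\eps,\delta)$-DP with respect to whatever dataset it receives as input, and the privacy guarantee required for $\PGE$ is with respect to neighboring inputs $\{X_i\}_{i\in S}$ themselves. The heart of the proof is therefore an accuracy analysis that holds \emph{uniformly} over all subsets $S$ of size $\geq (1-\tau)n$. The extra $\log^2(1/\alpha)$ factor in the $\alpha^{-2}$-terms compared to Theorem~\ref{thm:klsu} is exactly what one expects for matching the known rate of $\tau$-robust mean and covariance estimation, namely $\tilde O(d/\tau^2)$, when $\tau = \alpha/\log(1/\alpha)$.

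The first step is to prove a uniform-convergence lemma: with probability at least $1-\beta$ over i.i.d.\ samples $X_1,\ldots,X_n \sim \cN(\mu,\Sigma)$, every subset $S \subseteq [n]$ of size $\geq (1-\tau)n$ satisfies
\[
    \llnorm{\Sigma^{-1/2}(\bar{\mu}_S - \mu)} \leq O(\alpha) \quad \text{and} \quad \fnorm{\id - \Sigma^{-1/2}\bar{\Sigma}_S \Sigma^{-1/2}} \leq O(\alpha),
\]
where $\bar{\mu}_S$ and $\bar{\Sigma}_S$ denote the empirical mean and covariance of $\{X_i\}_{i\in S}$. This combines the Gaussian concentration of Lemma~\ref{lem:spectralnormofgaussians}, the Hanson--Wright inequality (Lemma~\ref{lem:HW}), and a union bound over all $\binom{n}{\tau n} \leq \exp(\tau n \log(1/\tau))$ subsets. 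The union-bound cost contributes an additive $O(\sqrt{\tau \log(1/\tau)})$ to the error, which is $O(\alpha)$ when $\tau = O(\alpha/\log(1/\alpha))$, provided $n \geq \Omega(d^2 \log^2(1/\alpha)/\alpha^2)$~--- explaining the $\log^2(1/\alpha)$ overhead.

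The second step is to check that each stage of $\KLSU$ preserves its accuracy guarantee whenever the uniform-convergence lemma holds. $\KLSU$ roughly proceeds by privately preconditioning the data so that the unknown covariance is approximately identity (via an iterative spectral rescaling), and then privately estimating the mean and the refined covariance in this rescaled coordinate system. Each of these stages uses only the empirical mean and covariance of its current input; when these are close to the true values in the Mahalanobis/Frobenius sense, as guaranteed by uniform convergence, $\KLSU$'s original analysis carries through verbatim. Lemma~\ref{lem:gaussian-norm-translation} then converts parameter-level closeness into closeness in total variation distance.

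The main obstacle will be the preconditioning stage. The private spectral rescaling in $\KLSU$ uses sensitivity bounds that depend on the unknown scale of $\Sigma$, and under subset deletion the empirical spectral norm of $\bar{\Sigma}_S$ can shift; we must verify that the iterative estimator still locks onto a constant-factor approximation of $\Sigma$. This binary-search-style procedure over scales is responsible for the $\sqrt{d}\, \log^{1/2}(\maxsigma/\minsigma)/\eps$ term in $n_1$. The analysis closely mirrors the non-robust $\KLSU$ analysis, with the uniform-convergence lemma substituting for standard i.i.d.\ concentration bounds throughout, but the bookkeeping requires care to ensure that no single step accumulates worse than a $\log(1/\alpha)$ penalty from robustness, so that the final overhead remains $\log^2(1/\alpha)$ rather than a higher power.
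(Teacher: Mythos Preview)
Your proposal is correct and follows essentially the same approach as the paper: run $\KLSU$ on $\{X_i\}_{i\in S}$ and observe that its analysis uses Gaussianity only through concentration of empirical first and second moments, which survives deletion of a $\tau=O(\alpha/\log(1/\alpha))$ fraction of points. The paper's proof is in fact just a one-paragraph sketch that invokes existing robust-statistics lemmas (Lemmas~4.3, 4.4, and Corollary~4.8 of~\cite{DiakonikolasKKLMS16}) for the uniform-over-subsets concentration you spell out via the union bound, so your write-up is more explicit than the paper's but otherwise identical in content.
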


The proof of this theorem follows effectively the
same structure as that of Theorem~\ref{thm:klsu},
the primary difference in the setting being that a
miniscule fraction of points have been removed.
However, fortunately, the proof of Theorem~\ref{thm:klsu}
uses the Gaussianity of the data essentially only to
show concentration of the empirical mean and covariance
in various norms, which are preserved even against an
adversary who can delete a small fraction of the points
(see, i.e., Lemma 4.3, 4.4, and Corollary 4.8 of~\cite{DiakonikolasKKLMS16}).
Substituting these statements into the proof, it follows
essentially the same.

Now, we give a lemma that says that the
mixing weights are accurately estimated.

\begin{lemma}
    Suppose $w_1,\dots,w_k$ are the mixing
    weights of the Gaussians of the target
    distribution $\cD \in \cG(d,k)$.
    Let $\wh{w}_1,\dots,\wh{w}_k$ be their
    respective estimations produced by the
    algorithm. If
    $$n \geq O\left(\frac{k^2}{\eps\alpha}\ln(k/\beta)\right),$$
    then with probability at least $1-O(\beta)$,
    $$\forall~ 1 \leq i \leq k,~~~
        \abs{\wh{w}_i - w_i} \leq \frac{\alpha}{3k}.$$
\end{lemma}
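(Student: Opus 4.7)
The plan is to decompose $|\wh{w}_j - w_j|$ into three error sources and then argue that normalization preserves the resulting bound. Recall that in Algorithm~\ref{alg:OverallLearnerOfMixture}, $\tilde n_j = |C_j| + \Lap(1/\eps)$ and $\wh{w}_j = \tilde n_j / \sum_l \tilde n_l$. Let $n_j$ be the true number of samples drawn from component $j$, and write
\[ \tilde n_j - n w_j \;=\; \underbrace{(n_j - n w_j)}_{\text{Chernoff error}} \;+\; \underbrace{(|C_j| - n_j)}_{\text{clustering loss}} \;+\; \underbrace{(\tilde n_j - |C_j|)}_{\text{Laplace noise}}. \]

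First, by the standard tail bound for Laplace random variables and a union bound over the $k$ clusters, with probability $\geq 1 - \beta$ every noise sample has magnitude at most $O(\log(k/\beta)/\eps)$; given $n \geq O(k^2\log(k/\beta)/(\eps\alpha))$ this is at most $O(n\alpha/k^2)$ per cluster. Second, by Condition~\ref{cond:num-points} (Lemma~\ref{lem:regularity-conditions-hold}), either $w_j < 4\alpha/(9k)$---in which case $|\wh{w}_j - w_j| = O(\alpha/k)$ holds trivially---or $|n_j - n w_j| \leq n\alpha/(9k)$. Third, by Theorem~\ref{thm:privateMainThmHardCase}, the recursive partitioning is laminar and discards at most $\tfrac{n\mw\alpha}{20\log(1/\alpha)} = O(n\alpha/k)$ points in total, so each $||C_j| - n_j| \leq O(n\alpha/k)$. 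Combining, $|\tilde n_j - n w_j| \leq O(n\alpha/k)$ for every $j$.

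The final step is normalization. Crucially, the Chernoff errors telescope ($\sum_j (n_j - n w_j) = 0$), so only the clustering loss and the Laplace noise contribute to the denominator:
\[ \Bigl|\sum_{l=1}^{k}\tilde n_l - n\Bigr| \;\leq\; O(n\alpha/k) \;+\; k\cdot O(\log(k/\beta)/\eps) \;\leq\; O(n\alpha/k) \]
using the sample-complexity hypothesis one more time. Writing $\sum_l \tilde n_l = n(1+\gamma)$ with $|\gamma| \leq O(\alpha/k)$, an elementary calculation gives
\[ |\wh{w}_j - w_j| \;=\; \left|\frac{\tilde n_j - n w_j(1+\gamma)}{n(1+\gamma)}\right| \;\leq\; \frac{2|\tilde n_j - n w_j|}{n} \;+\; 2 w_j|\gamma| \;\leq\; O(\alpha/k) \;\leq\; \frac{\alpha}{3k} \]
for a sufficiently large constant hidden in the sample-complexity bound. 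A final union bound over the Laplace tails, the regularity conditions, and the clustering guarantee absorbs every failure mode into the $O(\beta)$ probability budget.

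The main obstacle---and the reason for the $k^2$ (rather than merely $k$) factor in the sample complexity---is the interplay between per-cluster accuracy and global normalization. A naive accounting treats all three per-cluster errors as independent contributors to the denominator, which would force each to be $O(n\alpha/k^2)$ and push the Laplace term to $n \gtrsim k^2/(\eps\alpha)$. The structural observation that saves us for the other two error sources is that the Chernoff deviations are signed and sum to zero exactly, while the clustering loss is controlled globally by Theorem~\ref{thm:privateMainThmHardCase}; only the Laplace noise genuinely aggregates, which is precisely what dictates the $k^2$ scaling.
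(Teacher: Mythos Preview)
Your proof is correct and follows essentially the same approach as the paper: both decompose the error into Laplace noise, clustering loss, and sampling (Chernoff) deviation, bound each piece, and then push the bounds through the normalization. The paper routes the normalization through an intermediate empirical weight $w'_i = n_i/n$ and a fraction-difference bound of the form $(e+ke)/(1-ke)$, whereas you parametrize the denominator as $n(1+\gamma)$ and make the telescoping $\sum_j(n_j - nw_j)=0$ explicit; these are equivalent bookkeeping choices, and your discussion of why the $k^2$ factor arises (aggregation of Laplace noise in the denominator) matches the paper's implicit reasoning.
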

\begin{proof}
    Let $w'_1,\dots,w'_k$ be the empirical weights
    of the Gaussians produced using the points
    in $X$. We have the following claim for
    them.
    \begin{claim}
        Let $X$ be the dataset as in the
        algorithm. If
        $$n \geq O\left(\frac{k^2}{\eps\alpha}\ln(k/\beta)\right),$$
        and $X$ satisfies Condition \ref{cond:num-points},
        then for $w_i \geq \tfrac{4\alpha}{9k}$,
        $$\abs{w_i-w'_i} \leq \frac{\alpha}{9k},$$
        and for $w_i < \tfrac{4\alpha}{9k}$,
        $$\abs{w_i-w'_i} \leq \frac{2\alpha}{9k},$$
    \end{claim}
    \begin{proof}
        There are two sources of error in this case:
        (1) adding Laplace noise; and (2) by having
        lost $\tau$ points from $X$. Let $\wt{n}_i$
        be as defined in the algorithm, $\wb{n}_i = \abs{C_i}$,
        and $n_i$ be the number of points of this
        component in $X$.

        First, we want to show that the if the number
        of points is large enough, then the added noise
        does not perturb the empirical weights too much.
        Now, using Lemma \ref{lem:lap-conc} with our bound
        on $n$, and applying the union bound over all calls
        to $\PCount$, we get that with probability at
        least $1-O(\beta)$,
        $$\forall~ i,~~~ \abs{\wt{n}_i - \wb{n}_i}
            \leq \frac{n\alpha}{40k^2}.$$

        Also, we know that $\sum\limits_{i=1}^{k}{\wb{n}_i}  \geq n(1-\tau)$.

        Now, let $e = \tfrac{\alpha}{40k^2}$. From the
        above, using triangle inequality, we get that for
        all $i$,
        \begin{align*}
            \abs{\wh{w}_i-w'_i} &=
                    \abs{\frac{\wt{n}_i}{\sum\limits_{j=1}^{k}{\wt{n}_j}} - \frac{n_i}{n}}\\
                &\leq \abs{\frac{\wt{n}_i}{\sum\limits_{i=1}^{k}{\wt{n}_i}}
                    - \frac{\wb{n}_i}{\sum\limits_{i=1}^{k}{\wb{n}_i}}}
                    + \abs{\frac{\wb{n}_i}{\sum\limits_{i=1}^{k}{\wb{n}_i}}
                    - \frac{n_i}{n}}\\
                &\leq \frac{e+ke}{1-ke} + \frac{n_i}{n}\abs{\frac{1}{1-\tau} - 1}\\
%                &= \frac{e+ke}{1-ke} + \frac{n_i}{n}\frac{\tau}{1-\tau}\\
                &\leq \frac{\alpha}{18k} + \frac{\alpha}{18k}\\
                &\leq \frac{\alpha}{9k},
        \end{align*}
        Where the second to last inequality holds
        because $\tau \leq \tfrac{\alpha}{20k}$.
    \end{proof}

    Because $\abs{w_i-w'_i} \leq \tfrac{\alpha}{9k}$,
    using triangle inequality, we get the required result.
\end{proof}

Combining these statements is sufficient to conclude
Theorem~\ref{thm:pgme-accurate}.

\section{Sample and Aggregate}
\label{sec:sa}
In this section, we detail methods based on sample and aggregate, and derive their sample complexity.
This will serve as a baseline for comparison with our methods.

A similar sample and aggregate method was considered in~\cite{NissimRS07}, but they focused on a restricted case (when all mixing weights are equal, and all components are spherical with a known variance), and did not explore certain considerations (i.e., how to minimize the impact of a large domain).
We provide a more in-depth exploration and attempt to optimize the sample complexity.

%\bigvnote{In this paragraph, we cite papers like [NSV16]
%and [NS18] without describing what the aggregate approach
%looks like. It might be good to say what it means to ``aggregate''
%in this setting.}
%\bigvnote{Also, it is not clear why we use these aggregation
%methods. One might ask if there is no other known way of aggregating
%stuff, even in the spherical case.}
%\bigvnote{That said, we do it in the proof anyway.
%So, I'm not sure if it's worth doing that informally
%here as well.}
The main advantage of the sample and aggregate method we describe here is that it is extremely flexible: given any non-private algorithm for learning mixtures of Gaussians, it can immediately be converted to a private method.
However, there are a few drawbacks, which our main algorithm avoids.
First, by the nature of the approach, it will increase the sample complexity multiplicatively by $\Omega(\sqrt{d}/\eps)$, thus losing any chance of the non-private sample complexity being the dominating term in any parameter regime.
Second, it is not clear on how to adapt this method to non-spherical Gaussians.
We rely on the methods of~\cite{NissimSV16,NissimS18}, which find a small $\ell_2$-ball which contains many points.  
The drawback of these methods is that they depend on the $\ell_2$-metric, rather than the (unknown) Mahalanobis metric as required by non-spherical Gaussians.
We consider aggregation methods which can handle settings where the required metric is unknown to be a very interesting direction for further study.

Our main sample-and-aggregate meta-theorem is the following.

\begin{theorem}
  \label{thm:sa}
  Let $m = \tilde \Theta\left(\frac{\sqrt{kd} + k^{1.5}}{\eps} \log^{2}(1/\delta) \cdot 2^{O\left(\log^*\left(\frac{dR\sigma_{\max}}{\alpha\sigma_{\min}}\right)\right)}\right)$.
  Suppose there exists a (non-private) algorithm with the following guarantees.
  The algorithm is given a set of samples $X_1, \dots, X_n$ generated i.i.d.\ from some mixture of $k$ Gaussians $\cD = \sum_{i=1}^k w_i \cN(\mu_i, \sigma_i^2\id_{d \times d} )$, with the separation condition that $\|\mu_i - \mu_j\|_2 \geq (\sigma_i + \sigma_j)\tau_{k,d}$, where $\tau_{k,d}$ is some function of $k$ and $d$, and $\tau_{k,d} \geq c\alpha$, for some sufficiently large constant $c$.
  With probability at least $1 - m/100$, it outputs a set of points $\{\hat \mu_1, \dots, \hat \mu_k\}$ and weights $\{\hat w_1, \dots, \hat w_k\}$ such that $\|\hat \mu_{\pi(i)} - \mu_i\|_2 \leq O\left(\frac{\alpha \sigma_i}{\sqrt{\log mk}}\right)$ and $|\hat w_{\pi(i)} - w_i| \leq O(\alpha/k)$ for all $i \in [k]$, where $\pi: [k] \rightarrow [k]$ is some permutation.

  Then there exists a $(\eps,\delta)$-differentially private algorithm which takes $mn$ samples from the same mixture, and input parameters $R, \sigma_{\min}, \sigma_{\max}$ such that $\|\mu_i\|_2 \leq R$ and $\sigma_{\min} \leq \sigma_i \leq \sigma_{\max}$ for all $i \in [k]$.
  With probability at least $9/10$, it outputs a set of points $\{\hat \mu_1, \dots, \hat \mu_k\}$ and weights $\{\hat w_1, \dots, \hat w_k\}$ such that $\|\hat \mu_{\pi(i)} - \mu_i\|_2 \leq O\left(\alpha\sigma_i\right)$ and $|\hat w_{\pi(i)} - w_i| \leq O(\alpha/k)$ for all $i \in [k]$, for some permutation $\pi$.
\end{theorem}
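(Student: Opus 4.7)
\medskip
\noindent\textbf{Proof Plan for Theorem~\ref{thm:sa}.}
My plan is to use the standard sample-and-aggregate framework: partition the $mn$ samples into $m$ disjoint batches of size $n$, run the given non-private algorithm on each batch to obtain $m$ candidate parameter tuples, and then privately aggregate these tuples. The key observation that enables privacy is that each original sample influences only one batch's output, so changing a single input sample changes at most one of the $m$ intermediate outputs. Thus all subsequent privacy bookkeeping only needs to handle ``one out of $m$'' neighboring datasets of size $m$, not $mn$.

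More concretely, I would first invoke the non-private algorithm on each of the $m$ batches to obtain outputs $\{(\hat\mu^{(j)}_1,\dots,\hat\mu^{(j)}_k); (\hat w^{(j)}_1,\dots,\hat w^{(j)}_k)\}_{j\in[m]}$. By assumption, each batch succeeds with probability at least $1 - 1/100m$, so by a union bound all $m$ batches succeed simultaneously with probability at least $99/100$; when they do, each $\hat\mu^{(j)}_{\pi_j(i)}$ lies within distance $O(\alpha\sigma_i/\sqrt{\log mk})$ of $\mu_i$ under some permutation $\pi_j$. The separation condition $\|\mu_i-\mu_j\|_2\geq (\sigma_i+\sigma_j)\tau_{k,d}$ (with $\tau_{k,d}\gg \alpha$) then guarantees that the union of the $m$ candidate means forms $k$ tight, well-separated clusters of exactly $m$ points each, one per true component. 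I would then iteratively invoke the private $1$-cluster procedure $\PGLoc$ (Theorem~\ref{thm:pgloc}, essentially the~\cite{NissimS18} procedure) on the set of $mk$ candidate centers with target count $\approx m$, peeling off one cluster at a time. The per-invocation sample complexity of $\PGLoc$ scales as $\tilde O(\sqrt d/\eps)\cdot 2^{O(\log^*(dR\sigma_{\max}/\alpha\sigma_{\min}))}$, and the $k$-fold composition over rounds explains the $\sqrt k$ factor via advanced composition, yielding the $m=\tilde\Theta((\sqrt{kd}+k^{1.5})/\eps)\cdot 2^{O(\log^*(\cdots))}$ bound.

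Once a tight ball $\ball{c_i}{r_i}$ containing the $m$ replicas of $\hat\mu^{(\cdot)}_i$ has been identified, I would release the final mean estimate by clipping all candidate points to the ball and applying the Gaussian mechanism to the empirical average: the $\ell_2$-sensitivity of this average with respect to one of the $m$ batch outputs is $O(r_i/m)$, and $r_i$ is proportional to the non-private accuracy $\alpha\sigma_i/\sqrt{\log mk}$, so choosing $m$ as above makes the added noise of order $O(\alpha\sigma_i)$. The mixing weights are handled analogously: after matching components, each of the $m$ batches contributes a weight vector of bounded $\ell_1$-norm, and we add Laplace noise calibrated to sensitivity $O(1/m)$ to each coordinate, which gives error $O(\alpha/k)$ per weight when $m\gtrsim k/\alpha\eps$. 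Privacy follows by composition: each of $k$ clustering rounds and the $k$ mean/weight releases are $(\eps/\sqrt{k\log(1/\delta)},\delta/k)$-DP, combining to $(\eps,\delta)$-DP at the level of the $m$ batch outputs, and hence at the level of the original $mn$ samples since each sample touches only one batch.

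The main obstacle I anticipate is the clustering step: I need to argue that the private $1$-cluster procedure correctly separates the $k$ groups of $m$ candidate centers without conflating two nearby true components, and that after one cluster is removed the algorithm continues to function correctly on the residual set. This requires the clusters to be tighter than the inter-cluster gap by a large factor, which is exactly why I insist that each non-private estimate achieve accuracy $O(\alpha\sigma_i/\sqrt{\log mk})$ (a stronger $1/\sqrt{\log mk}$ factor than a naive $\alpha\sigma_i$), so that the intra-cluster diameter is dominated by the separation $\tau_{k,d}(\sigma_i+\sigma_j)$. The remaining technical work is a routine union bound over the $k$ rounds and a careful tracking of how the Gaussian mechanism's noise scales with the cluster radius, which together yield the stated $m$ and the final accuracy guarantee.
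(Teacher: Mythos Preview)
Your plan is essentially the paper's approach: split into $m$ batches, run the non-private learner on each, then iteratively apply a private $1$-cluster routine $k$ times (with privacy budgeted via advanced composition) to peel off the $k$ tight groups of $m$ candidate means. The paper differs from you in two implementation choices. First, for the final mean estimate the paper simply outputs \emph{any} point inside the ball returned by the $1$-cluster algorithm, rather than averaging and adding Gaussian noise; since the ball itself is private, no extra noise is needed, and this is why the paper uses the~\cite{NissimSV16} version (Theorem~\ref{thm:ploc-log}) whose radius is inflated by $w=O(\sqrt{\log mk})$---the $1/\sqrt{\log mk}$ tightening of the non-private accuracy is there precisely to absorb this inflation, not (as you wrote) to maintain the intra/inter-cluster gap, which is already handled by $\tau_{k,d}\geq c\alpha$. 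Second, for the mixing weights the paper uses a stability-based histogram over the weight estimates attached to each located cluster, rather than Laplace noise on an average. Both of your alternatives work, but the paper's choices avoid an extra noise-addition step. One point you should make explicit is the role of the $k^{1.5}/\eps$ term in $m$: the paper uses it to guarantee $m>k\Delta$ (with $\Delta$ the per-round point loss of the $1$-cluster routine), so that in round $i$ the $\leq (i-1)\Delta$ leftover points from already-broken mean-sets can never by themselves form a ball of $m$ points; your ``continues to function correctly on the residual set'' alludes to this but does not name the mechanism.
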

\begin{proof}
  In short, the algorithm will repeat the non-private algorithm several times, and then aggregate the findings using the 1-cluster algorithm from~\cite{NissimSV16}.
  We will focus on how to generate the estimates of the means, and sketch the argument needed to conclude the accuracy guarantees for the mixing weights.

  In more detail, we start by discretizing the space where all the points live, which is a set of diameter $\poly\left(R,\sigma_{\max},d, \log n \right)$, at granularity $\poly\left(\frac{\alpha\sigma_{\min}}{d}\right)$, and every point we consider will first be rounded to the nearest point in this discretization.
  This will allow us to run algorithms which have a dependence on the size of the domain.
  Since this dependence will be proportional to the exponentiation of the iterated logarithm, we can take the granularity to be very fine by increasing the exponent of the polynomial, at a minimal cost in the asymptotic runtime.
  For clarity of presentation, in the sequel we will disregard accounting for error due to discretization.

  Now, we use the following theorem of~\cite{NissimSV16}.

  \begin{theorem}[\cite{NissimSV16}]\label{thm:ploc-log}
    Suppose $X_1,\dots,X_m$ are points from
    $S^d \subset \R^d$, where $S^d$ is finite.
    Let $m,t,\beta,\eps,\delta$ be such that,
    $$t = \Omega\left(\frac{\sqrt{d}}{\eps}
        \log\left(\frac{1}{\beta}\right)
        \log\left(\frac{md}{\beta\delta}\right)
        \sqrt{\log\left(\frac{1}{\beta\delta}\right)}
        \cdot 9^{\log^*(2\abs{S}\sqrt{d})}\right).$$
    Let $r_{opt}$ be the radius of the smallest
    ball that contains at least $t$ points
    from the sample.
    There exists an $(\eps,\delta)$-DP algorithm
    that returns a ball of radius at most
    $w \cdot r_{opt}$ such that it contains at
    least $t-\Delta$ points from the sample with
    error probability $\beta$, where $w = O(\sqrt{\log m})$
    and
    $$\Delta = O\left(\frac{1}{\eps}
        \log\left(\frac{1}{\beta}\right)
        \log\left(\frac{m}{\delta}\right)
        \cdot 9^{\log^*(2\abs{S}\sqrt{d})}\right).$$
\end{theorem}
  Compare with the slightly different guarantees of Theorem~\ref{thm:ploc}, which is the 1-cluster algorithm from~\cite{NissimS18}.
  We will use Theorem~\ref{thm:ploc-log} $k$ times, with the following settings of parameters: their $\varepsilon$ is equal to our $O(\varepsilon/\sqrt{k \log(1/\delta)})$, their $\beta$ is equal to $O(1/k)$, their $|S|$ is our $\poly\left(\frac{dR\sigma_{\max}}{\sigma_{\min}\alpha}\right)$, their $m$ is equal to $mk$, and all other parameters are the same.

  We start by taking the results of running the non-private algorithm $m$ times.
  We will run the algorithm of Theorem~\ref{thm:ploc-log} with $t = m$ to obtain a ball (defined by a center and a radius).
  We remove all points within this ball, and repeat the above process $k$ times.
  Note that by advanced composition, the result will be $(\eps, \delta)$-differentially private.
  We spend the rest of the proof arguing that we satisfy the conditions of Theorem~\ref{thm:ploc-log}, and that if we choose the $i$th mean to be an arbitrary point from the $i$th output ball, these will satisfy the desired guarantees.

  First, we confirm that our choice of $t$ satisfies the conditions of the theorem statement.
  Since we set $t$ to be equal to $m = \tilde \Theta\left(\frac{\sqrt{kd} + k^{1.5}}{\eps} \log^{2}(1/\delta) \cdot 2^{O\left(\log^*\left(\frac{dR\sigma_{\max}}{\alpha\sigma_{\min}}\right)\right)}\right)$, the ``first term'' (the one with the leading $\sqrt{kd}/\eps$) is large enough so that $t$ will satisfy the necessary condition.

    For the rest of the proof, we will reason about
    the state of the points output by the $m$ runs
    of the non-private algorithm. 
    Note that the non-private algorithm will learn to a slightly better accuracy than our final private algorithm ($\alpha/\sqrt{\log{mk}}$, rather than $\alpha$).
    %\bigvnote{Do we want 
    %to say that the accuracy we're looking for here is
    %$\alpha/\sqrt{\log(mk)}$ and not $\alpha$? This would
    %sort of explain why we have a ball of radius
    %$\alpha\sigma_i/\sqrt{\log(mk)}$.} 
    By a union bound,
    we know that all $m$ runs of the non-private algorithm will output mean estimates which are close to the true means with probability at least $99/100$, an event we will condition on for the rest of the proof.
  This implies that, around the mean of component $i$, there is a ball of radius $O\left(\frac{\alpha \sigma_i}{\sqrt{\log mk}}\right)$ which contains a set of $m$ points: we will call each of these point sets a \emph{mean-set}.
  We will say that a mean-set is \emph{unbroken} if all $m$ of its points still remain, i.e., none of them have been removed yet.
    %\bigvnote{In the above sentence, we use the word
    %``dataset''. What dataset is it referring to? So
    %far, we've been using it to refer to our set of
    %samples from the mixture. But it's different here,
    %so might want to be clear about that.}
  %We will prove by induction that, before we run the algorithm of Theorem~\ref{thm:ploc-log} for the $j$th time, there will be $k-j+1$ unbroken mean-sets, and during the $j$th run, we will identify a ball containing points from exactly one unbroken mean-set.

  %Note that before the $j = 1$ run, the correctness of the non-private algorithm implies that we have $k$ unbroken mean-sets, as desired.
  %Suppose that at some iteration $j$, we have $k-j+1$ unbroken mean-sets. 

  We claim that, during every run of the algorithm of
  Theorem~\ref{thm:ploc-log}, we will identify and remove
  points belonging solely to a single unbroken mean-set.
  First, we argue that the smallest ball containing $m$
  points will consist of points solely from a single unbroken
  mean-set. There are two cases which could be to the
  contrary: if it contains points from one unbroken mean-set
  and another mean-set (either broken or unbroken), and
  if it contains points from only broken mean-sets. In
  the first case, the separation condition and triangle
  inequality imply that the ball consisting of points
  solely from the unbroken mean-set within this ball
  would be smaller. %\bigvnote{Are we arguing that by the
  %end of the $i^{\text{th}}$ iteration, this would happen
  %for the second case? If so, the number of points left from
  %broken sets is at most $i\Delta$? Writing $k\Delta$ might
  %be a bit confusing.} 
  The second case is also impossible:
  this is because we require at least $m$ points in the
  ball, and during the $i$th iteration, there will be at most $(i-1)\Delta \leq k\Delta$ points leftover
  from broken mean-sets (assuming that the claim at the
  start of this paragraph holds by strong induction).
  The ``second term'' of $m$ (the one with the leading
  $k^{1.5}/\eps$) enforces that $m > k\Delta$, preventing
  this situation. Arguments similar to those for these two cases
  imply that any ball with radius equal to this minimal
  radius inflated by a factor of $w = O(\sqrt{\log mk})$
  and containing $m - \Delta$ points must consist of points
  belonging solely to a single unbroken mean-set.

  It remains to conclude that any point within a ball has the desired
  accuracy guarantee. 
  Specifically, using any point within a ball as a candidate mean will approximate the true mean of that component up to $O\left(\frac{\alpha \sigma_i}{\sqrt{\log mk}}\right)$.
  This is because the smallest ball containing an unbroken mean-set has radius at most $O\left(\frac{\alpha \sigma_i}{\sqrt{\log mk}}\right)$ (and we know that every point within this ball has the desired accuracy guarantee with respect to the true mean), and the algorithm will inflate this radius by a factor of at most $w = O(\sqrt{\log mk})$.

  At this point, we sketch the straightforward argument to estimate the values of the mixing weights.
  The output of the non-private algorithm consists of pairs of mean and mixing weight estimates.
  By a union bound, all of the (non-private) mixing weights are sufficiently accurate with probability at least $99/100$.
  In order to aggregate these non-private quantities into a single private estimate, we can use a stability-based histogram (see~\cite{KorolovaKMN09, BunNSV15}, and~\cite{Vadhan17} for a clean presentation).
  More precisely, for all the mean estimates contained in each ball, we run a stability-based histogram (with bins of width $O(\alpha/k)$) on the associated mixing weight estimates, and output the identity of the most populated bin.

  We claim that the aggregated mixing weight estimates will all fall into a single bin with a large constant probability, simultaneously for each of the histograms. 
  This is because all the mixing weight estimates are correct with probability $99/100$, and the argument above (i.e., each run of the~\cite{NissimSV16} algorithm removes points belonging solely to a single unbroken mean-set) implies that we will aggregate mixing weight estimates belonging only to a single component.
  This guarantees the accuracy we desire.
  The cost in the sample complexity is dominated by the cost of the $k$ runs of the algorithm of~\cite{NissimSV16}.
\end{proof}

%\bigvnote{Do we want to use $\beta$ instead of
%$\delta$ for error probability in the following?}
The following lemma can be derived from~\cite{VempalaW02}.
The first term of the sample complexity is the complexity of clustering from Theorem 4 of~\cite{VempalaW02}, the second and third terms are for learning the means and mixing weights, respectively.
\begin{lemma}[From Theorem 4 of~\cite{VempalaW02}]
   There exists a (non-private) algorithm with the following guarantees.
  The algorithm is given a set of samples $X_1, \dots, X_n$ generated i.i.d.\ from some mixture of $k$ Gaussians $\cD = \sum_{i=1}^k w_i \cN(\mu_i, \sigma_i^2\id_{d \times d} )$, with the separation condition that $\|\mu_i - \mu_j\|_2 \geq 14 (\sigma_i + \sigma_j)(k \ln (4n/\beta))^{1/4}$.
  With probability at least $1 - \beta$, it outputs a set of points $\{\hat \mu_1, \dots, \hat \mu_k\}$ and weights $\{\hat w_1, \dots, \hat w_k\}$ such that $\|\hat \mu_{\pi(i)} - \mu_i\|_2 \leq O\left(\alpha \sigma_i\right)$ and $|\hat w_{\pi(i)} - w_i| \leq O(\alpha/k)$ for all $i \in [k]$, where $\pi: [k] \rightarrow [k]$ is some permutation.
  The number of samples it requires is $n = \tilde O\left(\frac{d^3}{\mw^2}\log\left(\max_i \frac{|\mu_i|^2}{\sigma_i^2}\right) + \frac{d}{\mw\alpha^2} + \frac{k^2}{\alpha^2}\right)$\footnote{We note that~\cite{NissimRS07} states a similar version of this result, though their coverage omits dependences on the scale of the data.}.
\end{lemma}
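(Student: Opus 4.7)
The plan is to follow the three-stage approach of Vempala and Wang: dimensionality reduction via PCA, distance-based clustering in the projected space, and empirical estimation of the component parameters once a correct partition is in hand.

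First, I would compute the empirical covariance matrix $\tfrac{1}{n}\sum_{i}(X_i-\bar{X})(X_i-\bar{X})^T$ and take $\hat{V}$ to be its top-$k$ principal subspace. Using Lemma~\ref{lem:PCA_AM_style} in combination with standard matrix concentration for sample covariances of Gaussian mixtures, with $\tilde{\Omega}(d^3/\mw^2 \cdot \log(\max_i \|\mu_i\|^2/\sigma_i^2))$ samples the resulting projection $\Pi_{\hat{V}}$ satisfies $\|\Pi_{\hat{V}}\mu_i - \mu_i\|_2 = O(\sigma_i\sqrt{k})$ for every $i$. The intuition is that the $k$-dimensional signal contributed to the covariance by the means dominates the isotropic noise from the component covariances in the top-$k$ eigendirections, so that the means are almost preserved by the projection; the $\log(\max_i\|\mu_i\|^2/\sigma_i^2)$ factor arises from the need to control the relative scales of signal and noise when the means can have large norm relative to $\sigma_i$.

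Second, working inside the projected $k$-dimensional subspace, standard Gaussian concentration (e.g.\ Lemma~\ref{lem:HW} applied to $\Pi_{\hat{V}}(X_j-\mu_i)$) shows that every sample $X_j$ from component $i$ lies within distance $O(\sigma_i\sqrt{k\log(n/\beta)})$ of $\Pi_{\hat{V}}\mu_i$. Combined with the first-stage bound, projected points from distinct components $i\ne j$ are separated by at least
\[
  \|\mu_i-\mu_j\|_2 - O\bigl((\sigma_i+\sigma_j)\sqrt{k}\bigr)\;\ge\;12(\sigma_i+\sigma_j)(k\log(n/\beta))^{1/4},
\]
which comfortably exceeds twice the maximum intra-component diameter $O(\sigma_i\sqrt{k\log(n/\beta)})$ in the regime dictated by the separation hypothesis. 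A distance-threshold (or single-linkage / connected-components) clustering rule at an appropriate threshold $\tau$ therefore correctly groups the projected samples with probability at least $1-\beta$, and the threshold need not be known exactly: we can either try a grid of $O(\log(\maxsigma/\minsigma))$ candidate values or use the smallest gap in the sorted pairwise distances.

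Third, given the correct partition $\{C_1,\dots,C_k\}$ in the original $\R^d$, I would set $\hat\mu_i$ to be the empirical mean of $C_i$ and $\hat w_i=|C_i|/n$. Lemma~\ref{lem:spectralnormofgaussians} applied to each cluster shows that $\tilde O(d/(\mw\alpha^2))$ samples per cluster yield $\|\hat\mu_i-\mu_i\|_2 \leq O(\alpha\sigma_i/\sqrt{\log(mk)})$ for every $i$, and a multiplicative Chernoff bound with a union bound across components shows that $\tilde O(k^2/\alpha^2)$ samples are enough for $|\hat w_i-w_i|\leq O(\alpha/k)$ simultaneously.

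The main obstacle is the PCA step: quantitatively establishing that the top-$k$ sample singular subspace of a Gaussian mixture is close enough to the span of the means that the induced mean-shifts are only $O(\sigma_i\sqrt{k})$, rather than the naive $O(\sigma_i\sqrt{d})$. This requires a Wedin/Davis--Kahan-type perturbation inequality combined with sharp matrix-Bernstein concentration for the sample covariance, carefully separating the contribution of the (rank-$k$) outer-product of the means from the (isotropic) covariance contribution; it is here that the $d^3/\mw^2$ factor and the logarithmic dependence on $\max_i\|\mu_i\|^2/\sigma_i^2$ enter. Once this step is in place, the clustering and estimation stages reduce to routine applications of Gaussian tail bounds and Chernoff's inequality.
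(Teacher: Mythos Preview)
The paper does not prove this lemma; it simply cites Theorem~4 of Vempala--Wang and remarks that the three terms in the sample complexity correspond respectively to clustering, mean estimation, and weight estimation. So there is no in-paper proof to compare against, and your overall three-stage outline (PCA, cluster in the projected space, estimate per cluster) is indeed the Vempala--Wang approach.

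That said, your clustering step contains a genuine gap. You assert that the inter-component separation $12(\sigma_i+\sigma_j)(k\log(n/\beta))^{1/4}$ ``comfortably exceeds twice the maximum intra-component diameter $O(\sigma_i\sqrt{k\log(n/\beta)})$.'' This inequality is false: for $k\log(n/\beta)\ge 1$ one has $(k\log(n/\beta))^{1/2}\ge(k\log(n/\beta))^{1/4}$, so the intra-component diameter in the projected space is \emph{larger} than the separation between projected means. A plain distance-threshold or single-linkage rule therefore cannot separate the components under a $k^{1/4}$-type separation; this is precisely the regime where naive clustering fails and the more delicate Vempala--Wang argument is needed.

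The missing idea is to exploit the sharp concentration of \emph{squared} pairwise distances (Lemma~\ref{lem:gauss-conc}): in the $k$-dimensional projected space, $\|X-Y\|_2^2$ concentrates around its mean to within $O\bigl((\sigma_i^2+\sigma_j^2)\sqrt{k\log(n/\beta)}+\|\mu_i-\mu_j\|\sqrt{\sigma_i^2+\sigma_j^2}\sqrt{\log(n/\beta)}\bigr)$. For a same-component pair the mean is $2k\sigma_i^2$; for a cross-component pair it is $k(\sigma_i^2+\sigma_j^2)+\|\mu_i-\mu_j\|_2^2$. The gap between these expectations is at least $\|\mu_i-\mu_j\|_2^2\ge 196(\sigma_i+\sigma_j)^2\sqrt{k\log(n/\beta)}$, which dominates the fluctuation. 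Clustering thus proceeds by comparing squared distances to their expected values (equivalently, by looking at the distance profile of each point and grouping those whose nearest-neighbor distances agree), not by a single threshold on distances themselves. Without this concentration-of-distances argument your second stage does not go through.
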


From Theorem~\ref{thm:sa}, this implies the following private learning algorithm.
\begin{theorem}
  There exists an $(\eps,\delta)$-differentially private algorithm which takes $n$ samples from some mixture of $k$ Gaussians $\cD = \sum_{i=1}^k w_i \cN(\mu_i, \sigma_i^2\id_{d \times d} )$, with the separation condition that $\|\mu_i - \mu_j\|_2 \geq (\sigma_i + \sigma_j)\tilde \Omega(k^{1/4}\cdot \poly\log\left(k,d,1/\eps,\log(1/\delta),\log^*(\frac{R\sigma_{\max}}{\alpha\sigma_{\min}})\right))$ , and input parameters $R, \sigma_{\min}, \sigma_{\max}$ such that $\|\mu_i\|_2 \leq R$ and $\sigma_{\min} \leq \sigma_i \leq \sigma_{\max}$ for all $i \in [k]$.
  With probability at least $9/10$, it outputs a set of points $\{\hat \mu_1, \dots, \hat \mu_k\}$ and weights $\{\hat w_1, \dots, \hat w_k\}$ such that $\|\hat \mu_{\pi(i)} - \mu_i\|_2 \leq O\left(\alpha\sigma_i\right)$ and $|\hat w_{\pi(i)} - w_i| \leq O(\alpha/k)$ for all $i \in [k]$, for some permutation $\pi$.
  The number of samples it requires is $n = \tilde O\left(\frac{\sqrt{kd} + k^{1.5}}{\eps} \log^{2}(1/\delta) \cdot 2^{O\left(\log^*\left(\frac{dR\sigma_{\max}}{\alpha\sigma_{\min}}\right)\right)}\left(\frac{d^3}{\mw^2} \log\left(\max_i \frac{|\mu_i|^2}{\sigma_i^2}\right)+ \frac{d}{\mw\alpha^2} + \frac{k^2}{\alpha^2}\right)\right)$.

\end{theorem}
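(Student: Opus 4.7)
The plan is to instantiate the sample-and-aggregate meta-theorem (Theorem~\ref{thm:sa}) with the Vempala--Wang algorithm as the underlying non-private learner for spherical Gaussian mixtures. The meta-theorem already packages the aggregation of per-call outputs using the~\cite{NissimSV16} 1-cluster primitive for the means together with a stability-based histogram for the mixing weights, and it delivers both the $(\epsilon,\delta)$-privacy guarantee and the advertised $O(\alpha\sigma_i)$ and $O(\alpha/k)$ accuracy; so the job reduces to (i) choosing the per-call parameters so that the hypotheses of Theorem~\ref{thm:sa} are satisfied, and (ii) verifying that the Vempala--Wang separation condition collapses to the $k^{1/4}\cdot \polylog$ form stated in the target theorem.

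First I would match the per-call parameters. Theorem~\ref{thm:sa} requires each of the $m$ non-private calls to return mean errors of order $\alpha \sigma_i/\sqrt{\log(mk)}$ with failure probability at most $\Theta(1/m)$, so that a union bound over the $m$ repetitions succeeds with constant probability. Plugging $\alpha' = \Theta(\alpha/\sqrt{\log mk})$ and $\beta' = \Theta(1/m)$ into the Vempala--Wang lemma gives per-call sample complexity
\[ n' \;=\; \tilde O\!\left( \frac{d^3}{w_{\min}^2}\log\!\left(\max_i \tfrac{|\mu_i|^2}{\sigma_i^2}\right) + \frac{d}{w_{\min}\alpha^2} + \frac{k^2}{\alpha^2}\right), \]
where the $\log(mk)$ overhead from $\alpha'$ is swallowed into $\tilde O(\cdot)$. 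Multiplying by the value of $m$ supplied by Theorem~\ref{thm:sa} yields exactly the total sample complexity stated in the theorem.

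Next I would verify the separation. The Vempala--Wang lemma demands $\|\mu_i-\mu_j\|_2 \geq 14(\sigma_i+\sigma_j)(k\ln(4n'/\beta'))^{1/4}$. After substituting the chosen $n',\beta'$, the logarithm inside the fourth root is $O(\log(n' m))$, which is polynomial in $\log k, \log d, \log(1/\epsilon), \log(1/\delta)$ and $\log^*(R\sigma_{\max}/(\alpha\sigma_{\min}))$; hence the resulting requirement is $(\sigma_i+\sigma_j)\cdot \tilde\Omega(k^{1/4}\polylog(\cdots))$, which is exactly the hypothesis of the target theorem. The side condition $\tau_{k,d}\geq c\alpha$ in Theorem~\ref{thm:sa} holds automatically because $\tau_{k,d} = \tilde\Omega(k^{1/4})$. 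Privacy then follows from the $(\epsilon,\delta)$-DP guarantee of Theorem~\ref{thm:sa}, and the final $O(\alpha\sigma_i)$ mean error arises because each aggregated ball has diameter at most the $O(\sqrt{\log mk})$-inflation of a mean-set of radius $O(\alpha\sigma_i/\sqrt{\log mk})$, cancelling to $O(\alpha\sigma_i)$.

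The delicate step will be the nested logarithmic book-keeping: $\alpha'$ depends on $m$, $m$ depends on $\log^*$ of the parameter range, and $n'$ and hence the Vempala--Wang separation depend on $\log(1/\beta') = O(\log m)$. I expect this to go through cleanly because each dependency is only logarithmic (or $\log^*$), and because the Vempala--Wang separation wraps everything in a fourth root, so all non-$k^{1/4}$ factors remain polylogarithmic and fit inside the $\tilde\Omega(\cdot)$ and $\tilde O(\cdot)$ already present in the statement. Once the parameters are chosen consistently in a single pass, Theorem~\ref{thm:sa} discharges the remaining privacy and aggregation argument.
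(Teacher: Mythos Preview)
Your proposal is correct and follows exactly the paper's approach: the paper simply states that the theorem follows by plugging the Vempala--Wang lemma into Theorem~\ref{thm:sa}, and you have unpacked precisely that instantiation (choosing $\alpha' = \Theta(\alpha/\sqrt{\log mk})$, $\beta' = \Theta(1/m)$, and verifying the separation condition reduces to $k^{1/4}\cdot\polylog$). Your parameter book-keeping is more explicit than what the paper records, but there is no substantive difference.
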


We note that plugging more recent advances in learning mixtures of Gaussians~\cite{HopkinsL18,KothariSS18,DiakonikolasKS18b} into Theorem~\ref{thm:sa} allows us to derive computationally and sample efficient algorithms for separations which are $o(k^{1/4})$.
However, we note that even non-privately, the specific sample and time complexities are significantly larger than what we achieve from our more direct construction.

\ifnum\supp=0
\section*{Acknowledgments}
Part of this work was done while the authors were visiting the Simons Institute for Theoretical Computer Science.  
Parts of this work were done while GK was supported as a Microsoft Research Fellow, as part of the Simons-Berkeley Research Fellowship program, while visiting Microsoft Research, Redmond, and while supported by a University of Waterloo startup grant.
This work was done while OS was affiliated with the University of Alberta. OS gratefully acknowledges the Natural Sciences and Engineering Research Council of Canada
(NSERC) for its support through grant \#2017-06701.
JU and VS were supported by NSF grants CCF-1718088, CCF-1750640, and CNS-1816028.
\fi

\addcontentsline{toc}{section}{References}
\bibliographystyle{alpha}
\bibliography{biblio}

\appendix

\section{Proofs for Deterministic Regularity Conditions}
    \label{sec:regularityproofs}
    
\begin{lemma}\label{lem:min-points-mult}
    Suppose $X^L=((X_1,\rho_1),\dots,(X_n,\rho_n))$
    are labelled samples from
    $\cD \in \cG(d,k,s,R,\kappa,\mw)$. If
    $$n \geq \frac{12}{\mw}\ln(2k/\beta),$$
    then with probability at least $1-\beta$,
    for every label $\rho_i$, for $1 \leq i \leq k$,
    the number of points having label $\rho_i$ is in
    $$\left[\frac{nw_i}{2},\frac{3nw_i}{2}\right].$$
\end{lemma}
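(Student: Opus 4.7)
The plan is to apply a standard multiplicative Chernoff bound componentwise and then take a union bound over the $k$ mixture components. For a fixed label $u \in [k]$, the count $N_u = |\{i : \rho_i = u\}|$ is a sum of $n$ i.i.d.\ Bernoulli random variables with success probability $w_u$, since the labels $\rho_1, \dots, \rho_n$ are drawn independently from the mixing distribution. Thus $\mathbb{E}[N_u] = n w_u$, and the multiplicative Chernoff bound (in the two-sided form with deviation parameter $1/2$) gives
\[
    \Pr\!\left[\, N_u \notin \left[\tfrac{n w_u}{2}, \tfrac{3 n w_u}{2}\right] \,\right]
    \;\leq\; 2 \exp\!\left(-\tfrac{n w_u}{12}\right).
\]

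Next I would lower bound the right-hand side using the hypothesis. Since $w_u \geq \mw$ for every $u$, and $n \geq \frac{12}{\mw} \ln(2k/\beta)$, the exponent satisfies $\frac{n w_u}{12} \geq \ln(2k/\beta)$, so the failure probability for each individual label is at most $\beta/k$. A union bound over the $k$ labels then yields the claimed overall failure probability of at most $\beta$.

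The only nontrivial step is choosing the correct multiplicative Chernoff statement: one needs the two-sided form $\Pr[|S - \mu| \geq \tfrac{1}{2}\mu] \leq 2 e^{-\mu/12}$, where the constant $12$ in the denominator is what drives the constant $12/\mw$ in the sample-complexity bound. There is no real obstacle here — the entire argument is a one-line Chernoff bound plus a union bound, and the constants align cleanly with the hypothesis. I would simply cite (or briefly reproduce) the multiplicative Chernoff inequality and carry out the two short inequalities above.
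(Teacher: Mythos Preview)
Your proposal is correct and matches the paper's own proof essentially verbatim: the paper simply invokes its stated multiplicative Chernoff lemma (the two-sided bound $\Pr[X \notin [mp/2, 3mp/2]] \leq 2e^{-mp/12}$) with $p = w_i$, uses $w_i \geq \mw$ together with the hypothesis $n \geq \tfrac{12}{\mw}\ln(2k/\beta)$ to get per-component failure probability at most $\beta/k$, and then takes a union bound over the $k$ components.
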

\begin{proof}
    It follows directly from Lemma \ref{lem:chernoff-mult}
    by setting $p = w_i$, and taking the union
    bound over all mixture components.
\end{proof}

\begin{lemma}\label{lem:min-points-add}
    Suppose $X^L=((X_1,\rho_1),\dots,(X_n,\rho_n))$
    are labelled samples from
    $\cD \in \cG(d,k,s,R,\kappa,\mw)$. If
    $$n \geq \frac{405k^2}{2\alpha^2}\ln(2k/\beta),$$
    then with probability at least $1-\beta$,
    for every label $\rho_i$, for $1 \leq i \leq k$,
    such that $w_i \geq \tfrac{4\alpha}{9k}$,
    the number of points having label $\rho_i$ is in
    $$\left[n\left(w_i-\frac{\alpha}{9k}\right),
        n\left(w_i + \frac{\alpha}{9k}\right)\right].$$
\end{lemma}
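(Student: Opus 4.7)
The plan is a direct application of a binomial concentration inequality together with a union bound, in the same style as the proof of Lemma~\ref{lem:min-points-mult}. Fix any label $i \in [k]$ with $w_i \geq 4\alpha/(9k)$, and let $N_i := |\{j : \rho_j = i\}|$. Since each $\rho_j$ is drawn independently from the multinomial with parameters $w_1,\dots,w_k$, the count $N_i$ is a sum of $n$ i.i.d.\ Bernoulli$(w_i)$ indicators with mean $nw_i$.

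I would then invoke a standard multiplicative Chernoff bound (the same one used to establish Lemma~\ref{lem:min-points-mult}) with relative deviation parameter $\gamma_i := \alpha/(9 k w_i)$. The assumption $w_i \geq 4\alpha/(9k)$ is precisely what is needed here: it guarantees $\gamma_i \leq 1/4 < 1$, which keeps us in the small-deviation regime where the two-sided Chernoff bound reads
\[
\Pr\!\left[|N_i - nw_i| \geq \tfrac{n\alpha}{9k}\right] \;=\; \Pr\!\left[|N_i - nw_i| \geq \gamma_i n w_i\right] \;\leq\; 2\exp\!\left(-c\,\frac{n\alpha^2}{k^2 w_i}\right)
\]
for an absolute constant $c$ (coming from the specific Chernoff variant used; the constant in the stated sample-complexity bound $\tfrac{405 k^2}{2\alpha^2}\ln(2k/\beta)$ is then chosen to match the resulting exponent, using $w_i \leq 1$ to drop the $w_i$ in the denominator).

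Finally, I would take a union bound over the (at most $k$) indices $i$ with $w_i \geq 4\alpha/(9k)$. The stated lower bound on $n$ is chosen so that each failure probability is at most $\beta/k$, so after the union bound the total failure probability is at most $\beta$, matching the conclusion. The main — and only — obstacle is the bookkeeping of constants in the Chernoff inequality to recover the precise $405/2$ in the hypothesis; there is no conceptual difficulty, as the argument is structurally identical to Lemma~\ref{lem:min-points-mult}, with the multiplicative Chernoff bound replacing its role for the additive-deviation conclusion.
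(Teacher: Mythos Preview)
Your proposal is correct and essentially matches the paper's proof: the paper invokes Lemma~\ref{lem:chernoff-add} (the additive/Bernstein form) with $p=w_i$ and $\epsilon=\alpha/(9k)$---the condition $\epsilon\le p/4$ being exactly the hypothesis $w_i\ge 4\alpha/(9k)$---and then union bounds over the $k$ components. The only cosmetic difference is that you phrase the bound multiplicatively with $\gamma_i=\alpha/(9kw_i)$ rather than citing the additive form directly, but the computation and the resulting constant $405/2$ are the same.
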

\begin{proof}
    It follows directly from Lemma \ref{lem:chernoff-add}
    by setting $p = w_i$ and
    $\epsilon=\tfrac{4\alpha}{9k}$, and taking
    the union bound over all mixture components.
\end{proof}

\begin{lemma}\label{lem:easy-intra-gaussian}
    Let $X^{L} = ((X_1,\rho_1),\dots,(X_n,\rho_n))$
    be a labelled sample from a Gaussian mixture
    in $\cD \in \cS(\ell,k,\kappa,s)$,
    where $\ell \geq 512 \ln(nk/\beta)$ and $s > 0$.
    Then with probability at least $1-\beta$,
    For every $1 \leq u \leq k$, the radius
    of the smallest ball containing the set of
    points with label $u$ (i.e.~$\{X_i : \rho_i = u\}$)
    is in $[\sqrt{d} \sigma_u / 2 , \sqrt{3 d} \sigma_u]$.
\end{lemma}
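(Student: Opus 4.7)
The plan is to prove both the upper and lower bounds on the radius of the smallest enclosing ball of $S_u = \{X_i : \rho_i = u\}$ using chi-squared concentration of Gaussian norms, as given by the Hanson--Wright inequality (Lemma~\ref{lem:HW}) applied with $A = \id_{\ell \times \ell}$ (so that $\tr(A) = \ell$, $\|A\|_F = \sqrt{\ell}$, and $\|A\|_2 = 1$). Throughout I will set $t = \ln(2n^2k/\beta)$ and exploit the hypothesis $\ell \geq 512\ln(nk/\beta)$, which implies $\ell \geq \Omega(t)$ with a very generous constant, so that both the $\sqrt{\ell t}$ and the $t$ correction terms in Hanson--Wright are a tiny fraction of $\ell$.

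For the \emph{upper bound}, I would first fix a component $u$ and argue that the ball centered at the true mean $\mu_u$ of radius $\max_{X_i \in S_u} \|X_i - \mu_u\|_2$ contains all points of $S_u$, so it suffices to upper bound this maximum distance. For each $X_i \in S_u$, the vector $(X_i - \mu_u)/\sigma_u \sim \cN(\vec 0, \id_{\ell \times \ell})$, and the upper tail of Hanson--Wright gives $\|X_i - \mu_u\|_2^2 \leq \sigma_u^2(\ell + 2\sqrt{\ell t} + 2t)$ with probability $\geq 1 - e^{-t}$. Plugging in $\ell \geq 512 t$ yields $\|X_i - \mu_u\|_2^2 \leq \sigma_u^2 \ell(1 + 2/\sqrt{512} + 2/512) \leq 3\ell \sigma_u^2$, hence $\|X_i - \mu_u\|_2 \leq \sqrt{3\ell}\,\sigma_u$. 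Union bounding over all $n$ points in the sample and all $k$ components gives $\max_i \|X_i - \mu_u\|_2 \leq \sqrt{3\ell}\,\sigma_u$ for every $u$ simultaneously, so the smallest enclosing ball of $S_u$ has radius at most $\sqrt{3\ell}\,\sigma_u$.

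For the \emph{lower bound}, the key observation is that the radius of any ball enclosing $S_u$ is at least half the diameter of $S_u$, so it suffices to exhibit a single pair of points in $S_u$ whose pairwise distance is at least $\sqrt{\ell}\,\sigma_u$. For any pair $X_i, X_j \in S_u$ (which exists, since we may assume without loss of generality that each cluster has at least two points, noting that otherwise the statement is vacuous or the radius is conventionally $0$ and only the upper bound is meaningful), the difference $X_i - X_j \sim \cN(\vec 0, 2\sigma_u^2 \id_{\ell \times \ell})$, so the lower tail of Hanson--Wright applied to $(X_i - X_j)/(\sqrt{2}\sigma_u)$ yields $\|X_i - X_j\|_2^2 \geq 2\sigma_u^2(\ell - 2\sqrt{\ell t}) \geq 2\sigma_u^2 \cdot \ell(1 - 2/\sqrt{512}) \geq \ell \sigma_u^2$ with probability $\geq 1 - e^{-t}$. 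Hence $\|X_i - X_j\|_2 \geq \sqrt{\ell}\,\sigma_u$ and the radius of the smallest enclosing ball of $S_u$ is at least $\sqrt{\ell}\,\sigma_u/2$.

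Union bounding over the at most $n$ single-point events and $\binom{n}{2}$ pair events and over all $k$ components contributes a factor of at most $n^2 k$, which combined with $e^{-t} \leq \beta/(2 n^2 k)$ from the choice $t = \ln(2 n^2 k/\beta)$ gives overall failure probability at most $\beta$. The only technical subtlety will be bookkeeping the constants so that the slack factor of $512$ in the hypothesis on $\ell$ cleanly absorbs the $2\sqrt{\ell t}$ and $2t$ Hanson--Wright correction terms into the target bounds $\sqrt{3\ell}\sigma_u$ and $\sqrt{\ell}\sigma_u/2$; this is routine given the generous constant.
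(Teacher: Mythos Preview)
Your proposal is correct and follows essentially the same approach as the paper: both arguments derive the lower bound from a pairwise-distance concentration (the paper via Lemma~\ref{lem:gauss-conc}, you via Hanson--Wright on $X_i - X_j$), obtaining $\|X_i - X_j\|_2 \geq \sqrt{\ell}\,\sigma_u$ and hence radius $\geq \sqrt{\ell}\,\sigma_u/2$. The one genuine difference is in the upper bound: the paper first establishes the pairwise upper bound $\|X_i - X_j\|_2 \leq \sqrt{3\ell}\,\sigma_u$ for all pairs, then takes the empirical mean $P$ as the candidate center and uses convexity ($\|X - P\| \leq \tfrac{1}{m_u}\sum_i \|X_i - X\|$) to conclude; you instead center at the true mean $\mu_u$ and bound $\|X_i - \mu_u\|_2$ directly via single-point concentration. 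Your route is slightly more economical since it avoids the detour through pairwise distances for the upper bound, at the cost of using a center ($\mu_u$) that is not a function of the data---which is perfectly fine here since the lemma only asserts existence of a small enclosing ball. One minor bookkeeping note: your claim ``$\ell \geq 512\,t$'' is not literally true for $t = \ln(2n^2k/\beta)$ (the ratio is closer to $256$ in the worst case), but as you note the slack is still far more than enough for the constants to work out.
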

\begin{proof}
    For a given $u$, if $X_i,X_j$ are samples from
    $G_{u}$, then by Lemma \ref{lem:gauss-conc}, we have
    with probability at least $1-4e^{-t^2/8}$,
    $$2\ell\sigma^2_{u} - 2t\sigma^2_{u}\sqrt{\ell}
        \leq \llnorm{X_i - X_j}^2 \leq
        2\ell\sigma^2_{u} + 2t\sigma^2_{u}\sqrt{\ell}.$$
    Setting $t = 8\sqrt{2\ln(mk/\beta)}$, and
    taking a union bound over all Gaussians and
    all pairs of samples from every Gaussian,
    we have that with probability at least $1-\beta$,
    for any $u$ and all pairs of points $X_i,X_j$
    from $G_{u}$,
    $$
    2\ell\sigma^2_{u} -
        16\sigma^2_{u}\sqrt{2\ell\ln\left(\frac{mk}{\beta}\right)}
        \leq \llnorm{X_i - X_j}^2 \leq
        2\ell\sigma^2_{u} + 16\sigma^2_{u}\sqrt{2\ell\ln\left(\frac{mk}{\beta}\right)}.
    $$
    By our assumption on $\ell$, we get
    $$\ell\sigma^2_{u} \leq \llnorm{X_i - X_j}^2 \leq
        3\ell\sigma^2_{u}.$$
    Now, because the distance between any two points
    from the same Gaussian
    is at least $\sqrt{\ell}\sigma_u$, the smallest ball
    containing all the points must have radius at least
    $\sqrt{\ell}\sigma_u / 2$.\\
    For the upper bound, for any $u$, let the 
    mean of all points,
    $P = \tfrac{1}{m_{u}}\sum\limits_{i=1}^{m_{u}}{X_i}$
    be a candidate center for the required ball,
    where $X_i$ are samples from $G_{u}$,
    and $m_{u}$ is the number of
    samples from the Gaussian.
    Then for any point $X$ sampled from
    the Gaussian,
    \begin{align*}
        \llnorm{X - P} &\leq
                \llnorm{\frac{1}{m_{u}}
                    \sum\limits_{i=1}^{m_{u}}
                    {X_i} - X}
            \leq \frac{1}{m_{u}}\sum\limits_{i=1}^{m_{u}}
                {\llnorm{X_i - X}}
            \leq \frac{1}{m_{u}} \cdot
                m_{u}\sqrt{3\ell}\sigma_{u}
            = \sqrt{3\ell}\sigma_{u}.
    \end{align*}
    This proves the lemma.
\end{proof}

\begin{lemma}\label{lem:easy-inter-gaussian}
    Let $X^{L} = ((X_1,\rho_1),\dots,(X_n,\rho_n))$
    be a labelled sample from a Gaussian mixture
    in $\cD \in \cS(\ell,k,\kappa,C\sqrt{\ell})$,
    where $\ell \geq 512 \max\{\ln(nk/\beta),k\}$
    and $C>1$ is a universal constant. Then with
    probability at least $1-\beta$, For every
    $\rho_i \neq \rho_j$,
    $$\llnorm{X_i - X_j} \geq \frac{C}{2}
        \sqrt{\ell} \max\{\sigma_{\rho_i}, \sigma_{\rho_j}\}.$$
\end{lemma}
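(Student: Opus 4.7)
The plan is to leverage the separation condition together with Gaussian concentration for the difference of two independent samples. Fix indices $i \neq j$ with labels $\rho_i = a$ and $\rho_j = b$, $a \neq b$. Write
\[ X_i - X_j = (\mu_a - \mu_b) + W, \qquad W := (X_i - \mu_a) - (X_j - \mu_b), \]
and observe that $W \sim \cN(\vec{0}, (\sigma_a^2 + \sigma_b^2)\id_{\ell \times \ell})$ (since $X_i$ and $X_j$ are independent draws from spherical Gaussians with the respective variances).

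The first step is to control $\|W\|$ via a chi-squared concentration bound. I would apply the Hanson--Wright inequality (Lemma~\ref{lem:HW}) to $W/\sqrt{\sigma_a^2 + \sigma_b^2}$, which is a standard Gaussian in $\R^\ell$, with $A = \id$. Taking $t = O(\ln(nk/\beta))$, this yields
\[ \|W\|^2 \leq \left(\ell + 2\sqrt{\ell t} + 2t\right)(\sigma_a^2 + \sigma_b^2) \]
with probability at least $1 - \beta/n^2$. Using $\ell \geq 512\ln(nk/\beta)$, the bracketed factor is at most $(1 + o(1))\,\ell$, so $\|W\|^2 \leq \tfrac{3}{2}\ell(\sigma_a^2 + \sigma_b^2) \leq 3\ell\max\{\sigma_a,\sigma_b\}^2$, giving $\|W\| \leq \sqrt{3\ell}\max\{\sigma_a,\sigma_b\}$.

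The second step is to combine this with the separation hypothesis. Since $\cD \in \cS(\ell,k,\kappa,C\sqrt{\ell})$, we have $\|\mu_a - \mu_b\| \geq C\sqrt{\ell}\max\{\sigma_a,\sigma_b\}$, and the reverse triangle inequality gives
\[ \|X_i - X_j\| \geq \|\mu_a - \mu_b\| - \|W\| \geq (C - \sqrt{3})\sqrt{\ell}\max\{\sigma_a,\sigma_b\} \geq \tfrac{C}{2}\sqrt{\ell}\max\{\sigma_a,\sigma_b\}, \]
where the final inequality uses that $C$ is taken sufficiently large (as is the case in every application of this condition in the paper, where $C = \xi + 16\sqrt\kappa$ with $\xi$ a universal constant large enough that $C \geq 2\sqrt{3}$).

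The final step is a union bound over all pairs $i \neq j$ with $\rho_i \neq \rho_j$, of which there are at most $\binom{n}{2} \leq n^2$. The per-pair failure probability $\beta/n^2$ combines with the $t = O(\ln(n/\beta))$ choice in Hanson--Wright (which inflates the constant inside the square root only logarithmically, harmlessly absorbed by the $\ell \geq 512\ln(nk/\beta)$ assumption), yielding overall failure probability at most $\beta$. The main subtlety is simply bookkeeping of constants to ensure the loss from $\|W\|$ is at most $C/2 \cdot \sqrt{\ell}\max\{\sigma_a,\sigma_b\}$; this is where we rely on $\ell$ being at least a moderately large multiple of $\ln(nk/\beta)$ (to tame the sub-dominant terms in chi-squared concentration) and on $C$ being a sufficiently large constant.
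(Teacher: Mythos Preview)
Your proof is correct but takes a different route from the paper. The paper applies Lemma~\ref{lem:gauss-conc} (the Vempala--Wang concentration bound for $\llnorm{x-y}^2$) directly, keeping the squared norm intact and absorbing the cross term $2(\mu_a-\mu_b)^T W$ into $\tfrac{1}{2}\llnorm{\mu_a-\mu_b}^2$; this retains the positive contribution $(\sigma_a^2+\sigma_b^2)\ell$ from $\llnorm{W}^2$ in the lower bound. You instead split $X_i-X_j = (\mu_a-\mu_b)+W$, upper-bound $\llnorm{W}$ via Hanson--Wright, and apply the reverse triangle inequality. Your approach is cleaner (no cross term to manage) but slightly lossier in constants, since the triangle inequality discards the $\llnorm{W}^2$ contribution rather than exploiting it. Both approaches in fact need $C$ to be a sufficiently large absolute constant rather than merely $C>1$ as literally stated---you correctly flag this, and the paper's own intermediate step (absorbing the cross term) implicitly requires roughly $C\geq 4$ as well. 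Either argument is a valid proof of the lemma as used in the paper.
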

\begin{proof}
    Let $x,y$ be points as described in the
    statement of the Lemma. From Lemma
    \ref{lem:gauss-conc}, we know that
    with probability at least $1-4e^{-t^2/8}$,
    \begin{align*}
        \llnorm{x-y}^2 &\geq \ex{}{\llnorm{x-y}^2} -
                t \left((\sigma_i^2 + \sigma_j^2)\sqrt{\ell} +
                2\llnorm{\mu_i - \mu_j}
                \sqrt{\sigma_i^2 + \sigma_j^2}\right)\\
            &= (\sigma^2_i + \sigma^2_j)\ell +
                \llnorm{\mu_i-\mu_j}^2 -
                t \left((\sigma_i^2 + \sigma_j^2)\sqrt{\ell} +
                2\llnorm{\mu_i - \mu_j}
                \sqrt{\sigma_i^2 + \sigma_j^2}\right).
    \end{align*}
    Setting $t = 16\sqrt{\ln(mk/\beta)}$,
    $\sigma = \max\{\sigma_i,\sigma_j\}$,
    taking the union bound over all pairs
    of Gaussians, and all pairs of points
    from any two Gaussians,
    and using the assumption on $d$,
    we get the following with probability at
    least $1-\beta$.
    \begin{align*}
        \llnorm{x-y}^2 &\geq (\sigma^2_i + \sigma^2_j)\ell +
                \frac{\llnorm{\mu_i-\mu_j}^2}{2} -
                16(\sigma_i^2 + \sigma_j^2)
                \sqrt{\ell\ln\left(\frac{mk}{\beta}\right)}\\
            &\geq \left(1 + \frac{C^2}{2} -
                \frac{C}{2\sqrt{2}}\right)
                \sigma^2 \ell
            > \frac{C^2}{4}\sigma^2 \ell
    \end{align*}
    This proves the lemma.
\end{proof}

\section{Private Location for Mixtures of Gaussians}
    \label{sec:missingproofs-pgloc}

\begin{theorem}[\cite{NissimS18}]\label{thm:ploc}
    Suppose $X_1,\dots,X_m$ are points from
    $S^\ell \subset \R^\ell$, where $S^\ell$ is finite.
    Let $m,t,\beta,\eps,\delta$ be such that,
    $$t \geq O\left(\frac{m^{a}\cdot\sqrt{\ell}}{\eps}
        \log\left(\frac{1}{\beta}\right)
        \log\left(\frac{m\ell}{\beta\delta}\right)
        \sqrt{\log\left(\frac{1}{\beta\delta}\right)}
        \cdot 9^{\log^*(2\abs{S}\sqrt{\ell})}\right),$$
    where $0<a<1$ is a constant that could be arbitrarily
    small. Let $r_{opt}$ be the radius of the smallest
    ball that contains at least $t$ points
    from the sample.
    There exists an $(\eps,\delta)$-DP algorithm
    (called $\PLoc_{\eps,\delta,\beta}$)
    that returns a ball of radius at most
    $w \cdot r_{opt}$ such that it contains at
    least $t-\Delta$ points from the sample with
    error probability $\beta$, where $w = O(1)$ (where the constant depends on the value of $a$),
    and
    $$\Delta = O\left(\frac{m^{a}}{\eps}
        \log\left(\frac{1}{\beta}\right)
        \log\left(\frac{1}{\beta\delta}\right)
        \cdot 9^{\log^*(2\abs{S}\sqrt{\ell})}\right).$$
\end{theorem}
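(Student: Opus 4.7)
The plan is to recover the standard $1$-cluster construction of Nissim--Stemmer, split into two differentially private subroutines that are composed sequentially: a \emph{radius-finder} producing some $\tilde r = O(r_{\opt})$, and a \emph{center-finder} that, given $\tilde r$, privately outputs a point $p$ whose ball $B_{w \tilde r}(p)$ contains at least $t-\Delta$ of the input points. The overall privacy loss will follow by basic composition applied to the two $(\eps/2, \delta/2)$ subroutines, and the accuracy parameter $\Delta$ will arise from the slack of the $\AboveThreshold$ instantiations that drive both halves.

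For the radius-finder, I would enumerate the geometric sequence of candidate radii $r_i = 2^i \cdot r_0$ between the minimum representable distance in $S^\ell$ and the diameter of $S^\ell$. For each candidate radius $r$ I associate a quality score analogous to the one used in Algorithm~\ref{alg:TerrificRadius} but measuring only the ``contains $\geq t$ points'' property, i.e.\ $L_X(r) = \frac{1}{t}\max_{x_1,\dots,x_t} \sum_j \#^t\{x \in X : x \in B_r(x_j)\}$, which has $\ell_1$-sensitivity at most $2$ and is quasi-concave in $\log r$. This quasi-concavity is the feature that lets one apply a recursive binary-search style differentially private selection mechanism, and iterating this recursion to depth $\log^*(2|S|\sqrt\ell)$ (as in~\cite{NissimSV16}) produces the $9^{\log^*(2|S|\sqrt\ell)}$ factor. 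I would argue that this subroutine returns $\tilde r$ with $r_{\opt} \leq \tilde r \leq 4 r_{\opt}$ with probability $\geq 1-\beta/2$, with the additive slack in the threshold absorbed into $\Delta$.

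For the center-finder, given $\tilde r$, I would use the LSH-plus-random-axis-aligned-box construction: hash each point through a random LSH family tuned so that points within distance $\tilde r$ collide with constant probability while points at distance $\gamma \tilde r$ (for a modest $\gamma > 1$) collide with probability $\leq m^{-1-a}$. The $\sqrt\ell$ in the bound on $t$ reflects the usual LSH amplification cost in $\ell_2$. Restricting to buckets of size $\geq m^{a}$ yields a $\poly(m)$-length list of candidate centers. I would then run $\AboveThreshold$ on the queries ``does $B_{\gamma \tilde r}(\text{bucket avg})$ contain at least $t - \Delta/2$ points?'' until one succeeds; by standard $\AboveThreshold$ analysis the slack contributes the remaining $O(m^a \log(1/\beta)\log(1/\delta\beta)/\eps)$ additive term in $\Delta$. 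Averaging inside a winning bucket produces a point $p$ whose ball of radius $w\tilde r$ (with $w = O(\gamma)$ depending on $a$) captures $\geq t - \Delta$ input points, because the bucket contains an $m^{-a}$ fraction of points inside $B_{r_{\opt}}$ and its diameter is bounded by $\gamma \tilde r$.

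The main obstacle will be getting the bookkeeping of the two $\AboveThreshold$ slacks to come out additively with the advertised $m^a$ factor rather than a worse $\sqrt{m}$ or $\log m$ dependence, since any naive application of sparse-vector or exponential-mechanism selection over an exponential candidate set gives the wrong exponent. This is exactly where the LSH pruning must be tight enough to keep the effective candidate count at $\poly(m)$ while losing only a $m^a$ factor in the lower bound on $t$; I would handle this by following the Nissim--Stemmer parameter choice that balances the LSH amplification depth against the number of $\AboveThreshold$ queries, and then post-process the LSH output using only functions of private counts, so that post-processing (Lemma~\ref{lem:post-processing}) keeps the final mechanism $(\eps,\delta)$-DP.
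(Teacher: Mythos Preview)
The paper does not prove this statement: Theorem~\ref{thm:ploc} is quoted from~\cite{NissimS18} and used as a black box (see the bracketed citation in the theorem header and the discussion of the Nissim--Stemmer {\tt GoodRadius}/{\tt GoodCenter} pipeline in Section~\ref{sec:TerrificBall}). So there is no in-paper proof to compare your proposal against.

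That said, your sketch is faithful to the actual Nissim--Stemmer construction as summarized in the paper's overview paragraph ``The 1-Cluster Algorithm of Nissim--Stemmer'': a low-sensitivity, quasi-concave radius score handled by the recursive mechanism of~\cite{NissimSV16} (yielding the $9^{\log^*}$ factor), followed by LSH-plus-random-boxes to generate a $\poly(m)$ candidate list and $\AboveThreshold$ to pick a center. One small correction: the $9^{\log^*(2|S|\sqrt{\ell})}$ term does not arise from the radius-finding step as you describe it. The radius candidates form a geometric sequence of length only $O(\log(|S|\sqrt{\ell}))$, and selecting among them costs at most a $\log\log$ factor. The iterated-logarithm tower comes from the {\tt GoodCenter} side, specifically from the private interior-point/axis-partition primitive of~\cite{BunNSV15,NissimSV16} that is used to aggregate the LSH buckets coordinate-wise over the discrete domain $S$. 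If you were to flesh this out, that is where the $\log^*$ bookkeeping belongs.
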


\begin{algorithm}[h!] \label{alg:pgloc}
\caption{Private Gaussians Location
    $\PGLoc(S,t; \eps,\delta,R,\minsigma,\maxsigma)$}
\KwIn{Samples $X_1,\dots,X_{m} \in \R^{\ell}$
    from a mixture of Gaussians.
    Number of points in the target ball: $t$.
    Parameters $\eps, \delta, \beta > 0$.}
\KwOut{Center $\vec{c}$ and radius $r$
    such that $B_r(\vec{c})$ contains
    at least $t/2$ points from $S$.}\vspace{10pt}
    
Set parameters: $\lambda \gets 0.1$\vspace{10pt}

Let $\cX$ be a grid in
    $[ -R - 3\sqrt{\ell} \maxsigma\kappa,
    R + 3\sqrt{\ell} \maxsigma\kappa]^{\ell}$
    of width $\lambda = \frac{\minsigma}{10}$.\\
Round points of $S$ to their nearest points in
    the grid to get dataset $S'$\\
$(\vec{c}, r') \gets \PLoc_{\eps,\delta,
    \beta}(S', t, \cX)$\\
Let $r \gets r' + \lambda\sqrt{\ell}$
\vspace{10pt}

\Return $(\vec{c},r)$
\vspace{10pt}
\end{algorithm}

\begin{proof}[Proof of Theorem \ref{thm:pgloc}]
    We show that Algorithm \ref{alg:pgloc}
    satisfies the conditions in the theorem.

    Privacy follows from Theorem \ref{thm:ploc}, and
    post-processing (Lemma~\ref{lem:post-processing}).

    The first part of the theorem follows
    directly from Theorem \ref{thm:ploc}
    by noting that $\Delta \leq \tfrac{t}{2}$
    for large enough $n,\ell,\abs{S}$, and
    $t = \gamma n$, where $0 < \gamma \leq 1$.
    For all $x \in X$, with high probability, it
    holds that
    $\llnorm{x} \leq R + O(\sqrt{\ell}\maxsigma)$
    by applying Lemma \ref{lem:spectralnormofgaussians}
    after rescaling $x$ appropriately by its covariance,
    then applying the triangle inequality, and noting
    that the empirical mean of a set of points
    lies in their convex hull.\\
    Now, we move to the second part of the
    lemma. Because of the discretization, we
    know that,
    \begin{align}
        \llnorm{x-x'} \leq \lambda\sqrt{\ell}.
            \label{eq:ploc1}
    \end{align}
    Therefore,
    \begin{align*}
        \llnorm{\vec{p}-x} &\leq
                \llnorm{\vec{p}-x'} + \llnorm{x'-x}\\
            &\leq r' + \lambda\sqrt{\ell}.
    \end{align*}
    Let $x,y \in S$ and $x',y' \in S'$ be
    their corresponding rounded points. Then
    from Equation \ref{eq:ploc1}, we know that
    \begin{align}
        \llnorm{x'-y'} &\leq \llnorm{x'-x} +
                \llnorm{x-y} + \llnorm{y-y'}\nonumber\\
            &\leq \llnorm{x-y} + 2\lambda\sqrt{\ell}
                \label{eq:ploc2}.
    \end{align}
    Let $r'_{opt}$ be the radius of the smallest
    ball containing at least $t$ points
    in $S'$. Because of Equation \ref{eq:ploc2},
    we can say that
    $$r'_{opt} \leq r_{opt} + 2\lambda\sqrt{\ell}.$$
    From the correctness of $\PLoc$,
    we can conclude the following,
    \begin{align*}
        r' &\leq cr'_{opt}\\
            &\leq c \left(r_{opt} +
                2\lambda\sqrt{\ell}\right),
    \end{align*}
    where $c>4$ is an absolute constant.
    This gives us,
    \begin{align*}
        r &= r' + \lambda\sqrt{\ell}\\
            &\leq c\left(r_{opt} +
                \frac{9}{4}\lambda\sqrt{\ell}\right)\\
            &\leq c\left(r_{opt} +
                \frac{1}{4}\sqrt{\ell}\minsigma\right).
    \end{align*}
This completes the proof.
\end{proof}

\section{Private Estimation of a Single Spherical Gaussian}
    \label{sec:missingproofs-psge}

\begin{algorithm}[t!] \label{alg:psge}
\caption{Private Spherical Gaussian Estimator
    $\PSGE(X ; \vec{c}, r, \eps,\delta)$}
\KwIn{Samples $X_1,\dots,X_{m} \in \R^{\ell}$.
    Center $\vec{c} \in \R^{\ell}$ and radius $r > 0$ of target component.
    Parameters $\eps, \delta > 0$.}
\KwOut{Mean and variance of the Gaussian.} \vspace{10pt}

Set parameters:
$
    \Delta_{\frac{\eps}{3},\sigma}
        \gets
        \frac{6r^2}{\eps}~~~
        \Delta_{\frac{\eps}{3},\delta,\mu}
        \gets
        \frac{6r\sqrt{2\ln(1.25/\delta)}}{\eps}
$\\\vspace{10pt}

Let $X' \gets X \cap B_r(\vec{c})$\\
For each $i$ such that $X_{2i}, X_{2i-1} \in X'$,
    let $Y_i \gets \frac{1}{\sqrt{2}}(X_{2i} - X_{2i-1})$,
    and let $Y \gets Y_1,\dots,Y_{m'}$\\
Let $m'_X = \PCount_{\frac{\eps}{3}}(X,B_r(\vec{c}))$
    and $m'_Y  = \tfrac{m'_X}{2}$
\vspace{10pt}

\tcp{Private Covariance Estimation}
$\wh{\sigma}^2 \gets \frac{1}{m'_Y \ell}\left(
    \sum\limits_{i=1}^{m'}
    {\llnorm{Y_i}^2} + z_{\sigma}\right)$,
    where $z_{\sigma} \sim
    \Lap\left(\Delta_{\frac{\eps}{2},\sigma}\right)$
\vspace{10pt}

\tcp{Private Mean Estimation}
%For each $i$, $X'_i \gets X_i - \vec{c}$\\
$\wh{\mu} \gets \frac{1}{m'_X}\left(
    \sum\limits_{i=1}^{\abs{X'}}{X'_i} +
    z_{\mu}\right)$, where $z_{\mu} \sim
    \cN\left(0, \Delta^2_{\frac{\eps}{3},
    \delta,\mu}\id_{\ell \times \ell}\right)$\\
%Let $\wh{\mu} \gets \wb{\mu} + \vec{c}$
\vspace{10pt}

\Return $(\wh{\mu},\wh{\sigma}^2)$
\vspace{10pt}
\end{algorithm}

\begin{proof}[Proof of Theorem \ref{thm:easy-learn-gaussian}]
    We show that Algorithm \ref{alg:psge}
    satisfies the conditions in the theorem.

    For the privacy argument for Algorithm
    \ref{alg:psge}, note that we truncate the
    dataset such that all points in $X$ lie
    within $\ball{\vec{c}}{r}$. Now, the
    following are the sensitivities of the
    computed functions.
    \begin{enumerate}
        \item $\sum\limits_{i=1}^{\abs{X'}}{X_i}$:
            $\ell_2$-sensitivity is $2r$.
        \item $\sum\limits_{i=1}^{m'}{Y^2_i}$:
            $\ell_1$-sensitivity is
            $2r^2$.
        \item Number of points in $X'$:
            $\ell_1$-sensitivity is $1$.
    \end{enumerate}
    Therefore, by Lemmata \ref{lem:gaussiandp},
    \ref{lem:laplacedp}, and \ref{lem:composition},
    the algorithm is $(\eps,\delta)$-DP.

    Firstly, either not all the points of $X$ lie
    within $B_{r}(\vec{c})$, that is, $\abs{X'} < m$,
    in which case, we're done, or all the points
    do lie within that ball, that is, $\abs{X'} = m$.
    Since we can only provide guarantees
    when we have a set of untampered random
    points from a Gaussian, we just deal
    with the second case.
    So, $m'_X = m + e$, where $e$ is
    the noise added by $\PCount$. Because
    $m \geq \tfrac{6}{\eps}\ln(5/\beta)$,
    using Lemma \ref{lem:lap-conc}, we know that
    with probability at least
    $1-\tfrac{\beta}{5}$, $\abs{e} \leq
    \tfrac{3\ln(5/\beta)}{\eps}$,
    which means that $m \geq 2\abs{e}$.
    Because of this, the following holds.
    \begin{align}
        \frac{1}{m}
            \left(1-\frac{\abs{e}}{2m}\right)
            \leq \frac{1}{m} \leq
            \frac{1}{m}
            \left(1+\frac{2\abs{e}}{m}\right)
        ~~~\text{and}~~~
        \frac{1}{m}
            \left(1-\frac{\abs{e}}{2m}\right)
            \leq \frac{1}{m + e} \leq
            \frac{1}{m}
            \left(1+\frac{2\abs{e}}{m}\right)
            \label{eq:pge1}
    \end{align}
            
    We start by proving the first part
    of the claim about the estimated mean.
    Let $S_{X'} = \sum\limits_{i=1}^{m}{X'_i}$
    and $S^{\vec{c}}_{X'} = \sum\limits_{i=1}^{m}{(X'_i-\vec{c})}$.
    Then,
    $$\llnorm{\frac{1}{m}S^{\vec{c}}_{X'} -
        \frac{1}{m+e}S^{\vec{c}}_{X'}} =
        \frac{\llnorm{S^{\vec{c}}_{X'}}
        \abs{e}}{m(m+e)}.$$
    We want the above to be at most
    $\tfrac{\sigma\alpha_{\mu}}{4}$. This
    gives us the following.
    $$m(m+e) \geq
        \frac{4\llnorm{S^{\vec{c}}_{X'}}\abs{e}}{\alpha_{\mu}\sigma}$$
    Because $\abs{e} \leq \tfrac{m}{2}$, it
    is sufficient to have the following.
    \begin{align*}
        m^2 &\geq
            \frac{8\llnorm{S^{\vec{c}}_{X'}}\abs{e}}{\alpha_{\mu}\sigma}
    \end{align*}
    But because $X_i \in B_r(\vec{c})$ for all $i$,
    $\llnorm{X_i -\vec{c}} \leq r$.
    So, due to our bound on $\abs{e}$,
    it is sufficient to have the following.
    \begin{align*}
        m^2 \geq
            \frac{24mr\ln(5/\beta)}{\eps\alpha_{\mu}\sigma}
        \iff m \geq
            \frac{24r\ln(5/\beta)}{\eps\alpha_{\mu}\sigma}
    \end{align*}
    This gives us,
    $$\llnorm{\frac{1}{m}S_{X'} -
        \frac{1}{m+e}S_{X'}} \leq
        \frac{\sigma\alpha_{\mu}}{4}.$$
    Now, we want to bound the distance
    between $\tfrac{1}{m+e}S_{X'}$ and
    $\tfrac{1}{m+e}(S_{X'}+z_{\mu})$
    (by $\tfrac{\sigma\alpha_{\mu}}{4}$).
    \begin{align*}
        \llnorm{\frac{1}{m+e}S_{X'} -
            \frac{1}{m+e}(S_{X'}+z_{\mu})} \leq
            \frac{\sigma\alpha_{\mu}}{4}
        \iff \llnorm{\frac{z_{\mu}}{m+e}} \leq
            \frac{\sigma\alpha_{\mu}}{4}
        \iff m+e \geq
            \frac{4\llnorm{z_{\mu}}}{\sigma\alpha_{\mu}}
    \end{align*}
    Because $m \geq 2\abs{e}$, it is sufficient
    to have the following.
    \begin{align*}
        \frac{m}{2} \geq
            \frac{4\llnorm{z_{\mu}}}{\sigma\alpha_{\mu}}
        \iff m \geq
            \frac{8\llnorm{z_{\mu}}}{\sigma\alpha_{\mu}}
    \end{align*}
    Using Lemma \ref{lem:chi-conc}, we
    and noting that $\ell \geq 8\ln(10/\beta)$, we
    know that with probability at least
    $1-\tfrac{\beta}{5}$,
    \begin{align*}
        \llnorm{z_{\mu}} \leq
                \frac{6r\sqrt{2\ln(1.25/\delta)}
                \sqrt{2\ell}}{\eps}
            = \frac{12r\sqrt{\ell
                \ln(1.25/\delta)}}{\eps}
    \end{align*}
    Therefore, it is sufficient to have
    the following.
    \begin{align*}
        m \geq
            \frac{96r\sqrt{\ell
            \ln(1.25/\delta)}}{\sigma\alpha_{\mu}\eps}
    \end{align*}
    %Using Lemma \ref{lem:gauss-conc-1d},
    %and taking the union bound over all
    %$\ell$ dimensions, we know that with
    %probability at least $1-\tfrac{\beta}{5}$,
    %\begin{align*}
    %    \llnorm{z_{\mu}} \leq
    %            \frac{6r\sqrt{2\ln(2.5/\delta)}
    %            \sqrt{2\ell\ln(10\ell/\beta)}}{\eps}
    %        = \frac{12r\sqrt{\ell\ln(10\ell/\beta)
    %            \ln(2.5/\delta)}}{\eps}
    %\end{align*}
    %Therefore, it is sufficient to have
    %the following.
    %\begin{align*}
    %    m \geq
    %        \frac{96r\sqrt{\ell\ln(10\ell/\beta)
    %        \ln(2.5/\delta)}}{\sigma\alpha_{\mu}\eps}
    %\end{align*}
    To complete the proof about the
    accuracy of the estimated mean,
    we need to bound the distance between
    $\tfrac{1}{m}S_{X'}$ and $\mu$
    (by $\tfrac{\sigma\alpha_{\mu}}{2}$).
    Let $\wt{\mu} = \tfrac{1}{m}S_{X'}$.
    Using Lemma \ref{lem:spectralnormofgaussians},
    and the fact that
    $m \geq \tfrac{c_1\ell + c_2\log(1/\beta)}{\alpha^2_{\mu}}$
    for universal constants $c_1,c_2$,
    we get that with
    probability at least $1-\tfrac{\beta}{5}$,
    \begin{align*}
        \llnorm{\mu - \wt{\mu}} \leq \frac{\alpha_{\mu}\sigma}{2}.
    \end{align*}
    We finally apply the triangle inequality
    to get,
    \begin{align*}
        \llnorm{\mu - \wh{\mu}} \leq
                \frac{\sigma\alpha_{\mu}}{2} +
                \frac{\sigma\alpha_{\mu}}{4} +
                \frac{\sigma\alpha_{\mu}}{4} 
            = \sigma\alpha_{\mu}.
    \end{align*}
            
    We now prove the lemma about the
    estimated covariance. Note that
    $m'=\tfrac{m}{2}$. Let
    $\Sigma_Y = \sum\limits_{i=1}^{m'}{\llnorm{Y_i}^2}$.
    We want to show the following.
    \begin{align*}
        \left(1-\alpha_{\sigma}\right)^{1/3}\sigma^2
            \leq \frac{1}{m'\ell}\Sigma_Y &\leq
            \left(1+\alpha_{\sigma}\right)^{1/3}\sigma^2\\
        \left(1-\alpha_{\sigma}\right)^{1/3}
            \frac{1}{m'\ell}\Sigma_Y
            \leq \frac{1}{m'_{Y}\ell}\Sigma_Y &\leq
            \left(1+\alpha_{\sigma}\right)^{1/3}\frac{1}{m'\ell}
            \Sigma_Y\\
        \left(1-\alpha_{\sigma}\right)^{1/3}
            \frac{1}{m'_{Y}\ell}\Sigma_Y
            \leq \frac{1}{m'_{Y}\ell}(\Sigma_Y + z_{\sigma})
            &\leq
            \left(1+\alpha_{\sigma}\right)^{1/3}
            \frac{1}{m'_{Y}\ell}\Sigma_Y
    \end{align*}
    The claim would then follow by substitution.
    We use the fact that for any $t \in [0,1]$,
    \begin{align}
        (1-t)^{1/3} \leq 1-\frac{t}{6} \leq
            1+\frac{t}{6} \leq (1+t)^{1/3}.
            \label{eq:pge2}
    \end{align}
    We start by proving the first inequality.
    Note that for each $i,j$,
    $Y^j_i \sim \cN(0,\sigma^2)$ is i.i.d. Using
    Lemma \ref{lem:chi-conc} and the fact
    that $m \geq \tfrac{576}{\alpha^2_{\sigma}\ell}\ln(10/\beta)$,
    we know that with probability at
    least $1-\tfrac{\beta}{5}$,
    $$\left(1-\frac{\alpha_{\sigma}}
        {6}\right)\sigma^2
        \leq \frac{1}{m'\ell}\Sigma_Y \leq
        \left(1+\frac{\alpha_{\sigma}}
        {6}\right)\sigma^2.$$
    Combining the above with Inequality
    \ref{eq:pge2}, we get the first result.\\
    To prove the second result, since
    $m' = \tfrac{m}{2}$, using Inequality
    \ref{eq:pge1}, it is enough to show
    that,
    $$\frac{\Sigma_Y}{m'\ell}\left(1+\frac{\abs{e}}{m'}\right)
        \leq
        \frac{\Sigma_Y}{m'\ell}\left(1+\frac{\alpha_{\sigma}}
        {6}\right)~~~ \text{and}~~~
        \frac{\Sigma_Y}{m'\ell}\left(1-\frac{\alpha_{\sigma}}
        {6}\right) \leq
        \frac{\Sigma_Y}{m'\ell}\left(1-\frac{\abs{e}}{4m'}
        \right).$$
    Since with high probability,
    $\abs{e} \leq \tfrac{3}{\eps}\ln(5/\beta)$,
    having $m \geq \tfrac{36}
    {\eps\alpha_{\sigma}}\ln(5/\beta)$ satisfies
    the two conditions. This gives us,
    $$\left(1-\frac{\alpha_{\sigma}}
        {6}\right)
        \frac{1}{m'\ell}\Sigma_Y
        \leq \frac{1}{m'_{Y}\ell}\Sigma_Y \leq
        \left(1+\frac{\alpha_{\sigma}}
        {6}\right)\frac{1}{m'\ell}
        \Sigma_Y,$$
    which gives us the required result
    after combining with Inequality
    \ref{eq:pge2}.\\
    To prove the third result, it
    is sufficient to show the following.
    $$\frac{1}{m'_Y\ell}(\Sigma_Y+\abs{z_{\sigma}})
        \leq
        \frac{\Sigma_Y}{m'_Y\ell}\left(1+\frac{\alpha_{\sigma}}
        {6}\right)~~~ \text{and}~~~
        \frac{\Sigma_Y}{m'_Y\ell}\left(1-\frac{\alpha_{\sigma}}
        {6}\right) \leq
        \frac{1}{m'_Y\ell}(\Sigma_Y-\abs{z_{\sigma}})$$
    Note that from Lemma \ref{lem:lap-conc},
    with probability at least
    $1-\tfrac{\beta}{10}$,
    \begin{align*}
    \abs{z_{\sigma}} &\leq
            \frac{6r^2\ln(10/\beta)}{\eps}.
    \end{align*}
    From Lemma \ref{lem:easy-intra-gaussian},
    we know that for any $i,j$, with probability
    at least $1-\tfrac{\beta}{10}$,
    $$\llnorm{Y_i - Y_j} \geq
        \tfrac{\sqrt{\ell}\sigma}{2}.$$
    This means that at least half
    of the points of $Y$ must have $L_2$
    norms at least $\tfrac{\sqrt{\ell}\sigma}{4}$,
    which implies that $\Sigma_Y \geq
    \tfrac{m'\ell\sigma^2}{32} =
    \tfrac{m\ell\sigma^2}{64}$.
    Then the two conditions above will
    be satisfied if,
    \begin{align*}
        \frac{m\ell\sigma^2}{64} \geq
            \frac{6}{\alpha_{\sigma}} \cdot
            \frac{6r^2\ln(10/\beta)}{\eps}
        \iff m \geq
            \frac{2304r^2\ln(10/\beta)}
            {\alpha_{\sigma}\eps\sigma^2 \ell}.
    \end{align*}
    This gives us,
    $$\left(1-\frac{\alpha_{\sigma}}
        {6}\right)
        \frac{1}{m'_{Y}\ell}\Sigma_Y
        \leq \frac{1}{m'_{Y}\ell}(\Sigma_Y + z_{\sigma})
        \leq
        \left(1+\frac{\alpha_{\sigma}}
        {6}\right)
        \frac{1}{m'_{Y}\ell}\Sigma_Y,$$
    which when combined with Inequality
    \ref{eq:pge2}, gives us the third
    result. Combining the three
    results via substitution,
    we complete the proof for the accuracy
    of the estimated variance.
\end{proof}

\section{Additional Useful Concentration Inequalities}

Throughout we will make use of a number of
concentration results, which we collect here
for convenience.  We start with standard tail
bounds for the univariate Laplace and Gaussian
distributions.
\begin{lemma}[Laplace Tail Bound]\label{lem:lap-conc}
    Let $Z \sim \Lap(t)$. Then
    $\pr{}{\abs{Z} > t\cdot\ln(1/\beta)} \leq \beta.$
\end{lemma}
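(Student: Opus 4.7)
The plan is to prove this by a direct computation of the tail integral of the Laplace density, exploiting its symmetry about zero. Recall that $Z \sim \Lap(t)$ has density $f(z) = \frac{1}{2t} \exp(-|z|/t)$ for $z \in \R$.

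First I would reduce the two-sided tail to a one-sided tail by symmetry: for any $x > 0$,
\[
    \Pr[\abs{Z} > x] = 2\Pr[Z > x] = 2\int_x^\infty \frac{1}{2t} e^{-z/t}\, dz = e^{-x/t}.
\]
Then I would substitute $x = t \ln(1/\beta)$ to obtain
\[
    \Pr[\abs{Z} > t \ln(1/\beta)] = e^{-\ln(1/\beta)} = \beta,
\]
which is the desired bound.

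This is a completely routine calculation, so there is no real obstacle; the only thing to be careful about is ensuring that $\beta \in (0,1]$ so that $\ln(1/\beta) \geq 0$ and the substitution is meaningful (otherwise the statement is vacuous, as probabilities are at most $1$).
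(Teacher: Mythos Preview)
Your proof is correct. The paper does not actually supply a proof of this lemma; it simply states it as a standard tail bound in the appendix of auxiliary concentration inequalities, so your direct computation via the symmetry of the Laplace density and the one-sided tail integral is exactly the kind of routine verification one would expect.
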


\begin{lemma}[Gaussian Tail Bound]\label{lem:gauss-conc-1d}
    Let $X \sim \cN(\mu,\sigma^2)$. Then
    $\pr{}{\abs{X-\mu} > \sigma\sqrt{2\ln(2/\beta)}} \leq \beta.$
\end{lemma}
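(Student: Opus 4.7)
The plan is to reduce to the standard normal case by the substitution $Z = (X-\mu)/\sigma$. Since $Z \sim \cN(0,1)$, the event $\{|X-\mu| > \sigma\sqrt{2\ln(2/\beta)}\}$ coincides with $\{|Z| > \sqrt{2\ln(2/\beta)}\}$, so it suffices to bound $\Pr[|Z| > t]$ for $t = \sqrt{2\ln(2/\beta)}$ and a standard Gaussian $Z$.

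For the standard normal tail, I would establish the bound $\Pr[|Z| \geq t] \leq 2\exp(-t^2/2)$ in one of two routine ways. The slickest route is via the Chernoff/MGF method: for any $\lambda > 0$, Markov's inequality applied to $e^{\lambda Z}$ gives $\Pr[Z \geq t] \leq e^{-\lambda t} \cdot \EE[e^{\lambda Z}] = e^{\lambda^2/2 - \lambda t}$, and optimizing with $\lambda = t$ yields $\Pr[Z \geq t] \leq e^{-t^2/2}$. Doubling to account for the symmetric lower tail via the union bound gives $\Pr[|Z|\geq t] \leq 2e^{-t^2/2}$.

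Finally, I would substitute $t = \sqrt{2\ln(2/\beta)}$ to obtain
\[
    \Pr[|Z| \geq \sqrt{2\ln(2/\beta)}] \leq 2\exp(-\ln(2/\beta)) = 2 \cdot \tfrac{\beta}{2} = \beta,
\]
which after unwinding the substitution is exactly the claimed bound on $|X-\mu|$. There is no real obstacle here; the only mildly delicate step is remembering to pay the factor of $2$ in front of $e^{-t^2/2}$ so that the exponent absorbs $\ln(2/\beta)$ rather than $\ln(1/\beta)$, which is precisely why the constant inside the logarithm in the statement is $2/\beta$ and not $1/\beta$.
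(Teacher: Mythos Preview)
Your proof is correct. The paper does not actually prove this lemma; it is listed among the ``Additional Useful Concentration Inequalities'' as a standard fact without argument, so your Chernoff/MGF derivation is a perfectly adequate (and standard) justification.
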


We also recall standard bounds on the sums of well behaved random variables.
\begin{lemma}[Multiplicative Chernoff]\label{lem:chernoff-mult}
    Let $X_1,\dots,X_m$ be independent Bernoulli random variables
    taking values in $\zo$. Let $X$ denote their sum and
    let $p = \ex{}{X_i}$. Then for $m \geq \frac{12}{p}\ln(2/\beta)$,
    $$\pr{}{X \not\in \left[ \frac{mp}{2}, \frac{3mp}{2} \right]} \leq
        2e^{-mp/12} \leq \beta.$$
\end{lemma}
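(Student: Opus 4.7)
The plan is to derive this as an immediate corollary of the standard multiplicative Chernoff bound applied with deviation parameter $\delta = 1/2$, then union-bound the upper and lower tails, and finally substitute the lower bound on $m$ to conclude $2e^{-mp/12} \leq \beta$.

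More concretely, I would first invoke the textbook multiplicative Chernoff inequality: for independent $\{0,1\}$-valued random variables $X_1,\dots,X_m$ summing to $X$ with $\mu = \ex{}{X} = mp$, and any $0<\delta\leq 1$, one has $\pr{}{X \geq (1+\delta)\mu} \leq \exp(-\delta^2 \mu/3)$ and $\pr{}{X \leq (1-\delta)\mu} \leq \exp(-\delta^2 \mu/2)$. Plugging $\delta = 1/2$ gives $\pr{}{X \geq 3mp/2} \leq \exp(-mp/12)$ and $\pr{}{X \leq mp/2} \leq \exp(-mp/8)$. Since $\exp(-mp/8) \leq \exp(-mp/12)$, a union bound yields
\[
\pr{}{X \notin [mp/2,\, 3mp/2]} \;\leq\; 2\exp(-mp/12).
\]

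For the second inequality, I would observe that the hypothesis $m \geq \frac{12}{p}\ln(2/\beta)$ is equivalent to $mp/12 \geq \ln(2/\beta)$, i.e.\ $2\exp(-mp/12) \leq \beta$, which closes the bound. The only step that might be worth expanding (if one wanted a fully self-contained proof rather than citing Chernoff) would be the derivation of the two Chernoff tail bounds themselves via the standard moment-generating-function argument (optimizing $\exp(-t(1\pm\delta)\mu)\,\ex{}{\exp(tX)}$ over $t$), but since this is entirely routine and the bound is invoked throughout the paper as a black box, I would simply cite it. There is no real obstacle here; the lemma is a convenience restatement of a classical concentration inequality tailored to the exact constants used elsewhere in the paper (most notably inside Lemma~\ref{lem:min-points-mult}).
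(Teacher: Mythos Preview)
Your proposal is correct; the paper itself does not supply a proof for this lemma, treating it as a standard concentration inequality stated for reference, so your derivation via the textbook multiplicative Chernoff bounds with $\delta=1/2$ and a union bound is exactly the intended justification.
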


\begin{lemma}[Bernstein's Inequality]\label{lem:chernoff-add}
    Let $X_1,\dots,X_m$ be independent Bernoulli random variables
    taking values in $\zo$. Let $p = \ex{}{X_i}$.
    Then for $m \geq \frac{5p}{2\epsilon^2}\ln(2/\beta)$ and
    $\eps \leq p/4$,
    $$\pr{}{\abs{\frac{1}{m}\sum{X_i}-p} \geq \epsilon}
        \leq 2e^{-\epsilon^2m/2(p+\epsilon)}
        \leq \beta.$$
\end{lemma}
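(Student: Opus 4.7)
The plan is to prove this as a direct specialization of the classical Bernstein inequality for bounded independent random variables. Let $Y_i = X_i - p$ so that $Y_1,\dots,Y_m$ are independent, mean-zero, bounded in $[-p, 1-p]$ (in particular $|Y_i| \leq 1$), and have variance $\text{Var}(Y_i) = p(1-p) \leq p$. Let $S = \sum_{i=1}^m Y_i$, so the event in question is $\{|S| \geq m\epsilon\}$.

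Applying the standard two-sided Bernstein bound
\[
    \Pr[|S| \geq t] \leq 2\exp\left(-\frac{t^2/2}{\sum_i \text{Var}(Y_i) + Mt/3}\right),
\]
with $M = 1$, $\sum_i \text{Var}(Y_i) \leq mp$, and $t = m\epsilon$, yields
\[
    \Pr[|S| \geq m\epsilon] \leq 2\exp\left(-\frac{m^2\epsilon^2/2}{mp + m\epsilon/3}\right) = 2\exp\left(-\frac{m\epsilon^2}{2(p + \epsilon/3)}\right).
\]
Since $\epsilon/3 \leq \epsilon$, the denominator is at most $2(p+\epsilon)$, which gives the first inequality in the lemma's claim.

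For the second inequality, I use the hypothesis $\epsilon \leq p/4$, which implies $p + \epsilon \leq 5p/4$. Combined with the assumed lower bound $m \geq \tfrac{5p}{2\epsilon^2}\ln(2/\beta)$, this yields
\[
    \frac{m\epsilon^2}{2(p+\epsilon)} \geq \frac{m\epsilon^2}{2 \cdot (5p/4)} = \frac{2 m\epsilon^2}{5p} \geq \ln(2/\beta),
\]
so $2\exp\left(-\tfrac{m\epsilon^2}{2(p+\epsilon)}\right) \leq \beta$, completing the proof. The only step that requires any care is tracking the constants in the last chain of inequalities; there is no real obstacle, as the result is a textbook consequence of Bernstein's inequality once the variance and boundedness of $Y_i$ are noted.
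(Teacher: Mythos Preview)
Your proof is correct. The paper states this lemma without proof in its appendix of auxiliary concentration inequalities, treating it as a standard fact, so there is no argument in the paper to compare against; your derivation via the classical Bernstein inequality with centered variables $Y_i = X_i - p$ is exactly the expected justification, and your constant tracking in the final step is accurate.
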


\begin{lemma}[Concentration of Empirical Variance]\label{lem:chi-conc}
    Let $X_1,\dots,X_m \sim \cN(0,\sigma^2)$
    be independent. If
    $m \geq \frac{8}{\eps^2}
        \ln\left(\frac{2}{\beta}\right)$
    and $\eps \in (0,1)$, then
    $$\pr{}{\abs{\frac{1}{m}\sum\limits_{i=1}^{m}
        {X^2_i} - \sigma^2} >
        \eps\sigma^2} \leq \beta.$$
\end{lemma}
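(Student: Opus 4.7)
The plan is to reduce the statement to a standard chi-squared concentration bound and derive it from the Hanson-Wright inequality already stated as Lemma~\ref{lem:HW} in the paper. Let $Y_i = X_i/\sigma$, so that $Y_1,\dots,Y_m$ are i.i.d.\ standard normals. Writing $Y=(Y_1,\dots,Y_m)^T \sim \cN(0,\mathbb{I}_{m\times m})$, we have $\sum_i X_i^2 = \sigma^2 \cdot Y^T Y = \sigma^2 \cdot Y^T A Y$ with $A = \mathbb{I}_{m\times m}$. Hence the desired event $\left|\tfrac{1}{m}\sum_i X_i^2 - \sigma^2\right| > \eps \sigma^2$ is precisely $|Y^T A Y - \mathrm{tr}(A)| > \eps m$, since $\mathrm{tr}(A) = m$, $\|A\|_F = \sqrt{m}$, and $\|A\|_2 = 1$.

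Apply Lemma~\ref{lem:HW} with $t = \ln(2/\beta)$. The two one-sided bounds give
\[\Pr\left[Y^TAY - m \geq 2\sqrt{m t} + 2t\right] \leq \tfrac{\beta}{2}, \qquad \Pr\left[Y^TAY - m \leq -2\sqrt{m t}\right] \leq \tfrac{\beta}{2}.\]
It remains to verify that $2\sqrt{m t} + 2t \leq \eps m$ under the assumed sample size. Using $m \geq \tfrac{8}{\eps^2}\ln(2/\beta)$, we get $2\sqrt{m t} = 2\sqrt{m\ln(2/\beta)} \leq 2\sqrt{m \cdot \eps^2 m/8} = \tfrac{\eps m}{\sqrt{2}}$, and $2t = 2\ln(2/\beta) \leq \tfrac{\eps^2 m}{4} \leq \tfrac{\eps m}{4}$ (here we use $\eps \leq 1$). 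Their sum is strictly less than $\eps m$, so both tail events are absorbed, and a union bound yields the claim with probability at least $1-\beta$.

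There is no real obstacle here; the only thing to check carefully is the arithmetic matching the constant $8$ in the sample size requirement, which is exactly what the two bounds $\tfrac{1}{\sqrt{2}} + \tfrac{1}{4} < 1$ arrange for. No additional machinery beyond Hanson-Wright is needed.
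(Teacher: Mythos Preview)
The paper does not actually prove this lemma; it is stated without proof in the appendix as one of several standard concentration inequalities collected ``for convenience.'' So there is no paper argument to compare against.

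Your proof is correct. Reducing to $Y^TIY$ and invoking the paper's Hanson--Wright inequality (Lemma~\ref{lem:HW}) with $t=\ln(2/\beta)$ is exactly the right move, and your arithmetic checking that $2\sqrt{mt}+2t \leq \eps m$ under $m\geq 8\eps^{-2}\ln(2/\beta)$ and $\eps\in(0,1)$ is accurate (the key numerical fact being $1/\sqrt{2}+1/4<1$). It is a nice touch that you use only tools already stated in the paper.
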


Finally, we have a concentration lemma from
\cite{VempalaW02} for the distance between two
points drawn from not-necessarily identical
spherical Gaussians.
\begin{lemma}\label{lem:gauss-conc}
    Let $X \sim \cN(\mu_1,\sigma_1^2 \id_{d \times d})$
    and $y \sim \cN(\mu_2,\sigma_2^2 \id_{d \times d})$.
    For $t > 0$,
    $$\pr{}{\abs{\llnorm{x - y}^2 - \ex{}{\llnorm{x-y}^2}} >
        t \left((\sigma_1^2 + \sigma_2^2)\sqrt{d} +
        2\llnorm{\mu_1 - \mu_2}\sqrt{\sigma_1^2 + \sigma_2^2}\right)} \leq 4e^{-t^2/8}.$$
\end{lemma}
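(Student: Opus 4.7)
Proof proposal for Lemma~\ref{lem:gauss-conc}.

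The plan is to reduce to the concentration of a squared norm of a single (shifted) Gaussian vector, then split the resulting deviation into a purely-Gaussian linear term and a centered chi-square term, and finally apply the appropriate one-dimensional Gaussian tail bound and the Hanson-Wright inequality (Lemma~\ref{lem:HW}) to each piece.

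Concretely, set $Z = X - Y$ and observe that by independence $Z \sim \cN(\mu, \sigma^2 \id_{d\times d})$ where $\mu \coloneqq \mu_1 - \mu_2$ and $\sigma^2 \coloneqq \sigma_1^2 + \sigma_2^2$. Writing $Z = \mu + \sigma W$ with $W \sim \cN(0,\id_{d\times d})$, we expand
\[
\|X - Y\|_2^2 = \|Z\|_2^2 = \|\mu\|_2^2 + 2\sigma\,\langle \mu, W\rangle + \sigma^2\,\|W\|_2^2,
\]
and since $\EE\|Z\|_2^2 = \|\mu\|_2^2 + \sigma^2 d$, the deviation we must control decomposes as
\[
\|Z\|_2^2 - \EE\|Z\|_2^2 \;=\; 2\sigma\,\langle \mu, W\rangle \;+\; \sigma^2\bigl(\|W\|_2^2 - d\bigr).
\]

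For the linear term, $\langle \mu, W\rangle \sim \cN(0,\|\mu\|_2^2)$, so by the standard Gaussian tail bound (Lemma~\ref{lem:gauss-conc-1d}), for any $t>0$ we have $\Pr\bigl[\,|2\sigma\langle\mu, W\rangle| > 2t\,\sigma\|\mu\|_2\bigr] \le 2e^{-t^2/2} \le 2e^{-t^2/8}$. For the quadratic term, applying Lemma~\ref{lem:HW} with $A = \id$ (so $\tr(A) = d$, $\|A\|_F = \sqrt d$, $\|A\|_2 = 1$) and with parameter $s = t^2/8$ yields
\[
\Pr\!\left[\,\|W\|_2^2 - d \;\geq\; 2\sqrt{d}\cdot \tfrac{t}{\sqrt 8} + 2\cdot\tfrac{t^2}{8}\,\right] \le e^{-t^2/8},
\]
together with the matching lower-tail bound. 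Rescaling by $\sigma^2$ and combining gives $\Pr\bigl[\,|\sigma^2(\|W\|_2^2 - d)| > t\sigma^2\sqrt d\,\bigr] \le 2e^{-t^2/8}$ (I will verify that within the regime where the lemma is applied, namely $t \leq O(\sqrt{d})$, the $2\|A\|_2 s = t^2/4$ term is absorbed into the $t\sqrt{d}$ bound; for $t \gtrsim \sqrt{d}$ the conclusion is vacuous anyway since the deviation exceeds $\EE\|Z\|_2^2$). Taking a union bound over the two events and observing $2\sigma\|\mu\|_2 = 2\|\mu_1-\mu_2\|_2\sqrt{\sigma_1^2+\sigma_2^2}$ yields the claim with total failure probability at most $4e^{-t^2/8}$.

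The only real obstacle is bookkeeping the constants in the Hanson-Wright step: the inequality contributes two terms ($\|A\|_F\sqrt{s}$ and $\|A\|_2 s$) and we must pick the parameter $s$ so that (i)~the exponent matches the desired $t^2/8$ and (ii)~the linear-in-$s$ contribution $2\|A\|_2 s = t^2/4$ is dominated by $t\sqrt{d}$. This forces a mild restriction on $t$ relative to $\sqrt{d}$, which is harmless because outside that range the ``probability $\leq 1$'' trivial bound already implies the statement. Everything else is a direct computation with Gaussian and chi-square moments.
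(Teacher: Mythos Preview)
The paper does not prove this lemma; it simply cites \cite{VempalaW02}. Your approach---write $X-Y = \mu + \sigma W$ with $W\sim\cN(0,\id)$, decompose the centered squared norm into the linear piece $2\sigma\langle\mu,W\rangle$ and the centered chi-square piece $\sigma^2(\|W\|_2^2-d)$, and bound each via the Gaussian tail and Hanson--Wright respectively---is exactly the standard argument behind the \cite{VempalaW02} lemma, and it is correct in the regime $t=O(\sqrt d)$.

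One correction to your closing remark. Your ``vacuous for $t\gtrsim\sqrt d$'' justification only disposes of the \emph{lower} tail (since $\|Z\|_2^2\ge 0$ forces $\|Z\|_2^2-\EE\|Z\|_2^2\ge -\EE\|Z\|_2^2$); it says nothing about the upper tail. And the upper tail is the real obstruction: with $\mu=0$ the claim reduces to $\Pr[\chi_d^2-d>t\sqrt d]\le 4e^{-t^2/8}$, which is false for $t\gg\sqrt d$ because the $\chi_d^2$ upper tail is only subexponential (rate $\sim e^{-\Theta(t\sqrt d)}$), not subgaussian. So the lemma as literally stated does not hold for all $t>0$. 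This is harmless here: as you note, every invocation in the paper takes $t=\Theta(\sqrt{\log(nk/\beta)})$ under the standing assumption $\ell\ge 512\ln(nk/\beta)$, so $t\le\sqrt{\ell}$ with a large margin and your Hanson--Wright step goes through with the stated constants. Just replace the vacuity sentence with the explicit range restriction $t\le c\sqrt d$ and you are done.
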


\end{document}